\renewcommand\footnotetextcopyrightpermission[1]{} %
\begin{document}

\title[A Unified Approach for Resilience with Integer Linear Programming (ILP) and LP Relaxations]
{A Unified Approach for Resilience and Causal Responsibility with Integer Linear Programming (ILP) and LP Relaxations}
\author{Neha Makhija}
\orcid{0000-0003-0221-6836}
\affiliation{%
    \orcidicon{0000-0003-0221-6836}
	Northeastern University\country{USA}}
\email{makhija.n@northeastern.edu}

\author{Wolfgang	Gatterbauer}
\orcid{0000-0002-9614-0504}
\affiliation{%
    \orcidicon{0000-0002-9614-0504}
	Northeastern University \country{USA}}
\email{w.gatterbauer@northeastern.edu}

\begin{CCSXML}
    <ccs2012>
       <concept>
           <concept_id>10003752.10010070.10010111</concept_id>
           <concept_desc>Theory of computation~Database theory</concept_desc>
           <concept_significance>500</concept_significance>
           </concept>
       <concept>
           <concept_id>10002951.10002952</concept_id>
           <concept_desc>Information systems~Data management systems</concept_desc>
           <concept_significance>500</concept_significance>
           </concept>
     </ccs2012>
\end{CCSXML}
    
\ccsdesc[500]{Theory of computation~Database theory}
\ccsdesc[500]{Information systems~Data management systems}

\keywords{Resilience, Causal Responsibility, 
Reverse Data Management, Query Explanation,
Dichotomy,
Linear Programming Relaxation}

\begin{abstract}
	
\emph{What is a minimal set of tuples to delete from a database in order to eliminate all query answers}?
This problem is called ``the resilience of a query'' and is one of the key algorithmic problems underlying various forms of reverse data management, such as view maintenance, deletion propagation and causal responsibility.
A long-open question is determining the conjunctive queries (CQs) for which resilience can be solved in $\PTIME$.
	
We shed new light on this problem by proposing a unified Integer Linear Programming (ILP) formulation.
It is unified in that it can solve both previously studied restrictions (e.g., self-join-free CQs under set semantics that allow a $\PTIME$ solution) and new cases (all CQs under set or bag semantics).
It is also unified in that all queries and all database instances are treated \emph{with the same approach},yet the algorithm is \emph{guaranteed to terminate in $\PTIME$} for all known $\PTIME$ cases.
In particular, we prove that for all known easy cases, the optimal solution to our ILP is identical to a simpler Linear Programming (LP) relaxation, which implies that standard ILP solvers return the optimal solution to the original ILP in $\PTIME$.

Our approach allows us to explore new variants and obtain new complexity results.
1) It works under bag semantics, for which we give the first dichotomy results in the problem space. 
2) We extend our approach to the related problem of causal responsibility and give a more fine-grained analysis of its complexity.
3) We recover easy instances for generally hard queries, including instances with read-once provenance and instances that become easy because of Functional Dependencies \emph{in the data}.
4) We solve an open conjecture about a unified hardness criterion from PODS 2020 and prove the hardness of several queries of previously unknown complexity.
5) Experiments confirm that our findings accurately predict the asymptotic running times, and that our universal ILP is at times even quicker than a previously proposed dedicated flow algorithm.
\end{abstract}

\maketitle
\setcounter{page}{1}

\section{Introduction}
\label{sec:introduction}

\emph{What is a minimum set of changes to a database in order to produce a certain change in the output of a query}?
This question underlies many problems of practical relevance, including explanations \cite{SudeepaSuciu14,glavic2021trends}, algorithmic fairness \cite{salimi2019interventional,galhotra2017fairness}, and diagnostics \cite{wang2017qfix,wang2015error}.
Arguably, the simplest formulation of such reverse data management~\cite{DBLP:journals/pvldb/MeliouGS11} questions is ``resilience'': 
\emph{What is the minimal number of tuples to delete from a database in order to eliminate all query answers}?\footnote{While the formal definition (which we give later) applies only to Boolean queries, the above more intuitive formulation can be easily transformed into the Boolean variant.}
An early variant of the problem was formulated 40 years ago in the context of view-maintenance~\cite{Dayal82} and has been studied over the years in various forms.
The problem has received considerable attention in the context of provenance and deletion propagation~\cite{Buneman:2002, DBLP:conf/icdt/BunemanKT01, Buneman07}.
\emph{Deletion propagation} seeks a set of tuples that can be
deleted from the database to delete a particular tuple from the view.
A variation we study in this paper is \emph{causal responsibility}, which involves finding a minimum subset of tuples to remove to make a given input tuple ``counterfactual.''~\cite{MeliouGHKMS10,DBLP:journals/pvldb/MeliouGMS11}.

\begin{table}
	\small
	\crefname{theorem}{Thm.}{Thms.}
	\centering
	\setlength{\tabcolsep}{1pt}
	\begin{tabular}{p{1.5cm}|p{3.5cm}|p{3cm}|p{3cm}|p{2.5cm}|}
		\multicolumn{5}{r}{\includegraphics[scale=0.45]{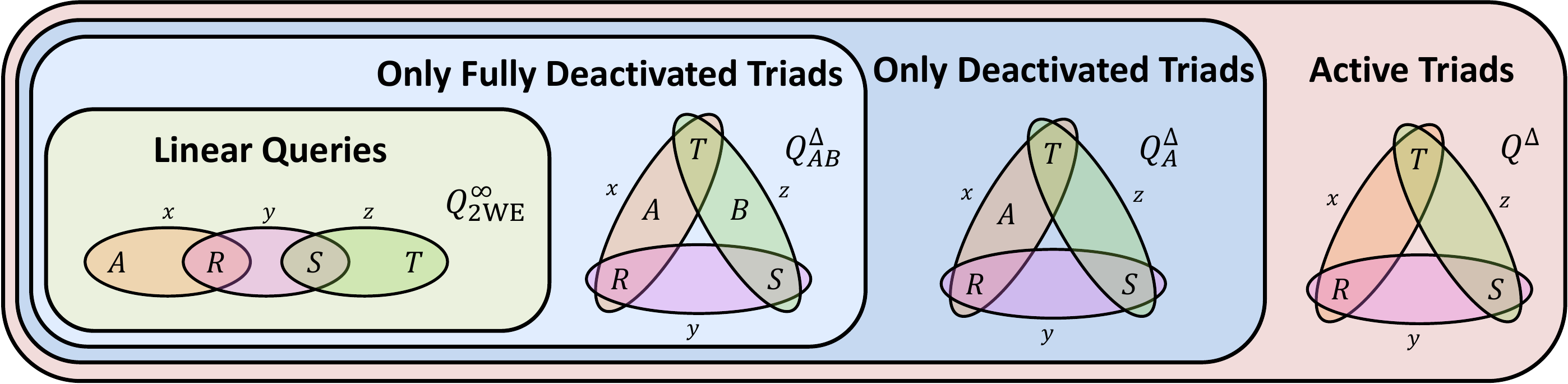}} \\
		\hhline{>{\arrayrulecolor{white}}->{\arrayrulecolor{black}}->{\arrayrulecolor{black}}->{\arrayrulecolor{black}}->{\arrayrulecolor{black}}->{\arrayrulecolor{black}}}
		$\res$ (Sets) 
		& \multicolumn{1}{c|}{$\PTIME$ (\cref{thm:reseasy})}
		& \multicolumn{2}{c|}{$\PTIME$ (\cref{thm:reseasydomination})}
		& \multicolumn{1}{c|}{$\NPC$ (\cref{th:IJPs_for_SJFCQs})} \\
		\hhline{>{\arrayrulecolor{white}}->{\arrayrulecolor{black}}->{\arrayrulecolor{black}}->{\arrayrulecolor{black}}->{\arrayrulecolor{black}}->{\arrayrulecolor{black}}}
		$\rsp$ (Sets) 
		& \multicolumn{1}{c|}{$\PTIME$ (\cref{thm:rspeasy})} 
		& \multicolumn{1}{c|}{$\PTIME$ (\cref{thm:rspeasyfullydominated})}  
		& \multicolumn{1}{p{2.6cm}|}{\cellcolor{yellow!40} \footnotesize $\PTIME$: dominating (A)  
		\normalsize
		(\cref{thm:rspeasydominatingtable})}  
		& \multicolumn{1}{c|}{ $\NPC$ (\cref{thm:rspharderthanres})} \\
		\cline{4-4}
		&&& \multicolumn{1}{p{2.6cm}|}{\footnotesize $\NPC$: non-dominating (R,S,T)  
		\normalsize
		(\cref{thm:rspharddominatedtable})} & \\
		\hhline{>{\arrayrulecolor{white}}->{\arrayrulecolor{black}}->{\arrayrulecolor{black}}->{\arrayrulecolor{black}}->{\arrayrulecolor{black}}->{\arrayrulecolor{black}}}
		$\res$ (Bags) 
		& \multicolumn{1}{c|}{\cellcolor{yellow!40} $\PTIME$ (\cref{thm:reseasy})}
		& \multicolumn{3}{c|}{\cellcolor{yellow!40} $\NPC$ (\cref{thm:reshardbag}) } \\
		\hhline{>{\arrayrulecolor{white}}->{\arrayrulecolor{black}}->{\arrayrulecolor{black}}->{\arrayrulecolor{black}}->{\arrayrulecolor{black}}->{\arrayrulecolor{black}}}
		$\rsp$ (Bags) 
		& \multicolumn{1}{c|}{\cellcolor{yellow!40} $\PTIME$ (\cref{thm:rspeasy})}
		& \multicolumn{3}{c|}{\cellcolor{yellow!40} $\NPC$ (\cref{thm:rspharderthanres})} \\
		\hhline{>{\arrayrulecolor{white}}->{\arrayrulecolor{black}}->{\arrayrulecolor{black}}->{\arrayrulecolor{black}}->{\arrayrulecolor{black}}->{\arrayrulecolor{black}}}
	\end{tabular}
	\caption{Overview of complexity results for self-join-free conjunctive queries (SJ-free CQs)
	that follow from our unified framework in this paper. 
	Results highlighted with yellow background are new.
	$\res$ stands for resilience and $\rsp$ for causal responsibility.
	Not shown 
	are additional results we give for queries with self-joins.
	}
	\label{Fig_query_venn}
\end{table}

	The problems of resilience and causal responsibility have practical applications in helping users better understand transformations of their data and to explain surprising query results. 
	They are both based on the idea of \emph{minimal interventions}, which aims to find the simplest possible satisfying explanations.
	Intuitively, the resilience of a query provides a minimal set of tuples (i.e.\ a minimal explanation) 
	without which a Boolean query would not return true.
	In addition, it is known that a solution to resilience immediately also provides an answer to the deletion propagation with source-side effects problem ~\cite{DBLP:journals/pvldb/FreireGIM15}, which seeks a minimal intervention, or a minimal set of input tuples to be deleted to perform deletion propagation (delete a tuple from the view).

	The problem of causal responsibility uses the same idea of minimal interventions to provide explanations at a more fine-grained tuple level.
	For any desired input tuple, users can calculate the ``responsibility'' of that tuple based on formal, mathematical notions of causality adapted to databases \cite{MeliouGHKMS10}.
    Then one can derive explanations by ranking input tuples using their responsibilities: tuples with a high degree of responsibility are better explanations for a particular query result.
	This makes causal responsibility an invaluable tool for query explanations and debugging ~\cite{glavic2021trends}.

Our goal is to understand the complexity of solving resilience and causal responsibility. The first result 
by Buneman et al.~\cite{DBLP:conf/icdt/BunemanKT01} showed that the problem is \np-complete ($\NPC$) for conjunctive queries (CQs) with projections.
Later work under the topic of causal responsibility~\cite{DBLP:journals/pvldb/MeliouGS11}
and the simpler notion of resilience~\cite{DBLP:journals/pvldb/FreireGIM15} 
showed that a large fraction of self-join-free CQs (triad-free queries)
can be solved in $\PTIME$, solving the complexity of self-join-free (SJ-free) queries. 
However, few results are known for the cases of CQs \emph{with self-joins}~\cite{DBLP:conf/pods/FreireGIM20}.
This state is similar to other database problems where establishing 
complexity results for self-joins is often considerably more involved than for self-join-free queries 
(e.g., compare the
results on probabilistic databases for either self-join-free queries~\cite{DBLP:journals/vldb/DalviS07}
with those for self-joins~\cite{DBLP:journals/jacm/DalviS12}).
Moreover, all these problems have been studied only for set semantics, whereas relational databases actually use bag semantics i.e., they allow duplicate tuples~\cite{chaudhuri1993optimization}.
Like self-joins, bags usually make problems harder to analyze
\cite{10.1145/3472391, atserias2022structure, yannakakis2022technical}, and few complexity results for bag semantics exist.

This paper gives the first dichotomy results under bag semantics for problems in reverse data management (\cref{Fig_query_venn}).
We also give a simple-to-verify sufficient hardness criterion for all conjunctive queries (including queries with self-joins and under set or bag semantics).
Based on this criterion, we build an automatic hardness certificate finder, that, 
a given query $Q$ and a fixed domain size $d$, finds a hardness certificate for $Q$ of domain size $\leq d$, whenever such a certificate exists.
We use this construction to find hardness certificates for 5 previously open queries with self-joins.

Our attack on the problem is unconventional:
Rather than deriving a dedicated $\PTIME$ algorithm for certain queries
(and proving hardness for the rest), 
we instead propose a unified Integer Linear Program (ILP) formulation for all problem variants (self-joins or not, sets or bags, Functional Dependencies or not).
We then show that, for all $\PTIME$ queries, \emph{the Linear Program (LP) relaxation of our ILP has the same optimal value}, thereby
proving that existing ILP solvers are \emph{guaranteed} to solve problems for those queries in $\PTIME$.

\introparagraph{Contributions and Outline}
We propose a unified framework for solving resilience and causal responsibility, 
give new theoretical results, approximation guarantees, and experimental results:

1) \emph{Unified ILP framework}:
We propose an ILP formulation for the problems of resilience and causal responsibility 
that can not only encode all previously studied variants of the problem, but can also encode \emph{all} formulations of the problem, 
including self-joins and bag semantics 
(\cref{sec:res-ilp,sec:resp-ilp}).
This unified encoding allows us to model and solve problems for which currently no algorithm (whether easy or hard) has been proposed.
It also allows us to study LP relaxations (\cref{sec:relaxations}) of our formulation, which form the basis of several of our theoretical results.

2) \emph{Unified hardness criterion}: 
We prove a variant of an open conjecture from PODS 2020~\cite{DBLP:conf/pods/FreireGIM20} 
by defining a structural certificate called Independent Join Path (IJP) and proving that it implies hardness (\cref{sec:IJP}).
Most interestingly, we give a Disjunctive Logic Program (DLP) formulation 
that can computationally derive such certificates.
We use this certificate to both ($i$) prove hardness for all hard queries in our dichotomies, and 
($ii$) obtain computationally derived hardness certificates for 5 previously open queries with self-joins.
While solving such programs is general in $\Sigma^2_p$ (i.e.\ on the 2nd level of the polynomial hierarchy) 
a modern ASP solver \emph{clingo}~\cite{gebser2011potassco} allowed us to obtain all the new, easy-to-verify proofs in under two hours, including some obtained in seconds.

3) \emph{First results for resilience and responsibility under bag semantics}:
We give full dichotomy results for both resilience and causal responsibility under bag semantics
for the special case of SJ-free CQs (\cref{sec:theoreticalresults}).
We show that under bag semantics, the $\PTIME$ cases for resilience and responsibility are exactly the same (\cref{Fig_query_venn}).

4) \emph{Recovering $\PTIME$ cases}:
We prove that for all prior known and newly found $\PTIME$ cases of SJ-free queries 
(under both set and bag semantics), 
our ILP is solved in guaranteed $\PTIME$ by standard solvers (\cref{sec:theoreticalresults}).
This means that our formulation is unified not only in being able to model all cases but also in that it is guaranteed to \emph{recover all known $\PTIME$ cases by terminating in $\PTIME$}.
In addition, we uncover more tractable cases for causal responsibility, due to obtaining more fine-grained complexity results  (\cref{sec:theory:responsibility}).
Our new way of modeling the problem opens up a new route for solving various open problems in reverse data management: 
by proposing a universal algorithm for solving all variants, 
future development does not depend on finding new dedicated $\PTIME$ algorithms, 
but rather on proving that the universal method terminates in $\PTIME$ 
(in similar spirit to proofs in this paper).

5) \emph{Novel approximations}:
We show 3 different approximation algorithms for both resilience and causal responsibility. 
The first approach based on LP-rounding
provides a guaranteed $m$-factor approximation (where $m$ is the number of atoms in the query) for all queries (\emph{including self-joins and bag semantics}).
The other two are new flow-based approximation techniques designed for hard queries without self-joins (\cref{sec:approximations}).
	    
6) \emph{Experimental Study}:
We compare all approaches proposed in this paper on different problem instances:
easy or hard, for set or bag semantics, queries with self-joins, and Functional Dependencies. 
Our results establish the accuracy of our asymptotic predictions, uncover novel practical trade-offs, 
and show that our approach and approximations create an end-to-end solution
(\cref{sec:experiments}).

We make all code and experiments available online ~\cite{resiliencecode}.
We provide a \emph{proof intuition} for each theorem in the main text, and full proofs are available in the appendix.
The appendix also contains additional examples and details, and discusses some additional results as well.
Our approach can solve resilience and causal responsibility for otherwise hard queries in $\PTIME$ for database instances such as \emph{read-once} instances, or instances that obey certain \emph{Functional Dependencies} (not necessarily known at the query level).
We show these \emph{instance-based tractability results} in~\cref{SEC:THEORY:READONCEANDFDS}.

\section{Related Work}

\introparagraph{Resilience and Causal Responsibility}
Foundational work by Halpern, Pearl, et al.~\cite{HalpernPearl:Cause2005, HalpernPearl:Explanations2005,ChocklerH04}
defined the concept of \emph{causal responsibility} based \emph{minimal interventions} in the input.
Meliou et al.~\cite{DBLP:journals/pvldb/MeliouGMS11} adapted this concept to define causal responsibility for database queries
and proposed a flow algorithm to solve the tractable cases.
Freire et al.~\cite{DBLP:journals/pvldb/FreireGIM15} defined a simpler notion of \emph{resilience} and
gave a dichotomy of the complexity for both resilience and responsibility for SJ-free queries under set semantics.
While the tractability frontier for self-join case remains open to this day, 
Freire et al.~\cite{DBLP:conf/pods/FreireGIM20} gave partial complexity results for resilience for queries with self-joins 
and conjectured that the notion of Independent Join Paths (IJPs) could imply hardness for resilience.
We prove one direction of this conjecture (with a slight fix of the original statement).
After acceptance of this paper, an interesting preprint was published on arXiv \cite{bodirsky2023complexity} 
that formulates resilience as a Valued Constraint Satisfaction problem (VCSP) and
applies results from an earlier VCSP dichotomy \cite{kolmogorov2017complexity}.
Interestingly, it also ends with a dichotomy conjecture (not proof) for resilience, notably for bag semantics but not set semantics.
We discuss these connections in more detail in~\cref{SEC:VCSPRESILIENCECONNECTION}
.

\introparagraph{Other Problems in View Maintenance}
There are several variants to resilience such as destroying a pre-specified fraction of witnesses from the database instead of all witnesses~\cite{ADP}. 
They all are instances of reverse data management~\cite{DBLP:journals/pvldb/MeliouGS11} and deletion propagation~\cite{Buneman:2002,Dayal82}.
\emph{Deletion propagation} seeks to delete a set of input tuples in order to delete a particular tuple from the view.
Intuitively, this deletion should be achieved with minimal side effects, where side effects are defined with one of two objectives: 
(a) deletion propagation with \emph{source side effects} seeks a minimum set of input tuples in order to delete a given 
output tuple; whereas (b) deletion propagation with \emph{view side effects} seeks a set of input tuples that results in a minimum number of output tuple deletions in the view, other than the tuple of interest~\cite{Buneman:2002}.
The dichotomies for self-join queries remain open for the problems in this space.
We believe that our core ideas can be applied to many such problems.

\introparagraph{Explanations and fairness}
Data management research has recognized the need to derive \emph{explanations} for query results and surprising observations ~\cite{glavic2021trends}.
Existing work on explanations use many approaches~\cite{DBLP:conf/chi/LimDA09}, including modifying the input (i.e.\ performing \emph{interventions})~\cite{DBLP:journals/pvldb/MeliouGMS11,DBLP:journals/pvldb/HuangCDN08,DBLP:journals/pvldb/HerschelHT09, DBLP:journals/corr/abs-0912-5340,SudeepaSuciu14,DBLP:journals/pvldb/0002M13}, which is our focus as well.
Recent approaches show that explanations benefit a variety of applications, such as ensuring or testing fairness ~\cite{pradhan2021interpretable,salimi2019interventional,galhotra2017fairness} or finding bias ~\cite{youngmann2022explaining}.
We believe our unified framework of solving both easy and hard cases with one algorithm can also be useful for these applications.

\introparagraph{Bag semantics}
Real-world databases consist of bags instead of sets.
This gap between database theory and database practice  
has been pointed out years ago ~\cite{chaudhuri1993optimization}. 
However, studying properties of CQs under bag semantics is often considerably harder.
For example, the connection between local and global consistency has only been recently solved for bags ~\cite{atserias2022structure, yannakakis2022technical}, 
and the fundamental problems of query containment of CQs under bag semantics remain open 
despite recent progress ~\cite{10.1145/3472391, DBLP:conf/pods/KonstantinidisM19}. 
Our paper gives the first dichotomy result for reverse data management problems under bag semantics.

\introparagraph{Linear Optimization and Data Management}
Ideas from the two fields have been connected in the past, both to solve data management problems efficiently~\cite{meliou2012tiresias,brucato2019scalable}, 
and to use the factorized nature of data to solve linear optimization problems more efficiently~\cite{capelli2022linear}.
The Tiresias system ~\cite{meliou2012tiresias} implements how-to queries by translating them to MILPs in order to solve them efficiently. 
Package queries~\cite{brucato2019scalable} allow users to define constraints over multiple tuples with extensions of SQL, and also leverage ILP solvers in the background.
Recent work by Capelli at al
~\cite{capelli2022linear} provides an approach to solve a specific class of linear programs (LP(CQ)), whose variables correspond to answers of a CQ. 
They show that such LPs have $\PTIME$ query complexity for CQs with bounded fractional hypertreewidth, by leveraging the factorized structure of the data. 
Our work similarly leverages the structure of data, 
but focuses on \emph{data} complexity of \emph{Integer} 
Linear Programs to investigate the tractability of reverse data management problems and solve them efficiently when possible.

\section{Preliminaries}
\label{sec:preliminaries}

\subsection{Formal Problem Setup}

\introparagraph{Standard database notations}
A \emph{conjunctive query} (CQ) is a first-order formula $Q(\vec y)$ $= \exists
\vec x\,(g_1 \wedge \ldots \wedge g_m)$ 
where the variables $\vec x = (x_1, \ldots, x_\ell)$ are called existential variables,
$\vec y$ are called the head or free variables,
and each atom 
$g_i$ represents a relation 
$g_i= R_{j_i}(\vec x_i)$ where $\vec x_i \subseteq \vec x \cup \vec y$.\footnote{WLOG, we assume that
$\vec x_i$ is a tuple of only variables and don't write the constants.
Selections can always be directly pushed into the database before executing the query.
In other words, for any constant in
the query, we can first apply a selection on each relation and then consider the modified query with
a column removed.}
$\var(X)$ denotes the variables in a given relation/atom.
Notice that a query has at least one output tuple iff the Boolean variant of the query 
(obtained by making all the free variables existential) is true.
Unless otherwise stated, a query in this paper denotes a Boolean CQ, i.e.\ $\vec y = \emptyset$.
We write $Q $ to denote that that query $D \models Q$ to denote that query $Q$ evaluates to $\true$ over database instance $D$, and $D \not\models Q$ to denote it evaluates to $\false$.

Queries are interpreted as hypergraphs with edges formed by atoms and nodes by variables.
Two hyperedges are connected if they share at least one node.
We use concepts like paths and reachable nodes on the hypergraph of a query in the usual sense \cite{bollobas1998modern}.
A query $Q$ is minimal
if for every other equivalent conjunctive query $Q'$ has at least as many atoms as $Q$ ~\cite{DBLP:conf/pods/FreireGIM20}. 
WLOG we discuss only connected queries in the rest of the paper.\footnote{Results for disconnected queries follow by treating each of the components \emph{independently}.}
A \emph{self-join-free CQ} (SJ-free CQ) is one where no relation symbol occurs more than once and thus every atom represents a different relation. 

We write $D$ for the database, i.e.\ the set of tuples in the relations.
When we refer to bag semantics, we allow $D$ to be a multiset of tuples in the relations.
We write $[\vec w / \vec x]$ as a valuation (or substitution) of query variables $\vec x$ by $\vec w$.
A \emph{witness} $\vec w$ is a valuation of $\vec x$ that is permitted by $D$ and that makes $Q$ \true
(i.e.\ $D \models Q[\vec w/\vec x]$).\footnote{Note that our notion of witness slightly differs from the one used in  provenance literature where a ``witness'' refers to a subset of the input database records that is sufficient to ensure that a given output tuple appears in the result of a query \cite{DBLP:journals/ftdb/CheneyCT09}.} 
The set of witnesses is then
\[
	\witnesses(Q,D) = \bigset{\vec w}{D \models Q[\vec w /\vec x]}\; .
\]

Since every witness implies exactly one set of up to $m$ tuples from $D$ 
that make the query true, 
we will slightly abuse the notation and also refer to this set of tuples as ``witnesses.'' 
For example, consider the 2-chain query 
$Q^\infty_2 \datarule R(x, y), S(y, z)$ 
over the database 
$D = \{r_{12}{:\,}R(1,2), s_{23}{:\,}S(2,3), s_{24}{:\,}S(2,4)\}$.
Then 
the $\witnesses(Q^\infty_2, D) =$ $\{(1, 2, 3), (1, 2, 4)\}$
and their respective tuples (also henceforth referred to as witnesses) are 
$\set{r_{12},s_{23}}$, and $\set{r_{12},s_{24}}$.
A set of witnesses may be represented as a connected hypergraph, where tuples are the nodes of the graph and each witness as a hyperedge around a set of tuples. 
\introparagraph{Resilience, Responsibility, and related terminology}
\begin{definition}[Resilience \cite{DBLP:journals/pvldb/FreireGIM15}]\label{def: resilience}
Given a query $Q$ and database $D$, we say that $k \in \resdecision(Q, D)$ if and only if $D \models Q$ and there exists some contingency set $\Gamma \subseteq D$ with $|\Gamma| \leq k$ such that $D - \Gamma \not\models Q$.
\end{definition}

In other words, $k\in\resdecision(Q, D)$ means that there is a set of $k$ or fewer tuples in
$D$, the removal of which makes the query false. 
We are interested in the optimization version $\res^*(Q,D)$ of this decision problem:
given $Q$ and $D$, find the \emph{minimum} $k$ so that $k\in\res(Q,D)$. 
A larger $k$ implies that the query is more ``\emph{resilient}'' and requires the deletion of more tuples to change the query output. 
A contingency size of minimum size is called a \emph{resilience set}.

\begin{definition}[Responsibility \cite{DBLP:journals/pvldb/MeliouGMS11}]\label{def: responsibility}
    Given query $Q$ and an input tuple $\resptuple$,
    we say that $k \in \rspdecision(Q, D, \resptuple)$ if and only if $D \models
    Q$ and there is a contingency set $\Gamma \subseteq D$ 
	with $|\Gamma| \leq k$
	such that $D - \Gamma \models
    Q$ but $D - (\Gamma \cup \{ \resptuple \}) \not\models Q$.
\end{definition}

In other words, 
causal responsibility aims to determine whether \emph{a particular input tuple} $\resptuple$ (the \emph{responsibility tuple})
can be made ``counterfactual'' by deleting a set of other input tuples $\Gamma$ of size $k$ or less. 
Counterfactual here means that the query 
is
true with that input tuple present, but false if it 
is
also deleted.
In contrast to resilience, the problem of responsibility is defined for \emph{a
particular tuple} $\resptuple$ in $D$, and instead of finding a $\Gamma$ that will leave no witnesses 
for $D - \Gamma \models q$, we want to preserve only witnesses that involve $\resptuple$, so that
there is no witness left for $D - (\Gamma \cup \{ \resptuple \}) \models Q$.
Responsibility measures the \emph{degree} of causal contribution of a particular tuple $\resptuple$ to the output of a query as a function of the size of a minimum contingency set (the \emph{responsibility set}).
We are again interested in the optimization version of this problem: $\rsp^*(Q,D,\resptuple)$.\footnote{Note that it is possible that a given tuple cannot be made counterfactual. For example, given witnesses $\{\{r_{11}\}, \{r_{11},r_{12}\} \}$,
tuple $r_{12}$ cannot be made counterfactual without deleting $r_{11}$, which in turn would delete both witnesses.}

\begin{definition}[Exogenous / Endogenous tuples]\label{def:exogenoustuple}
	A tuple is exogenous if it must not or need not participate in 
	a contingency set, 
	and endogenous otherwise.
\end{definition}

Prior work~\cite{DBLP:journals/pvldb/MeliouGMS11} has defined relations (or atoms) to be exogenous or endogenous, i.e.\ when all tuples in any relation (or relation of the atom) are either exogenous or endogenous.
We use but also generalize this notation to allow \emph{individual tuples} to be declared exogenous (but keep them endogenous by default).
We will see later in \cref{sec:IJP} that this generalization allows us to formulate resilience and responsibility 
with a simple universal hardness criterion.\footnote{
In more detail, we will formulate hardness of responsibility via an Independent Join Path which is only possible because one specified tuple is exogenous, e.g.\ \cref{thm:rspharddominatedtable}.}
The set of exogenous tuples $\exoset \subset D$ 
can be provided as an additional input parameter as in $\res(Q, D, \exoset)$ and $\rsp(Q, D, \resptuple, \exoset)$.
We assume a database instance has no exogenous tuples unless explicitly specified, and we omit the parameter for simplicity.

\introparagraph{Our focus}
We are interested in the \emph{data complexity}~\cite{DBLP:conf/stoc/Vardi82} 
of $\res(Q,D)$ and $\rsp(Q,D, \resptuple)$, i.e.\ 
the complexity of the problem as $D$ increases but $Q$ remains fixed.
We refer to $\res(Q)$ and $\rsp(Q)$ to discuss the complexity of the problems of query $Q$ over an arbitrary data instance (and arbitrary responsibility tuple). 

\subsection{Tools and Techniques}

We use \emph{Integer Linear Programs} and their \emph{relaxations} to model and solve resilience and causal responsibility.
\emph{Disjunctive Logic Programs}, 
which can solve problems higher in the polynomial hierarchy, are used to find certificates for hard cases.

\introparagraph{Linear Programs (LP)}
Linear Programs are standard optimization problems \cite{aardal2005handbooks, schrijver1998theory}
in which the objective function and the constraints are linear.
A standard form of an LP is $\min \vec c^\transpose \vec x$ s.t. $\vec W \vec x \geq \vec b$, where $\vec x$ 
denotes the variables, the vector $\vec c^\transpose$ denotes weights of the variables in the objective, the matrix $\vec W$ denotes the weights of $\vec x$ for each constraint, and $\vec b$ denotes the right-hand side of each constraint.
If the variables are constrained to be integers, the resulting program is called an Integer Linear Program (ILP), while a program with some integral variables is referred to as a Mixed Integer Linear Program (MILP).
The \emph{LP relaxation} of an ILP program is obtained by removing the integrality constraint for all variables.

\introparagraph{Complexity of solving ILPs}
ILPs are $\NPC$ and 
part of Karp's $21$ problems \cite{karp1972reducibility}, while LPs can be solved in $\PTIME$ with Interior Point methods \cite{grotschel1993ellipsoid,cohen2021solving}.
The complexity of MILPs is exponential in the number of integer variables. 
However, there are conditions under which ILPs become tractable.
In particular, if 
there is an optimal integral assignment to the LP relaxation, then the original ILP can be solved in $\PTIME$ as well.
A lot of work studies conditions when this property holds~\cite{ford1956maximal, schrijver1998theory, cornuejols2002ideal, lau2011iterative}.
A famous example is the max-flow min-cut problem
which can be solved with LPs despite integrality constraints.
The \emph{max-flow Integrality Theorem} ~\cite{ford1956maximal}
states that for every flow graph with all capacities as integer values, there is an optimal maximum flow such that all flow values are integral.
Therefore, in order to find an integral max-flow for such a graph, one need not solve an ILP but rather an LP relaxation suffices to get the same optimal value.
There are many other structural characteristics that define when the LP is guaranteed to have an integral minimum, and thus where ILPs are in $\PTIME$. 
For example, if the constraint matrix of an ILP is \emph{Totally Unimodular} \cite{schrijver1998theory} then the LP always has the same optima. 
Similarly, if the constraint matrix is \emph{Balanced}~\cite{conforti2006balanced}, several classes of ILPs are $\PTIME$.

We use the results of Balanced Matrices to show that the resilience and responsibility of any read-once data instances can be found in $\PTIME$ (as an additional result in~\cref{SEC:THEORY:READONCEANDFDS}).
For other $\PTIME$ cases, we have ILP constraint matrices that do not fit into any previous tractability characterization. 
Despite this, we are able to use these results indirectly (via an intermediate flow representation) to show that the \emph{LP relaxation} has the same objective as the original ILP and thus the ILP can be solved in $\PTIME$.

\introparagraph{Linear Optimization Solvers}
A key advantage of modeling problems as ILPs is practical.
There are many highly-optimized ILP solvers, both commercial~\cite{gurobi} and free \cite{mitchell2011pulp}
which can obtain exact results fast in practice.
ILP formulations are standardized, and thus programs can easily be swapped between solvers.
Any advances made over time by these solvers (improvements in the presolve phase, heuristics, and even novel techniques) can automatically make implementations of these problems better over time.

For our experimental evaluation we use Gurobi.\footnote{Gurobi offers a free academic license \url{https://www.gurobi.com/academia/academic-program-and-licenses/}.}
Gurobi uses an LP based branch-and-bound method to solve ILPs and MILPs~\cite{gurobi_working}. 
This means that it first computes an LP relaxation bound and then explores the search space to find integral solutions that move closer to this bound.
If an integral solution is encountered that is equal to the LP relaxation optimum, then the solver 
has found a guaranteed optimal solution and is done.
In other words, if we can prove that the LP relaxation of our given ILP formulation has an integral optimal solution, then we are guaranteed that our original ILP formulation will terminate in $\PTIME$ even without changing the formulation or letting the solver know anything about the theoretical complexity.

\introparagraph{Disjunctive Logic Programs (DLPs)}
Disjunctive Logic Programs 
are Logic Programs that allow disjunction in the head of a rule~\cite{10.1007/BF03037171, 10.1145/502807.502810}.
DLPs have been shown to be $\Sigma^2_p$-complete \cite{eiter1995computational,10.1145/261124.261126}, and are more expressive than Logic Programs without disjunctions that are $\NPC$.
The key to higher expressivity is the non-obvious \emph{saturation} technique that can check if \emph{all} possible assignments satisfy a given property~\cite{EITER1993231}.
Logic Programs have been used for database repairs~\cite{gelfond2014knowledge} 
and to determine the responsibility of tuples in a database~\cite{bertossi2021specifying}. 
We go beyond this to build a DLP that searches for a certificate that proves that solving the resilience/responsibility problem is $\NPC$ for a given query.
We represent our DLP as an Answer Set Program (ASP)~\cite{eiter2009answer} and use \emph{clingo} \cite{clingo}
to solve it.

\section{ILP for Resilience}
\label{sec:res-ilp}
\label{SEC:RES-ILP}

We construct an Integer Linear Program $\resilpparam{Q,D}$ 
from a CQ $Q$ and a database 
$D$
which returns the solution to the optimization problem 
$\resdecision^*(Q,D)$ for any Boolean CQ (even with self-joins) under either set or bag semantics.\footnote{Notice that we also write $\textsf{ILP}[\textsf{problem}]$ for the optimal value of the program}
This section focuses on the correctness of the ILP.
\Cref{sec:relaxations} later investigates how easy cases 
can be solved in $\PTIME$, despite the problem being $\NPC$ in general.

To construct the ILP, we need to specify the decision variables, constraints and objective.
As input to the ILP, we first run the query on the database instance to compute all the witnesses. 
This can be achieved with a modified witness query, a query that returns keys for each table, and thus each returned row is a \emph{set} of tuples from each of the tables.\footnote{Duplicate tuples have the same key.}

\introparagraph{1. Decision Variables} 
We create an indicator variable $X[t] \in \{0, 1\}$ for each tuple $t$ in the database instance $D$.
A value of $1$ for $X[t]$ means that $t$ is included in a contingency set, and $0$ otherwise.
For bag semantics, \cref{thm:res-bag} shows that it suffices to define a single variable for a set of duplicate tuples 
(intuitively, an optimal solution chooses either all or none).

\introparagraph{2. Constraints} Each witness must be destroyed in order to make the output false
for a Boolean query (or equivalently, to eliminate all output tuples from a non-Boolean query).
A witness is destroyed, when at least one of its tuples is removed from the input. 
Thus, for each witness, we add one constraint enforcing that at least one of its tuples must be removed.
For example, for a witness $\w = \{ r_i, r_j, r_k \}$ 
we add the constraint that $X[r_i] + X[r_j] + X[r_k] \geq 1$.\footnote{Notice that for SJ-free queries, the number of tuples in each constraint is exactly equal to the number of atoms in the query. 
But for  queries with self-joins, the number of tuples in each constraint is not fixed (is lower when a tuple joins with itself).}

\introparagraph{3. Objective} Under set semantics, we simply want to minimize the number of tuples deleted. 
Since for bag semantics we have made a simplification that we use only one variable per ``\emph{unique tuple},'' marking that tuple as deleted has cost equal to deleting all copies of the tuple.
Thus, we weigh each tuple by the number of times it occurs to create the minimization objective.

\begin{example}[$\res$ ILP]
    Consider the Boolean two-chain query with self-join $\qsjtwochain \datarule R(x,y), R(y,z)$
	and a database $D$ with a single table $R$
	$\{(1,1), (2,3) (3,4) \}$
	The query over $D$ has 2 witnesses:
	\begin{center}
        \begin{tabular}{|c|c|c|l}
            \cline{1-3}
            x & y & z  \\
            \cline{1-3}
            1 & 1 & 1 & $\vec w_1= \{r_{11}\}$ \\
            2 & 3 & 4 & $\vec w_2= \{r_{23}, r_{34}\}$ \\
            \cline{1-3}
        \end{tabular}
    \end{center}

    Each tuple has a decision variable.
    Thus, our ILP has 3 variables $X[r_{11}]$, $X[r_{23}]$, and $X[r_{34}]$.
    We create a constraint for each unique witness in the output, resulting in two constraints:
	\begin{align*}
    X[r_{11}] \geq 1 \qquad
    X[r_{23}] + X[r_{34}] \geq 1
	\end{align*}

    Finally, the objective is to minimize the tuples deleted, thus, to minimize: $X[r_{11}] + X[r_{23}] + X[r_{34}]$.
    Solving this results in an objective of 2 at $X[r_{11}] = 1$, $X[r_{23}] = 1$, $X[r_{34}] = 0$.
    Intuitively, one can see that $\res(Q,D)=2$ as removing $r_{11}$ and $r_{23}$ from $R$ is the smallest change required to make the query false.
    \label{example:res}
\end{example}

\begin{example}[$\res$ ILP: Bag Semantics]
	Assume the same problem as \cref{example:res}, but we allow duplicates in the input. 
    Concretely assume $r_{23}$ appears twice:
	$R' = \{(1,1):1, (2,3):\textcolor{red}{2}, (3,4):1 \}$. 
	The variables and constraints stay the same, only the objective function changes now to 
	$$
		\min \big\{ X[r_{11}] + \textcolor{red}{2} X[r_{23}] + X[r_{34}] \big\}
	$$
    Removing $r_{11}$ and $r_{23}$ is no longer optimal since it incurs a cost of $3$.
    The optimal solution is now at $X[r_{11}] = 1$, $X[r_{23}] = 0$, $X[r_{34}] = 1$, with the objective value $2$.

\end{example}

Before we prove the correctness of $\resilpparam{Q, D}$ in \cref{thm:res-correctness}, we will justify our decision to use a single decision variable per unique tuple with the help of \cref{thm:res-bag}.

\begin{restatable}{lemma}{thmresilpbag}
    \label{thm:res-bag}
    There exists a resilience set where for each unique tuple in D, either all occurrences of the tuple are in the resilience set, or none are. 
\end{restatable}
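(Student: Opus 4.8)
The plan is to argue via a local exchange (trimming) argument on an arbitrary minimum contingency set, and the conceptual heart of it is to first make precise what "destroying a witness" means under bag semantics. First I would fix the destruction condition: a witness $w \in \witnesses(Q,D)$ is a valuation of $Q$ permitted by $D$, and it is permitted exactly when every tuple value referenced by $w$ occurs in $D$ with multiplicity at least one. In particular, the query's truth depends only on the \emph{presence} of each distinct tuple value, not on how many copies survive. Hence a contingency set $\Gamma$ destroys $w$ if and only if there is at least one tuple value in $w$ all of whose copies lie in $\Gamma$; deleting only \emph{some} copies of a value $t$ leaves $t$ present in $D - \Gamma$ and invalidates no witness that uses $t$.

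Given this, the exchange step is routine. Let $\Gamma$ be any resilience set, i.e.\ a contingency set of minimum size, and suppose toward a contradiction that some distinct tuple value $t$ has a proper, nonempty subset of its copies in $\Gamma$. By the destruction condition, these copies of $t$ cannot be responsible for destroying any witness, since $t$ still survives in $D - \Gamma$; every witness that $\Gamma$ destroys must therefore be destroyed through some \emph{other} value whose copies lie entirely in $\Gamma$. Removing the stray copies of $t$ from $\Gamma$ thus yields a set $\Gamma'$ with $D - \Gamma' \not\models Q$ but $|\Gamma'| < |\Gamma|$, contradicting minimality. Consequently no value can appear only partially in a minimum contingency set, so every resilience set already has the all-or-none property, and in particular one exists. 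For the bare existence claim it would suffice to trim all stray copies from an arbitrary resilience set, obtaining an all-or-none contingency set of no larger cost, which is therefore itself a resilience set.

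The hard part will really only be the first step: stating and justifying the bag-semantics destruction condition cleanly, namely that a witness is governed by the presence (multiplicity $\geq 1$) of its tuple values rather than by which particular copies remain, and noting that duplicate valuations arising from different copy-combinations collapse to a single witness. I would also dispatch exogenous tuples in one sentence: an exogenous value contributes no copies to any contingency set, so it trivially falls into the "none" case and is consistent with the claim. Once the destruction condition is in place, the trimming exchange and the minimality contradiction are immediate, and the lemma cleanly justifies using a single decision variable per unique tuple (weighted by its multiplicity) in $\resilpparam{Q,D}$.
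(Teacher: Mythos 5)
Your proof is correct, but it takes a genuinely different route from the paper's. You first pin down the bag-semantics destruction condition --- a valuation-witness is destroyed only when \emph{all} copies of at least one of its tuple values are deleted, so removing a proper subset of the copies of a value is semantically inert --- and then trimming the stray copies of $t$ out of a minimum contingency set immediately contradicts minimality, uniformly and with no case distinctions. The paper instead argues at the level of individual copies: assuming a minimum set $R$ contains a copy $t$ but not its duplicate $t'$, every witness through $t'$ must be covered by some other tuple $x_i \in R$, and by the $t \leftrightarrow t'$ symmetry those same $x_i$ also cover the witnesses through $t$, making $t$ redundant; this forces a separate case analysis when some $x_i = t$ (a witness containing both copies, which arises only with self-joins), resolved by considering the additional witness obtained by joining $t'$ with itself. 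Your value-level argument buys simplicity and stays closer to the paper's own definition of witnesses as valuations (value-level objects, so no copy bookkeeping is needed), dissolving the self-join case from the start; the paper's copy-level exchange is more hands-on and needs that extra case. Both arguments in fact establish the stronger claim that \emph{every} minimum contingency set has the all-or-none property, not just that one exists.
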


\begin{proofintuition*}[\cref{thm:res-bag}]
We show that if a tuple $t$ is in a contingency set $\Gamma$ but a duplicate tuple $t'$ is not, 
then removing $t$ leads to a now smaller contingency set $\Gamma'$.
This is due to the fact that since $t$ and $t'$ they are identical, they form witnesses with the same set of tuples.
If $t'$ is not in the contingency set, there must be another tuple in the contingency set for every witness of $t'$.
This implies that all the witnesses $t$ participates in are already covered, and $t$ need not be in the contingency set.
\qed
\end{proofintuition*}

\begin{restatable}{theorem}{thmresilpcorrect}[$\res$ \textsf{ILP} correctness]
    \label{thm:res-correctness}
	$\resilpparam{Q, D} = \resdecision^*(Q,D)$ for any CQ $Q$ and database $D$ under set or bag semantics.
\end{restatable}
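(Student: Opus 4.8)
The plan is to prove the equality by exhibiting a cost-preserving correspondence between feasible solutions of the ILP and valid contingency sets, and then taking minima on both sides. The conceptual heart of the argument is a single equivalence: for any $\Gamma \subseteq D$, we have $D - \Gamma \not\models Q$ \emph{if and only if} $\Gamma$ contains at least one tuple of every witness in $\witnesses(Q,D)$. This holds because $D \models Q$ exactly when some witness is entirely present in the database, so deleting $\Gamma$ falsifies $Q$ precisely when no witness survives intact---equivalently, when each witness $\vec w$ has some tuple inside $\Gamma$. This observation identifies $\resdecision^*(Q,D)$ with a minimum-weight hitting-set problem on the witness hypergraph, which is exactly what the ILP encodes through its ``at least one tuple per witness'' constraints.

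First I would establish $\resilpparam{Q,D} \geq \resdecision^*(Q,D)$: given any feasible assignment $X$, set $\Gamma = \{t : X[t]=1\}$. Each constraint $\sum_{t \in \vec w} X[t] \geq 1$ forces $\Gamma$ to hit the witness $\vec w$, so by the equivalence $D - \Gamma \not\models Q$, and $\Gamma$ is a contingency set whose size (set semantics) or total multiplicity (bag semantics) equals the objective value of $X$. For the reverse inequality, I would start from an optimal contingency set $\Gamma$ and define the indicator assignment $X[t] = 1 \iff t \in \Gamma$; since $D - \Gamma \not\models Q$, every witness is hit, every constraint is satisfied, and the objective of $X$ matches the cost of $\Gamma$. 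Choosing optimal objects on each side then yields $\resilpparam{Q,D} = \resdecision^*(Q,D)$.

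For bag semantics the subtlety is that the ILP carries one variable per \emph{unique} tuple, weighted by its multiplicity, rather than one variable per occurrence. This is precisely where \cref{thm:res-bag} is invoked: it guarantees that some optimal resilience set is ``all-or-nothing'' on each group of duplicate tuples, so restricting the search to assignments that treat all copies of a unique tuple identically cannot increase the optimum. Under this restriction the single-variable encoding faithfully represents both the feasible region and the weighted cost, and the two-direction argument above carries through verbatim with multiplicities as weights.

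The main point requiring care is the self-join case, where one tuple may instantiate several atoms of the \emph{same} witness. I must check that the witness-hitting equivalence and the constraint encoding survive this: removing such a tuple once already destroys the witness, so the corresponding constraint should sum over the \emph{distinct} tuples of $\vec w$ with unit coefficients (as the construction does), leaving the hitting-set reduction intact. Once this bookkeeping is settled, the argument is uniform across set and bag semantics and across self-join-free and self-join queries, completing the proof.
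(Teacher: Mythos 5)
Your proposal is correct and follows essentially the same route as the paper's own proof: a two-directional correspondence between feasible ILP assignments and valid contingency sets (the paper calls the two directions ``validity'' and ``optimality''), with \cref{thm:res-bag} invoked to justify the single variable per unique tuple under bag semantics. Your write-up is in fact somewhat more explicit than the paper's, which relegates the self-join coefficient issue to a footnote in the ILP construction rather than the proof body, but the underlying argument is the same.
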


\begin{proofintuition*}
    We prove validity by showing that any satisfying solution would necessarily destroy all witnesses i.e.\ make the query false. 
    Thus if we consider any invalid solution i.e.\ one in which not all witnesses have been destroyed, we can see that there is an unsatisfied constraint in $\resilp$. 
    Hence all $\resilp$ are valid.
    Next we prove optimality by showing that any valid resilience set would be a valid solution for the ILP. 
    This is equivalent to showing that any valid contingency set is a solution to $\resilp$, since they must satisfy all constraints.
    Since $\resilp$ always gives a valid, optimal solution, it is correct.
    \qed
\end{proofintuition*}

We would like to stress to the reader that changing from sets to bags affects only the objective function, not the constraint matrix.
Later in \cref{sec:theoreticalresults}, we will prove that for queries such as $\qtriangleunary$, the problem of finding resilience becomes $\NPC$ under bag semantics, while it is solvable in $\PTIME$ under set semantics.
This observation is significant because most literature on tractable cases in ILP focuses exclusively on analyzing the constraint matrix. For example, if an ILP has a constraint matrix that is Totally Unimodular it is $\PTIME$ no matter the objective function \cite[Section 19]{schrijver2003combinatorial}.

\section{ ILP for Responsibility}
\label{sec:resp-ilp}
\label{SEC:RESP-ILP}

The ILP for $\rsp$ builds upon $\resilp$ with an important additional consideration.
While the goal of $\resilp$ was to destroy \emph{all} output witnesses, in $\rspilpparam{Q, D, \resptuple}$ we must 
also ensure that not all the output is destroyed.
To enforce this, we need additional constraints and additional decision variables to track the witnesses that are destroyed.

\introparagraph{1. Decision Variables} 
$\rspilpparam{Q, D, \resptuple}$ has two types of decision variables:
\begin{enumerate}[label=(\alph*)]
    \item $X[t]$: Tuple indicator variables are defined for all tuples in the set of witnesses we wish to destroy.
    \item $X[\w]$: Witness indicator variables help preserve at least $1$ witness that contains $\resptuple$.
    We track all witnesses that contain $\resptuple$ and set $X[\vec w]=1$ if the witness is destroyed and  $X[\vec w]=0$ otherwise.
\end{enumerate}

\introparagraph{2. Constraints} 
We deal with three types of constraints.
\begin{enumerate}[label=(\alph*)]
\item Resilience Constraints: Every witness that does not contain $\resptuple$ must be destroyed.
    As before, for such witnesses $\vec w_i= (r_i, r_j \hdots r_k)$ we enforce $X[r_i] + X[r_j] +\hdots + X[r_k] \geq 1$

\item Witness Tracking Constraints: For those witnesses that contain $\resptuple$, we need to track if the witness is destroyed.
    If any tuple that participates in a witness is deleted, then the witness is deleted as well. 
    Thus, we can enforce that $X[\vec w] \geq X[t]$ where $t \in \vec w$. 
    Notice that we just care about tuples that need to be potentially deleted, 
	i.e.\ only tuples that occur in witnesses without $\resptuple$. 

\item Counterfactual Constraint: A single constraint ensures that at least one of the witnesses that contains the responsibility tuple is preserved.
As example, if only the witnesses $\vec w_1, \vec w_2, \vec w_3$ contain $\resptuple$, 
then this constraint is $X[\vec w_1]+X[\vec w_2]+X[\vec w_3] \leq 2$.
\end{enumerate}

\introparagraph{3. Objective} 
The objective is the same as for $\resilpparam{Q, D}$: we minimize the number of tuples deleted (weighted by the number of occurrences).

\begin{restatable}{theorem}{thmrespilpcorrect}
    \label{thm:resp-correctness}
    $\rspilpparam{Q,D,\resptuple} =$ 
    $\rsp^*(Q, D, \resptuple)$ of a tuple $\resptuple$ in database instance $D$ under CQ $Q$ under set or bag semantics.
\end{restatable}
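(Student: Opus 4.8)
The plan is to establish the two inequalities $\rspilpparam{Q,D,\resptuple} \leq \rsp^*(Q,D,\resptuple)$ and $\rspilpparam{Q,D,\resptuple} \geq \rsp^*(Q,D,\resptuple)$ by exhibiting a value-preserving correspondence between feasible integral assignments of the ILP and valid responsibility contingency sets, exactly mirroring the two-part (validity then optimality) argument used for \cref{thm:res-correctness}. Throughout, a feasible integral assignment $X$ is associated with the set $\Gamma = \{t : X[t] = 1\}$, and the objective $\sum_t \occ(t)\,X[t]$ equals the (occurrence-weighted) size of $\Gamma$; the witness variables $X[\w]$ carry no cost and serve only as bookkeeping. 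It is convenient to first note the WLOG simplification that an optimal $\Gamma$ never deletes a tuple occurring \emph{only} in witnesses that contain $\resptuple$: such a deletion cannot help destroy any $\resptuple$-free witness (so the $\not\models$ requirement is unaffected) and can only remove surviving $\resptuple$-witnesses, so dropping it from $\Gamma$ is never worse. This justifies introducing tuple variables only for tuples appearing in some $\resptuple$-free witness.

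For the validity direction, I would take any feasible integral $X$ and show $\Gamma$ is a valid responsibility set. The resilience constraints force every witness \emph{without} $\resptuple$ to contain a deleted tuple, so after deleting $\Gamma \cup \{\resptuple\}$ every witness is destroyed (the ones containing $\resptuple$ by the removal of $\resptuple$, the rest by $\Gamma$), giving $D - (\Gamma \cup \{\resptuple\}) \not\models Q$. For the counterfactual requirement $D - \Gamma \models Q$, observe that the witness-tracking constraints $X[\w] \geq X[t]$ imply that $X[\w] = 0$ is feasible only when no tuple of $\w$ lies in $\Gamma$, i.e.\ exactly when $\w$ survives in $D - \Gamma$; the counterfactual constraint $\sum_i X[\w_i] \leq (\text{count} - 1)$ then forces at least one $\resptuple$-witness $\w_i$ with $X[\w_i] = 0$ to survive, witnessing $D - \Gamma \models Q$.

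For the optimality direction, I would start from an arbitrary valid contingency set $\Gamma$ (WLOG as above) and set $X[t] = 1$ iff $t \in \Gamma$ and $X[\w] = \max_{t \in \w} X[t]$. Feasibility of the three constraint families follows directly: the tracking constraints hold by construction; since $D - (\Gamma \cup \{\resptuple\}) \not\models Q$, every $\resptuple$-free witness is destroyed by $\Gamma$, giving the resilience constraints; and since $D - \Gamma \models Q$, some witness survives, which must be a $\resptuple$-witness (all others are destroyed), so its $X[\w] = 0$ satisfies the counterfactual constraint. As the objective equals the weighted $|\Gamma|$, the two optimal values coincide.

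The main obstacle I expect is the coupling between the auxiliary witness variables and the counterfactual constraint: because $X[\w]$ is only lower-bounded and cost-free, one must argue carefully that the optimizer can and will drive some $X[\w_i]$ to $0$ precisely when a $\resptuple$-witness survives, so that the single inequality $\sum_i X[\w_i] \leq (\text{count}-1)$ faithfully encodes ``at least one $\resptuple$-witness is preserved'' without accidentally forbidding otherwise-valid solutions. The remaining care is for bag semantics: as in \cref{thm:res-correctness}, using one variable per unique tuple must be justified by an all-or-none exchange argument analogous to \cref{thm:res-bag}, and one must confirm that duplicates of $\resptuple$ do not create additional surviving witnesses that break the correspondence.
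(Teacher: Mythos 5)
Your proposal is correct and follows essentially the same route as the paper's proof: a two-part validity/optimality argument establishing a value-preserving correspondence between feasible integral ILP assignments and valid responsibility contingency sets, with the witness-tracking and counterfactual constraints enforcing that some $\resptuple$-witness survives. Your write-up is in fact slightly more careful than the paper's — notably the WLOG exchange argument showing an optimal contingency set never deletes tuples occurring only in $\resptuple$-witnesses, which justifies that the ILP's restriction of tuple variables to tuples in $\resptuple$-free witnesses loses nothing, a point the paper's proof glosses over.
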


\begin{proofintuition*}[\cref{thm:resp-correctness}]
    Like \cref{thm:res-correctness}, we prove validity and then optimality. 
    We show that for any responsibility set we can assign values to the ILP variables such that they can form a satisfying solution (this follows from that fact that the responsibility set must preserve at least one witness containing $\resptuple$).
    Thus the correct solution is captured
	by $\rspilp$, while any invalid contingency set violates at least one constraint.
    \qed
\end{proofintuition*}

\begin{example}
    Consider $\qtwochain \datarule R(x,y), S(y, z)$ and database instance $D$ with $R = {(1,1)}$,
    $S = \{(1,1), (1,2), (1,3)\}$. 

    \begin{center}
    \begin{tabular}{|c|c|c|l}
        \cline{1-3}
        x & y & z & \\
        \cline{1-3}
        1 & 1 & 1 & $\vec w_1 =$ $\{r_{11},$ $s_{11}\}$ \\
        1 & 1 & 2 & $\vec w_2 =$ $\{r_{11},$ $s_{12}\}$ \\
        1 & 1 & 3 & $\vec w_3 =$ $\{r_{11},$ $s_{13}\}$ \\
        \cline{1-3}
    \end{tabular}
    \end{center}
	
    How do we calculate the responsibility of $s_{11}$? 
    First, we must destroy the two witnesses that do not contain $s_{11}$ i.e.\ $\vec w_2$ and $\vec w_3$.
    The tuple indicator variables we need are - $X[r_{11}]$, $X[s_{12}]$, $X[s_{13}]$.
    (Notice that $s_{11}$ is not tracked itself.)
    Since we need to track $\vec w_1$ to ensure it isn't destroyed, we need the witness indicator variable $X[\vec w_1]$.
    The resilience constraints are:
    \begin{align*}
    X[r_{11}] + X[s_{12}] \geq 1\\
    X[r_{11}] + X[s_{13}] \geq 1
    \end{align*}

    The witness tracking constraints apply only to $X[\vec w_1]$:
    \begin{align*}
    X[\vec w_{1}] \geq X[r_{11}]
    \end{align*}

    Finally, we use the counterfactual constraint to enforce that at least one witness is preserved.
    In this example, this implies directly that $\vec w_1$ may not be destroyed.
    \begin{align*}
    X[\vec w_{1}] \leq 0
    \end{align*}

    Solving this ILP gives us an objective of $2$ when $X[s_{12}] = 1$ and $X[s_{13}] = 1$ and all other variables are set to 0.
    Notice that setting $X[r_{11}]$  to $1$ will force $X[\vec w_{1}]$ to take value $1$ and hence violate the counterfactual constraint. 
    Intuitively, $r_{11}$ cannot be in the responsibility set because deleting it will delete all output witnesses, and not allow $s_{11}$ to be counterfactual.
    \label{example:rsp}
\end{example}

\section{LP Relaxations of $\resilp$ \& $\rspilp$}
\label{sec:relaxations}
\label{SEC:RELAXATIONS}

The previous sections introduced unified ILPs to solve for $\res$ and $\rsp$.
However, ILPs are $\NPC$ in general, and we would like stronger runtime guarantees for cases where $\res$ and $\rsp$ 
can be solved in $\PTIME$. 
We do this with the introduction of LP relaxations,
which generally act as lower bounds for minimization problems. 
However, in \cref{sec:theoreticalresults} we prove that these relaxations $\reslp$ and $\rspmilp$ are actually always equal to the corresponding ILPs 
for all easy SJ-free queries.
Thus, whether easy or hard, exact or approximate, problems can be solved within the same framework, with the same solver, with minimal modification, and with the best-achievable time guarantees.

\subsection{LP Relaxation for $\res$}

LP Relaxations are constructed by relaxing (removing) integrality constraints on variables. 
In $\resilp$, a tuple indicator variable $X[t]$ only takes values $0$ or $1$.
$\reslp$ removes that constraint and allows the variables any (``fractional'') value in 
$[0,1]$.

\subsection{MILP Relaxation for $\rsp$}

For responsibility, the relaxation is more intricate. 
It turns out that an LP relaxation is not optimal for $\PTIME$ cases (\cref{example:resprelaxation}).
We introduce a Mixed Integer Linear Program $\rspmilp$, where tuple indicator variables are relaxed and take values in $[0,1]$ 
whereas witness indicator variables are restricted to values $\{0,1\}$. 
Typically, MILPs are exponential in the number of integer variables i.e.\ if there are $n$ integer binary variables, a solver explores $2^n$ possible branches of assignments.
However, despite having an integer variable for every witness that contains $\resptuple$ (thus up to linear in the size of the database), we show that $\rspmilp$ is in $\PTIME$.

\begin{restatable}{lemma}{thmrspmilpptime}
    \label{thm:rspmilpptime}
    For any CQ $Q$ and tuple $\resptuple$, $\rspmilpparam{Q,D,\resptuple}$ can be solved in $\PTIME$ in the size of database $D$.
\end{restatable}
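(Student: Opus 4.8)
The plan is to exploit the fact that, although $\rspmilpparam{Q,D,\resptuple}$ retains one binary variable $X[\w]$ per witness $\w$ that contains $\resptuple$, these witness variables carry zero weight in the objective and interact with the rest of the program only through the tracking constraints $X[\w] \ge X[t]$ (for tracked $t \in \w$) and the single counterfactual constraint $\sum_{\w \ni \resptuple} X[\w] \le N - 1$, where $N = |\{\w \in \witnesses(Q,D) : \resptuple \in \w\}|$. First I would observe that raising any $X[\w]$ from $0$ to $1$ costs nothing and only relaxes its tracking constraints (each becomes $1 \ge X[t]$, trivially true), so the integrality matters only through the counterfactual constraint, which merely forces at least one witness variable to equal $0$. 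Conversely, $X[\w]=0$ together with the tracking constraints pins $X[t]=0$ for every tracked tuple $t \in \w$, i.e.\ it forces the witness $\w$ to be fully preserved.

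Building on this, the key structural step is to show that some optimal solution preserves exactly one witness through $\resptuple$. Given any optimal solution, let $S = \{\w \ni \resptuple : X[\w]=0\}$; the counterfactual constraint gives $|S| \ge 1$. If $|S| > 1$, I keep one arbitrary $\w^* \in S$ with $X[\w^*]=0$ and reset every other witness variable to $1$. This only removes forced-to-zero constraints on tuples, so the incumbent tuple assignment remains feasible and its objective value is unchanged, hence still optimal. Thus we may assume exactly one $X[\w^*]=0$.

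This reduces the MILP to a polynomial enumeration over the witness to preserve. For each of the $N$ witnesses $\w^* \ni \resptuple$, I fix to $0$ the variables of all (tracked) tuples in $\w^*$, drop all witness variables, and solve the resulting pure LP that minimizes the weighted tuple deletions subject only to the resilience constraints for witnesses that do not contain $\resptuple$ (a choice of $\w^*$ for which this LP is infeasible simply contributes $+\infty$, matching the cases where $\resptuple$ cannot be made counterfactual). Each such LP has size polynomial in $|D|$ and is solvable in $\PTIME$, and $\rspmilpparam{Q,D,\resptuple}$ equals the minimum of their optima. Since $Q$ is fixed, $N \le |\witnesses(Q,D)|$ is polynomial in $|D|$, so we solve polynomially many $\PTIME$ LPs and the total running time stays in $\PTIME$.

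I expect the main obstacle to be the structural argument of the second paragraph: the apparent difficulty is the $2^N$ branching over the binary witness variables, and the crux is justifying that an optimal solution never benefits from preserving more than one witness, which is exactly what collapses the exponential search into a polynomial enumeration. Everything else (the feasibility bookkeeping and the $\PTIME$ LP solves via interior-point methods) is routine.
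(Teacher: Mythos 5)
Your proposal is correct and follows essentially the same route as the paper's proof: you observe that the zero-cost witness variables only matter through the single counterfactual constraint, argue that some optimal solution preserves exactly one witness containing $\resptuple$, and thereby collapse the $2^N$ branching into $N$ polynomial-size LPs whose minimum equals $\rspmilpparam{Q,D,\resptuple}$. Your explicit exchange argument (resetting all but one witness variable to $1$) and the handling of infeasible choices are slightly more careful than the paper's wording, but the underlying idea is identical.
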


\begin{proofintuition*}
    We show that is possible to solve $\rspmilp$ in $\PTIME$ by solving a linear number of linear programs.
    Instead of looking at all possible 0-1 assigments to witness indicator variables - we simply need to select $1$ witness indicator variable that is to be set to $0$.   
    All witness indicator variables are combined into one counterfactual constraint. 
    This constraint is always satisfied when any one of the variable takes value $0$, irrespective of other variable values.
    Thus, we only need to explore the assignments where exactly $1$ variable takes on value $0$, thus a linear number of assignments in the size of the database.
    \qed
\end{proofintuition*}
In addition to the above theoretical proof of the $\PTIME$ solvability of $\rspmilp$, we see experimentally in \cref{sec:experiments} that a typical ILP solver indeed scales in polynomial time to solve $\rspmilp$.

\begin{example}
    \label{example:resprelaxation}
    Consider again the problem in \cref{example:rsp}. 
    The solution of $\rspilp$ was $2$ at  $X[s_{12}]\!=\!1$,  $X[s_{13}]\!=\!1$  $X[r_{11}]\!=\!0$ and $X[\vec w_{1}]\!=\!0$.
    What happens if we relax the integrality constraints and allow $0\!\leq\!X[v]\!\leq\!1$ for all variables?
    We can get a smaller satisfying solution $1.5$ at the point $X[s_{12}]\!=\!0.5$,  $X[s_{13}]\!=\!0.5$  $X[r_{11}]\!=\!0.5$ and $X[\vec w_{1}]\!=\!0.5$. 
    This value is $\rsplp$ and is \emph{not} guaranteed to be equal to $\rspilp$.
	If we instead create $\rspmilp$ and apply \emph{integrality constraints only for the witness indicator variables}, 
    then $X[\vec w_1]$ is forced to be in $\{0,1\}$ while all other variables can be fractional.
    We see that the $\rsplp$ solution is no longer permitted, and solving $\rspmilp$ results in the true $\rsp$ value of 2.
    We show in \cref{sec:theory:responsibility} that $\rspmilp$ = $\rspilp$ for all easy cases like chain queries such as $\qtwochain$ (\cref{Fig_query_venn}).
\end{example}

We conjecture that these relaxations are all we need to solve the problems of resilience and causal responsibility efficiently, whenever an efficient solution is possible. 
In \cref{sec:theoreticalresults}, we prove that \cref{conj:reseasy,conj:rspeasy} are true for all self-join free queries.

\begin{conjecture}[$\res$ is easy $\Rightarrow$ LP=ILP]
	\label{conj:reseasy}
	If $\res(Q)$ can be solved in $\PTIME$ under set/bag semantics, then $\reslpparam{Q,D} = \resilpparam{Q,D}$ for any database $D$ under the same semantics.
\end{conjecture}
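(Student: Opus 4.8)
The plan is to prove the statement for self-join-free CQs, for which the dichotomy pins down exactly which queries have $\res(Q)$ in \PTIME: under set semantics these are the triad-free queries (the linear queries of \cref{thm:reseasy} together with those reducible to linear ones by domination, \cref{thm:reseasydomination}), and under bag semantics only the linear queries (\cref{thm:reseasy}, with \cref{thm:reshardbag} ruling out the rest). It therefore suffices to establish $\reslpparam{Q,D}=\resilpparam{Q,D}$ for every database $D$ and every $Q$ in this easy class; for all other queries the hypothesis of the conjecture fails. Throughout I use that $\reslpparam{Q,D}\le\resilpparam{Q,D}$ holds trivially, since relaxing integrality only enlarges the feasible region of a minimization problem, so the whole task is the matching lower bound $\reslpparam{Q,D}\ge\resilpparam{Q,D}$.

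First I would reduce resilience of an easy query to a minimum $s$--$t$ cut. For a linear query I build the intermediate flow network $N(Q,D)$ underlying the \PTIME{} algorithm: its internal nodes are the endogenous tuples, each with capacity equal to the occurrence count $c_t$ (which is $1$ under set semantics), laid out in layers following the linear order of the atoms, with source and sink attached so that every source--sink path traces out exactly one witness. Correctness of that algorithm (which I either cite or reprove) gives $\mathrm{mincut}(N)=\resdecision^*(Q,D)$, and by \cref{thm:res-correctness} the right-hand side equals $\resilpparam{Q,D}$. Let $F$ denote this common value; since all capacities are integers, the max-flow min-cut integrality theorem yields an integral maximum flow of value $F$.

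The heart of the argument is to turn this max-flow into a certificate for the LP lower bound via duality. The dual of $\reslpparam{Q,D}$ is the fractional witness-packing
\[
\max\ \sum_{\vec w} y_{\vec w}\quad\text{s.t.}\quad \sum_{\vec w:\,t\in\vec w} y_{\vec w}\le c_t \ \text{ for every tuple } t,\ \ y\ge 0.
\]
Decomposing the integral max-flow into source--sink paths and setting $y_{\vec w}$ to the flow on the path that traces witness $\vec w$ yields a feasible packing: the capacity constraint at the node of tuple $t$ is exactly the packing constraint $\sum_{\vec w\ni t} y_{\vec w}\le c_t$, and the total packed value equals the flow value $F$. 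Hence $\reslpparam{Q,D}=(\text{dual optimum})\ge F=\resilpparam{Q,D}$, which with the trivial inequality gives equality. The bag case needs no new idea: only the weights $c_t$ change to occurrence counts, which are still integers, so the same network with these integer capacities and the same duality argument apply verbatim---this is precisely the claim under bag semantics.

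The step I expect to be the main obstacle is the network construction together with the path--witness correspondence above, especially beyond pure chains. For a linear query with several join variables I must design $N(Q,D)$ so that (i) its min-cut still equals the resilience and (ii) the path-decomposition of \emph{any} max-flow lands only on genuine witnesses, so that the induced $y$ is feasible for the packing LP rather than charging capacity to tuple-combinations that are not witnesses. Handling the set-semantics queries that become easy only after domination (\cref{thm:reseasydomination}) adds a further wrinkle: I would argue that a dominated atom's tuples can be fixed to $0$ in an optimal fractional solution---deleting a dominated tuple is never cheaper than deleting its dominator---thereby reducing the LP, integrally, to the linear case already handled. Getting these reductions to preserve both the cut value and the integrality of the flow is where the real work lies.
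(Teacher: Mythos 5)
Your proposal is correct and takes essentially the same approach as the paper: it uses the SJ-free dichotomy to pin down the easy queries, handles linear queries via the prior flow-graph encoding (source--sink paths corresponding exactly to witnesses) together with max-flow min-cut integrality, and handles deactivated triads under set semantics by arguing that dominated tuples can be fixed to $0$ in an optimal fractional solution, reducing to the linear case. The only divergence is presentational: you certify the LP lower bound with an explicit dual witness-packing read off an integral max-flow decomposition, whereas the paper argues on the primal side that any fractional LP solution is a fractional cut of the same network and then invokes the same integrality theorem --- the two certificates are mirror images under LP duality.
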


\begin{conjecture}[$\rsp$ is easy $\Rightarrow$ MILP=ILP]
	\label{conj:rspeasy}
	If $\rsp(Q)$ can be solved in $\PTIME$ under set/bag semantics, then $\rspmilpparam{Q,D} = \rspilpparam{Q,D}$ for any database $D$ under the same semantics.
\end{conjecture}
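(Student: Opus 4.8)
The plan is to establish the statement for self-join-free CQs, the regime in which \cref{conj:rspeasy} is actually provable, by reducing it to the resilience analogue \cref{conj:reseasy} together with the decomposition underlying \cref{thm:rspmilpptime}. The starting observation is that $\rspmilpparam{Q,D,\resptuple}$ is not monolithic but splits over the choice of which $\resptuple$-witness is preserved. Since the witness indicators $X[\w]$ are the only integral variables and they interact only through the single counterfactual constraint $\sum_{\w \ni \resptuple} X[\w] \le N-1$ (with $N$ the number of $\resptuple$-witnesses), an optimum sets exactly one of them to $0$ (preserve some $\w^*$) and is otherwise unconstrained among those variables. Thus $\rspmilpparam{Q,D,\resptuple} = \min_{\w^* \ni \resptuple} L(\w^*)$, where $L(\w^*)$ is the pure LP obtained by fixing $X[\w^*]=0$ and relaxing every remaining variable to $[0,1]$.

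First I would show that each inner LP $L(\w^*)$ is exactly a resilience LP relaxation on a derived sub-instance. Setting $X[\w^*]=0$ activates the witness-tracking constraints $X[\w^*] \ge X[t]$ for every $t \in \w^*$, forcing $X[t]=0$ and thereby pinning the endogenous tuples of $\w^*$ to ``exogenous.'' The constraints that remain binding are precisely the resilience constraints demanding that every witness \emph{not} containing $\resptuple$ be destroyed: the tracking constraints for the other $\resptuple$-witnesses are slack (their $X[\w]$ may be set to $1$ at no cost), and the counterfactual constraint is already satisfied. Consequently $L(\w^*) = \reslpparam{Q, D_{\w^*}, \exoset_{\w^*}}$, where $D_{\w^*}$ is the sub-instance whose witnesses are exactly those of $Q$ over $D$ not containing $\resptuple$ (concretely $D \setminus \{\resptuple\}$ for SJ-free $Q$) and $\exoset_{\w^*} = \w^* \setminus \{\resptuple\}$. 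By the identical combinatorial observation that an optimum preserves exactly one $\resptuple$-witness, the integral program decomposes the same way, so by \cref{thm:resp-correctness} we have $\rspilpparam{Q,D,\resptuple} = \min_{\w^* \ni \resptuple} \resilpparam{Q, D_{\w^*}, \exoset_{\w^*}}$.

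Next I would transfer integrality. The key auxiliary claim to establish is that whenever $\rsp(Q)$ is in $\PTIME$, each derived resilience instance $(Q, D_{\w^*}, \exoset_{\w^*})$ is one for which resilience is easy; intuitively, deleting $\resptuple$ (hence dropping a family of witnesses) and marking further tuples exogenous can only restrict the feasible contingency sets, so it cannot turn an easy resilience problem hard within the SJ-free dichotomy. Granting this, \cref{conj:reseasy} (proved for SJ-free queries in \cref{sec:theoreticalresults}) yields $\reslpparam{Q,D_{\w^*},\exoset_{\w^*}} = \resilpparam{Q,D_{\w^*},\exoset_{\w^*}}$, i.e.\ each inner LP attains its ILP value at an integral point. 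Taking the minimum over the linearly many choices of $\w^*$ then gives $\rspmilpparam{Q,D,\resptuple} = \min_{\w^*} \reslpparam{Q,D_{\w^*},\exoset_{\w^*}} = \min_{\w^*} \resilpparam{Q,D_{\w^*},\exoset_{\w^*}} = \rspilpparam{Q,D,\resptuple}$, and the witnessing assignment (integral tuple variables together with $X[\w^*]=0$ and the remaining witness variables set to $1$) is integral and feasible for the full ILP.

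The main obstacle I anticipate is precisely the bridging claim of the last paragraph: arguing that tractability of $\rsp(Q)$ genuinely propagates to tractability of every resilience subinstance $(Q,D_{\w^*},\exoset_{\w^*})$, so that the resilience LP=ILP result applies. This is delicate because the responsibility dichotomy is strictly more fine-grained than the resilience dichotomy for SJ-free queries (for instance the domination-based splits in \cref{sec:theory:responsibility}), so one must verify case-by-case, using the structural/domination characterization of the easy responsibility queries, that exogenizing a single $\resptuple$-witness never yields a resilience instance on which $\reslp$ is non-integral. A secondary technicality is checking that the bag-semantics weighting passes through the decomposition unchanged, which is immediate here since, as stressed after \cref{thm:res-correctness}, bags alter only the objective and not the constraint matrix.
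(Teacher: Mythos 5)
Your decomposition is sound as far as it goes: splitting $\rspmilp$ (and $\rspilp$) over the choice of preserved witness $\w^*$ is exactly the mechanism behind \cref{thm:rspmilpptime}, and identifying each inner program as a resilience program on the sub-instance of witnesses avoiding $\resptuple$, with $\exoset_{\w^*} = \w^* \setminus \{\resptuple\}$ made exogenous, is correct. The genuine gap is the bridging claim, and your proposed justification for it is not merely incomplete but invalid. You argue that deleting $\resptuple$ and marking further tuples exogenous ``can only restrict the feasible contingency sets, so it cannot turn an easy resilience problem hard.'' The paper's own hardness theorem \cref{thm:rspharddominatedtable} is built on precisely the opposite phenomenon: for $\qtriangleunary$ under set semantics, resilience is in $\PTIME$ (\cref{thm:reseasydomination}), yet $\rsp$ with $\resptuple$ in a triad atom is $\NPC$, and its proof constructs the IJP exactly by exploiting the exogenous tuple of the dominating relation $A$ that a preserved witness forces. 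Since your decomposition writes $\rsp$ as a minimum over linearly many derived resilience instances, $\NPC$-hardness of $\rsp(\qtriangleunary)$ implies that these derived instances cannot all be tractable: exogenizing the tuples of a single witness really does turn an easy resilience query into a hard family of instances. So no monotonicity principle can carry the bridging claim; when it holds, it holds only because the hypothesis ``$\rsp(Q)$ is in $\PTIME$'' excludes precisely these configurations, which is the content of the fine-grained responsibility dichotomy, not a consequence of restriction.

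Relatedly, even where the bridging claim is true, you cannot invoke \cref{conj:reseasy} as proved: \cref{thm:reseasy,thm:reseasydomination} are stated for instances without exogenous tuples, and the replacement argument behind \cref{thm:reseasydomination} (swap a dominated tuple $r_i$ for its dominating tuple $a_i$) breaks exactly when $a_i$ is exogenous --- which is what happens in your derived instances whenever $\resptuple$ does not belong to the dominating relation. Closing the gap therefore requires redoing, per case, the work the paper actually does: the flow-graph argument for linear queries (\cref{thm:rspeasy}, which does extend to exogenous tuples via infinite-capacity edges), and the responsibility-specific ``replaceable or invalid'' domination arguments for fully deactivated triads (\cref{thm:rspeasyfullydominated}) and for the case where $\resptuple$'s relation dominates the triad (\cref{thm:rspeasydominatingtable}). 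Your skeleton is a reasonable reorganization of the proof, but essentially all of the statement's content lives in the step you deferred.
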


\section{Finding hardness certificates}
\label{sec:IJP}
\label{SEC:IJP}

Freire et al.~\cite{DBLP:conf/pods/FreireGIM20} conjectured that the ability to construct a particular certificate 
called ``Independent Join Path'' is a sufficient criterion to prove hardness of resilience for a query.
We prove here that not the original, but a slight variation of that idea is indeed correct. 

We also prove that this construction is a \emph{necessary criterion for hardness of self-join free queries}
and conjecture it to be also necessary for any query.
In addition, we also give a Disjunctive Logic Program ($\ijpdlp$) that can create hardness certificates and use it to prove hardness for 5 previously open queries with self-joins.

\subsection{Independent Join Paths (IJPs)}

We slowly build up intuition to define IJPs (\cref{def:JP,def:IJP}).
Recall the concept of a \emph{canonical database} for a minimized CQ
resulting from replacing each variable with a \emph{different constant}~\cite{DBLP:conf/stoc/ChandraM77,UllmanBookPDK}.
For example $A(1),R(1,2), S(2,3), T(3,1)$ is a canonical database for the triangle query $\qtriangleunary \datarule A(x),R(x,y), S(y,z),$ $T(z,x)$.
Intuitively, one can think of a \emph{witness} as more general than a canonical database in that several variables may map to the \emph{same constant}.
A join path is then a set of witnesses that share enough constants to be connected
(this sharing of constants can be best formalized as a \emph{partition of the constants} among a fixed number of witnesses).
In addition, join paths are defined with two ``\emph{isomorphic}'' sets of tuples,
the start $\mathcal{S}$ and terminal $\mathcal{T}$ (both together called the ``\emph{endpoints}'').
We call two sets of tuples 
\emph{isomorphic} iff 
there a bijective mapping between the constants of the sets that preserves the sets of shared constants across table attributes.
For example, $\mathcal{S}_1 = \{R(1,2), A(2), R(2,2)\}$
is isomorphic to
$\mathcal{S}_2 = \{R(3,4), A(4), R(4,4)\}$
but not to
$\mathcal{S}_2 = \{R(3,4), A(4), R(4,5)\}$.

\begin{definition}[Join Path (JP)]\label{def:JP}
A database $D$ (under set or bag semantics) forms a Join Path from a set of tuples
$\mathcal{S}$ (start) to a set of tuples $\mathcal{T}$ (terminal),
for query $Q$ if 
\begin{enumerate}
	\item Each tuple in $D$ participates in some witness (i.e.\ $D$ is reduced).
	\item The witness hypergraph is connected.
	\item $\mathcal{S}$ and $\mathcal{T}$ form a \emph{valid endpoint pair}, i.e.:
	\begin{enumerate}[label=(\roman*)]
		\item $\mathcal{S}$ and $\mathcal{T}$ are isomorphic and non-identical.
		\item There is no endogenous tuple $t \in D$, $t \notin \mathcal{S} \cup \mathcal{T}$ 
		whose constants are a subset of the constants of tuples in $\mathcal{S} \cup \mathcal{T}$.
	\end{enumerate}
\end{enumerate}
\end{definition}

\begin{figure}[t]
\centering
\begin{subfigure}[b]{.37\linewidth}
	\centering
	\includegraphics[scale=0.22]{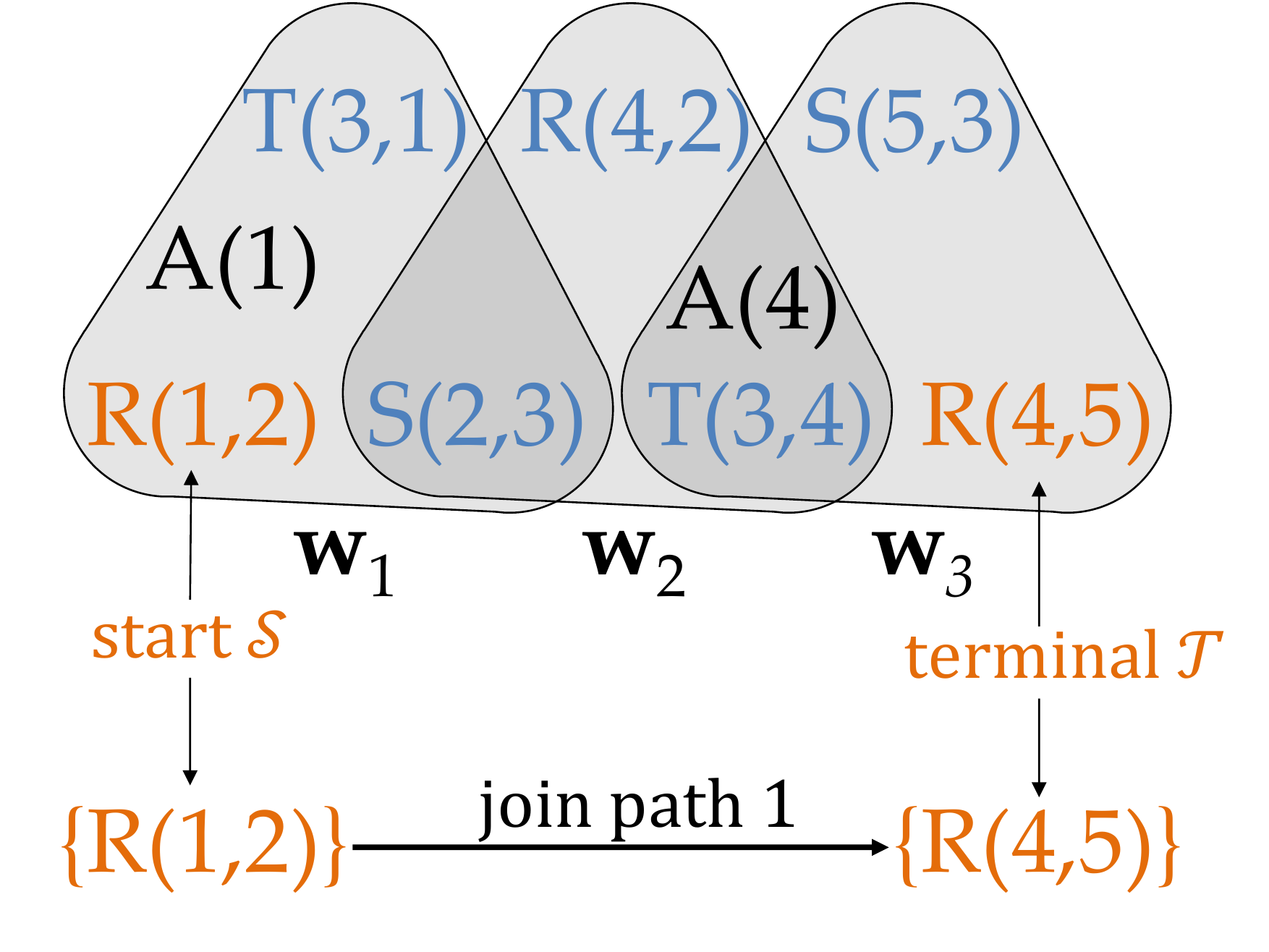}
	\caption{}\label{Fig_example_IJP}
\end{subfigure}
\begin{subfigure}[b]{.62\linewidth}
	\centering
	\includegraphics[scale=0.22]{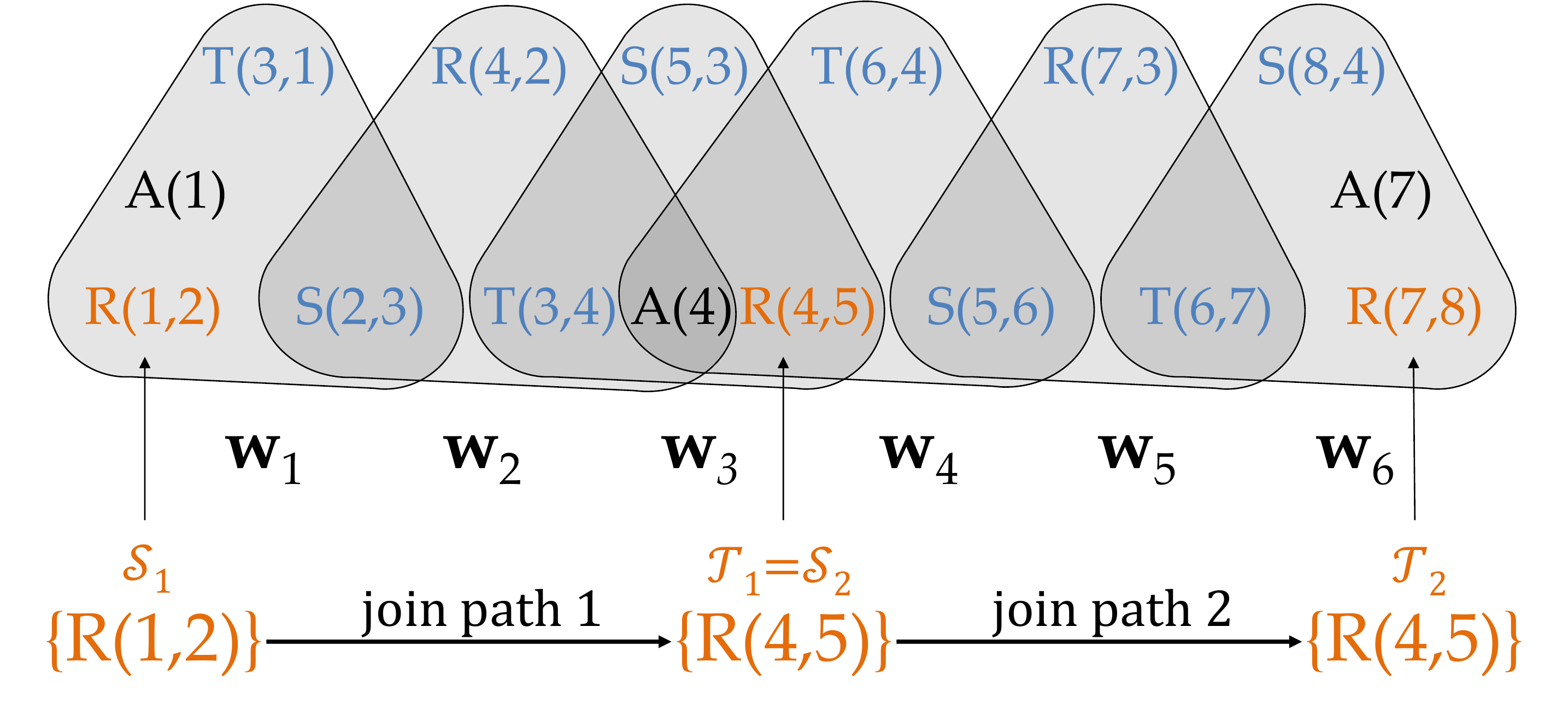}
	\caption{}\label{Fig_example_IJP_composition}
\end{subfigure}
\caption{(a) IJP for triangle query $\qtriangleunary$. 
(b) IJPs are composed by sharing their endpoints
(start or terminal tuples).}
\end{figure}

\begin{example}[Join paths]
	\label{ex:IJP:triangle}
	Consider again the query 
	$\qtriangleunary$.
	The following database of 9 tuples (\cref{Fig_example_IJP})
	$
	D=\{A(1), A(4), R(1,2), R(4,2), \\
	R(4,5), S(2,3), S(5,3), T(3,1), T(3,4) \}
	$
	where $A(1)$ and $A(4)$ are exogenous,
	forms a join path from $\{S = \{ R(1,2) \}$ to $\T = \{ R(4,5) \}$. 
	It has 3 witnesses
	$\vec w_1 = \{A(1), R(1,2), S(2,3), T(3,1)\}$,
	$\vec w_2 = \{A(4), \\ R(4,2), S(2,3), T(3,4)\}$, and
	$\vec w_3 = \{A(4), R(4,5), S(5,3), T(3,4)\}$.
	This join path can also be interpreted as a partition 
	$\{\{x^1\},\{x^2, x^3\},$ $\{y^1, y^2\}, \{y^3\}, \{z^1, z^2, z^3\} \}$
	on the canonical databases 
	for three witnesses 
	$\vec w_i = \{A(x^i), R(x^i,y^i), S(y^i,z^i),$ $T(z^i,x^i)\}$,
	$i=1,2,3$,
	expressing the shared constants in each subset.
	Then above database instance results from the following valuation $\nu$ of the quotient set
	$\{[x^1],[x^2], [y^1], [y^3], [z^1] \}$ to constants:
	$\nu: (x^1,y^1,z^1,x^2, y^3) \rightarrow (1, 2, 3, 4, 5)$.
	Notice that $\S$ and $\T$
	form a valid endpoint pair because
	($i$) $\mathcal{S}$ and $\mathcal{T}$ are isomorphic with the mapping $f=\{1:3, 2:4\}$ and 
	($ii$) there is no endogenous tuple with constants only from $\{1,2,3,4\}$. 
	$A(1)$ and $A(4)$ violate the subset requirement, however they are exogenous, so the definition is fulfilled.
\end{example}

We also call two join paths \emph{isomorphic}
if there is a bijective mapping between the shared constants across the witnesses.
Given a fixed query, we usually leave away the implied qualifier ``isomorphic'' when discussing join paths.
We talk about the ``\emph{composition}'' of two join paths
if one endpoint of the first is identical to an endpoint of the second, and all other constants are different.
We call a composition of join paths ``non-leaking'' if 
the composition adds no additional witnesses that were not already present in any of the non-composed join paths.

\begin{example}[Join path composition]
\label{ex:IJP:triangle:composition}
Consider the composition of two JPs shown in \cref{Fig_example_IJP_composition}.
They are isomorphic because there is a reversible mapping 
$(1, 2, 3, 4, 5) \rightarrow (4, 5, 6, 7, 8)$ from one to the other.
They are composed because they share no constants except for their endpoints:
The terminal
$\T_1=\{R(4,5)\}$ of the first is identical to the start of the second ($\S_2$).
The composition is non-leaking since no additional witnesses results from their composition.
\end{example}

\begin{restatable}[Triangle composition]{proposition}{proptrianglecomposition}
\label{prop:triangle:composition}
Assume a join path (JP) with endpoints $\mathcal{S}$ and $\mathcal{T}$.
If 3 isomorphic JPs composed in a triangle with directions as shown in \cref{Fig_minimal_JP_composition} are non-leaking, then any composition of JPs is non-leaking.
\end{restatable}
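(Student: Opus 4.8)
The plan is to prove the contrapositive-flavored statement by a single uniform device: given the hypothesis that the triangle composition introduces no new witnesses, I show that \emph{any} purported leaking witness in \emph{any} composition can be transported into the triangle, where it would still be a leak, contradicting the hypothesis. So it suffices to rule out a single leaking witness $\vec w$ -- a witness of $Q$ in a composition $C$ whose tuples are not all contained in one JP -- and the whole argument reduces to constructing a structure-preserving map from $C$ (or the relevant part of it) onto the triangle.

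The heart is this folding map. Every JP occurring in $C$ is an isomorphic copy of the base JP, so I can send each copy to one of the triangle's three JPs, choosing the assignment so that two copies sharing a junction $\mathcal{T}_i = \mathcal{S}_{i+1}$ are sent to two triangle JPs sharing the corresponding junction, respecting the start$\to$terminal orientation (this is where the ``directions as shown'' in \cref{Fig_minimal_JP_composition} enter). For a chain of copies the natural choice is the position taken mod $3$. The one thing to verify is \emph{consistency at the junctions}, and this holds \emph{precisely because the triangle is a directed $3$-cycle}: it supplies a junction for every pair of cyclically consecutive JPs, including the wrap-around $\mathcal{T}_3 = \mathcal{S}_1$. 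This is exactly why a triangle, rather than a linear path of three JPs, is the correct minimal test -- folding a long (or cyclically closing) chain back onto three JPs requires that closing junction, which a three-JP path lacks.

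Once the fold is set up, a leak survives it: a leaking $\vec w$ genuinely uses interior tuples of at least two distinct JP copies, and since the fold keeps touched copies on distinct triangle JPs, its image is again a witness of $Q$ not contained in any single JP -- a leak in the triangle, contradicting the hypothesis. The main obstacle I anticipate is guaranteeing that a consistent, orientation-respecting fold exists for \emph{every} connected sub-configuration of JPs that a single witness can touch, not merely for simple chains. To control this I would use the bounded size of a witness (at most $m$ tuples) together with the valid-endpoint condition~(ii) of \cref{def:JP}: since no endogenous tuple has its constants contained in $\mathcal{S}\cup\mathcal{T}$, a witness cannot thread through a JP using only that JP's endpoint constants, which bounds how many JP interiors one witness genuinely visits and how they can be arranged. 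Establishing that this forces the touched sub-configuration to fold onto the three cyclically glued JPs -- in particular handling branching or cyclic topologies of the composition graph rather than just paths -- is the delicate step on which the whole reduction rests.
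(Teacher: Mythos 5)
There is a genuine gap, and it starts with a misreading of the hypothesis. You take the triangle in \cref{Fig_minimal_JP_composition} to be a \emph{directed 3-cycle} (with wrap-around $\mathcal{T}_3=\mathcal{S}_1$), so that every junction is of terminal-to-start type, and you make this the engine of your fold ("consistency holds precisely because the triangle is a directed 3-cycle"). But the paper's triangle is deliberately \emph{not} a consistent cycle: because JPs can be asymmetric, there are three genuinely different ways two JPs can interact at a shared endpoint --- start with start ($\mathcal{S}$--$\mathcal{S}$), terminal with terminal ($\mathcal{T}$--$\mathcal{T}$), and terminal with start ($\mathcal{T}$--$\mathcal{S}$) --- and the edge directions in \cref{Fig_minimal_JP_composition} are chosen so that each corner of the triangle realizes one of these three cases. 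Under your directed-3-cycle reading, the hypothesis only certifies the $\mathcal{T}$--$\mathcal{S}$ interaction, and compositions containing $\mathcal{S}$--$\mathcal{S}$ or $\mathcal{T}$--$\mathcal{T}$ junctions (which arise unavoidably, e.g.\ in the vertex-cover reduction of \cref{th:ICPs}, where a node with two outgoing edges produces two JPs sharing their start) admit no orientation-respecting fold onto your triangle at all; worse, non-leakage of a $\mathcal{T}$--$\mathcal{S}$ cycle simply says nothing about those other two interaction types, so the statement you would be proving is weaker than (and does not imply) \cref{prop:triangle:composition}.

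The second problem is the step you yourself flag as "delicate" but leave open: the existence of a consistent fold for arbitrary branching or cyclic sub-configurations. This is not a technicality that the witness-size bound and condition ($3ii$) of \cref{def:JP} can be expected to absorb. Consider three or more isomorphic JPs all sharing the same start tuples: any two of them meet at an $\mathcal{S}$--$\mathcal{S}$ junction, but the triangle contains only one $\mathcal{S}$--$\mathcal{S}$ corner between two specific JPs, so three copies that are pairwise adjacent in this sense cannot be sent to pairwise distinct triangle JPs preserving junction types --- your fold cannot exist even per-leak. The paper avoids needing any global (or even local) fold by a different decomposition: first, a locality argument (tuples of two JPs sharing no endpoint have disjoint constants, so new witnesses can only arise between JPs sharing an endpoint); second, the observation that the three pairwise sharing types are exactly the three corners of the triangle; and third, an induction for the many-JPs-at-one-endpoint case, where a leak among $k$ copies at a shared endpoint is pulled back, via the isomorphism between copies, to a leak among fewer copies and ultimately to one of the three base cases. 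That induction is precisely the device that handles the configurations your folding map cannot, so the gap in your proposal is not patchable by refining the fold; it requires replacing it with an argument of this reductive kind.
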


\begin{proofintuition*}[\cref{prop:triangle:composition}]
Since JPs can be asymmetric, the composability due to sharing the $\mathcal{S}$ tuples in two isomorphic JPs differs from sharing $\mathcal{S}$ and $\mathcal{T}$. 
We show that the three JP interactions in \cref{Fig_minimal_JP_composition} act as sufficient base cases to model all types of interactions. 
We show via induction that sharing the same end tuples across multiple JPs cannot leak if it does not leak in the base case.
\qed
\end{proofintuition*}

\begin{figure}[t]
\centering
\includegraphics[scale=0.28]{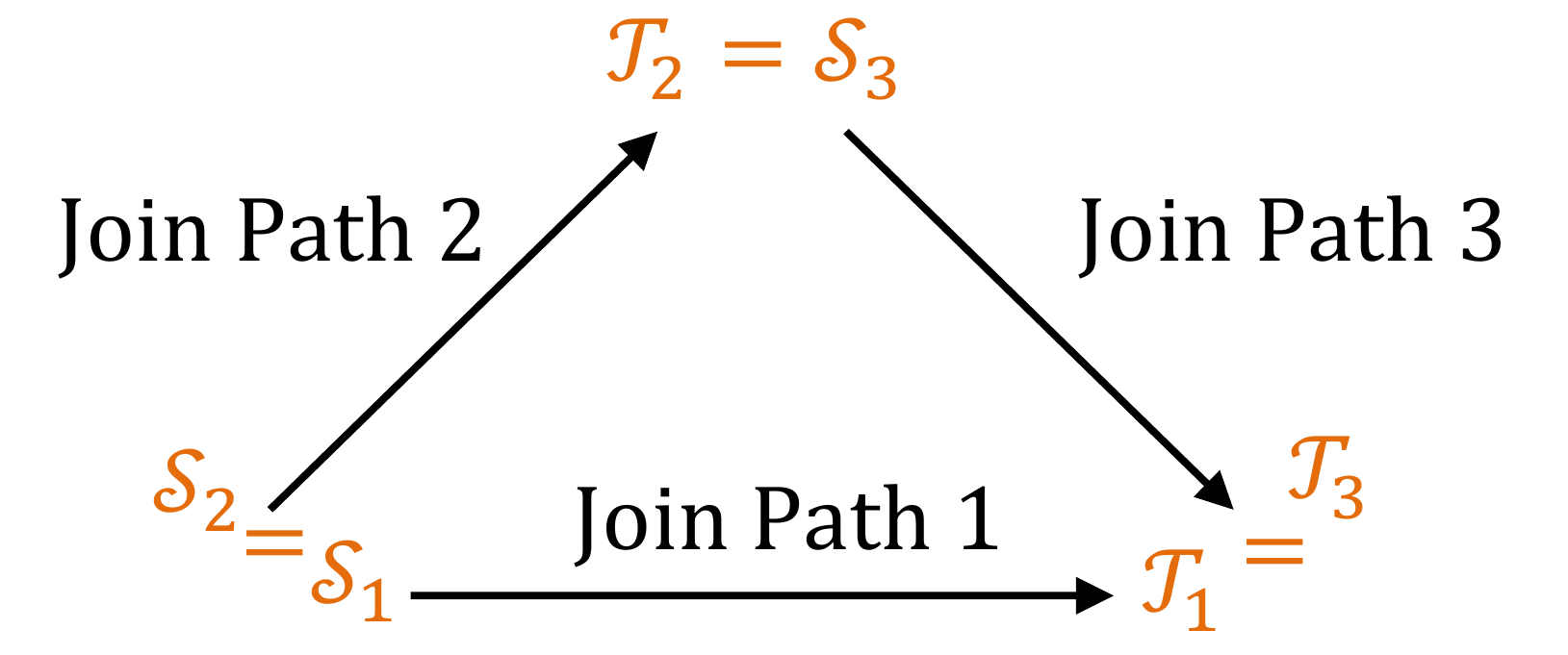}
\caption{3 JPs composed in a triangle with shown edge directions.}
\label{Fig_minimal_JP_composition}
\end{figure}

\begin{definition}[Independent Join Path]
\label{def:IJP}
A Join Path $D$ forms an \emph{Independent Join Path (IJP)} if it fulfills two additional conditions:
\begin{enumerate}
	\setcounter{enumi}{3}
	\item ``OR-property'': Let $c$ be the resilience of $Q$ on $D$.
	Then resilience is $c\!-\!1$ in all 3 cases of removing either
	$\mathcal{S}$ or $\mathcal{T}$ or both.

	\item Any composition of two or more isomorphic JPs is non-leaking.
\end{enumerate}
\end{definition}

Our definition of Independent Join Paths differs from earlier work~\cite{DBLP:conf/pods/FreireGIM20}, in that it is a completely \emph{semantic} definition that is based on all the properties that must be captured by an Independent Join Path that does not enforce any structural criteria. 
We believe such a semantic definition will help show that IJPs are a sufficient criterion for hardness. 
This definition allows us to find IJPs via an automatic search procedure (\cref{Fig_IJP_sj_new}).

\begin{example}[IJPs]
	\label{ex:IJPs}
	Consider again the JP from 
	\cref{Fig_example_IJP}.
	The resilience is $c=2$ as removing $\Gamma = \{S(2,3), T(3,4)\}$
	destroys all 3 witnesses.
	Removing $\S=\{(1,2)\}$ destroys $\w_1$, 
	and it suffices to just remove one tuple $\Gamma'=\{T(3,4)\}$ to destroy the remaining 2 witnesses. 
	Similarly, for removing either $\T$, or both $\S$ and $\T$.
	This proves the OR-property of this JP.
	Further, composing 3 JPs in a triangle as shown in \cref{Fig_minimal_JP_composition} is non-leaking 
	(the resulting database has 9 witnesses), and thus this JP is an IJP.
\end{example}

We now prove that the ability to create an IJP for a query proves its resilience to be hard.
This was left as an open conjecture in \cite[Conjecture 49]{DBLP:conf/pods/FreireGIM20}.

\begin{restatable}[IJPs $\Rightarrow$ NPC]{theorem}{thmijpsnpc}
\label{th:ICPs}	
If there is a database $D$ under set/bag semantics that forms an IJP for a query $Q$, then $\res(Q)$ is $\NPC$ for the same semantics.
\end{restatable}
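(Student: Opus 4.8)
The plan is to prove NP-hardness by a polynomial reduction from (minimum) \textsc{Vertex Cover}, using a single copy of the given IJP as an \emph{edge gadget} that is glued to others at its endpoints; membership in $\NPC$ then follows because $\res$ is clearly in \NP (guess a contingency set $\Gamma$ with $|\Gamma|\le k$ and verify $D-\Gamma\not\models Q$ in polynomial time via CQ evaluation). The reduction reads the two endpoints $\mathcal{S}$ and $\mathcal{T}$ of the IJP as the two ``ports'' of an edge, and reads condition~4 of \cref{def:IJP} (the OR-property) as saying that a gadget becomes cheaper to destroy by \emph{exactly one} tuple as soon as at least one of its two ports is deleted, with no further discount for deleting both.

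Given a graph $G=(V,E)$, I would create one deletable port $P_v$ per vertex $v$, and for each edge $e=(u,w)$ install an isomorphic copy of the IJP whose start $\mathcal{S}$ is identified with $P_u$ and whose terminal $\mathcal{T}$ is identified with $P_w$ — each identification being a ``composition'' in the paper's sense. To force the optimum to behave like a genuine cover, I would install $\lambda$ parallel copies of the gadget per edge, choosing $\lambda$ larger than the total cost of deleting all ports; this makes leaving any edge ``uncovered'' strictly more expensive than covering it. Since $Q$ is fixed, the size of one gadget is constant, so $D_G$ has size polynomial in $G$, and the construction uses only renaming and endpoint identification, so it is identical under set and bag semantics.

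Correctness rests on an exact additive decomposition of the resilience of $D_G$, and this is precisely what condition~5 of \cref{def:IJP} together with \cref{prop:triangle:composition} supplies: because every composition of isomorphic JPs is non-leaking, the witness hypergraph of $D_G$ is the disjoint union of the per-gadget witness hypergraphs with no spurious witnesses created at the shared ports, and \cref{prop:triangle:composition} lets me certify this by checking only the finite triangle base case rather than the unbounded composition. Writing $c$ for the single-gadget resilience and letting $C\subseteq V$ be the set of vertices whose port is deleted, an optimal contingency set then has size $|C| + \lambda\bigl((c-1)|E| + u(C)\bigr)$ (taking each $P_v$ to be a single tuple for simplicity), where $u(C)$ counts the edges with neither port in $C$. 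I would finish with the two matching bounds: a vertex cover of size $t$ yields a contingency set achieving $u(C)=0$ and the target value, while conversely any contingency set of the target size forces, through the OR-property, a set $C$ that covers every edge with $|C|\le t$, i.e.\ a vertex cover.

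The main obstacle I expect is the lower-bound normalization: showing that no clever contingency set beats this per-gadget accounting. Two points need care. First, the OR-property must give a discount of exactly one (and not two) when both ports are cut; this caps the per-gadget savings and ensures that covering a vertex pays off only through the edges it newly covers. Second, a contingency set could delete a port $P_v$ only partially or mix internal deletions with port deletions, so I must argue that any such solution can be normalized — without increasing its size — to one that deletes each $P_v$ all-or-nothing and uses each gadget in one of its two canonical modes. The non-leaking property is essential here, since it is exactly what prevents a shared port from participating in cross-gadget witnesses that would otherwise couple the local choices and break the additive bound.
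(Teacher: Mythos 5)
Your overall strategy coincides with the paper's proof of \cref{th:ICPs}: a reduction from Vertex Cover in which each vertex becomes an endpoint tuple, each edge becomes a fresh isomorphic copy of the IJP glued at its endpoints, the OR-property (condition~4 of \cref{def:IJP}) supplies the ``one endpoint suffices'' semantics, and the non-leaking condition~5 supplies the per-gadget additive accounting. Indeed, the paper's correctness claim is exactly your formula specialized to $\lambda=1$: $G$ has a vertex cover of size $k$ iff $\res^*(Q,D_G)=k+m(c-1)$. (Your explicit remark on \NP membership is fine; the paper leaves it implicit.)

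The genuine problem is your $\lambda$-amplification. Installing $\lambda\ge 2$ parallel copies of the IJP on the same edge glues two isomorphic join paths at \emph{both} endpoints, and that is not a ``composition'' in the paper's sense: a composition, by definition, identifies one endpoint of the first with one endpoint of the second \emph{and requires all other constants to be different}. Consequently neither condition~5 of \cref{def:IJP} nor \cref{prop:triangle:composition} --- whose base cases are share-$\mathcal{S}$, share-$\mathcal{T}$, and $\mathcal{S}$-to-$\mathcal{T}$, never both endpoints at once --- licenses the claim that parallel copies are non-leaking. Spurious witnesses mixing internal tuples of two copies of the same edge (enabled precisely by the doubly shared endpoints) are not excluded, and then the decomposition $|C|+\lambda\bigl((c-1)|E|+u(C)\bigr)$ that your lower bound rests on is unjustified. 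The fix is to drop the amplification entirely: with one copy per edge and $G$ simple, any two gadgets share at most one endpoint, so every pairwise interaction is a genuine composition and condition~5 applies. You also then do not need uncovered edges to be \emph{strictly} more expensive; your own accounting already closes the lower bound, since a contingency set of size $t+m(c-1)$ forces $|C|+u(C)\le t$, and $C$ together with one arbitrarily chosen endpoint of each uncovered edge is a vertex cover of size at most $t$. One further caveat you share with the paper: endpoints $\mathcal{S},\mathcal{T}$ are in general \emph{sets} of tuples, so the ``single-tuple port'' simplification (and the all-or-nothing normalization you flag) does need the bookkeeping to be done per endpoint set rather than per tuple.
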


\begin{proofintuition*}
	We use a reduction from minimum vertex cover to prove that $\res(Q)$ is $\NPC$ for any database that forms an IJP for $Q$.
	IJPs allow us to abstract the hardness gadgets (and can be thought of as a "template") that are used to reduce vertex cover to our problems.
	The problem of minimum vertex cover in graphs is closely related to resilience (resilience can be thought of as minimum vertex cover in the data instance hypergraph). 
	For the reduction, IJPs are used as edge gadgets to compute the Vertex Cover while the endpoint tuples form the nodes. 
	The reduction is based on the idea that a node is in the min vertex cover set iff the tuples are in the corresponding resilience/responsibility set. 
	The IJPs are designed such that they have the OR property (if one endpoint set is not chosen, then the other needs to be chosen in order to get the resilience for that edge). 
	This is just like in Vertex Cover: either one of the nodes is required and sufficient to cover an edge.
\qed
\end{proofintuition*}

We next prove that the ability to create an IJP for a self-join free CQ is not only a sufficient 
but also a \emph{necessary criterion} for hardness.
We prove \Cref{th:IJPs_for_SJFCQs}, which does not add new complexity results over \cite{DBLP:journals/pvldb/FreireGIM15}, 
but together with 
\cref{th:ICPs} 
shows that IJPs are strictly more general and thus a strictly more powerful criterion for resilience than the previous notion of triads\cite{DBLP:journals/pvldb/FreireGIM15} (a triad always implies an IJP, but not vice versa)
: they capture the same hardness for SJ-free queries, 
but can also prove hardness for queries with self-joins that do not contain a triad.

\begin{restatable}[IJPs $\Leftrightarrow$ NPC for SJ-free CQs]{theorem}{thmijpsjfcqs}
\label{th:IJPs_for_SJFCQs}	
The resilience of a SJ-free CQ under set/bag semantics is $\NPC$ iff it has an IJP under the same semantics.
\end{restatable}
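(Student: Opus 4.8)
The plan is to prove the two directions separately, reusing machinery that is already in place. The backward direction (IJP $\Rightarrow$ NPC) is exactly \cref{th:ICPs} and holds for both set and bag semantics, so nothing remains to do there. For the forward direction (NPC $\Rightarrow$ IJP) I would not argue about complexity directly but instead route through the existing syntactic characterizations of hardness. Under set semantics, the dichotomy of Freire et al.~\cite{DBLP:journals/pvldb/FreireGIM15} states that $\res(Q)$ for a SJ-free CQ is $\NPC$ exactly when $Q$ contains a \emph{triad} (and is in $\PTIME$ otherwise, via \cref{thm:reseasy,thm:reseasydomination}); under bag semantics the analogous hard cases are characterized by \cref{thm:reshardbag}. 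Hence it suffices to prove the purely combinatorial statement underlying the remark ``a triad always implies an IJP'': whenever $Q$ contains such a hardness-witnessing substructure, one can construct a database $D$ satisfying all five conditions of \cref{def:IJP}. This reduces the theorem to a single construction.

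The construction proceeds as follows. Given a triad $\{R_1, R_2, R_3\}$, I would use the three pairwise connecting paths (each avoiding the variables of the third atom) to assemble the witness set. I instantiate each path into a chain of witnesses whose shared constants force the witness hypergraph to be connected (Join Path condition 2), deleting any tuple that does not participate in a witness so that $D$ is reduced (condition 1). For the endpoint pair I pick $\mathcal{S}$ and $\mathcal{T}$ as two isomorphic, non-identical tuple sets at the two ends of one path, and I declare as exogenous exactly those ``dominating'' tuples whose constants would otherwise violate the subset requirement of condition 3(ii) (as in \cref{ex:IJP:triangle}, where $A(1),A(4)$ are made exogenous). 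This yields a valid Join Path in the sense of \cref{def:JP}.

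It then remains to verify the two extra IJP conditions. For condition 4 (the OR-property) I would exhibit an optimal contingency set $\Gamma$ of size $c$ for the full instance and show that deleting $\mathcal{S}$, or $\mathcal{T}$, or both, frees exactly one tuple from $\Gamma$, so resilience drops to $c-1$ in all three cases; this mirrors the behavior of a vertex-cover edge gadget, which is precisely the gadget used inside the reduction of \cref{th:ICPs}. Condition 5 (non-leaking composition) is the \emph{main obstacle}: gluing copies of the gadget at shared endpoints must not create any witness not already present in a single copy. Here the defining property of a triad---that the connecting paths avoid the third atom's variables---is exactly what forbids cross-copy joins, and by \cref{prop:triangle:composition} it suffices to check non-leaking for three copies composed in a triangle (\cref{Fig_minimal_JP_composition}) rather than for arbitrary compositions, which keeps the argument finite.

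Finally, for the bag case I would reuse the identical construction: since \cref{def:IJP} is semantics-agnostic and the gadget is built from set-level witnesses, assigning unit multiplicities makes it a valid bag IJP, and multiplicities can be tuned to match the wider bag-hardness frontier of \cref{thm:reshardbag}. Concluding that NPC $\Rightarrow$ IJP under each semantics and combining with \cref{th:ICPs} closes the iff. I expect the delicate points to be making the OR-property tight (that removing an endpoint saves \emph{exactly} one tuple, not more) and establishing non-leaking in full generality from the triangle base case; both hinge on the same structural feature of triads, so I would prove a single connectivity lemma about the avoiding paths and invoke it in both verifications.
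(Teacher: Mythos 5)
Your high-level architecture is exactly the paper's: the backward direction is \cref{th:ICPs}; the forward direction goes through the syntactic characterization (hard under set semantics $\Rightarrow$ active triad, with the \PTIME{} side supplied by \cref{thm:reseasy,thm:reseasydomination}; triads in general for bags), and the bag case is handled by giving dominating tuples a large fixed multiplicity so that they behave as exogenous --- all of which matches the paper's proof. So the plan is sound. The problem is that the one piece of genuine mathematical content in this direction --- the explicit construction of an IJP from a triad and the verification of conditions (4) and (5) of \cref{def:IJP} --- is precisely the piece you defer to an unstated ``connectivity lemma.'' The paper does not instantiate the three pairwise connecting paths into ``chains of witnesses'' at all. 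It builds \emph{exactly three} witnesses $\vec w_1,\vec w_2,\vec w_3$, composed in a path $R_1(\vec a)\rightarrow R_2(\vec c)\rightarrow R_3(\vec d)\rightarrow R_1(\vec b)$, by partitioning $\var(Q)$ relative to the triad atoms and assigning one fresh constant per block per witness (with a second case handling triad atoms that share variables). Crucially, the endpoints are two tuples of the \emph{same} triad relation $R_1$, which is what makes $\mathcal{S}$ and $\mathcal{T}$ isomorphic for free; your choice of ``two isomorphic tuple sets at the two ends of one path'' does not obviously produce isomorphic endpoints, and you never say how it would.

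Second, the mechanism you invoke for non-leaking is not the one that actually does the work. In the paper, both the OR-property and condition (5) follow from the \emph{constant structure} of the gadget together with non-domination (the defining property of an \emph{active} triad): every endogenous tuple of one witness/join path carries a constant that no tuple of another witness/join path carries, except for the single designated shared tuple. The path-avoidance property in \cref{def:triad} is what makes the query have a triad in the first place, but ``the connecting paths avoid the third atom's variables'' is not, by itself, what ``forbids cross-copy joins'' --- a new witness in a composition is an arbitrary valuation of $Q$ mixing tuples of both copies, and ruling it out requires the constant/domination argument, which your sketch never supplies. (Your appeal to \cref{prop:triangle:composition} is legitimate but only bounds the verification burden to the triangle case; the triangle case itself still has to be checked against the concrete gadget.) As written, then, the proposal reduces the theorem to exactly the construction and verification that constitute the paper's proof, and does not carry them out; the sketch it substitutes differs in shape from the paper's gadget and its correctness is not established.
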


\begin{proofintuition*}[\cref{th:IJPs_for_SJFCQs}]
We generalize all past hardness results \cite{DBLP:journals/pvldb/FreireGIM15} 
for SJ-free queries by showing that the same hardness criteria (\emph{triads}) 
that was necessary and sufficient for hardness, 
can always be used to construct an IJP and show this construction.
\qed
\end{proofintuition*}

We conjecture that the existence of an IJP is a necessary criterion for hardness for all queries. 
In addition, we conjecture that the size of smallest IJP formed by database under a hard query $Q$ is bounded by a small constant factor of the query size.

\begin{conjecture}[Necessary hardness condition]
	\label{conj:necessaryhardnesscondition}
	If there exists no database $D$ under set/bag semantics that forms an IJP from some tuples $\mathcal{S}$ to $\mathcal{T}$ under query $Q$, then $\res(Q)$ is in $\PTIME$ under the same semantics.
\end{conjecture}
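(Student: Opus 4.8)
The statement is a biconditional, and one direction---that an IJP implies $\NPC$ resilience---is exactly \Cref{th:ICPs} and already holds for both semantics. The plan is therefore to establish the converse, that every $\NPC$ SJ-free CQ admits an IJP under the matching semantics, by reducing it to a purely structural claim and then exhibiting an explicit construction. The reduction leans on the existing dichotomy of Freire et al.~\cite{DBLP:journals/pvldb/FreireGIM15}: under set semantics a SJ-free CQ has $\NPC$ resilience exactly when it contains a \emph{triad}, and is in $\PTIME$ otherwise; under bag semantics the analogous frontier is given by \cref{thm:reseasy} and \cref{thm:reshardbag}. Consequently it suffices to prove the single structural lemma ``hardness structure $\Rightarrow$ IJP,'' i.e.\ \emph{triad} $\Rightarrow$ IJP for sets and its bag analog. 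Combined with \Cref{th:ICPs} this closes the loop: a triad yields an IJP, which yields $\NPC$; and the contrapositive of the construction (no IJP $\Rightarrow$ no triad), together with the $\PTIME$ algorithm of~\cite{DBLP:journals/pvldb/FreireGIM15} (resp.\ \cref{thm:reseasy}), gives $\PTIME$ for every IJP-free query---which establishes the dichotomy.

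The heart of the proof is thus the construction of an IJP from a triad $\{R_0,R_1,R_2\}$. First I would instantiate the three pairwise connecting paths of the triad---each joining two of the atoms while avoiding every variable of the third---into concrete tuples, assigning a \emph{distinct fresh constant} to every variable not forced to be shared along a path. This yields a database $D$ playing the role of the triangle instance of \cref{Fig_example_IJP}. I would then designate $\S$ and $\T$ to be the two isomorphic boundary tuples (the copies of one triad atom at the two ends of a path), and declare exogenous exactly those tuples whose constants would otherwise be dominated by $\S \cup \T$, so that condition~3(ii) of \Cref{def:JP} is met while $\S,\T$ stay isomorphic and non-identical (condition~3(i)). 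Reducedness and connectivity (conditions~1 and~2) follow by construction, since every instantiated tuple lies on one of the three paths and the paths pairwise share the triad atoms.

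It then remains to verify the two IJP-specific conditions of \Cref{def:IJP}. For the OR-property, I would compute the resilience $c$ of $D$ and argue that deleting $\S$, $\T$, or both each lowers it to $c-1$; this is where the triad's ``pairwise-path-avoiding-the-third'' property is essential, since it guarantees that each endpoint sits on a path that can be severed by a single remaining tuple, exactly as in \Cref{ex:IJPs}. For the non-leaking condition I would invoke \cref{prop:triangle:composition}, which reduces the infinitely many compositions to the three triangle compositions of \cref{Fig_minimal_JP_composition}; using the fresh-constant assignment I would show that no pair of witnesses from two different composed copies shares enough constants to form a new witness, so no witness is added.

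The main obstacle I expect is precisely this verification in full generality: unlike the concrete triangle query, a triad's connecting paths may be long and may reuse variables in subtle ways, so (a) proving the OR-property needs a careful minimum-contingency-set argument showing the dropped endpoint is always ``paid for'' by a single other tuple, and (b) proving non-leaking needs to rule out every spurious join that the path variables could create when two copies are glued at a shared endpoint. The bag case adds the wrinkle that its hardness frontier is wider than triads, so the construction must be matched to the bag-hardness witnesses of \cref{thm:reshardbag} rather than to triads; I expect the same template---fresh constants plus the triangle-composition test---to carry over, but the endpoint and exogenous bookkeeping must be redone against that characterization.
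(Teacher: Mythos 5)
There is a genuine gap, and it is the central one: the statement you are proving is an \emph{open conjecture} in the paper, stated for \emph{arbitrary} CQs (including queries with self-joins), whereas your entire argument is scoped to self-join-free queries. Your plan reduces the converse direction to the known SJ-free dichotomy of Freire et al.\ (triads under set semantics, and \cref{thm:reseasy}/\cref{thm:reshardbag} under bag semantics) and then builds an IJP from a triad. That is precisely the paper's proof of \cref{th:IJPs_for_SJFCQs} (``IJPs $\Leftrightarrow$ NPC for SJ-free CQs''), down to the same construction: instantiate the three pairwise connecting paths with fresh constants, take two copies of one triad atom as $\S$ and $\T$, and verify the OR-property and non-leaking composition via \cref{prop:triangle:composition}. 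So what you have sketched re-derives a theorem the paper already proves; it does not touch \cref{conj:necessaryhardnesscondition}, which remains open exactly because of the cases your argument cannot reach.

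Concretely, two things break for general $Q$. First, your reduction has no starting point: for queries with self-joins there is no known structural characterization of hardness (no analog of triads), and the paper emphasizes that the self-join tractability frontier is open --- its own DLP certificate search is explicitly one-sided. Without a ``hardness structure,'' the implication chain $\NPC \Rightarrow$ structure $\Rightarrow$ IJP cannot be set up. Second, and more subtly, your framing presupposes a dichotomy: you argue ``not $\NPC$ $\Rightarrow$ $\PTIME$'' by appealing to a classification in which every query is one or the other. For SJ-free queries that classification is a proven theorem, but for general CQs it is unknown; absent it, ``$Q$ admits no IJP'' gives you no $\PTIME$ algorithm (and, assuming $\mathsf{P}\neq\NP$, ``not $\NPC$'' alone does not imply $\PTIME$ by Ladner-type arguments). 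Proving the conjecture would require either exhibiting a $\PTIME$ algorithm for every IJP-free query with self-joins, or establishing a full self-join dichotomy whose hard side is always witnessed by an IJP --- neither of which your proposal (or the paper) provides. A minor additional point: the statement itself is only the ``no IJP $\Rightarrow$ $\PTIME$'' direction; the converse you fold in is already \cref{th:ICPs} and need not be re-argued.
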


\begin{conjecture}[IJP Size Bound]
	\label{conj:hardness}
	If there exists a database $D$ under set/bag semantics of domain size that forms an IJP under query $Q$, then there exists a database under same semantics as $D$, with domain size $d\leq 7 \cdot |\var(Q)|$, that forms an IJP from some tuples $\mathcal{S}$ to $\mathcal{T}$ under query $Q$.
\end{conjecture}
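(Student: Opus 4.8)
The plan is to prove a \emph{small-model} statement: every IJP can be compressed to one whose active domain is linear in $|\var(Q)|$, with the constant $7$ arising from pinning down a fixed number of ``representative'' witnesses, each contributing at most $|\var(Q)|$ fresh constants. The natural handle is the quotient-set view of \cref{ex:IJP:triangle}: up to isomorphism a join path is determined by a partition of the constants of finitely many witness-copies of $Q$, so its domain size equals the number of blocks of that partition. Bounding the domain is therefore equivalent to bounding how many witnesses must be retained together with their sharing pattern. The compression must be carried out uniformly for set and bag semantics, since the conjecture asks for an IJP under the \emph{same} semantics as $D$.

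First I would argue that a \emph{minimal} IJP — one minimizing domain size among all IJPs for $Q$ — can be taken to consist of only a bounded number of witnesses: a connected ``spine'' realizing the connectivity from $\mathcal{S}$ to $\mathcal{T}$ required by condition~2 of \cref{def:JP}, together with whatever extra witnesses are forced by the OR-property (condition~4 of \cref{def:IJP}). Since $\mathcal{S}$ and $\mathcal{T}$ are isomorphic endpoint sets each spanning at most $|\var(Q)|$ constants, and each interior witness introduces at most $|\var(Q)|$ new constants, a domain bound linear in $|\var(Q)|$ follows once the retained-witness count is a constant; the target $7 \cdot |\var(Q)|$ corresponds to accounting for the two endpoint sets, a short bridging chain, and the replicated copies needed to certify composition.

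The core of the argument is a \textbf{redundant-constant merging lemma}. Given any IJP with domain exceeding $7 \cdot |\var(Q)|$, the pigeonhole principle (over the bounded set of retained witnesses, each using at most $|\var(Q)|$ constants) yields two constants occupying \emph{isomorphic positions} across all retained witnesses. I would identify these two constants and verify that the \emph{local} conditions survive: merging is a homomorphism, so every prior witness persists as its image, giving reducedness (condition~1) and preservation of connectivity (condition~2) for free, while the endpoint isomorphism (condition~3(i)) is kept by choosing the merge to respect the $\mathcal{S}/\mathcal{T}$ bijection and to avoid identifying an $\mathcal{S}$-constant with a $\mathcal{T}$-constant.

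The hard part — and the reason this is stated as a conjecture rather than a theorem — is controlling the three \emph{global} conditions under merging: the OR-property (condition~4), the ``no endogenous subset tuple'' requirement (condition~3(ii)), and \emph{non-leaking composition} (condition~5). Since merging can only \emph{add} witnesses, the danger is a new witness whose constants lie inside $\mathcal{S} \cup \mathcal{T}$, a leak appearing in some composition of copies, or a drop in resilience that breaks the exact $c$ versus $c-1$ balance of the OR-property. Here \cref{prop:triangle:composition} is the key lever: it reduces condition~5 to re-checking non-leaking for the single triangle composition of three isomorphic copies, turning an infinite quantifier over compositions into a finite check on a bounded-domain object. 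The obstacle I expect to fight is showing that a \emph{safe} merge always exists once the domain is large: one must prove that the set of constants whose identification would create a sub-domain witness, a triangle-composition leak, or an OR-property violation is itself bounded by a function of $|\var(Q)|$, so that the pigeonhole-guaranteed candidate pairs cannot all be ``bad.'' This requires a delicate case analysis translating fresh coincidences in the triangle composition back into witnesses of $Q$, and it is precisely this interaction between local merging and the global leak/OR constraints that resists a clean argument.
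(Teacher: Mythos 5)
Note first that \cref{conj:hardness} is a \emph{conjecture}: the paper does not prove it, and offers only a proof intuition, so the comparison here is between your plan and that intuition, neither of which is a complete argument. Your route is genuinely different. The paper's intuition is constructive and bottom-up: every known IJP decomposes into a ``core'' of $3$ witnesses exhibiting the OR-property (up to $3 \cdot |\var(Q)|$ constants, mirroring the three relations of a triad as in \cref{Fig_example_IJP}), plus ``legs'' of $2$ witnesses attached to each endpoint to repair isomorphism of the endpoint pair (two times $2 \cdot |\var(Q)|$ constants), which is exactly where $3+4=7$ comes from; witnesses forced by join dependencies are added to the certificate but re-use existing constants (\cref{Fig_IJP_sj_new_legs}). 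Your plan is instead top-down: start from an arbitrary IJP and compress it by merging constants, using the homomorphism property to preserve the local conditions and \cref{prop:triangle:composition} to collapse the quantification over all compositions in condition (5) of \cref{def:IJP} to a single finite check. If your safe-merge lemma could be established, this would yield a proof for \emph{arbitrary} IJPs rather than an extrapolation from the structure of known examples, so your approach is the more ambitious one, and identifying \cref{prop:triangle:composition} as the tool that makes condition (5) decidable on a small object is a genuine insight the paper's intuition does not spell out.

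That said, there are two concrete problems beyond the open core that you honestly flag. First, your plan is structurally circular: you first posit that a minimal IJP retains only a constant number of witnesses (a spine plus OR-forced witnesses), and then run the pigeonhole argument ``over the bounded set of retained witnesses.'' But if the retained-witness count were already bounded by a constant, the domain bound would follow immediately and no merging lemma would be needed; conversely, if it is not bounded --- and nothing in \cref{def:JP,def:IJP} bounds it, since the OR-property constrains resilience, a global quantity over the whole database, rather than the witness count --- then the pigeonhole as stated has no bounded set to range over. Bounding the retained witnesses is essentially the conjecture itself. Second, your accounting of the constant $7$ (``two endpoint sets, a short bridging chain, and the replicated copies needed to certify composition'') misreads \cref{def:IJP}: the isomorphic copies in condition (5) are separate databases constructed only to test composition, not part of $D$, so they contribute nothing to the domain of the certificate. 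The paper's $7$ comes entirely from the core ($3$ witnesses) plus two legs ($2$ witnesses each), with the join-dependency-forced witnesses free precisely because they re-use constants already present.
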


\begin{intuition*}[\cref{conj:hardness}]
	The intuition for bounding the size of the certificate to domain $d=7 \cdot |\var(Q)|$ comes from the connections between an IJP and the OR property. 
	Each known IJP exhibits a “core” of 3 witnesses that exhibit the OR property (which can be seen simply in the self-join free case as parallel to the three independent relations of the triad as in \cref{Fig_example_IJP}). 
	This core could take up to $d=3 \cdot |\var(Q)|$ size. 
	However, this “core” may (1) not have isomorphic endpoint tuple pairs and (2) not be able to exist "independently" and form additional witnesses under $Q$ due to Join dependencies (this is the intuition behind \hyperref[{def:IJP}]{\cref{def:IJP} (5)}). 	
	We hypothesize that the endpoint tuple pairs can each be connected to “legs” of 2 witnesses each, thus resulting in a new endpoint pair that is isomorphic. This would add up to 2 times $2 \cdot |\var(Q)|$ constants, bringing the total size up to $7 \cdot |\var(Q)|$. 
	To resolve (2), we must add the witnesses formed due to join dependencies to the certificate. 
	However, this does not increase the number of constants used and hence we hypothesize $d=7 \cdot |\var(Q)|$ as an upper bound. 
	We show an additional figure in the appendix (\cref{Fig_IJP_sj_new_legs}),	
    in which we highlight the cores and legs of the example IJPs in \cref{Fig_IJP_sj_new}.
    \qed
\end{intuition*}

\subsection{Automatic creation of hardness certificates}
We introduce a Disjunctive Logic Program $\ijpdlp$ 
that finds IJPs to prove hardness for $\res$.
Each DLP requires $Q$, a domain $d$ (which bounds the size of the IJP), and two endpoints $\S, \T$.\footnote{Since the number of possible endpoint configurations is polynomial in the query size, we can simply run parallel programs for different endpoints as input. Notice that endpoints $e_1 = \{A(1)\}, e_2 = \{A(2)\}$ is exactly the same as $e_1 = \{A(3)\}, e_2 = \{A(4)\}$ since the actual value does not matter.
In practice, we used any subset of endogenous tuples from a canonical database that can be shared across two witnesses without creating another witness.}
$\ijpdlp$ programs are generated automatically for a given input, are short (200-300 lines depending on the query) 
and leverage many key technical insights used to model DLPs.

The goal of $\ijpdlp$ is to find a database that fulfills the conditions of \cref{def:IJP}. 
The search space is a database with all possible tuples given domain $d$
(thus of size $\O(d^a)$ where $a$ is the maximum arity of any relation).
Each tuple in the search space must be either ``picked'' in the target database or not.
The constraints of our definition are modeled as disjunctive rules with negation.
We solve our DLP with the open-source ASP solver \emph{clingo}~\cite{clingo}
which uses an enhancement of the DPLL algorithm~\cite{DPLL}
(used in SAT solvers) and works far faster in practice than a brute force approach.
Here we talk only about the overall structure and intuition, but make examples available in the code~\cite{resiliencecode} and in \cref{sec:appendix:dlp}.

\begin{enumerate}
	\item \textbf{Search Space:}
	For all relations in $Q$, we initialize all possible tuples permitted in domain $d$ as input facts and provide them with an additional tuple id ($\TID$). 
	Thus, each relation $R$ has a corresponding relation in the program with $\arity(R)^d$ facts. 

	\item \textbf{``Guess'' an IJP:} 
	Each tuple either participates in the IJP or not.
	We follow the Guess-Check methodology~\cite{eiter2006towards}
	and use a relation $\indb(\underline{R}, \underline{\TID}, I)$ 
	to ``guess'' for each tuple whether it is in the IJP database or not.
	Here $R$ stands for a relation and together with $\TID$ uniquely identifies a tuple.
	The binary value $I$ is $1$ if the tuple is in the IJP, and $0$ otherwise.

	\item \textbf{Enforce JP endpoint conditions:} Since the endpoints are considered ``input'', 
	we do not need to check condition ($3i$) for the JP endpoints (\cref{def:JP}).
	However, we need to verify condition ($3ii$) as it depends on the other tuples in the IJP and translate the condition  directly into a logic rule.

	\item \textbf{Calculate Resilience using ``Saturation'':} 
	We solve a problem that is $\NPC$
	(i.e.\ check that there is a valid contingency set of size $c$), 
	and a problem that is co-$\np$-complete (i.e.\ there is no valid contingency set of size $c-1$). 
	For solving the $\np$ problem we use the guess-check methodology and to solve the co-NP problem, we use the saturation technique. 

	\item \textbf{Enforce OR-property:} We calculate resilience for $4$ databases using the previous step: 
	our original ``guess'', and the guess with either or both endpoints removed. 
	The removal of endpoints here simply implies defining a new relation 
	that has all tuples of $indb$ except the removed endpoint tuples.

	\item \textbf{Enforce non-leaking composition:} We define a mapping relation to create $3$ isomorphs of the tuples in $indb$. We combine them into one database and check that computing query $Q$ results in exactly $3$ times the number of original witnesses. 

	\item \textbf{(Optional) Minimize the size of the IJP:} 
	To generate smaller certificates that are more human-readable, we simply minimize the number of witnesses in the IJP.
	We use weak constraints \cite{eiter2009answer} to perform this optimization.  

\end{enumerate}

\begin{corollary}[Sufficient hardness condition]
If there is a domain $d$ and endpoints $\mathcal{S}, \mathcal{T}$ 
such that $\ijpdlpparam{Q, d, \mathcal{S}, \mathcal{T}}$ is satisfiable, then $\res(Q)$ is $\NPC$.
\end{corollary}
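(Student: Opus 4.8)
The plan is to obtain this as a direct corollary of \cref{th:ICPs}, once we know that a satisfying instance of the program actually produces an IJP. First I would note that $\ijpdlpparam{Q, d, \mathcal{S}, \mathcal{T}}$ being satisfiable means the program has an answer set (stable model). From such an answer set I would read off the database $D$ consisting of exactly those tuples $t$ for which $\indb(R, \TID, 1)$ holds, together with the fixed endpoints $\mathcal{S}, \mathcal{T}$ given as input and the declared exogenous tuples. The entire remaining task is to certify that this $D$ meets all conditions of \cref{def:IJP} (which itself invokes the JP conditions of \cref{def:JP}); granting that, \cref{th:ICPs} immediately yields that $\res(Q)$ is $\NPC$ under the corresponding semantics, which is the claim.

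The substance of the proof is therefore a soundness statement for the encoding: every answer set of $\ijpdlp$ corresponds to a database satisfying conditions (1)--(5). I would verify this condition by condition, matching each clause of the definition to the program rules sketched in steps 1--7. Reducedness and connectedness of the witness hypergraph (JP conditions 1--2) and the endpoint subset requirement (condition $3ii$) are purely ``local'' constraints on the chosen tuples; they translate into logic rules whose violation would make the program unsatisfiable, so any answer set respects them. Condition $3i$ is discharged by the input assumption that $\mathcal{S}$ and $\mathcal{T}$ are isomorphic and non-identical.

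The main obstacle is the two conditions that are not local, namely the OR-property (condition 4) and non-leaking composition (condition 5), since both require the program to reason about resilience values---an $\NPC$/co-$\NPC$ quantity---over $D$ and its endpoint-removed variants. For the OR-property I would appeal to the correctness of the saturation encoding (steps 4--5): certifying $\res(Q,D)=c$ requires both exhibiting a contingency set of size $c$ (the guess-and-check, $\np$ part) and proving that \emph{no} contingency set of size $c-1$ exists (the co-$\np$ part). The crux is that saturation faithfully expresses the universal quantification over all candidate contingency sets, so that an answer set exists only when resilience is exactly $c$ on $D$ and exactly $c-1$ on each of the three endpoint-removed databases. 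Non-leaking composition (step 6) is enforced by materializing three isomorphic copies of $D$ glued at the endpoints and requiring the query to produce exactly three times the original witness count; by \cref{prop:triangle:composition} this single triangle check certifies non-leaking for \emph{all} compositions. Once these two encodings are shown correct, the extracted $D$ is a genuine IJP, and \cref{th:ICPs} finishes the argument.
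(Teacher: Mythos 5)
Your proposal is correct and follows exactly the argument the paper intends: the corollary is an immediate consequence of the soundness of the $\ijpdlp$ encoding (any answer set, read off via $\indb$, yields a database satisfying \cref{def:IJP}, with saturation handling the co-$\np$ side of the OR-property and \cref{prop:triangle:composition} justifying the single triangle check for non-leaking composition) combined with \cref{th:ICPs}. The paper states this without an explicit proof, treating it as following by construction, and your condition-by-condition soundness verification is precisely the spelled-out version of that argument.
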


\begin{figure*}[t]{}
\centering
	\includegraphics[width=\textwidth]{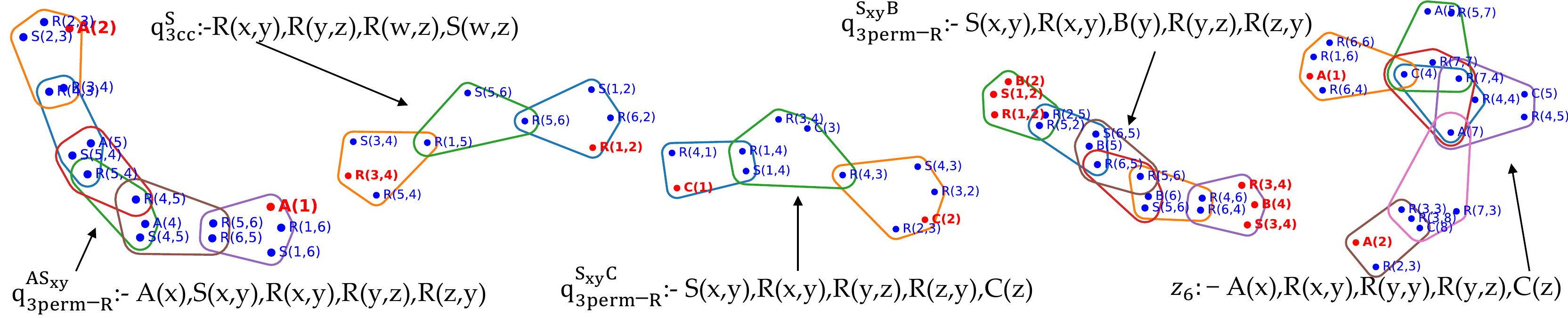}
\caption{Automatically generated and visualized IJPs for $5$ previously open queries. The nodes corresponding to tuples in $\mathcal{S} \cup \mathcal{T}$ are in red.
}
\label{Fig_IJP_sj_new}
\end{figure*}

\begin{corollary}[Complexity bound]
	It is in $\Sigma^2_p$ of $d$ to check if a query $Q$ can form an IJP of domain size $d$ or less. 
\end{corollary}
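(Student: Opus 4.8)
The plan is to exhibit, for the decision problem ``does $Q$ admit an IJP over a domain of size $\le d$?'', a certificate-and-refutation structure of the form $\exists\,\forall\,\phi$ with a polynomial-time predicate $\phi$, which is exactly the canonical shape of a $\Sigma^2_p$ problem. The outer existential block guesses the candidate IJP together with all the data needed to verify its resilience-based conditions in one shot: the database $D$ (including which of its tuples are exogenous), the two endpoint sets $\mathcal{S}$ and $\mathcal{T}$ with the bijection witnessing their isomorphism, the claimed resilience value $c$, and four contingency sets $\Gamma, \Gamma_{\mathcal S}, \Gamma_{\mathcal T}, \Gamma_{\mathcal S\mathcal T}$ that realize the upper bounds of the OR-property. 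Since the domain has size at most $d$ and the arities of the relations of $Q$ are fixed, the number of possible tuples is $\O(d^{a})$ with $a$ the maximum arity; hence $D$, and therefore every object in this guess, has size polynomial in $d$.

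Next I would observe that most of the IJP conditions are outright polynomial-time checkable given the guess, so they can simply be conjoined into $\phi$. Computing $\witnesses(Q,D)$ amounts to evaluating the fixed query $Q$ over $D$, which is polynomial in $|D|$ (data complexity); from the witness hypergraph, conditions~(1) and~(2) of \cref{def:JP} (reducedness and connectivity) are immediate, and condition~(3) (isomorphic, non-identical endpoints with no dominated endogenous tuple) is verified directly using the guessed bijection. The one structurally delicate condition is the non-leaking requirement~(5) of \cref{def:IJP}, which literally quantifies over ``any composition of two or more isomorphic JPs''; here I would invoke \cref{prop:triangle:composition}, which collapses this to the single triangular composition of three isomorphic copies. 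That composition is a fixed polynomial-size database built deterministically from $D$, so checking that evaluating $Q$ on it yields exactly three times the original witness count is again polynomial time.

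The crux is the OR-property~(4), which refers to the resilience $\res^*$ of four instances and is the only place where the second quantifier alternation is needed, because resilience is itself $\NPC$. I would handle each required equality $\res^*(Q, D) = c$ and $\res^*(Q, D\setminus\mathcal{S}) = \res^*(Q, D\setminus\mathcal{T}) = \res^*(Q, D\setminus(\mathcal{S}\cup\mathcal{T})) = c-1$ by splitting it into an upper bound and a lower bound. The upper bounds are certified by the already-guessed sets $\Gamma, \Gamma_{\mathcal S}, \Gamma_{\mathcal T}, \Gamma_{\mathcal S\mathcal T}$: one checks in polynomial time that each has the prescribed size and is a valid contingency set, i.e.\ that deleting it falsifies $Q$ on the corresponding instance. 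The lower bounds are exactly the statements ``no strictly smaller contingency set exists,'' which form the universal block: $\phi$ additionally takes a candidate set $\Gamma'$ (tagged by which of the four instances it targets) and asserts that if $\Gamma'$ is below the relevant threshold then it fails to destroy all witnesses of that instance. Validity of a candidate is, once more, one polynomial-time query evaluation. Folding the four upper-bound witnesses into the existential block and merging the four lower-bound checks into a single universal quantifier over $(\Gamma', \text{tag})$ yields a formula $\exists(D,\mathcal S,\mathcal T,c,\Gamma,\dots)\,\forall(\Gamma',\text{tag})\,\phi$ with $\phi$ polynomial-time, establishing membership in $\Sigma^2_p$.

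The main obstacle to watch is precisely the temptation to compute resilience as a subroutine, which would suggest a $\Delta^2_p$ (polynomial-time-with-$\NP$-oracle) bound nested inside the existential search over $D$ and could be mistaken for a higher level of the hierarchy. The key accounting point is that all the existential content of the four resilience computations (the optimal contingency sets) can be pushed up into the single outer existential, while all their co-$\NP$ content (minimality) shares one outer universal quantifier, so no third alternation arises; together with \cref{prop:triangle:composition} keeping condition~(5) quantifier-free, the entire verification stays within two alternations.
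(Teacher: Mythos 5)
Your proposal is correct, and it proves the same statement by a route that is formally different from, though structurally parallel to, the paper's. The paper never spells out a quantifier-alternation argument: its proof is implicit in the construction of $\ijpdlp$ --- the IJP-existence question for domain size $\le d$ is compiled into a Disjunctive Logic Program whose size is polynomial in $d$ (the search space has $\O(d^a)$ tuples), and membership in $\Sigma^2_p$ then follows from the cited result that deciding DLPs is $\Sigma^2_p$-complete. You instead give a self-contained, first-principles membership proof by exhibiting the canonical $\exists\forall\phi$ form. The underlying decomposition is identical in both: your outer existential block (database, endpoints, claimed resilience $c$, and the four contingency sets certifying the upper bounds of the OR-property in \cref{def:IJP}) is exactly the paper's ``guess'' phase; your universal block (no smaller contingency set works for any of the four instances) is exactly what the paper's \emph{saturation} technique encodes; and both arguments rely on \cref{prop:triangle:composition} to collapse the unbounded quantification in condition~(5) to a single polynomial-time check of the triangle composition. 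What the two routes buy is different: the paper's reduction yields an executable artifact (the DLP fed to \emph{clingo}) at the price of leaning on the Eiter--Gottlob complexity results for DLPs, whereas your argument is elementary and makes explicit the key accounting point --- that the existential content of all four resilience computations can be hoisted into one outer $\exists$ and their co-$\NP$ content merged into one outer $\forall$, so no third alternation is ever needed. That point is only implicit in the paper's encoding.
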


The guarantees of our DLP is one-sided: if it finds a certificate, then resilience of the query is guaranteed to be $\NPC$.
If it does not provide a certificate, then we have no guarantee.
So far we have not found any query that is known to be hard and for which our DLP could not create a certificate for 
$d=3 \cdot |\var(Q)|$.
This is in line with \cref{conj:hardness} that implies that $\ijpdlp$ is not only a sufficient but also complete algorithm for $d=7 \cdot |\var(Q)|$ (i.e.\ if the algorithm does not find a certificate for $d= 7 \cdot |\var(Q)|$, then the query is in $\PTIME$).

\introparagraph{Example IJPs}
Prior work\!~\cite{DBLP:conf/pods/FreireGIM20} left open the complexity of resilience for $7$ binary CQs with three self-join atoms.
Our DLP proved $5$ of them to be hard (\cref{Fig_IJP_sj_new} shows them and their IJPs).

\section{Complexity results for SJ-free CQs}
\label{sec:theoreticalresults}
\label{SEC:THEORECTICALRESULTS}

This section gives complexity results for both $\res$ and $\rsp$ for SJ-free queries, under set and bag semantics (see \cref{Fig_query_venn}).
Our results include both prior known results and new results.
Importantly, all our hard cases are derived with our unified hardness criterion (IJPs) from \cref{sec:IJP}, 
and all tractable cases
follow from our unified algorithms in \cref{sec:res-ilp,sec:resp-ilp,sec:relaxations}.

\subsection{Necessary notations}

Before diving into the proofs, we define a few key concepts stemming from domination (\cref{def:domination}) that lead up to the three structural criteria (\cref{def:activedeactivetriads}) which completely describe our dichotomy results.
Notice that the notion of \emph{triads} has been previously defined~\cite{DBLP:journals/pvldb/FreireGIM15}.
However, we extend this notion and make it more-fine grained. 
The previous definition of \emph{triad} now corresponds exactly to the special case of ``\emph{active triads}.'' 

\begin{definition}[Domination~\cite{DBLP:journals/pvldb/FreireGIM15}]\label{def:domination}
	In a query $Q$ with endogenous atoms $A$ and $B$, we say $A$ \emph{dominates} $B$ iff $\var(A) \subset \var(B)$.
\end{definition}

\begin{definition}[Triad (different from \cite{DBLP:journals/pvldb/FreireGIM15})]
	\label{def:triad}
	A \emph{triad} is a set of three atoms, $\mathcal{T} = \set{R_1, R_2, R_3}$  
	s.t.\ for every pair $i \neq j$, there is a path from $R_i$ to $R_j$ that uses no variable occurring in the third atom of $\mathcal{T}$.
\end{definition}

\begin{definition}[Solitary variable \cite{DBLP:journals/pvldb/FreireGIM15}]\label{def:solitaryvariable}
	In a query $Q$ 
	a variable $v$ in relation $A$ is solitary if, 
	in the query hypergraph it cannot reach any
	endogenous atom $B\neq A$  without passing through one of the nodes in $\var(A)-v$.
\end{definition}

\begin{definition}[Full domination \cite{DBLP:journals/pvldb/FreireGIM15}]\label{def:fulldomination}
	An atom $A$ of CQ $Q$ is
	\emph{fully dominated} iff for all non-solitary variables $y\in \var(A)$
	there is another atom $B$ such that $y\in\var(B)\subset\var(A)$.
\end{definition}

\begin{definition}[Active or (fully) deactivated triads]
A triad is \emph{deactivated} iff at least one of its three atoms is dominated by another atom of the query.
A triad is \emph{fully deactivated} iff at least one of its three atoms is \emph{fully} dominated by another atom of the query.
A triad is \emph{active} iff none of its atoms are dominated.
\label{def:activedeactivetriads}
\end{definition}

We call queries \emph{linear} if they do not contain triads.
Here we depart from prior work that referred to linear queries as queries without what we now call active triads~\cite{DBLP:journals/pvldb/FreireGIM15}. 
We instead say that queries without active triads are \emph{linearizable}.\footnote{The intuition of ``linearity'' is that the vertices of the dual hypergraph $H_d$ of $Q$ can be mapped onto a line s.t.\  $H_d$ has the running intersection property~\cite{Beeri+83}. }

\begin{example}
	Consider the triad $\{R, S, T\}$ in all 3 queries
	$\qtriangle$, 
	$\qtriangleunary$,
	and
	$\qtrianglebinary$ from \cref{Fig_query_venn}.
	The triad is deactivated in $\qtriangleunary$ and $\qtrianglebinary$ because $A$ dominates both $R$ and $T$.
	The triad is fully deactivated in $\qtrianglebinary$ because $T$ is fully dominated by $A$ and $B$.
	The triad is active in $\qtriangle$ since none of the three tables in the triad are 
	dominated.
	The chain with ends query  
	$\qtwochainwe$ 
	has no triad and is thus linear.
	\label{example:triads}
\end{example}

\subsection{Dichotomies for $\resdecision$ under Sets and Bags}
\label{sec:theory:resilience}

This section proves that for all SJ-free CQs, 
either $\reslp$ solves $\resdecision$ exactly (and the problem is hence easy for any instance),
or we can form an IJP (and thus the problem is hard).
Our results cover both set and bag semantics (see \cref{Fig_query_venn}).

\begin{restatable}{theorem}{thmreseasy}
  \label{thm:reseasy}
  $\reslpparam{Q,D} = \res^*(Q,D)$ for all database instances $D$
  under set or bag semantics
  if $Q$ is linear.
\end{restatable}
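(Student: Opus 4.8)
The plan is to prove the two inequalities $\reslpparam{Q,D} \le \res^*(Q,D)$ and $\reslpparam{Q,D} \ge \res^*(Q,D)$ separately. The first is immediate: $\reslp$ is obtained from $\resilp$ by dropping the integrality constraints on the tuple variables $X[t]$, so every ILP-feasible point is LP-feasible and the minimum can only decrease, giving $\reslpparam{Q,D} \le \resilpparam{Q,D}$, which equals $\res^*(Q,D)$ by \cref{thm:res-correctness}. (Under bags, \cref{thm:res-bag} lets me keep one weighted variable per unique tuple, so the constraint matrix is unchanged and only the weights $c_t$ differ.) All the work lies in the reverse inequality, and the strategy I would use is to exhibit a feasible solution of the \emph{LP dual} whose value already equals $\res^*(Q,D)$; weak duality then forces $\reslpparam{Q,D} \ge \res^*(Q,D)$.

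\textbf{Flow network.} Using linearity, I would order the atoms $R_1,\dots,R_m$ along the line witnessing the running-intersection property of the dual hypergraph, so that the compatibility of the tuples forming a witness can be checked ``locally'' along this order. I then build a node-capacitated $s$-$t$ network $N$ whose internal nodes are the unique tuples of $D$, with capacity $c_t$ equal to $1$ under set semantics and to the multiplicity of $t$ under bags; structural (infinite-capacity) edges run from $s$ into the tuples of $R_1$, from the tuples of $R_m$ into $t$, and between tuples of consecutive layers that agree on their shared variables. The crucial claim, and the only place where the no-triad hypothesis is used, is that the $s$-$t$ paths of $N$ are \emph{exactly} the elements of $\witnesses(Q,D)$: linearity guarantees that a left-to-right compatible sequence of tuples determines a single global join tuple and, conversely, that every witness lays out as one path, so $N$ neither creates spurious witnesses nor misses any. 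It follows that a set $\Gamma$ of tuples destroys all witnesses iff it is an $s$-$t$ node cut, so the minimum weighted node cut of $N$ equals $\res^*(Q,D)$.

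\textbf{Closing via duality.} Splitting each tuple node into an in/out pair joined by an edge of capacity $c_t$ converts the minimum weighted node cut into a minimum edge cut, which by the max-flow min-cut integrality theorem equals the value of a maximum \emph{integral} flow $f$ in $N$. Decomposing $f$ into $s$-$t$ paths assigns to each witness $w$ a value $y_w \ge 0$ with $\sum_w y_w$ equal to the flow value, hence to $\res^*(Q,D)$, while conservation at the capacity edges yields $\sum_{w \ni t} y_w \le c_t$ for every tuple $t$. These are precisely the feasibility constraints of the dual of $\reslp$, namely the fractional witness-packing program $\max \sum_w y_w$ subject to $\sum_{w \ni t} y_w \le c_t$. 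By weak LP duality, any dual-feasible value lower-bounds the primal optimum, so $\res^*(Q,D) = \sum_w y_w \le \reslpparam{Q,D}$. Combined with the easy direction this gives $\reslpparam{Q,D} = \res^*(Q,D)$, uniformly for set and bag semantics since only the capacities $c_t$ change.

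\textbf{Main obstacle.} The delicate step is the network claim that $s$-$t$ paths coincide with witnesses for \emph{every} linear $Q$, i.e.\ that the running-intersection ordering really can be realized as a layered network in which compatibility is genuinely local. I would isolate this as a lemma and prove it by induction along the atom order, invoking the no-triad property to exclude the two failure modes: a compatible path that is not an actual witness, and a join that introduces extra connections which the separator of an earlier atom fails to control. The fiddly subcase is an atom whose separator sits not with the immediately preceding atom but with an earlier one named by the running-intersection property; I expect to resolve it by routing the structural edges through that separating atom rather than between strictly consecutive layers.
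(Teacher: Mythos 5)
Your proposal is correct, and it rests on the same central object as the paper's proof---the flow-graph encoding of linear queries in which $s$-$t$ paths correspond exactly to witnesses and tuple multiplicities become capacities---but you close the argument from the opposite side of max-flow min-cut duality. The paper argues on the primal (cut) side: feasible points of $\reslp$ are exactly fractional cuts of the flow graph and vice versa, and the max-flow integrality theorem then guarantees that the fractional cut optimum is attained integrally, hence equals $\res^*(Q,D)$. You instead build an explicit dual certificate: an integral maximum flow, decomposed into $s$-$t$ paths, yields a witness packing $\{y_w\}$ feasible for the dual of $\reslp$ with value $\res^*(Q,D)$, and weak duality supplies the nontrivial inequality $\reslpparam{Q,D} \ge \res^*(Q,D)$. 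The two arguments are equivalent in substance (both are ultimately the max-flow min-cut theorem applied to the same network), but your dual phrasing makes the lower-bound certificate explicit and is more self-contained, whereas the paper's primal phrasing is terser and leans on the correspondence between LP solutions and fractional cuts; a minor point in your favor is that the witness-packing dual also makes the bag-semantics case (integer capacities $c_t$) completely transparent. One caveat: the lemma you isolate as the ``main obstacle''---that for every linear query the running-intersection order can be realized as a network whose $s$-$t$ paths are \emph{exactly} the witnesses, including the case of separators attaching to non-consecutive atoms---is precisely what the paper outsources to prior work \cite{DBLP:journals/pvldb/MeliouGMS11,DBLP:journals/pvldb/FreireGIM15}. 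Since this is the only place linearity enters and the statement fails without it (a triad creates spurious paths or nonlocal compatibility), your write-up would need to either prove that lemma in full or cite it; as it stands it is a sketched dependency, not a gap in reasoning.
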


\begin{proof}[Proof \cref{thm:reseasy}]
	Prior approaches show that the witnesses generated by a linear query $Q$ over database instance $D$ can be encoded in a flow graph \cite{DBLP:journals/pvldb/MeliouGMS11} such that each path of the flow graph represents a witness and each edge with non-infinite weight represents a tuple.
	The flow graph is such that an edge participates in a path iff the corresponding tuple is part of the corresponding witness. 
	The min-cut of this graph (or the minimum edges to remove to disconnect the source from the target), is equal to $\res(Q,D)$.
	We use this prior result to prove that $\reslpparam{Q,D} = \res^*(Q,D)$ by showing that the Linear Program solution is a valid cut for the flow graph, and vice versa.
	Then the minimal cut must also be admitted by $\reslpparam{Q,D}$ and $\reslpparam{Q,D}$ also cuts the flow graph.
	Assume we have a fractional LP solution - then for each witness, we still fulfill the constraint that sum of all tuple variables $\geq 1$.
	This implies that the path corresponding to each witness has been cut.  
	Since the number of paths in the flow graph is equal to the number of witnesses, all paths from source to target are cut. 
	By the max-flow Integrality Theorem, there is an equivalent optimal integral solution as well. 
	This integral solution still cuts all paths, and fulfills all conditions of the LP.
	Thus, for linear queries, $\reslpparam{Q,D} = \resilpparam{Q, D} = RES(Q,D)$.
\qedhere
\end{proof}

\begin{restatable}{theorem}{thmreseasydomination}
  \label{thm:reseasydomination}
  $\reslpparam{Q,D} = \res^*(Q,D)$ for all database instances $D$ under set semantics if all triads in $Q$ are deactivated.
\end{restatable}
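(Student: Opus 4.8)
The plan is to reduce the hypothesis ``all triads are deactivated'' to the linear case already settled in \cref{thm:reseasy}, by showing that the tuples of \emph{dominated} atoms can be dropped from consideration entirely---both in the integer program and in its relaxation---without changing either optimal value. Once those atoms are treated as exogenous, the endogenous part of $Q$ contains no triad, so the flow encoding behind \cref{thm:reseasy} applies.

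First I would establish a domination lemma in two forms. Suppose atom $B$ is dominated, and let $A$ be a \emph{minimal} dominating atom, so $\var(A)\subset\var(B)$ and $A$ is itself not dominated. Because $D$ is reduced, every tuple $b$ of $B$ lies in some witness $w$, and $w$ necessarily contains the $A$-tuple $a(b)=\pi_{\var(A)}(b)$; moreover $a(b)$ lies in \emph{every} witness that contains $b$, since $\var(A)\subset\var(B)$. Hence the witness set of $b$ is contained in that of $a(b)$. \textbf{Integral version:} given any contingency set $\Gamma$, replacing each dominated $b\in\Gamma$ by $a(b)$ yields a contingency set of no larger cardinality (under set semantics cost is just size), so some optimal resilience set uses no dominated tuple, i.e.\ $\res^*(Q,D)$ is unchanged when all dominated atoms are declared exogenous. \textbf{Fractional version:} from an optimal feasible solution $x^*$ of $\reslpparam{Q,D}$, define $x'$ by setting $x'[b]=0$ for every dominated tuple and $x'[a]=\min\bigl(1,\ x^*[a]+\sum_{b:\,a(b)=a}x^*[b]\bigr)$ for each (non-dominated) dominating tuple, leaving all other variables fixed. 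A short case analysis on each witness constraint shows feasibility is preserved: if the accumulated mass reaches $1$ the constraint is met by that single term, and otherwise the contribution of the pair $(a(b),b)$ does not decrease. The objective does not increase because the capping step can only remove mass. Iterating over all dominated atoms, I conclude that $\reslpparam{Q,D}$ is also unchanged when every dominated atom is made exogenous.

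Next I would observe that declaring all dominated atoms exogenous leaves the endogenous atoms triad-free. Being a triad is a property of the fixed query hypergraph (three atoms with the required pairwise connectivity) and is independent of the exogenous/endogenous labelling; thus any three endogenous atoms forming a triad would also be a triad of $Q$. By hypothesis every triad of $Q$ is deactivated, i.e.\ contains a dominated atom, which is now exogenous, so no three \emph{endogenous} atoms form a triad. The query is therefore linear on its endogenous atoms, and I can invoke the flow-graph construction used in \cref{thm:reseasy}: witnesses become source--target paths, endogenous tuples become unit-capacity edges, and dominated (exogenous) tuples become infinite-capacity edges that are never cut, so the min cut equals $\res^*(Q,D)$. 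The max-flow Integrality Theorem then gives an integral LP optimum for the cut problem, yielding $\reslpparam{Q,D}=\resilpparam{Q,D}=\res^*(Q,D)$.

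The main obstacle is the last step: verifying that ``all triads deactivated'' (linearizability) is exactly the condition under which the flow encoding can be built after the dominated atoms are made exogenous, and that the infinite-capacity edges for exogenous atoms interact correctly with the path/witness correspondence and the running-intersection ordering the construction requires. I expect most of the technical care to go into (i) confirming that a triad-free endogenous structure yields a valid linear ordering of the atoms for the flow graph, and (ii) the bookkeeping when several $B$-tuples share one projection $a$, so that the capping step does not violate any constraint. Given the subset-of-witnesses observation, the integral and fractional domination arguments are otherwise routine.
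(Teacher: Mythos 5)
Your proposal is correct and takes essentially the same route as the paper: argue via domination that dominated atoms can be made exogenous without changing the optimum, observe that the resulting query is linear (no endogenous triads remain), and then invoke the flow encoding and max-flow integrality behind \cref{thm:reseasy}. The only difference is one of rigor, not of approach: your explicit fractional mass-shifting step, setting $x'[b]=0$ and $x'[a]=\min\bigl(1,\ x^*[a]+\sum_{b}x^*[b]\bigr)$, spells out the LP-level claim (that some optimal solution of $\reslpparam{Q,D}$ places zero weight on all dominated tuples) which the paper's proof states but does not verify in detail.
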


\begin{proofintuition*}[\cref{thm:reseasydomination}]
	Prior work~\cite{DBLP:journals/pvldb/FreireGIM15} has shown that queries that contain only deactivated triads (previously called dominated triads) can be linearized due to domination (\cref{def:domination})
	We show that this linearization does not change the optimal solution to the LP formulation under set semantics.
	This is since the dominated table in the deactivated triad can simply be made exogenous, resulting in a linear query. 
	This is equivalent to saying that there is an optimal solution of $\reslp$ where the decision variables of all tuples in dominated table are set to $0$.
	Thus, $\reslp$ models a linear query indirectly, and hence \cref{thm:reseasy} applies to complete the proof.
	Notice that domination does not work under bag semantics, which leads to a different tractability frontier.
	\qed
\end{proofintuition*}

\begin{restatable}{theorem}{thmreshardbag}
	\label{thm:reshardbag}
	$\res(Q)$ is $\NPC$ under bag semantics if $Q$ is not linear.
\end{restatable}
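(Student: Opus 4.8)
The plan is to prove membership and hardness separately, routing hardness through the IJP criterion. Membership in \NP{} is immediate: a contingency set $\Gamma \subseteq D$ is a polynomial-size certificate and $D - \Gamma \not\models Q$ is checkable in polynomial time, under bags as under sets. For hardness it suffices, by \cref{th:ICPs}, to exhibit one bag database $D$ that forms an IJP for $Q$. Since $Q$ is not linear it contains a triad $\mathcal{T}=\{R_1,R_2,R_3\}$ (\cref{def:triad}), and I would split on whether $\mathcal{T}$ is active or deactivated (\cref{def:activedeactivetriads}).

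If $\mathcal{T}$ is active, then no atom of the triad is dominated and the construction already used to prove \cref{th:IJPs_for_SJFCQs} yields an IJP for $Q$ under set semantics. A set instance is a bag instance in which every multiplicity equals one; the JP conditions (\cref{def:JP}) and the non-leaking condition are purely structural, and the OR-property is read off the same unweighted resilience, so the identical $D$ is an IJP under bag semantics. Thus \cref{th:ICPs} already gives hardness in the active case, and the entire content of the theorem is the deactivated case.

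The deactivated case is exactly where sets are easy and bags must be made hard. There some atom $R_i$ of the triad is dominated by an atom $A$ with $\var(A)\subset\var(R_i)$, and \cref{thm:reseasydomination} exploits this: deleting the dominating tuple covers a superset of the witnesses covered by the dominated tuple, so $R_i$ can be treated as exogenous and the triad linearized. The plan is to destroy this argument with multiplicities. I would start from the three-witness ``triangle'' gadget built on the triad's pairwise-independent paths (as in \cref{ex:IJP:triangle}), and then (i) inflate the multiplicities of every dominating atom (or simply declare it exogenous, which also exempts it from condition $(3ii)$ of \cref{def:JP}) so that an optimal resilience set never uses a dominating tuple as a shortcut, and (ii) keep the three independent triad tuples at unit cost so that removing either endpoint, or both, lowers the optimal weighted resilience by exactly one. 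Under bag weights this restores the OR-property of \cref{def:IJP}, whereas under the unit weights of set semantics the very same instance admits the cheaper, domination-based cut and the OR-property collapses; this is the precise point at which the two semantics diverge.

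The main obstacle is to obtain the OR-property \emph{together with} the non-leaking composition condition of \cref{def:IJP}, which remains a purely structural requirement that multiplicities cannot repair. Domination couples the shared variable of $A$ and $R_i$ across witnesses, and it is exactly this coupling that spawns spurious cross-witnesses when two gadgets are glued at a shared endpoint: gluing the naive triangle gadget at a dominated-atom endpoint leaks. I would therefore choose the endpoint tuples and the constant-sharing partition so that the endpoint is insulated from the dominated variables (anchoring at a non-dominated position of the triad when one exists, and otherwise at a position outside every dominator's variable set), keep each dominator's constants disjoint from the opposite endpoint, and then discharge non-leakage through the three base compositions of \cref{prop:triangle:composition}. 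The real work, and where I expect the argument to be most delicate, is carrying this out uniformly for an \emph{arbitrary} deactivated triad sitting inside an arbitrary non-linear SJ-free $Q$ (including triads all of whose atoms are dominated), while simultaneously certifying that the inflated multiplicities never open up a cheaper cut that would violate the OR-property.
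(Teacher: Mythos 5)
Your proposal follows essentially the same route as the paper's proof: non-linearity yields a triad, an active triad inherits the set-semantics IJP unchanged (all multiplicities one), and a deactivated triad is handled by inflating the multiplicities of the dominating atoms so that no minimum contingency set ever uses them (the paper uses exactly $c_w$ copies, where $c_w$ is the number of witnesses of the set-semantics IJP, invoking \cref{thm:res-bag} to rule out partial deletion), after which hardness follows from \cref{th:ICPs}. The details you flag as delicate (non-leakage and the OR-property for an arbitrary deactivated triad) are treated at the same level of generality in the paper, which defers them to the construction in the proof of \cref{th:IJPs_for_SJFCQs}.
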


\begin{proofintuition*}[\cref{thm:reshardbag}]
For queries with active triads, the IJPs (\cref{th:IJPs_for_SJFCQs}) imply hardness for bag semantics as well. 
We prove that \emph{all triads} are hard by showing that including a fixed number of copies of a dominating table is equivalent to making it exogenous. 
This is equivalent to creating a new IJP where the tuples of the dominating table have $c_w$ copies, where $c_w$ is the number of witnesses in the IJP under set semantics.
Now, no minimal contingency set will use tuples of the dominating table, and hence we must consider the tuples from the dominated tables still.
Thus, domination does not work under bag semantics, and any triad (even a fully deactivated one) implies hardness.
\qed
\end{proofintuition*}

The results in this section, along with \cref{th:IJPs_for_SJFCQs} imply the following dichotomies under both set and bag semantics:

\begin{corollary}
Under set semantics,
$\res^*(Q)$ is in $\PTIME$ for queries that do not contain active triads, otherwise it is NPC.
\end{corollary}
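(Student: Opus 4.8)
The plan is to prove the dichotomy by handling its two directions with the tractability theorems of \cref{sec:theory:resilience} on the easy side and the IJP machinery of \cref{sec:IJP} on the hard side. The statement ``contains an active triad'' and its negation are complementary, so once each side is established the dichotomy is immediate.

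For the tractable direction, I would observe that a query $Q$ containing no active triad is exactly a \emph{linearizable} query, i.e.\ every triad of $Q$ (if it has any) is deactivated. The plan is to split into two sub-cases that are already covered: if $Q$ is \emph{linear} (contains no triad at all), then \cref{thm:reseasy} gives $\reslpparam{Q,D} = \res^*(Q,D)$ for every instance $D$; otherwise $Q$ has triads but all of them are deactivated, and \cref{thm:reseasydomination} gives the same equality under set semantics. In fact the hypothesis ``all triads deactivated'' holds vacuously when there are no triads, so \cref{thm:reseasydomination} alone already subsumes both sub-cases under sets; I would keep the split only for readability. Since the LP relaxation is solvable in $\PTIME$ (Interior Point methods, \cref{sec:preliminaries}) and its optimum coincides with $\resilpparam{Q,D} = \res^*(Q,D)$, this places $\res^*(Q)$ in $\PTIME$ for every instance.

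For the hard direction, I would assume $Q$ contains an active triad and derive $\NPC$-hardness by exhibiting an IJP and invoking \cref{th:ICPs}. Concretely, the chain is \emph{active triad} $\Rightarrow$ \emph{IJP} $\Rightarrow$ $\NPC$: the first implication is precisely the construction inside the proof of \cref{th:IJPs_for_SJFCQs}, where the three mutually variable-disjoint connecting paths of an active triad are assembled into the start/terminal endpoints and OR-gadget of an IJP under set semantics; the second implication is \cref{th:ICPs}. Hence $\res(Q)$ is $\NPC$ under set semantics whenever $Q$ has an active triad.

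The main obstacle here is bookkeeping rather than new mathematics: I must verify that the partition ``linear vs.\ only-deactivated-triads'' exactly exhausts ``no active triad,'' so that \cref{thm:reseasy} and \cref{thm:reseasydomination} together leave no gap on the easy side, and that the necessary-criterion content of \cref{th:IJPs_for_SJFCQs} genuinely supplies the implication ``active triad $\Rightarrow$ IJP'' rather than only the bare equivalence ``$\NPC \Leftrightarrow$ IJP.'' Once those two inclusions are pinned down, the corollary follows by combining the two directions over the complementary conditions on active triads.
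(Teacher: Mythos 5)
Your proposal is correct and matches the paper's own derivation: the paper obtains this corollary by combining \cref{thm:reseasy} and \cref{thm:reseasydomination} (LP $=$ ILP, hence $\PTIME$, when no active triad exists) with the IJP machinery of \cref{sec:IJP}, where the construction inside the proof of \cref{th:IJPs_for_SJFCQs} supplies ``active triad $\Rightarrow$ IJP'' and \cref{th:ICPs} supplies ``IJP $\Rightarrow$ $\NPC$.'' Your observation that \cref{thm:reseasydomination} vacuously subsumes the linear case, and that the bare equivalence in \cref{th:IJPs_for_SJFCQs} must be unpacked to extract the triad-to-IJP construction, are both accurate readings of how the paper intends the corollary to follow.
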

\begin{corollary}
Under bag semantics,
$\res^*(Q)$ is in $\PTIME$ for queries that do not contain triads, otherwise it is NPC.
\end{corollary}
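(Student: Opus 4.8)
The plan is to piggy-back on the hardness machinery already built for Independent Join Paths. By \cref{th:ICPs} it suffices to exhibit, for every non-linear query $Q$, a bag database $D$ that forms an IJP for $Q$; membership in \NP{} is immediate, since any candidate contingency set (with its multiplicities) can be verified in polynomial time. Because $Q$ is not linear it contains a triad $\{R_1,R_2,R_3\}$ (\cref{def:triad}), so I would split on whether some triad is active. If $Q$ has an active triad, \cref{th:IJPs_for_SJFCQs} already supplies an IJP under set semantics; reading that instance as a bag instance with all multiplicities equal to $1$ keeps every condition of \cref{def:IJP} intact, so it is an IJP under bag semantics and we are done. The genuinely new case is when every triad of $Q$ is \emph{deactivated} (\cref{def:activedeactivetriads}), and this is where all the work lies.

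For a deactivated triad, some triad atom $R_i$ is dominated by another atom $A$ of $Q$ with $\var(A)\subset\var(R_i)$ (\cref{def:domination}). This is exactly the feature the set-semantics tractability proof of \cref{thm:reseasydomination} exploits: whenever an optimal contingency set would delete an $R_i$-tuple, it may instead delete the cheaper, more ``powerful'' dominating $A$-tuple, which linearizes the query. My approach is to neutralize this shortcut. I would first build the IJP one obtains for the triad when all dominating atoms are treated as exogenous: with those atoms frozen, the three triad atoms again supply pairwise-independent paths, so the active-triad construction underlying \cref{th:IJPs_for_SJFCQs} applies and yields the required OR-property. Since the IJP definition already permits exogenous tuples, this gives a valid IJP at the level of ``set semantics with declared exogenous atoms.'' The final move converts declared exogeneity into pure bag semantics: every tuple of a dominating atom is assigned a multiplicity $M$ strictly larger than the number of witnesses $c_w$ of the gadget (e.g.\ $M=c_w+1$), while all other tuples keep multiplicity $1$.

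The core lemma to establish is that these high multiplicities faithfully simulate exogeneity, which I would prove by an exchange argument. Suppose an optimal contingency set $\Gamma$ contains a dominating tuple $a$; by \cref{thm:res-bag} we may assume all $M$ copies of $a$ are deleted together, at cost $M$. The witnesses through $a$ number at most $c_w$, and re-covering each of them by one endogenous non-$a$ tuple costs at most $c_w<M$, so $\Gamma$ is not optimal. Crucially, the non-leaking composition property (\cref{def:IJP}(5), which by \cref{prop:triangle:composition} reduces to the triangle check) guarantees that composing gadgets creates no new witnesses; hence every dominating tuple participates only in witnesses of its own gadget. This locality means the constant $M$ fixed by a single gadget still dominates throughout the composed reduction instance of \cref{th:ICPs}, the exchange argument remains valid globally, and the per-gadget blow-up is a constant so the reduction stays polynomial.

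It then remains to verify that the bag instance satisfies all of \cref{def:IJP}: the JP conditions (reduced, connected witness hypergraph, isomorphic non-identical endpoints) are inherited from the underlying set construction because multiplicities do not change which valuations are witnesses; the OR-property holds because the frozen dominating tuples can no longer serve as a shortcut, so removing $\mathcal{S}$, $\mathcal{T}$, or both each drops resilience by exactly one as in the active case; and non-leaking composition transfers verbatim. The main obstacle is precisely the interaction between the exchange argument and the locality claim: I must be careful that a constant multiplicity $M$ chosen from one gadget suffices not only locally but across the entire composed instance of \cref{th:ICPs} and across all three isomorphs used in the composition check, which is exactly what the non-leaking property secures by confining every witness, and hence every witness through a dominating tuple, to a single gadget.
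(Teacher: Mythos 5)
Your treatment of the hardness half is correct and is essentially the paper's own argument (\cref{thm:reshardbag} together with the bag-semantics extension in the proof of \cref{th:IJPs_for_SJFCQs}): split on active versus deactivated triads, reuse the set-semantics IJP verbatim (multiplicity $1$) for active triads, and for deactivated triads simulate exogeneity of the dominating atoms by assigning their tuples a multiplicity exceeding the number of gadget witnesses, with the exchange argument resting on \cref{thm:res-bag} and with the non-leaking composition property (\cref{def:IJP}, condition 5) confining every dominating tuple's witnesses to its own gadget so that the constant works globally in the reduction of \cref{th:ICPs}. The paper uses $c_w$ copies where you use $c_w+1$; that difference is immaterial (the paper only needs that deleting the dominating tuple is never \emph{necessary}, you make it strictly suboptimal).

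The genuine gap is that you never address the other half of the dichotomy: that $\res^*(Q)$ is in $\PTIME$ under bag semantics when $Q$ contains no triad. Your opening claim that ``it suffices to exhibit, for every non-linear query $Q$, a bag database $D$ that forms an IJP'' is only true for the $\NPC$ direction; producing hardness certificates for non-linear queries says nothing about the tractability of the triad-free (linear) ones, and a dichotomy statement requires both. The paper closes this half with \cref{thm:reseasy}: for linear queries the witnesses embed into a flow graph, any fractional solution of $\reslpparam{Q,D}$ cuts every source--target path, and the max-flow integrality theorem then yields an integral optimum of the same value, so $\reslpparam{Q,D} = \resilpparam{Q,D} = \res^*(Q,D)$ under set \emph{or} bag semantics (bags change only the objective weights, which stay integral), and the LP is solvable in $\PTIME$. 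Without invoking this theorem or supplying an equivalent argument, your proposal proves only the hardness direction of the corollary, not the corollary itself.
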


\subsection{Dichotomies for $\rspdecision$ under Sets and Bags}
\label{sec:theory:responsibility}

This section follows a similar pattern as the previous one
to prove that for every SJ-free CQ, either $\rspmilp$ solves $\rsp$ exactly
(and the problem is hence easy), 
or we can form an IJP for $\rsp$.

\begin{restatable}{theorem}{thmrspeasy}
  \label{thm:rspeasy}
  $\rspmilpparam{Q,D,\resptuple} =$ $\rsp^*(Q,D,\resptuple)$ for all database instances $D$ under set or bag semantics if $Q$ is linear.
\end{restatable}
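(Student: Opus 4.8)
The plan is to reduce the claim to the resilience result \cref{thm:reseasy}, which we already have. By \cref{thm:resp-correctness} we know $\rspilpparam{Q,D,\resptuple} = \rsp^*(Q,D,\resptuple)$, so it suffices to prove $\rspmilpparam{Q,D,\resptuple} = \rspilpparam{Q,D,\resptuple}$. Since $\rspmilp$ only relaxes the integrality of the tuple variables $X[t]$ of $\rspilp$ (keeping the witness variables $X[\vec w]$ integral), every feasible $\rspilp$ solution is feasible for $\rspmilp$, giving $\rspmilpparam{Q,D,\resptuple} \le \rspilpparam{Q,D,\resptuple}$ for this minimization problem. The work lies entirely in the reverse inequality: I must exhibit an \emph{integral} optimal tuple assignment for $\rspmilp$.

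First I would invoke the decomposition underlying \cref{thm:rspmilpptime}. Because the witness variables $X[\vec w]$ are integral and enter only through the single counterfactual constraint $\sum_{\vec w \ni \resptuple} X[\vec w] \le |\{\vec w : \resptuple \in \vec w\}| - 1$, an optimal solution sets exactly one such variable, say $X[\vec w^\star]=0$, and the rest to $1$. Thus $\rspmilp$ equals the minimum, over all choices of a preserved witness $\vec w^\star$ containing $\resptuple$, of the LP obtained by fixing $X[\vec w^\star]=0$. I would then show that each fixed-$\vec w^\star$ branch \emph{is} a resilience LP relaxation. Setting $X[\vec w^\star]=0$ together with the witness-tracking constraints $X[\vec w^\star]\ge X[t]$ forces $X[t]=0$ for every tuple $t\in\vec w^\star$ that carries a deletion variable (recall such variables exist only for tuples occurring in some witness \emph{without} $\resptuple$); the remaining constraints are precisely the resilience constraints over all witnesses that avoid $\resptuple$. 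Hence the $\vec w^\star$-branch equals $\reslpparam{Q,D_{\vec w^\star}}$, where $D_{\vec w^\star}$ is the instance whose witnesses are exactly those avoiding $\resptuple$ and in which the tuples of $\vec w^\star$ are declared exogenous (pinned to $0$).

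Finally I would apply \cref{thm:reseasy}: since $Q$ is linear, its resilience LP has an integral optimum equal to $\resilp$, via the flow-graph encoding in which witnesses become source-to-target paths and the max-flow Integrality Theorem yields an integral min-cut. Exogenous tuples fit this encoding as infinite-capacity (uncuttable) edges, so pinning the tuples of $\vec w^\star$ to $0$ leaves all finite capacities integral and keeps the integrality conclusion intact. Therefore each branch LP attains an integral optimum, and taking the minimum over the polynomially many choices of $\vec w^\star$ produces an integral $\rspmilp$ optimum, i.e.\ $\rspmilpparam{Q,D,\resptuple} = \rspilpparam{Q,D,\resptuple} = \rsp^*(Q,D,\resptuple)$. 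I expect the main obstacle to be the middle step: carefully verifying that fixing the preserved witness really collapses the MILP branch to a resilience LP on a well-defined sub-instance with exogenous tuples---in particular that tuples of $\vec w^\star$ shared with $\resptuple$-free witnesses behave exactly as uncuttable edges---and that the flow argument of \cref{thm:reseasy} is robust to such infinite-capacity edges. (If $\resptuple$ cannot be made counterfactual, every branch LP is infeasible and both sides are $+\infty$, which I would flag as a boundary case.)
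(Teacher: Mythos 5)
Your proof is correct, and its skeleton coincides with the paper's: both arguments pivot on the observation that an optimal $\rspmilp$ solution preserves some witness $\vec w^\star \ni \resptuple$ whose tuple variables are forced to $0$ by the tracking constraints, and both ultimately rest on max-flow integrality of the flow encoding for linear queries. The difference is in the routing. The paper proves $\rspmilpparam{Q,D,\resptuple} = \rsp^*(Q,D,\resptuple)$ directly: it maps the MILP optimum onto a cut of the responsibility flow graphs of prior work (one graph per preserved witness, with that witness's edges at weight $\infty$), invokes the result of Meliou et al.~\cite{DBLP:journals/pvldb/MeliouGMS11} that the minimum over these min-cuts \emph{is} the responsibility, and then maps the optimal flow cut back to a feasible MILP assignment for the converse inequality. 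You instead prove $\rspmilpparam{Q,D,\resptuple} = \rspilpparam{Q,D,\resptuple}$ and finish via \cref{thm:resp-correctness}: you make the branch decomposition of \cref{thm:rspmilpptime} explicit, identify each branch as $\reslp$ on $D \setminus \{\resptuple\}$ with the tuples of $\vec w^\star$ declared exogenous, and reuse \cref{thm:reseasy} branch-wise. Your version is more modular---it leans only on the paper's own theorems rather than the external responsibility-flow characterization---at the cost of the extension you correctly flag: \cref{thm:reseasy} must be checked to tolerate pinned (infinite-capacity) tuples. That check is genuine but benign, and the paper's proof needs the same fact implicitly, since its flow graphs also carry $\infty$-weight edges. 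Your boundary remark (both sides infeasible when $\resptuple$ cannot be made counterfactual) covers a case the paper's proof silently skips as well.
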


\begin{proof}[Proof \cref{thm:rspeasy}]
		Let $X_m$ be an optimal variable assignment generated by solving $\mathtt{MILP}[\rspdecision^*(
			\\{Q,D,\resptuple})]$ 
		There must be at least one witness $w_p \in D$ such that $\resptuple \in w_p$ and $X_m[w_p] = 0$ i.e.\ the witness is not destroyed (this follows from the fact that the counterfactual clause enforces that all witnesses containing $t$ cannot take value $1$). 
		For such a witness, any tuple $t' \in w_p$, must have $X[t'] = 0$ since it satisfies the witness tracking constraints.
		We also know that since $Q$ is a linear query, the witnesses can be encoded in a flow graph to find the responsibility \cite{DBLP:journals/pvldb/MeliouGMS11,DBLP:journals/pvldb/FreireGIM15}.
		We can map the values of $X_m$ to the flow graph, where $X_m[t]$ now denotes if an edge in the flow graph is cut or not.
		Consider $X_m[\resptuple] = 0$, since it is not modeled in $\rspmilp$.
		We see that this disconnects all paths in the graph (since paths that do not contain $\resptuple$ are disconnected by virtue of the resilience constraints of $\rspmilp$).
		If we set the weight of all tuples in $w_p$ to $\infty$, the cut value does not change since these tuples were not part of the cut.
		Prior work \cite{DBLP:journals/pvldb/MeliouGMS11} has shown that $\rsp(Q,D)$ for linear queries can be calculated by taking the minimum of min-cuts of \emph{all} flow graphs such that have $1$ of witnesses that contains $\resptuple$, has weight of all other tuples edges set to $\infty$.
		Thus, $\rspmilpparam{Q,D,\resptuple}$ is at least as much as the responsibility computed by a flow graph.
		In addition to this, the flow graph with the smallest cut also fulfills all the solutions for $\rspmilp$ (since at least one witness containing $\resptuple$ is preserved, and all witnesses not containing $\resptuple$ are cut).
		Thus, the optimal value of $\rsp(Q,D, \resptuple)$ can be mapped back to a $\rspmilp$ assignment.
	\qedhere
\end{proof}

\begin{restatable}{theorem}{thmrspeasyfullydominated}
  \label{thm:rspeasyfullydominated}
  $\rspmilpparam{Q,D,\resptuple} = \rsp^*(Q,D,\resptuple)$ for any database
  $D$ 
  under set semantics
  if all triads in $Q$ are fully deactivated.
\end{restatable}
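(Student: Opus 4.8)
The plan is to mirror the proof of \cref{thm:reseasydomination} --- the resilience analog --- by using full domination to \emph{linearize} the query and then invoking the linear-query result \cref{thm:rspeasy}. By \cref{thm:rspmilpptime}, solving $\rspmilpparam{Q,D,\resptuple}$ reduces to fixing, for each witness $w_p$ containing $\resptuple$, its witness variable $X[w_p]=0$ --- which through the tracking constraints $X[w_p]\geq X[t]$ zeroes out every tuple variable of $w_p$ --- and solving the resulting LP. That LP is a resilience-style program that must destroy all witnesses \emph{not} containing $\resptuple$ while the tuples of $w_p$ are forbidden from the contingency set, so it suffices to show each such LP has an integral optimum equal to the exact responsibility.

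First I would prove the key rerouting lemma: under set semantics, if an atom $B$ is fully dominated and $\resptuple\notin B$, then some optimal solution of the above LP sets every tuple variable $X[b]$ with $b\in B$ to $0$, so $B$ may be treated as exogenous. The rerouting is as in the resilience case --- a resilience constraint coming from a witness $w$ that uses a $B$-tuple is instead satisfied by cutting a dominating tuple. The reason \emph{full} domination (rather than the mere domination of \cref{thm:reseasydomination}) is needed is the counterfactual requirement: we must destroy $w$ without destroying the preserved witness $w_p$. Since $B$ is fully dominated, \emph{every} non-solitary variable of $B$ carries its own dominating atom, so whenever $w$ and $w_p$ disagree on such a variable we can cut a dominating tuple present in $w$ but absent from $w_p$; with only partial domination the single undominated variable is exactly the hardness gadget behind \cref{thm:rspharddominatedtable}.

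The delicate sub-case is when $w$ and $w_p$ differ \emph{only} on solitary variables of $B$; since solitary variables occur nowhere but in $B$, this forces $w$ and $w_p$ to share every non-$B$ tuple. When $\resptuple\notin B$ this is impossible, because sharing all non-$B$ tuples of $w_p$ would put $\resptuple\in w$, contradicting that $w$ omits $\resptuple$ --- which is why the lemma is clean in that case, and iterating it over the fully dominated atom of each fully deactivated triad leaves a query with no endogenous triad, i.e.\ linear, on which \cref{thm:rspeasy} (encode the surviving witnesses in the responsibility flow graph, tuples of $w_p$ at infinite weight, apply max-flow integrality) closes the argument. When instead $\resptuple\in B$, such a $w$ can exist (parallel witnesses differing only in a solitary coordinate) and $B$ cannot be fully exogenized, but then the differing $B$-tuple $b$ is the \emph{unique} member of $w\setminus w_p$, so after the tracking constraints zero the shared tuples the constraint of $w$ collapses to $X[b]\geq 1$ and forces $X[b]=1$ on its own; the remaining constraints still reroute around $B$ through dominating tuples, so the LP optimum is again pinned to an integral point via the flow reduction applied to the linearized query augmented with these forced singleton deletions.

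The hard part will be this rerouting lemma together with the $\resptuple\in B$ analysis: one must argue uniformly that full domination lets \emph{every} resilience constraint be discharged either by a dominating tuple avoiding $w_p$ or by a forced singleton constraint, so that no strictly cheaper fractional assignment survives and the relaxation collapses onto the integral optimum guaranteed by the flow reduction.
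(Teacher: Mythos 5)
Your proposal is correct, and its high-level strategy is the same as the paper's: use full domination to argue the fully dominated atom $B$ can be treated as exogenous, so the query linearizes and \cref{thm:rspeasy} closes the argument. But your execution genuinely differs, and in two ways it is more careful than the paper's own proof. First, you argue at the level of the relaxation itself: you decompose $\rspmilpparam{Q,D,\resptuple}$ via \cref{thm:rspmilpptime} into one LP per preserved witness $w_p$ and reroute \emph{fractional} mass off $B$-tuples. The paper argues only at the integral level (``an optimal responsibility set avoids $B$-tuples'') and then cites \cref{thm:rspeasy}; strictly, that chain only re-derives the relaxation inequality $\rspmilpparam{Q,D,\resptuple}\leq\rsp^*(Q,D,\resptuple)$, and the missing direction is exactly the fractional rerouting you spell out. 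Second, your solitary-variable case analysis fixes a real gap. The paper's proof claims that if every dominating tuple $a_i\subset r_i$ also dominates $\resptuple$, then $r_i$ is ``uniquely determined by the tuples that dominate it'' and hence can never lie in a responsibility set; this fails when $B$ has a solitary variable and $\resptuple\in B$: a parallel witness differing from $w_p$ only in a solitary coordinate forces its $B$-tuple into \emph{every} responsibility set (making $B$ exogenous can even render the instance infeasible), so $B$ cannot be exogenized there. Your observation that such a witness shares all non-$B$ tuples with $w_p$, so its constraint collapses to the integral forced singleton $X[b]\geq 1$ (while $\resptuple\notin B$ yields an outright contradiction), is precisely what keeps the per-$w_p$ LP integral in that case, and it also explains why \emph{full} domination---rather than the mere domination of \cref{thm:reseasydomination}---is the right hypothesis, matching the hardness frontier of \cref{thm:rspharddominatedtable}. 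One point to tighten when writing it up: phrase the rerouting tuple-wise rather than witness-wise---move the mass of $X[b]$ to $X[a_y]$ where $a_y$ is the projection of $b$ onto a dominating atom $A_y$ for a non-solitary variable $y$ on which $b$ disagrees with the $B$-tuple of $w_p$; since $\var(A_y)\subset\var(B)$, this single tuple lies in every witness containing $b$, so all covering constraints stay satisfied (capping at $1$) at no extra cost, uniformly over the witnesses through $b$.
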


\begin{proofintuition*}[\cref{thm:rspeasyfullydominated}]
This follows directly from the fact that fully deactivated triads can be linearized without changing the optimal solution~\cite{DBLP:journals/pvldb/FreireGIM15} and \cref{thm:rspeasy}.
\qed
\end{proofintuition*}

\begin{restatable}{theorem}{thmrspeasydominatingtable}
\label{thm:rspeasydominatingtable}
$\rsplpparam{Q,D,\resptuple} = \rsp^*(Q,D,\resptuple)$ for all database instances $D$ under set semantics
if $Q$ does not contain any active triad
and $\resptuple$ belongs to an atom that dominates some atom in all deactivated triads in $Q$.
\end{restatable}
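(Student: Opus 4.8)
The plan is to prove the two inequalities separately. Since removing integrality constraints can only enlarge the feasible region of a minimization program, we always have $\rsplpparam{Q,D,\resptuple} \le \rspilpparam{Q,D,\resptuple} = \rsp^*(Q,D,\resptuple)$, where the last equality is \cref{thm:resp-correctness}. Hence it suffices to show $\rsplpparam{Q,D,\resptuple} \ge \rsp^*(Q,D,\resptuple)$, i.e.\ that the LP relaxation admits an optimal solution whose objective equals the true responsibility. I would establish this by reducing the responsibility LP to a resilience (min-cut) computation on a linearized instance and then invoking the integrality of min-cut, exactly as in \cref{thm:reseasy,thm:rspeasy}, but with one extra ingredient supplied by the domination hypothesis that upgrades the guarantee from MILP-tightness to full LP-tightness.

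First I would linearize using domination. Because $Q$ has no active triad, every triad is deactivated, and by hypothesis the atom $A$ containing $\resptuple$ dominates some atom $R_d$ of each deactivated triad. Following the argument of \cref{thm:reseasydomination}, I would make these $A$-dominated atoms exogenous to obtain a linear query $Q'$ and argue that there is an optimal solution of the responsibility LP in which every tuple variable of an $A$-dominated atom is set to $0$; this reduces the problem to responsibility of $\resptuple$ for the linear query $Q'$. The key structural lemma I would prove here is a separation property: since $\var(A) \subset \var(R_d)$, any tuple of $R_d$ (indeed of any atom containing all of $\var(A)$) that occurs in a witness through $\resptuple$ pins the variables $\var(A)$ to the constants of $\resptuple$, and therefore cannot occur in any witness \emph{not} through $\resptuple$. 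Consequently the witnesses containing $\resptuple$ and those not containing it can share tuples only inside atoms $B$ with $\var(A) \not\subseteq \var(B)$, i.e.\ atoms not dominated by $A$.

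Next I would run the flow argument. Encoding $Q'$ in the flow graph of \cref{thm:reseasy}, each witness becomes a source--sink path and $\resptuple$ becomes an edge; at any LP optimum the witness variables satisfy $X[\vec w] = \max_{u \in \vec w} X[u]$, so the counterfactual and witness-tracking constraints collapse to the single inequality $\sum_{\vec w \ni \resptuple} \max_{u \in \vec w} X[u] \le n-1$, where $n$ is the number of witnesses containing $\resptuple$. Using the separation lemma, every witness that does not contain $\resptuple$ carries private edges disjoint from all $\resptuple$-witnesses, and the only edges crossing the boundary live in non-dominated atoms; I would show that this makes the relevant constraint submatrix flow-structured, so that the LP objective is linear and nonincreasing in the total boundary mass $\sum_{\vec w \ni \resptuple} X[\vec w]$, whose feasible maximum $n-1$ is attained by an integral assignment. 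Concretely, there is an integral min-cut of all $\resptuple$-free witnesses that simultaneously leaves one entire $\resptuple$-witness uncut, and by the max-flow integrality theorem the fractional optimum equals this integral value; hence $\rsplpparam{Q,D,\resptuple} = \rsp^*(Q,D,\resptuple)$.

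The main obstacle is precisely this last step: showing that no fractional blending of \emph{partially} preserved $\resptuple$-witnesses can beat the integral solution that preserves exactly one witness. This is exactly where the general linear case fails --- there $\rsplp$ can lie strictly below $\rspmilp$ (cf.\ \cref{example:resprelaxation}) --- so the argument must crucially use the domination hypothesis. The decisive point I would formalize is that, by the separation lemma, each boundary variable enters the objective only once and decouples the cost of preserving a $\resptuple$-witness from the cost of cutting the $\resptuple$-free witnesses; the counterfactual bound $\sum_{\vec w \ni \resptuple} X[\vec w] \le n-1$ then encodes exactly the same ``cover every $\resptuple$-free witness, sacrifice all but one $\resptuple$-witness'' trade-off faced by the integer program, and because the objective is monotone in this single aggregate it is optimized at an integral vertex. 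Establishing this decoupling rigorously (rather than on triangle examples) is the technical heart of the proof.
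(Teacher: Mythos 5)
Your first two steps are sound and essentially match the paper's strategy: the easy inequality $\rsplpparam{Q,D,\resptuple} \le \rsp^*(Q,D,\resptuple)$, and the domination-based linearization (the paper argues on integral contingency sets that any tuple of an $A$-dominated atom is ``replaceable or invalid,'' so those atoms can be made exogenous; your separation lemma, which is correct, together with mass-shifting onto the $\var(A)$-projections is a legitimate LP-level rendering of the same step). The genuine gap is your third step. The paper never attempts to show that the \emph{pure} LP becomes integral after linearization; its proof linearizes and then invokes \cref{thm:rspeasy}, i.e.\ it keeps the witness indicators integral and concludes the $\rspmilpparam{Q,D,\resptuple} = \rsp^*(Q,D,\resptuple)$ guarantee, with polynomial time coming from \cref{thm:rspmilpptime}. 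Your plan to upgrade this to full LP-tightness via the ``decoupling'' argument --- which you yourself flag as the unproven technical heart --- cannot be completed, because that claim is false even under the theorem's hypotheses.

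Concretely, take $\qtriangleunary \datarule A(x), R(x,y), S(y,z), T(z,x)$ with $\resptuple = A(0)$ and database $A = \{0,\dots,4\}$, $R = \{(x,y) : x \in \{0,\dots,4\},\, y \in \{1,2\}\}$, $S = \{(1,1),(2,2)\}$, $T = \{(z,x) : z \in \{1,2\},\, x \in \{0,\dots,4\}\}$. The hypotheses hold: the only triad $\{R,S,T\}$ is deactivated, and $A$ dominates $R$ and $T$. The witnesses are the valuations $(x,y,z)$ with $y = z \in \{1,2\}$ and $x \in \{0,\dots,4\}$; exactly two contain $\resptuple$, and each of those contains exactly one boundary tuple ($S(1,1)$, resp.\ $S(2,2)$) --- precisely your separation structure. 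Any \emph{integral} solution must set one witness indicator to $0$, hence cannot delete the corresponding $S$-tuple, and then the four non-$\resptuple$ witnesses on that side each need a deletion from $\{A(i), R(i,1), T(1,i)\}$ (resp.\ the $y=2$ analogues), sets that are disjoint across $i$; so $\rsp^*(Q,D,\resptuple) = 4$, attained by deleting $A(1),\dots,A(4)$. The LP, however, can set $X[S(1,1)] = X[S(2,2)] = \tfrac{1}{2}$ and $X[A(i)] = \tfrac{1}{2}$ for $i = 1,\dots,4$, with $X[\vec w_1] = X[\vec w_2] = \tfrac{1}{2}$: every non-$\resptuple$ witness receives total mass $1$, the tracking constraints hold, and the counterfactual constraint $X[\vec w_1] + X[\vec w_2] \le 1$ holds, giving objective $3 < 4$. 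So ``each boundary variable enters the objective only once'' does not force integrality: the LP spreads the sacrifice over both $\resptuple$-witnesses while halving the cost of covering all the others. The statement that is actually provable --- and what the paper's proof delivers, so the $\rsplpparam{}$ in the printed statement must be read as $\rspmilpparam{}$, consistently with \cref{example:resprelaxation} and the rest of \cref{sec:relaxations} --- is the MILP version; your steps one and two followed by \cref{thm:rspeasy} already give exactly that, and the attempt to go further to the pure LP should be abandoned.
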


\begin{proofintuition*}[\cref{thm:rspeasydominatingtable}]
	We prove that in every deactivated triad dominated by $A$, it is always safe to make the dominated table $R$ exogenous since any tuple from $R$ in the responsibility set is either replaceable, or invalid. 
	This linearizes the query, and the rest follows from \cref{thm:rspeasy}.
	Notice that prior work \cite{DBLP:journals/pvldb/FreireGIM15} identified as tractable cases those without any \emph{active} triad, 
	which a special case of our more general tractable cases.
	\qed
\end{proofintuition*}

\begin{restatable}{theorem}{thmrspharddominatedtable}
  \label{thm:rspharddominatedtable}
  $\rsp(Q,D,\resptuple)$ is $\NPC$ if $\resptuple$ belongs to 
  an atom that
  is part of a triad that is not fully deactivated.
\end{restatable}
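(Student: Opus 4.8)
The plan is to prove hardness by exhibiting, for the given query $Q$ and responsibility tuple $\resptuple$, a database gadget that reduces minimum vertex cover to $\rsp$, in the same spirit as the IJP-based reduction behind \cref{th:ICPs}. The key observation is that in the responsibility problem $\resptuple$ can never belong to the contingency set $\Gamma$ (it must survive so that the preserved witness keeps $Q$ true), so $\resptuple$ effectively behaves as an \emph{exogenous} tuple. I would therefore build a database $D$ that forms an IJP for $Q$ in which the atom of $\resptuple$ supplies one of the endpoint tuples and $\resptuple$ itself is declared exogenous (legitimate by \cref{def:exogenoustuple}). The ``preserve at least one witness through $\resptuple$'' requirement then comes essentially for free, because exogeneity guarantees a $\resptuple$-witness can always be kept alive at no extra deletion cost; what remains to be destroyed are precisely the witnesses \emph{not} through $\resptuple$, and their minimum destruction mirrors a minimum vertex cover of the graph assembled from IJP edge gadgets.

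The technical heart is the explicit construction of this IJP from the triad $\{R_1,R_2,R_3\}$ containing the atom of $\resptuple$, and arguing that the hypothesis ``not fully deactivated'' is exactly what makes the construction go through. By \cref{def:triad}, the three atoms are pairwise joined by internally vertex-disjoint paths avoiding the third atom's variables; these three independent connections are what produce the OR-property of \cref{def:IJP}, just as for active triads in \cref{th:IJPs_for_SJFCQs}. Because the triad is not fully deactivated, the atom $R_i$ carrying $\resptuple$ is \emph{not} fully dominated (\cref{def:fulldomination}): it has a non-solitary variable $y$ for which no atom $B$ satisfies $y\in\var(B)\subset\var(R_i)$. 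I would route the join path so that $\mathcal{S}$ and $\mathcal{T}$ differ precisely in the value assigned to this uncovered variable $y$. This is the step where full domination would otherwise collapse the gadget, since a dominating sub-atom on $y$ would make the $R_i$-tuples ``replaceable'' and linearize the query exactly as in the tractable \cref{thm:rspeasydominatingtable}; absent that sub-atom, the $R_i$-tuple (and hence $\resptuple$) genuinely participates as an endpoint, keeping the three witnesses of the gadget independent.

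Finally I would verify the two IJP conditions and close the reduction. For the OR-property I would check directly that the gadget's resilience drops by one when removing $\mathcal{S}$, $\mathcal{T}$, or both, using that each such removal destroys one of the three independent witnesses and leaves a single remaining tuple to destroy the rest. For non-leaking composition it suffices, by \cref{prop:triangle:composition}, to check the single triangular composition of three isomorphic copies and confirm that no spurious witness arises; the disjointness of the connecting paths together with the isomorphism of endpoints reduces this to a finite check. With a valid IJP in hand, the reduction assembles one gadget per edge of an input graph $G$, identifying endpoint tuples across gadgets that share a vertex, so that a minimum responsibility set for $\resptuple$ corresponds bijectively to a minimum vertex cover of $G$; $\NPC$-hardness of vertex cover then transfers to $\rsp(Q,D,\resptuple)$. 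The main obstacle I anticipate is the gadget construction itself: the \emph{deactivated-but-not-fully} case requires carefully threading the path through the single uncovered non-solitary variable while ensuring that the dominated atom still contributes an independent choice, and then verifying that the additional witnesses forced by domination do not leak under composition.
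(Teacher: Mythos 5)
Your high-level framework (IJP edge gadgets feeding a vertex-cover reduction, with the counterfactuality requirement supplying exogeneity) is the paper's framework, and the active-triad case works exactly as you describe. The gap is in the deactivated-but-not-fully-deactivated case, which is the substance of this theorem, and it concerns \emph{whose} exogeneity you use. Your proposal rests entirely on the fact that $\resptuple$ itself can never lie in $\Gamma$. That is true but carries almost no force: it does not stop a contingency set from deleting \emph{dominating-table} tuples, and it does not make the ``preserve one witness through $\resptuple$'' requirement free (destroying the non-$\resptuple$ witnesses may require deletions that kill every $\resptuple$-witness; excluding that is a constraint on contingency sets, not a freebie). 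The paper's proof hinges on a different, stronger observation, stated as its key principle: any tuple $a$ whose atom's variables are a \emph{subset} of the variables of $\resptuple$'s atom and whose constants agree with $\resptuple$ is \emph{forced} to be exogenous, because deleting $a$ would destroy every witness containing $\resptuple$ and make counterfactuality impossible. It is this forced exogeneity of tuples from the dominating atom (the $a_0$ of the paper's gadget, e.g.\ $A(1)$ when $\resptuple=R(1,2)$ in $\qtriangleunary$) that neutralizes the domination which otherwise makes these instances tractable; the paper builds its gadget around $a_0$, and gadget copies share only $A$-tuples.

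Your own design choice makes this omission fatal rather than cosmetic. Routing $\mathcal{S}$ and $\mathcal{T}$ to differ only in the uncovered variable $y$ forces them to agree on every dominated variable, so in $\qtriangleunary$ every witness of your gadget contains the single tuple $A(a)$. If $A(a)$ is endogenous, your database is not even a valid Join Path (condition (3ii) of \cref{def:JP} is violated, since $A(a)$'s constants are a subset of the endpoint constants), and deleting $A(a)$ alone destroys all witnesses through both endpoints, so the OR-property of \cref{def:IJP} fails and an edge can be ``covered'' without selecting either vertex --- the vertex-cover correspondence collapses. The construction is saved only by arguing that $A(a)$ is exogenous, which holds precisely because $A(a)$ dominates $\resptuple$; this is the paper's argument, which you never make, and you cannot substitute it by ``declaring'' tuples exogenous, since the theorem concerns plain $\rsp(Q,D,\resptuple)$ with no exogenous set in the input. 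Finally, your claimed bijection with minimum vertex cover also needs repair: if $\resptuple$ is itself a graph-vertex endpoint, the witnesses through it are exempt from destruction, so its incident edges behave asymmetrically; the paper instead keeps the preserved $\resptuple$-witness and the exogenous $a_0$ as structure shared across all gadget copies rather than making $\resptuple$ one of the graph vertices.
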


\begin{proofintuition*}[\cref{thm:rspharddominatedtable}]
The key principle behind this proof is our more fine-grained notion of exogenous tuples. A tuple $a$ such that $a$ has all the same values for the same variables as $t$ and $\var(a) \subseteq \var(\resptuple)$ 
is necessarily exogenous since it is not possible for $\resptuple$ to become counterfactual if $a$ is removed. 
We construct an IJP  possible due to such an \emph{exogenous tuple} from a dominated table.
\qed
\end{proofintuition*}
	
\begin{restatable}{theorem}{thmrspharderthanres}
  \label{thm:rspharderthanres}
  If \,$\res(Q)$ is $\NPC$ for a query $Q$ under set or bag semantics then so is $\rsp(Q)$.
\end{restatable}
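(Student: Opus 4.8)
The plan is to give a polynomial-time reduction from $\res(Q)$ to $\rsp(Q)$ that uses the \emph{same} query $Q$, establishing NP-hardness, and to observe that $\rsp(Q)$ lies in NP so that NP-completeness follows. Membership in NP is immediate: given a candidate contingency set $\Gamma$, both $D-\Gamma \models Q$ and $D-(\Gamma\cup\{\resptuple\})\not\models Q$ can be checked by evaluating the fixed query $Q$, which is polynomial in data complexity. Since we are given that $\res(Q)$ is $\NPC$, it suffices to reduce resilience to responsibility.

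For the hardness direction, I would start from an arbitrary resilience instance $(Q,D)$ with $D\models Q$ and build a responsibility instance $(Q,D',\resptuple)$ as follows. Let $G$ be a canonical witness of $Q$, i.e.\ the set of ground atoms obtained by instantiating each variable of $Q$ with a \emph{distinct fresh constant} not appearing in $D$, and set $D' = D \cup G$; pick $\resptuple$ to be any one tuple of $G$. Because $Q$ is minimal (a core), every homomorphism from $Q$ into $G$ is an automorphism and therefore maps the atom set of $G$ onto itself; hence, viewed as a set of tuples, $G$ contributes exactly \emph{one} witness tuple-set, and that tuple-set contains $\resptuple$. The fresh constants guarantee that no witness of $D'$ spans both $G$ and $D$, so the witnesses of $D'$ are precisely those of $D$ (none containing $\resptuple$) together with the single gadget witness $G$ (the unique witness containing $\resptuple$). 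This construction is clearly polynomial and is identical under set and bag semantics, since $G$ uses each fresh tuple with multiplicity one.

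The core of the argument is then to show $\rsp^*(Q,D',\resptuple) = \res^*(Q,D)$. First, any resilience set $\Gamma$ of $D$ is a valid responsibility contingency set for $\resptuple$: after deleting $\Gamma$ the gadget witness $G$ survives (so $D'-\Gamma\models Q$), while additionally deleting $\resptuple$ destroys $G$ and $\Gamma$ already destroys every witness of $D$, so $D'-(\Gamma\cup\{\resptuple\})\not\models Q$. Conversely, any valid responsibility contingency set $\Gamma$ must avoid all tuples of $G$ and restrict to a resilience set of $D$: the witness that keeps $Q$ true after removing $\Gamma$ must contain $\resptuple$ (otherwise it would survive the further deletion of $\resptuple$, contradicting $D'-(\Gamma\cup\{\resptuple\})\not\models Q$), and since $G$ is the only such witness, $G$ must survive $\Gamma$, i.e.\ $\Gamma\cap G=\emptyset$; as every witness of $D$ is $\resptuple$-free, it must be destroyed by $\Gamma$ alone, making $\Gamma$ a resilience set of $D$. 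Hence the minimum responsibility cost equals $\res^*(Q,D)$ (under bags the weighted cost coincides as well, since $\Gamma$ only ever touches $D$), and so $\res^*(Q,D)\le k \iff \rsp^*(Q,D',\resptuple)\le k$, completing the reduction.

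The step I expect to require the most care is the claim that the fresh gadget $G$ contributes a single witness tuple-set containing $\resptuple$ even when $Q$ has self-joins; this is exactly where minimality of $Q$ is essential, as it forces every self-map of the frozen instance to be an automorphism and thus to preserve the full tuple set (an exogeneity-based gadget would fail here, since gadget witnesses not containing $\resptuple$ could never be destroyed). Once this isolation property is established, the remaining reasoning is the clean ``only one witness carries $\resptuple$'' bookkeeping above, which holds uniformly across set and bag semantics and for queries with self-joins.
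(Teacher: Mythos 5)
Your proof is correct and takes essentially the same route as the paper's: both reductions augment $D$ with a single fresh, constant-disjoint witness, place $\resptuple$ inside it, and argue that the responsibility of $\resptuple$ in the augmented instance equals $\res^*(Q,D)$, for both set and bag semantics. Your extra care about \NP membership and about self-joins (invoking minimality of $Q$ so the gadget contributes only one witness tuple-set) just fills in details the paper's terse proof leaves implicit.
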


\begin{proofintuition*}[\cref{thm:rspharderthanres}]
We give a reduction from $\res(Q)$ to $\rsp(Q)$ in both set and bag semantics
by adding a witness to the given database instance 
and selecting a tuple whose responsibility 
is equal the resilience of the original instance. 
Our approach extends a prior result~\cite{DBLP:journals/pvldb/FreireGIM15} that applied only to set semantics.
\qed
\end{proofintuition*}

These results imply the following dichotomies
under both set and bag semantics:

\begin{corollary}
Under set semantics,
$\rsp(Q)$ is in $\PTIME$ for queries that contain only fully deactivated triads
or {deactivated triads} that are dominated by the relation of $\resptuple$, otherwise it is NPC.
\end{corollary}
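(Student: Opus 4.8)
The plan is to assemble the corollary from the five preceding theorems as a case analysis on the triad structure of $Q$ together with the placement of $\resptuple$'s atom, and then to verify that the resulting $\PTIME$ and $\NPC$ conditions are exactly complementary. Write $A_t$ for the atom containing $\resptuple$. First I record the clean form I am proving: $\rsp(Q,D,\resptuple)$ is in $\PTIME$ for every $D$ precisely when every triad of $Q$ is either fully deactivated or contains an atom dominated by $A_t$, and is $\NPC$ otherwise. Note that an active triad is neither fully deactivated nor contains a dominated atom, so this condition already forces $Q$ to have no active triad on the $\PTIME$ side, which is what lets me keep resilience tractable there.

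For the $\PTIME$ direction I would linearize $Q$ one triad at a time. A fully deactivated triad is linearized by declaring its fully dominated atom exogenous, which by the argument of \cref{thm:rspeasyfullydominated} leaves $\rsp^*$ unchanged; a triad containing an atom $R$ dominated by $A_t$ is linearized by declaring $R$ exogenous, which by the argument of \cref{thm:rspeasydominatingtable} also leaves $\rsp^*$ unchanged. The key point is that neither operation ever makes $A_t$ exogenous: in the first $A_t$ plays no role, and in the second we only make a \emph{dominated} atom exogenous whereas $A_t$ is the \emph{dominating} atom. Since responsibility requires $\resptuple$ to remain a present, endogenous tuple, this is essential. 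After processing all triads the query is linear with $A_t$ still endogenous, so \cref{thm:rspeasy} gives $\rspmilp = \rsp^*$, and \cref{thm:rspmilpptime} makes this $\PTIME$ (and in the purely-dominating case \cref{thm:rspeasydominatingtable} shows the plain LP already suffices). The one thing to check carefully is that the several exogeneity declarations do not interfere, i.e.\ that making all these atoms exogenous simultaneously still preserves $\rsp^*$; I would prove this by applying the single-triad arguments in sequence and checking each preserves the optimum.

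For the $\NPC$ direction, suppose some triad $\mathcal{T}$ is neither fully deactivated nor contains an atom dominated by $A_t$. If $\mathcal{T}$ is active, then $\res(Q)$ is $\NPC$ under set semantics by the resilience dichotomy, and \cref{thm:rspharderthanres} lifts this to $\rsp(Q)$. If $\mathcal{T}$ is deactivated but not fully deactivated and $A_t$ is one of its atoms, then \cref{thm:rspharddominatedtable} applies directly. To see that these sub-cases and the $\PTIME$ conditions are mutually consistent, I would use the structural fact that a dominating atom of a triad cannot itself lie in that triad: if $\var(A_t)\subset\var(R)$ for some $R\in\mathcal{T}$, then every variable of $A_t$ lies in $\var(R)$, so no path from $A_t$ to the third atom $R_3$ can avoid $\var(R)$ and $\{R,A_t,R_3\}$ fails the triad definition. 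Hence ``$A_t$ dominates an atom of $\mathcal{T}$'' and ``$A_t\in\mathcal{T}$'' are mutually exclusive, so no configuration is simultaneously classified $\PTIME$ and $\NPC$.

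The main obstacle is the residual $\NPC$ sub-case: $\mathcal{T}$ deactivated but not fully deactivated, $A_t\notin\mathcal{T}$, and $A_t$ dominating no atom of $\mathcal{T}$. Here resilience is easy (no active triad), so \cref{thm:rspharderthanres} is useless, and \cref{thm:rspharddominatedtable} does not literally apply because $\resptuple$ sits outside the offending triad. I expect to close this case by extending the responsibility-IJP construction behind \cref{thm:rspharddominatedtable}: the non-full deactivation of $\mathcal{T}$ still supplies an effectively exogenous tuple from its dominated (but not fully dominated) atom, which lets me build the vertex-cover gadget on $\mathcal{T}$, while $\resptuple$ is planted in a single preserved witness that is kept counterfactual. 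Because every witness instantiates all atoms, the $\mathcal{T}$-gadget and the preserved $\resptuple$-witness necessarily share structure, and the delicate step is checking that planting $\resptuple$ neither leaks extra witnesses nor destroys the OR-property of the gadget, i.e.\ that the resulting instance is still an IJP for the responsibility problem. Establishing this (or else showing this configuration is in fact always $\PTIME$ and adjusting the stated boundary accordingly) is exactly the part of the dichotomy that does not follow by routine citation of the five theorems.
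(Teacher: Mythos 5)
Your proposal takes essentially the same route as the paper: the paper offers no argument for this corollary beyond the sentence ``These results imply the following dichotomies,'' i.e., precisely the assembly of \cref{thm:rspeasy,thm:rspeasyfullydominated,thm:rspeasydominatingtable,thm:rspharddominatedtable,thm:rspharderthanres} that you carry out. Two of your additions are genuine improvements in rigor over that one-line derivation: the check that the exogeneity declarations for different triads can be applied in sequence without interference (needed because \cref{thm:rspeasyfullydominated} and \cref{thm:rspeasydominatingtable} each assume a uniform hypothesis on \emph{all} triads, whereas the corollary permits a mix of the two kinds), and the observation that a dominating atom cannot itself belong to the triad it deactivates, which is what makes the two sides of the claimed dichotomy disjoint. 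Your argument for that structural fact is sound: if $\var(A_t)\subset\var(R)$ for $R$ in the triad, any path leaving $A_t$ must use a variable of $A_t\subseteq\var(R)$, so the triad condition fails for the pair $(A_t,R_3)$.

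The residual case you isolate is real, and it is worth stating plainly that the paper does not close it either. \cref{thm:rspharddominatedtable} requires $\resptuple$'s atom to lie inside the not-fully-deactivated triad, and \cref{thm:rspharderthanres} needs $\res(Q)$ to be hard, which fails when every triad is deactivated; so a query such as $\qtriangleunary$ extended with atoms $B(w), U(x,w)$, with $\resptuple$ taken from $B$, falls under the corollary's ``otherwise it is NPC'' clause without any stated theorem applying: $B$ is in no triad and dominates only $U$, not $R$, $S$, or $T$. The paper's proof silently absorbs this configuration into ``otherwise.'' Your proposed repair---forcing an exogenous dominating tuple through the preserved witness and rebuilding the responsibility IJP around it---is a plausible attack, but note that under set semantics only the tuples of the single preserved witness are forced exogenous, so at most one $A$-tuple per guess of preserved witness is protected; whether this suffices to sustain the OR-property across composed gadgets (where the remaining $A$-tuples serve as deletable vertex-cover endpoints, yet each gadget's side witnesses can be destroyed cheaply through the new $U$/$B$ atoms) is exactly the delicate point, and your hedge that the stated boundary might need adjustment in this configuration is warranted. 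In short: your attempt is not missing anything the paper supplies; it makes explicit, and honestly labels as open, a step that the paper's own derivation of the corollary leaves implicit.
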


\begin{corollary}
Under bag semantics,
$\rsp(Q)$ is in $\PTIME$ for queries that do not contain any triads, otherwise it is NPC.
\end{corollary}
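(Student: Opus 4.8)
The statement is a dichotomy, so the plan is to handle its two sides separately and then observe that they meet at the same boundary (``$Q$ contains a triad''), which under bag semantics coincides exactly with ``$Q$ is not linear'' by the paper's definition of linearity. Neither side requires a fresh construction: both follow by composing theorems already proved above, and the only care needed is to confirm that the cited hypotheses match the bag-semantics setting verbatim.

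For the tractable side, I would argue directly from the machinery already in place. If $Q$ contains no triad, then $Q$ is linear, so \cref{thm:rspeasy} applies under bag semantics and gives $\rspmilpparam{Q,D,\resptuple} = \rsp^*(Q,D,\resptuple)$ for every instance $D$ and every responsibility tuple $\resptuple$. Combining this with \cref{thm:rspmilpptime}, which states that $\rspmilp$ is solvable in $\PTIME$ in $|D|$, immediately yields that $\rsp(Q)$ is in $\PTIME$ under bag semantics. The only point to check is that the hypothesis of \cref{thm:rspeasy} is exactly ``$Q$ linear,'' which is the same as ``no triads.''

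For the hard side, I would chain the two hardness results already established. If $Q$ does contain a triad, then $Q$ is not linear, so by \cref{thm:reshardbag} the resilience problem $\res(Q)$ is $\NPC$ under bag semantics. By \cref{thm:rspharderthanres}, whenever $\res(Q)$ is $\NPC$ under a given semantics, so is $\rsp(Q)$; applying this to bag semantics gives that $\rsp(Q)$ is $\NPC$ under bag semantics. To complete $\NPC$-membership I would also record the routine fact that $\rsp$ lies in $\NP$: given a candidate contingency set $\Gamma$ of size at most $k$, one verifies in $\PTIME$ that $D-\Gamma \models Q$ while $D-(\Gamma \cup \{\resptuple\}) \not\models Q$, and this verification is unaffected by passing from set to bag semantics.

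The proof is therefore essentially a composition of prior results, and the only genuine subtlety — rather than an obstacle — is bookkeeping about hypotheses: one must confirm that \cref{thm:reshardbag} and \cref{thm:rspharderthanres} both hold for bag semantics (they do, as stated), and that ``no triads'' is precisely the class for which \cref{thm:rspeasy} certifies tractability. All the substantive work (the flow-based $\PTIME$ argument, the IJP-based hardness gadget, and the copy-padding that defeats domination under bags) lives in those earlier statements, so no additional heavy lifting is required for the corollary itself.
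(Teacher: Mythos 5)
Your proposal is correct and follows exactly the composition the paper intends when it states that ``these results imply the following dichotomies'': the tractable side from \cref{thm:rspeasy} (which holds under bag semantics for linear, i.e.\ triad-free, queries) together with \cref{thm:rspmilpptime}, and the hard side by chaining \cref{thm:reshardbag} with \cref{thm:rspharderthanres}. Your added remark on $\NP$-membership of the decision problem is routine but harmless bookkeeping that the paper leaves implicit.
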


Notice that the tractability frontier for bag semantics notably differs from set semantics,
where the tractable cases for $\rsp(Q)$ are \emph{a strict subset} of those for $\res(Q)$. 
For bags, they coincide:

\begin{corollary}
Under bag semantics, the tractable cases for $\rsp(Q)$ are the same as for $\res(Q)$.
\end{corollary}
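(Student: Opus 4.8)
The plan is to prove the corollary by showing that the two bag-semantics dichotomies established above share an identical tractability boundary, namely the class of \emph{linear} (triad-free) queries, and hence have identical $\PTIME$ regions. I would organize the argument into the two inclusions ``linear $\Rightarrow$ both easy'' and ``non-linear $\Rightarrow$ both hard,'' since together these pin the boundary for each problem to exactly the same set of queries.

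For the easy direction, suppose $Q$ is linear. Then \cref{thm:reseasy} gives $\reslpparam{Q,D} = \res^*(Q,D)$ under bag semantics for every $D$, so $\res(Q)$ is solved by a single linear program and is in $\PTIME$. Symmetrically, \cref{thm:rspeasy} gives $\rspmilpparam{Q,D,\resptuple} = \rsp^*(Q,D,\resptuple)$ under bag semantics, and \cref{thm:rspmilpptime} guarantees that this MILP is itself solvable in $\PTIME$; hence $\rsp(Q)$ is in $\PTIME$ as well. So every linear query lies in the tractable region of both problems.

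For the hard direction, suppose $Q$ is not linear, i.e.\ it contains a triad. Then \cref{thm:reshardbag} shows $\res(Q)$ is $\NPC$ under bag semantics (note this is precisely where bags genuinely differ from sets: the theorem establishes that even a \emph{fully deactivated} triad is hard under bags, since taking enough copies of a dominating atom makes its tuples effectively exogenous and forces the contingency set onto the dominated tables). Feeding this into the reduction of \cref{thm:rspharderthanres}, which holds under bag semantics, yields that $\rsp(Q)$ is $\NPC$ under bags too. So every non-linear query is hard for both problems.

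Combining the two directions, the tractable cases of $\res(Q)$ under bags are exactly the linear queries, and likewise for $\rsp(Q)$; therefore the two sets coincide, which is the claim. The only conceptually delicate point --- and the one I expect to be the crux --- is already discharged by \cref{thm:reshardbag}: under \emph{set} semantics the domination argument lets some deactivated triads keep $\res$ tractable while $\rsp$ becomes hard (giving the strict inclusion recorded above), but under bag semantics domination provides no tractability, so both problems collapse to hardness at the very first triad and the set-semantics inclusion tightens to equality.
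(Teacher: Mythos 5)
Your proof is correct and follows essentially the same route as the paper: the corollary is an immediate consequence of the two bag-semantics dichotomies (both boundaries are exactly the triad-free/linear queries), with the $\PTIME$ side given by \cref{thm:reseasy}, \cref{thm:rspeasy} and \cref{thm:rspmilpptime}, and the hard side given by \cref{thm:reshardbag} fed into the reduction of \cref{thm:rspharderthanres}. Your closing remark about domination failing under bags is precisely the observation the paper makes just before stating the corollary, so nothing is missing.
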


\section{Three Approximation Algorithms}
\label{sec:approximations}
\label{SEC:APPROXIMATIONS}

We describe one LP-based approximation algorithm
and two flow-based approximation algorithms for $\res$ and $\rsp$, all three of which apply to both set and bag semantics.

\subsection{LP-based m-factor Approximation}
For a given query with $m$ atoms, we use a standard LP rounding technique \cite{vazirani2001approximation} with the threshold of $1/m$ i.e., we round up variables whose value is $\geq\!1/m$ or set them to $0$ otherwise.

\begin{restatable}{theorem}{thmkfactorapproxres}
    \label{thm:k-factor-approx}
    The LP Rounding Algorithm is a $\PTIME$ $m$-factor approximation for $\res$ and $\rsp$.
\end{restatable}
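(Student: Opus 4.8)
The plan is to prove the approximation guarantee in two parts, handling $\res$ and $\rsp$ separately but with the same underlying LP-rounding argument. In both cases I establish (i) that the rounded integral solution is \emph{feasible} for the original ILP (hence corresponds to a valid contingency set), and (ii) that its cost is within a factor $m$ of the LP-relaxation optimum, which is itself a lower bound on the ILP optimum.

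\textbf{Approximation for $\res$.}
First I would solve the LP relaxation $\reslpparam{Q,D}$ to obtain a fractional optimal assignment $X^*$ with value $\mathrm{LP}^* \leq \resilpparam{Q,D}$, using that the LP relaxation is a lower bound for the minimization ILP. I then round: set $\bar X[t] = 1$ if $X^*[t] \geq 1/m$, and $\bar X[t] = 0$ otherwise. For feasibility, recall from \cref{sec:res-ilp} that each witness constraint has the form $\sum_{t \in \w} X[t] \geq 1$, and that for an SJ-free query each such constraint involves exactly $m$ tuples (for self-joins, at most $m$). Since the fractional solution satisfies $\sum_{t \in \w} X^*[t] \geq 1$ over at most $m$ terms, at least one term must satisfy $X^*[t] \geq 1/m$; that tuple is rounded up to $1$, so every witness constraint remains satisfied by $\bar X$. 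Hence $\bar X$ destroys all witnesses and is a valid resilience set. For the cost bound, each rounded variable satisfies $\bar X[t] \leq m \cdot X^*[t]$ (rounding up from a value $\geq 1/m$ multiplies by at most $m$; rounding down only decreases cost). Summing against the objective weights (the per-tuple occurrence counts, identical for set and bag semantics) gives $\sum_t \wt_t \bar X[t] \leq m \sum_t \wt_t X^*[t] = m \cdot \mathrm{LP}^* \leq m \cdot \resilpparam{Q,D}$. Combined with the correctness of the ILP (\cref{thm:res-correctness}), this yields an $m$-factor approximation, computable in $\PTIME$ since solving the LP and rounding are both polynomial.

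\textbf{Approximation for $\rsp$.}
Here I would apply the analogous rounding to the relaxation of $\rspilpparam{Q,D,\resptuple}$, but the argument is more delicate because of the counterfactual and witness-tracking constraints. The resilience-type constraints (for witnesses not containing $\resptuple$) again have at most $m$ terms, so the same pigeonhole rounding keeps them satisfied. The main care is the counterfactual constraint $\sum_{\w \ni \resptuple} X[\w] \leq |\{\w : \resptuple \in \w\}| - 1$, which must still be respected after rounding. I would argue that I only need to round the \emph{tuple} indicator variables $X[t]$ (for tuples occurring in witnesses without $\resptuple$), leaving the witness variables to be recomputed: once the tuple variables are fixed integrally, each $X[\w]$ is determined by the witness-tracking constraints $X[\w] \geq X[t]$ for $t \in \w$, and by the correctness argument of \cref{thm:resp-correctness} the preserved witness (the one with all its tuples set to $0$) keeps the counterfactual constraint satisfied. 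The cost bound $\sum_t \wt_t \bar X[t] \leq m \cdot \mathrm{LP}^*$ then follows exactly as before, and since the LP optimum lower-bounds $\rspilpparam{Q,D,\resptuple} = \rsp^*(Q,D,\resptuple)$, this gives the $m$-factor guarantee.

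\textbf{Main obstacle.}
The routine part (the pigeonhole feasibility of the resilience constraints and the $\bar X[t] \leq m X^*[t]$ cost inequality) is standard LP-rounding. I expect the main obstacle to be the $\rsp$ case: I must verify that after rounding the tuple variables, a valid preserved witness containing $\resptuple$ still exists so that the counterfactual constraint is not violated—i.e.\ that rounding does not inadvertently force \emph{every} $\resptuple$-witness to be destroyed. This requires checking that the fractional solution's preservation of some $\resptuple$-witness survives rounding, which I would handle by showing that rounding a tuple variable \emph{down} (from below $1/m$ to $0$) cannot create new destroyed witnesses, while the feasibility of the resilience constraints guarantees the non-$\resptuple$ witnesses are all still destroyed. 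Making this transfer between the fractional witness variables and the recomputed integral ones fully rigorous is the crux of the proof.
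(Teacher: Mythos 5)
Your proof for $\res$ is correct and essentially identical to the paper's: the same threshold-$1/m$ rounding, the same pigeonhole argument for feasibility of each witness constraint (at most $m$ terms summing to at least $1$), and the same cost bound via $\bar X[t] \le m \cdot X^*[t]$ together with $\reslpparam{Q,D} \le \resilpparam{Q,D}$.

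The $\rsp$ half has a genuine gap, at precisely the step you identify as the crux, and your proposed repair does not close it. You round the full LP relaxation, in which the witness indicator variables are fractional, and you assume the fractional solution contains a ``preserved witness (the one with all its tuples set to $0$).'' No such witness need exist: the counterfactual constraint can be satisfied by spreading fractional mass over the witness variables, so that \emph{every} witness containing $\resptuple$ carries some tuple of value at least $1/m$. Concretely, for $\qtwochain$ with $R=\{(1,1),(2,1)\}$, $S=\{(1,1),(1,2),(1,3)\}$ and $\resptuple=s_{11}$: the witnesses containing $s_{11}$ are $\{r_{11},s_{11}\}$ and $\{r_{21},s_{11}\}$, the remaining four witnesses give the resilience constraints $X[r_{11}]+X[s_{1j}]\ge 1$ and $X[r_{21}]+X[s_{1j}]\ge 1$ for $j=2,3$, and the assignment $X[r_{11}]=X[r_{21}]=X[s_{12}]=X[s_{13}]=1/2$ (with both witness variables at $1/2$) is an optimal fractional solution of value $2$. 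Thresholding at $1/m=1/2$ rounds all four tuple variables up to $1$; in particular both $r_{11}$ and $r_{21}$ are deleted, every witness containing $\resptuple$ is destroyed, and the output is not a valid responsibility set at all. Your fallback argument --- that rounding \emph{down} cannot destroy witnesses --- addresses the wrong direction; the failure is caused by rounding \emph{up}.

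The missing idea, which is the paper's fix, is to round the \emph{mixed} relaxation $\rspmilpparam{Q,D,\resptuple}$ instead of the LP: tuple variables are fractional but witness indicator variables remain integral (this relaxation is still solvable in $\PTIME$ by \cref{thm:rspmilpptime}, so the overall algorithm stays polynomial). Integrality of the witness variables plus the counterfactual constraint force some witness $\w_p \ni \resptuple$ to have $X[\w_p]=0$; the tracking constraints $X[\w_p]\ge X[t]$ then pin every tracked tuple of $\w_p$ to fractional value $0$, and these values survive rounding, so $\w_p$ is preserved and the rounded solution is feasible. The cost bound is unaffected since $\rspmilpparam{Q,D,\resptuple} \le \rspilpparam{Q,D,\resptuple}$. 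With this single substitution (MILP in place of LP) your argument goes through.
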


\begin{proofintuition*}[\cref{thm:k-factor-approx}]
	Verification of $\PTIME$ solvability and the m-factor bound is trivial, and correctness follows by showing validity of each constraint for a rounded solution.
    \qed
\end{proofintuition*}

\subsection{Flow-based Approximations}
Non-linear queries cannot be encoded as a flow graph since they do not have the running-intersection property.
The idea behind flow-based approximations is to add either 
witnesses or tuples 
(while keeping the other constant) 
to linearize a non-linear query.
This works since adding more tuples or witnesses can only increase $\res$ and $\rsp$ for monotone queries.
Since there are multiple arrangements to linearize a query, we take the minimum over all non-symmetric arrangements,
explained next for
the two variants:

\introparagraph{Constant Tuple Linearization Approximation (\texttt{Flow-CT})}
We keep the same tuples as the original database in each arrangement.
However, since the query is non-linear, these flow graphs may have spurious paths that do not correspond to any original witnesses, thus inadvertently adding witnesses.
For a query with $m$ atoms, there are up to $m!/2$ linearizations due to the number of asymmetric ways to order them.

\begin{figure}[t]
	\centering
	\begin{subfigure}[b]{.49\linewidth}
		\centering
		\includegraphics[scale=0.5]{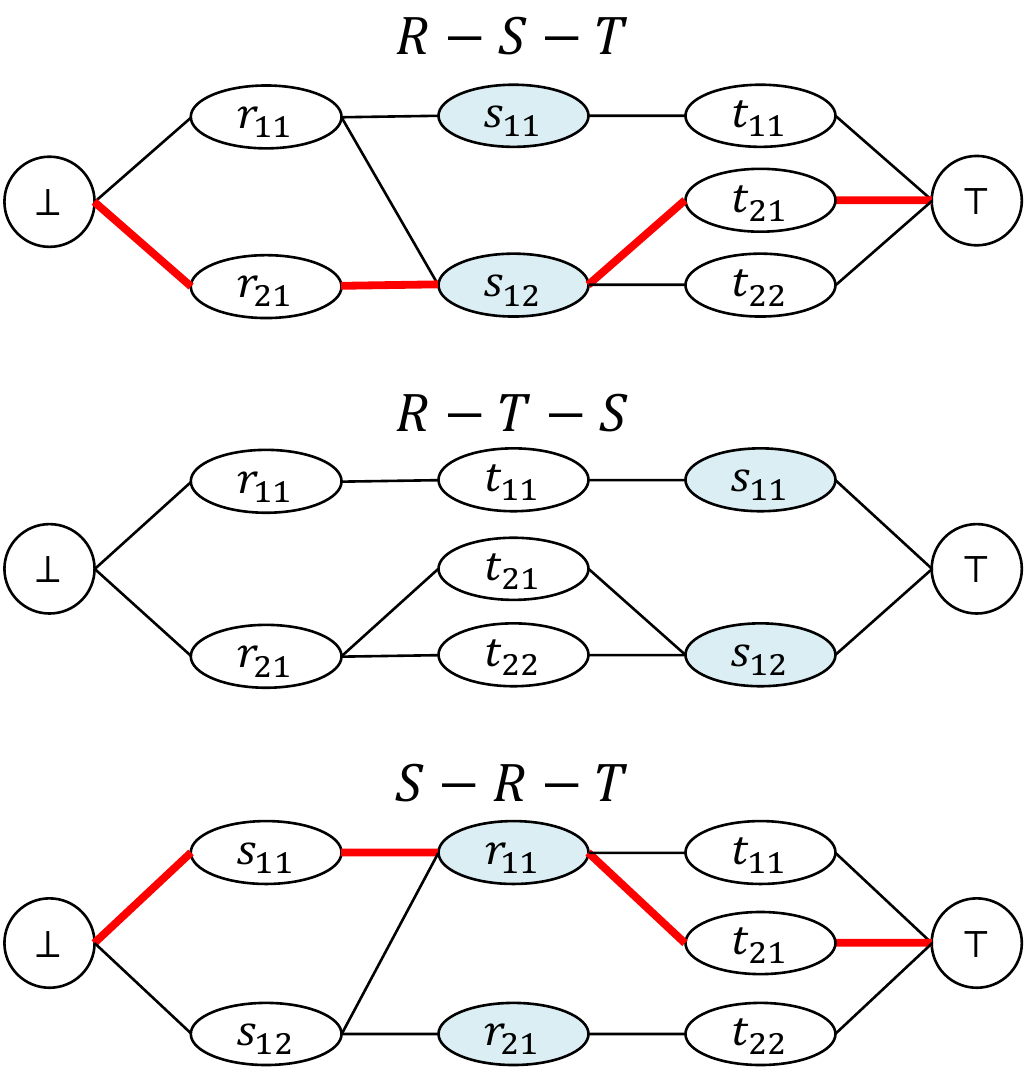}
		\caption{}\label{Fig_example_Flow_CTL}
	\end{subfigure}
	\begin{subfigure}[b]{.49\linewidth}
		\centering
		\includegraphics[scale=0.5]{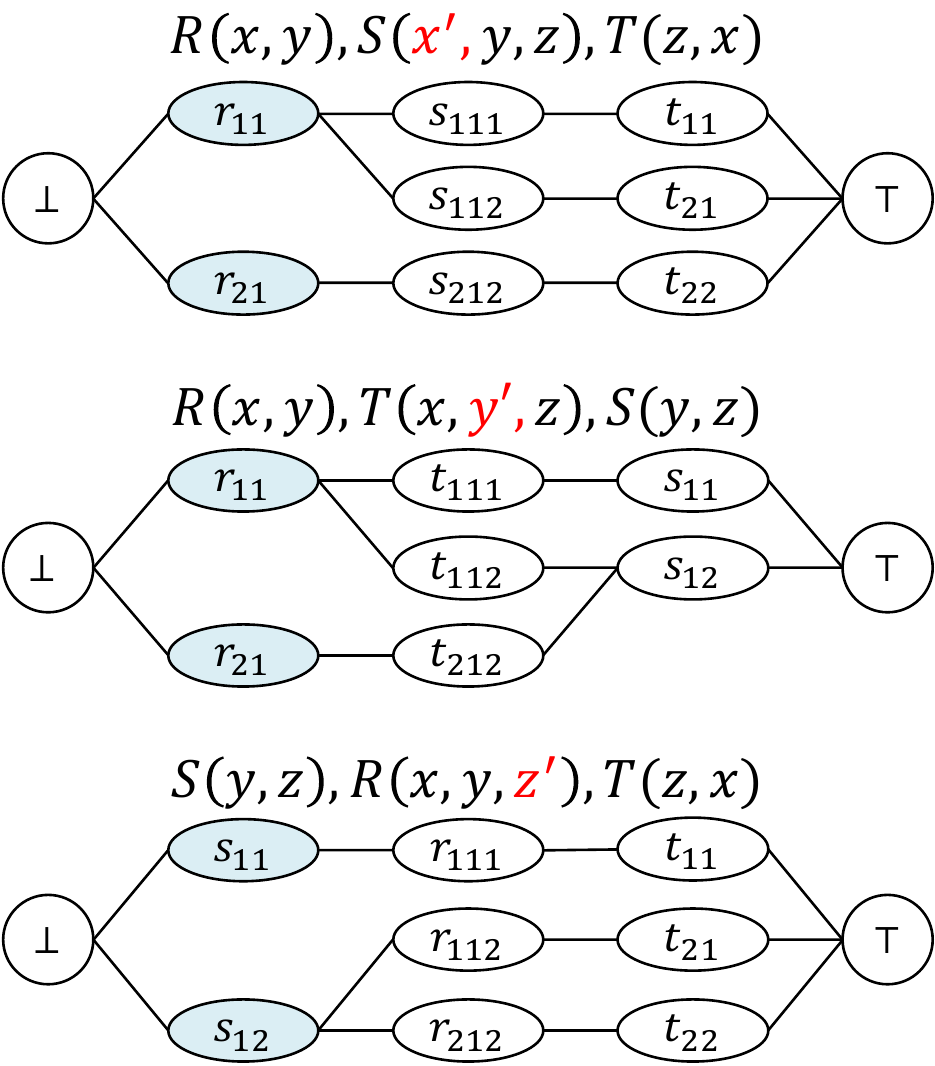}
		\caption{}\label{Fig_example_Flow_CWL}
	\end{subfigure}
\caption{Flow approximation linearizations for \cref{example:flow-approx}. 
We use $\bot$ and $\top$ to represent the source and the target, respectively, of the flow graph to make a connection to an ordering of the atoms of the query.}
\end{figure}

\introparagraph{Constant Witness Linearization Approximation (\texttt{Flow-CW})}
We keep the same witnesses as the original database instance in each linearization, however the query is changed by adding variables to tables (which is equivalent to dissociating tuples) to make it linear. 
The number of such linearizations is equal to the number of minimal dissociations~\cite{DBLP:journals/vldb/GatterbauerS17}.
\footnote{
A detail of implementation here is that for $\rsp$ it is possible that responsibility tuple $\resptuple$ is split into multiple tuples. 
Then we find responsibility over the set of those tuples, instead of a single tuple. 
This is a simple extension to make, but differs from the standard definition of responsibility, which allows for just one responsibility tuple.}

\begin{example}
    Consider the $\qtriangle$ query with the following witnesses:
    \begin{center}
        \begin{tabular}{|c|c|c|l}
            \cline{1-3}
            x & y & z & \\
            \cline{1-3}
            1 & 1 & 1 & $\vec w_1 =$ $\{r_{11},$ $s_{11}$, $t_{11}\}$ \\
            1 & 1 & 2 & $\vec w_2 =$ $\{r_{11},$ $s_{12}$, $t_{21}\}$ \\
            2 & 1 & 2 & $\vec w_3 =$ $\{r_{21},$ $s_{12}$, $t_{22}\}$\\
            \cline{1-3}
        \end{tabular}
    \end{center}
    Then there are $3$ \texttt{Flow-CT} linearizations (\cref{Fig_example_Flow_CTL}) and $3$ \texttt{Flow-CW} linearizations (\cref{Fig_example_Flow_CWL}). 
    The approximated resilience corresponds to the minimum of the min-cut over all linearized flow graphs.
    In this example, we see that both \texttt{Flow-CT} and \texttt{Flow-CW} happen to return the optimal value of $2$ as approximation. 
    \label{example:flow-approx}
\end{example}

\section{Experiments}
\label{sec:experiments}
\label{SEC:EXPERIMENTS}

Our experimental objective is to answer the following questions:
(1) How does our ILP scale for $\PTIME$ queries,
and how does it compare to previously proposed algorithms that use flow-based encodings~\cite{DBLP:journals/pvldb/MeliouGMS11}?
(2) Are our LP relaxations (proved to be correct for $\PTIME$ queries in \cref{sec:theoreticalresults}) 
indeed correct in practice?
(3) What is the scalability of ILPs and LPs for settings that are proved $\NPC$?
(4) What is the quality of our approximations from \cref{sec:approximations}?

\introparagraph{Algorithms}
\texttt{ILP} denotes our ILP formulations for $\res$ and $\rsp$.
\texttt{ILP(10)} denotes the solution obtained by stopping the solver after $10$ seconds.\footnote{Solvers often already have the optimal solution by this cutoff, despite the ILP taking longer to terminate.
This is because although the solver has stumbled upon an optimal solution, it may not yet have a proof of optimality (in cases where LP!=ILP).}
\texttt{LP} denotes LP relaxations for $\res$ and $\rsp$.
\texttt{MILP} denotes the MILP formulation for $\rsp$.
\texttt{Flow} denotes an implementation of the prior max-flow min-cut algorithm for $\res$ and $\rsp$
for queries that are in $\PTIME$~\cite{DBLP:journals/pvldb/FreireGIM15, DBLP:journals/pvldb/MeliouGMS11}.\footnote{For the min-cut algorithm, we also experimented with both LP and Augmented Path-based algorithms via the NetworkX library \cite{networkx}.
Since the time difference in the methods was not significant, we leave it out and all running times reported in the figures use the same LP library Gurobi~\cite{gurobi}.}
\texttt{LP-UB} denotes our $m$-factor upper bound obtain by the LP rounding algorithm.
\texttt{Flow-CW} and \texttt{Flow-CT} represent our approximations via
Constant Witness Linearization and Constant Tuple Linearizations, respectively.

\introparagraph{Data}
We use both synthetic and TPC-H data~\cite{tpch}. 
For any synthetic data experiment, 
we fix the maximum domain size, and sample randomly from all possible tuples.
For testing our methods under bag semantics, each tuple is replicated by a random number that is smaller than a pre-specified max bag size.
For TPC-H data, we use the TPC-H data generator at logarithmically increasing scale factors, creating $18$ databases ranging from scale factor $0.01$ to $1$.

\introparagraph{Software and Hardware} 
We implement the algorithms using Python 3.8.5
and solve the respective optimization problems with Gurobi Optimizer $8.1.0$~\cite{gurobi}. 
Experiments are run on an Intel Xeon E5-2680v4 @2.40GHz machine available via the Northeastern Discovery Cluster.

\introparagraph{Experimental Protocol}
For each plot we run $30$ runs of logarithmically and monotonically increasing database instances. 
We plot all obtained points with a low saturation, 
and draw a trend line between the median points from logarithmically increasing sized buckets.
All plots are log-log, with the x-axis representing the number of witnesses.
The y-axis for plots on the left shows the solve-time (in seconds)
taken by the solver to solve a $\res$, $\rsp$ or min-cut problem.\footnote{The build-times to create the ILP or flow graphs are not plotted since they were negligible in comparison to the solve-time.}  
We include a dashed line to show linear scalability as reference in the log-log plot.

\begin{figure}
	\begin{subfigure}{\columnwidth}
	\centering
	\includegraphics[width=0.62\columnwidth]{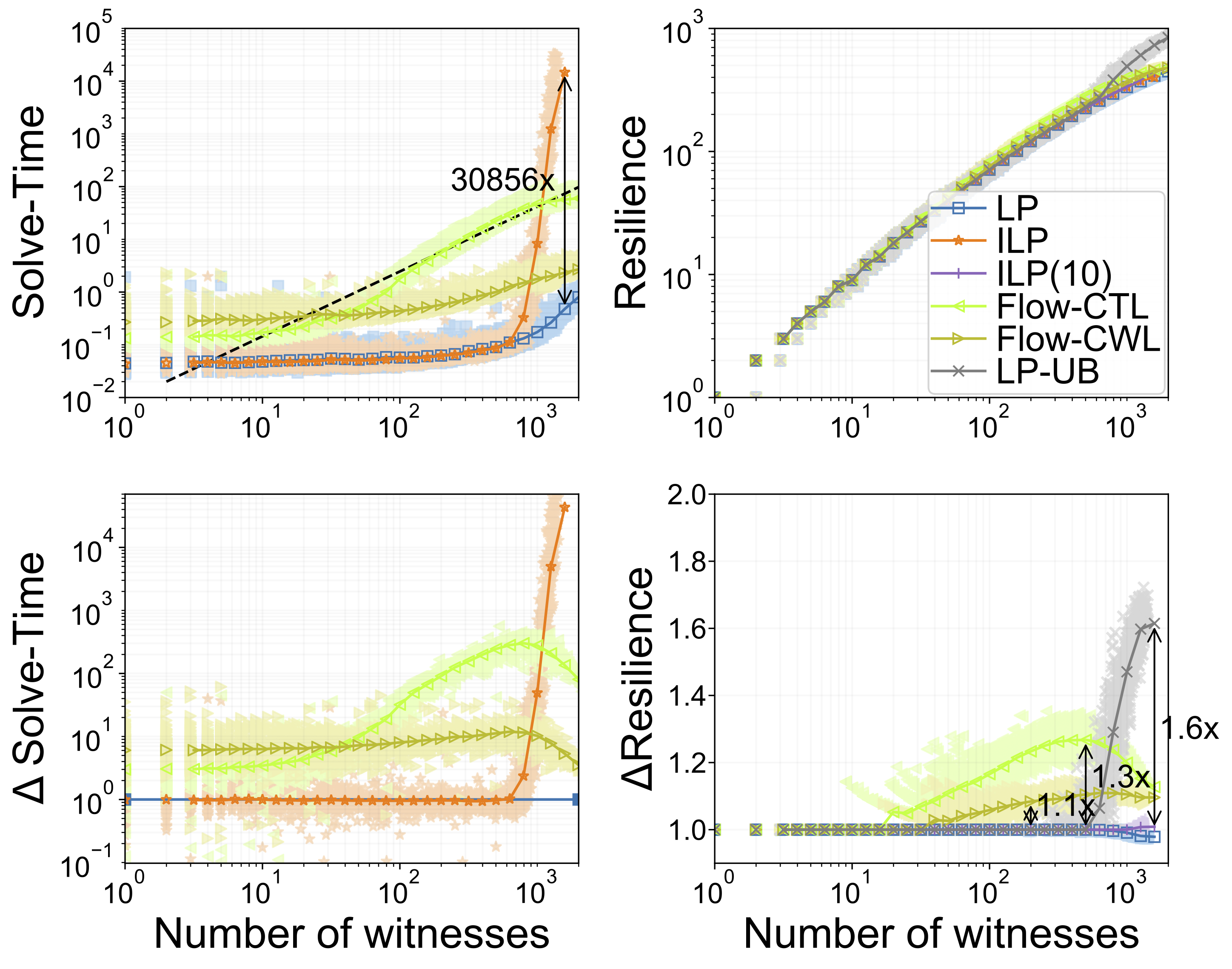}	%
	\end{subfigure}
	\caption{
		Setting 1: Hard 3-star query $\qthreestar$.
		}
		\label{fig:Fig_Expt_3Star}
\end{figure}%
\begin{figure}
	\begin{subfigure}{\columnwidth}
		\centering
		\includegraphics[width=0.62\columnwidth]{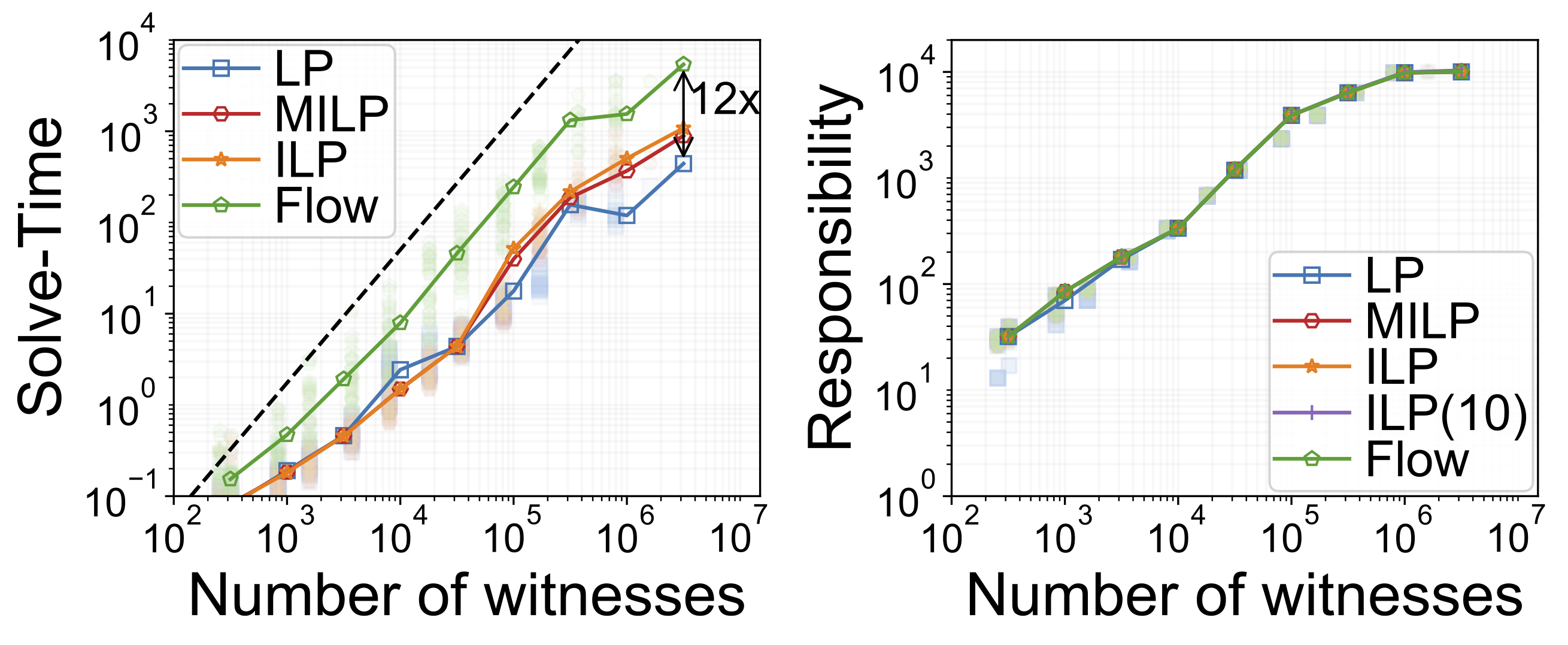}
		\caption{5 Chain Query $\qfivechain$ (an easy query)}
		\label{fig:Fig_expt_tpch_5chain}
	\end{subfigure}
	\begin{subfigure}{\columnwidth}
		\centering
			\includegraphics[width=0.62\columnwidth]{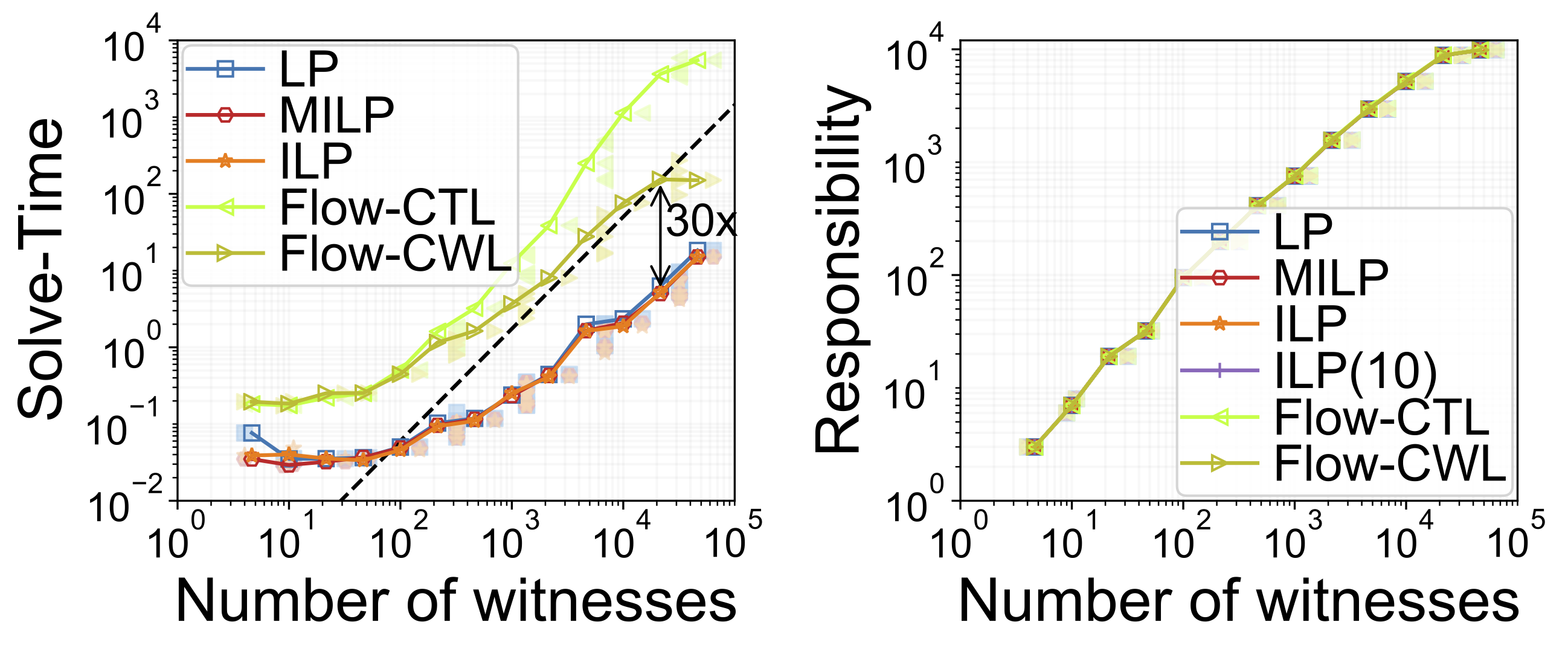}
		\caption{5 Cycle Query $\qfivecycle$ (a hard query)}
		\label{fig:Fig_expt_tpch_5cycle}
	\end{subfigure}
	\caption{
		Setting 2: TPC-H data with FDs.
		}
	\label{fig:Fig_expt_tpch}
\end{figure}%
\begin{figure}
	\begin{subfigure}{\columnwidth}
         \centering
 		\includegraphics[width=0.62\columnwidth]{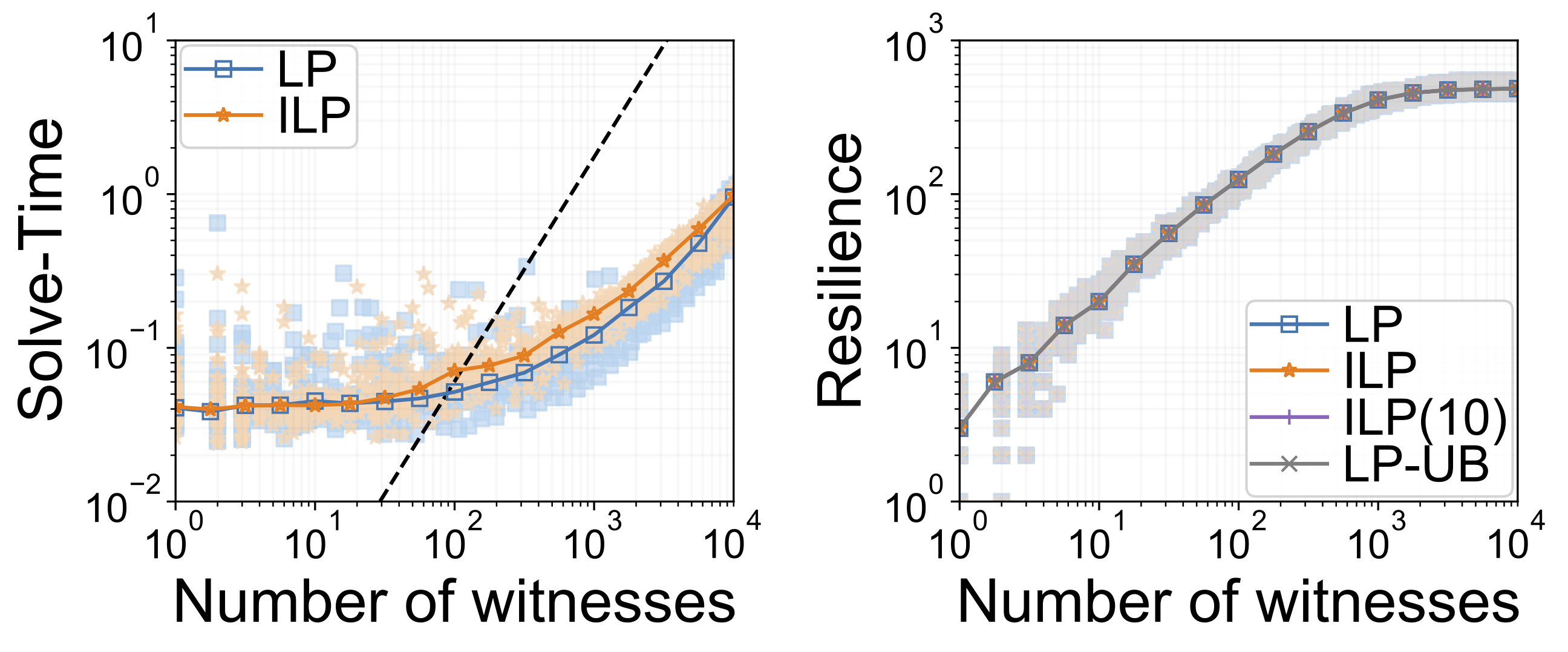}
 		\caption{SJ-Conf query (an easy query) }
 		\label{fig:Fig_expt_sj_conf}
     \end{subfigure}
 	\begin{subfigure}{\columnwidth}
 		\centering
 			\includegraphics[width=0.62\columnwidth]{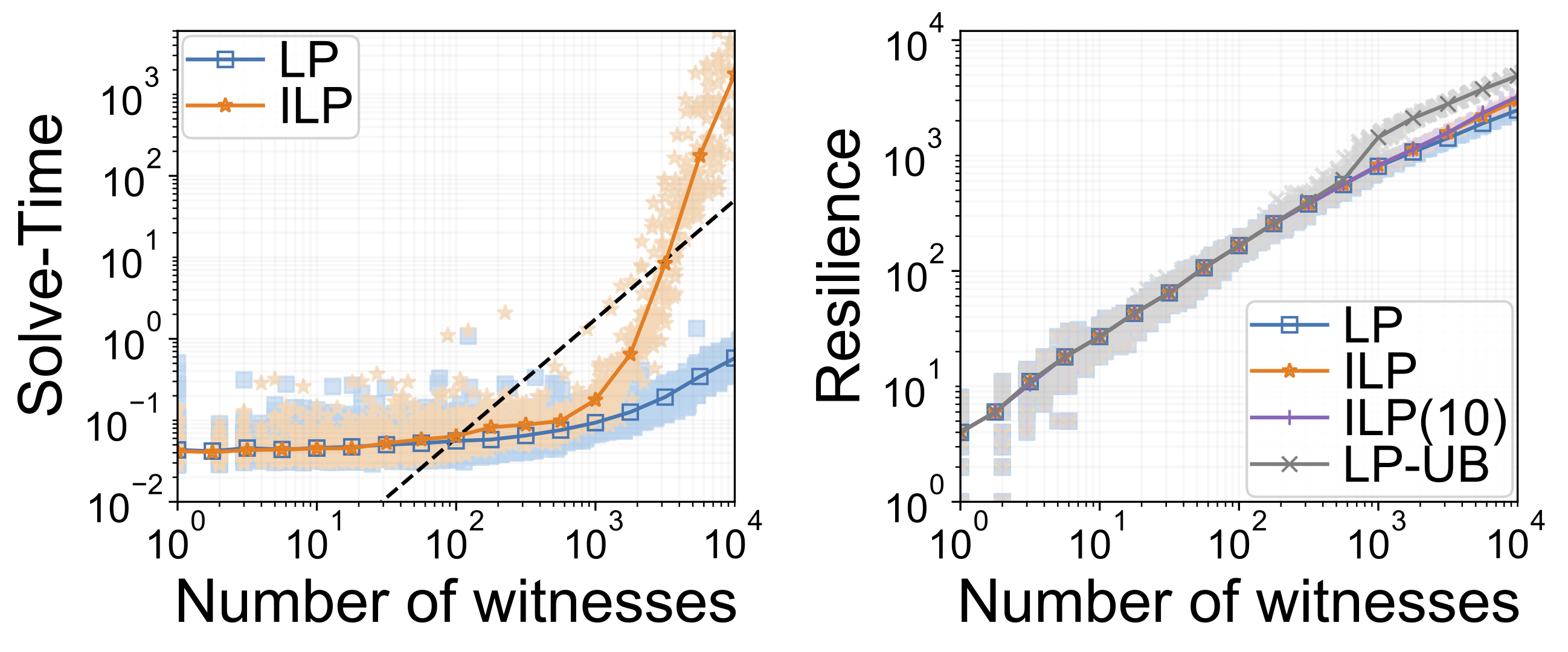}
 		\caption{SJ-Chain query (a hard query)}
 		\label{fig:Fig_expt_sj_chain}
 	\end{subfigure}
 	\caption{
	Setting 3: Queries with self-joins.}
     \label{fig:Fig_Expt_sj_contrast}
  \end{figure}%

\subsection{Experimental Settings}

\introparagraph{Setting 1: Resilience Under Set Semantics}
We consider the 3-star query $\qthreestar \datarule R(x), S(y), T(z), \\W(x,y,z)$ which contains an active triad and is hard (\cref{fig:Fig_Expt_3Star}).
The top plots show the growth of solve-time and resilience 
for increasing instances, while the bottom plots show the growths as a fraction of the optimal.\footnote{The optimal solve-time is $\reslp$ and the optimal resilience is from $\resilp$.}
We see that the solve-time of  $\resilp$ quickly shoots up, while $\reslp$ and the approximations remain $\PTIME$.
The bottom plots show a more zoomed-in look, and we see even in the worst case instances, the approximations are only between $1.1$x to $1.6$x off.

\introparagraph{Setting 2: Responsibility With TPCH Data}
\cref{fig:Fig_expt_tpch} shows results for the $5$-chain query $\qfivechain \datarule$ Customer(custname, custkey), Orders(custkey, orderkey), Lineitem(orderkey, psid), Partsupplier(id, suppkey) and $5$-cycle query $\qfivecycle \datarule $ Customer(custname, custkey), Orders(custkey, orderkey),
Lineitem(orderkey, psid), Partsupplier(id, suppkey), Supplier(suppkey, suppname) over TPC-H data.
While in general $\qfivecycle$ is $\NPC$, a careful reader may notice that all joins have a primary-foreign key dependencies. 
We do not inform our algorithms about these dependencies nor make any changes to accommodate them.
Yet the solver is able to leverage the dependencies \emph{from the data} and $\resilp$ scales in $\PTIME$.
We see that the ILP is faster than the both dedicated flow algorithm and flow approximation.
In both cases, all algorithms (exact and approximate) return the correct responsibility.

\introparagraph{Setting 3: Queries with Self-Joins under Bag Semantics}
\cref{fig:Fig_Expt_sj_contrast} compares two queries with self-joins: 
$\textrm{SJ-conf}\datarule R(x,y),R(x,z),$ $A(x), C(z)$ 
is easy 
and 
$\textrm{SJ-chain}\datarule R(x,y), R(y,z)$ is hard. 
The stark difference in the solve-time growth clearly indicates their theoretical complexity.
While \texttt{LP-UB} increases as the $\textrm{SJ-chain}$ instance grows, it is still far from the theorized $4$-factor worst case bound.
We see that \texttt{ILP-10} is a good indicator for the objective value, even when the ILP takes far longer.

\cref{SEC:APPENDIX:EXPERIMENTS} provides more experimental settings, such as comparing set and bag 
semantics~\cite{makhija2023unifiedarxiv}.

\subsection{Key Takeaways from Experiments}
We summarize the key takeaways from our experiments:

\resultbox{\begin{resultW}(\textbf{Scalability of ILP for PTIME Cases}) 
For easy cases, solving our ILP encoding is in $\PTIME$ and at times even faster than a previously proposed dedicated flow algorithm.
\end{resultW}}

\noindent
We see the scalability of ILP for $\PTIME$ cases in 
\cref{fig:Fig_expt_tpch_5chain,fig:Fig_expt_sj_conf}. 
As expected, solving the ILP formulation takes similar time as LP.
We see that Gurobi can solve responsibility around 12 times faster for a $\PTIME$ query (\cref{fig:Fig_expt_tpch_5chain}) than the previously proposed flow encoding.

\resultbox{\begin{resultW}(\textbf{Correctness of LP for PTIME Cases}) 
Over all experiments, $\reslp= \resilp$ and $\rspmilp = \rspilp$.
\end{resultW}}

\noindent
\cref{fig:Fig_expt_tpch_5chain,fig:Fig_expt_sj_conf} 
corroborate the correctness of the LP relaxation for $\PTIME$ queries, as expected due to the theorems proved in \cref{sec:theoreticalresults}.

\resultbox{\begin{resultW}(\textbf{Scalability of ILP and its Relaxations for Hard Cases}) 
For hard queries, we observe that the time taken by the LP and MILP relaxations grows polynomially, while the time taken by the ILP solution grows exponentially.
However, in practice (and in the absence of ``hardness-creating interactions'' in data)
the ILP can often be solved efficiently.
\end{resultW}}

\noindent
\cref{fig:Fig_Expt_3Star,fig:Fig_expt_tpch_5cycle,fig:Fig_expt_sj_chain} show hard cases.
The difference in solve-time is best seen in \cref{fig:Fig_Expt_3Star,fig:Fig_Expt_sj_contrast}, where the ILP overtakes linear scalability.
However, interestingly some hard queries don't show exponential time complexity, 
and for more complicated queries it actually quite difficult to even synthetically create random data for which solving the ILP shows exponential growth.

\resultbox{\begin{resultW}(\textbf{Approximation quality}) 
\texttt{LP-UB} is better in practice than the worst-case $m$-factor bound.
The flow based approximations give better approximations, but are slower than the LP relaxation.
\end{resultW}}

\noindent
\cref{fig:Fig_Expt_3Star,fig:Fig_expt_sj_chain} 
show that the results from approximation algorithms
are well within theorized bounds and run in $\PTIME$.
All approximations are very close to the exact answer, and we need the $\Delta$ plots in \cref{fig:Fig_Expt_3Star} to see any difference between exact and approximate results.
We observe that in this case \texttt{Flow-CW} performs better than \texttt{Flow-CT} and is faster as well. 
\texttt{LP-UB} is faster than the flow-based approximations but can be worse.
We also see that the LP approximation is worst when the ILP takes much longer than the LP.

\section{Conclusion and Future Work}

This paper presented a novel way of determining the complexity of resilience.
We give a universal encoding as ILP and then investigate when an LP approximation is guaranteed to give an integral solution, thereby proving that modern solvers can return the answer in guaranteed $\PTIME$.
While this approach is known in the optimization literature \cite{schrijver2003combinatorial},
it has so far not been applied as \emph{proof method} to establish dichotomy results in reverse data management.
Since the resulting theory is somewhat simpler and naturally captures all prior known $\PTIME$ cases, 
we believe that this approach will also help in related open problems
for reverse data management, in particular a so far elusive complete dichotomy for resilience of queries with self-joins \cite{DBLP:conf/pods/FreireGIM20}.

\section*{Acknowledgements}

This work was supported in part by the National Science Foundation (NSF) under award numbers IIS-1762268 and IIS-1956096, 
and conducted in part while the authors were visiting the Simons Institute for the Theory of Computing.

\newpage
\bibliographystyle{ACM-Reference-Format}
\bibliography{BIB/resilience.bib}


\begin{thebibliography}{81}


\ifx \showCODEN    \undefined \def \showCODEN     #1{\unskip}     \fi
\ifx \showDOI      \undefined \def \showDOI       #1{#1}\fi
\ifx \showISBNx    \undefined \def \showISBNx     #1{\unskip}     \fi
\ifx \showISBNxiii \undefined \def \showISBNxiii  #1{\unskip}     \fi
\ifx \showISSN     \undefined \def \showISSN      #1{\unskip}     \fi
\ifx \showLCCN     \undefined \def \showLCCN      #1{\unskip}     \fi
\ifx \shownote     \undefined \def \shownote      #1{#1}          \fi
\ifx \showarticletitle \undefined \def \showarticletitle #1{#1}   \fi
\ifx \showURL      \undefined \def \showURL       {\relax}        \fi
\providecommand\bibfield[2]{#2}
\providecommand\bibinfo[2]{#2}
\providecommand\natexlab[1]{#1}
\providecommand\showeprint[2][]{arXiv:#2}

\bibitem[Aardal et~al\mbox{.}(2005)]%
        {aardal2005handbooks}
\bibfield{author}{\bibinfo{person}{Karen Aardal}, \bibinfo{person}{George~L Nemhauser}, {and} \bibinfo{person}{Robert Weismantel}.} \bibinfo{year}{2005}\natexlab{}.
\newblock \bibinfo{booktitle}{\emph{Handbooks in Operations Research and Management Science: Discrete Optimization}}.
\newblock \bibinfo{publisher}{Elsevier}.
\newblock
\urldef\tempurl%
\url{https://doi.org/10.1016/s0927-0507(05)x1200-2}
\showDOI{\tempurl}


\bibitem[Achterberg et~al\mbox{.}(2020)]%
        {achterberg2020presolve}
\bibfield{author}{\bibinfo{person}{Tobias Achterberg}, \bibinfo{person}{Robert~E Bixby}, \bibinfo{person}{Zonghao Gu}, \bibinfo{person}{Edward Rothberg}, {and} \bibinfo{person}{Dieter Weninger}.} \bibinfo{year}{2020}\natexlab{}.
\newblock \showarticletitle{Presolve reductions in mixed integer programming}.
\newblock \bibinfo{journal}{\emph{INFORMS Journal on Computing}} \bibinfo{volume}{32}, \bibinfo{number}{2} (\bibinfo{year}{2020}), \bibinfo{pages}{473--506}.
\newblock
\urldef\tempurl%
\url{https://doi.org/10.1287/ijoc.2018.0857}
\showDOI{\tempurl}


\bibitem[Atserias and Kolaitis(2022)]%
        {atserias2022structure}
\bibfield{author}{\bibinfo{person}{Albert Atserias} {and} \bibinfo{person}{Phokion~G Kolaitis}.} \bibinfo{year}{2022}\natexlab{}.
\newblock \showarticletitle{Structure and complexity of bag consistency}.
\newblock \bibinfo{journal}{\emph{ACM SIGMOD Record}} \bibinfo{volume}{51}, \bibinfo{number}{1} (\bibinfo{year}{2022}), \bibinfo{pages}{78--85}.
\newblock
\urldef\tempurl%
\url{https://doi.org/10.1145/3542700.3542719}
\showDOI{\tempurl}


\bibitem[Beeri et~al\mbox{.}(1983)]%
        {Beeri+83}
\bibfield{author}{\bibinfo{person}{Catriel Beeri}, \bibinfo{person}{Ronald Fagin}, \bibinfo{person}{David Maier}, {and} \bibinfo{person}{Mihalis Yannakakis}.} \bibinfo{year}{1983}\natexlab{}.
\newblock \showarticletitle{On the Desirability of Acyclic Database Schemes}.
\newblock \bibinfo{journal}{\emph{J. ACM}} \bibinfo{volume}{30}, \bibinfo{number}{3} (\bibinfo{date}{July} \bibinfo{year}{1983}), \bibinfo{pages}{479--513}.
\newblock
\showISSN{0004-5411}
\urldef\tempurl%
\url{https://doi.org/10.1145/2402.322389}
\showDOI{\tempurl}


\bibitem[Bertossi(2021)]%
        {bertossi2021specifying}
\bibfield{author}{\bibinfo{person}{Leopoldo Bertossi}.} \bibinfo{year}{2021}\natexlab{}.
\newblock \showarticletitle{Specifying and computing causes for query answers in databases via database repairs and repair-programs}.
\newblock \bibinfo{journal}{\emph{Knowledge and Information Systems}} \bibinfo{volume}{63}, \bibinfo{number}{1} (\bibinfo{year}{2021}), \bibinfo{pages}{199--231}.
\newblock
\urldef\tempurl%
\url{https://doi.org/10.1007/s10115-020-01516-6}
\showDOI{\tempurl}


\bibitem[Bodirsky et~al\mbox{.}(2023)]%
        {bodirsky2023complexity}
\bibfield{author}{\bibinfo{person}{Manuel Bodirsky}, \bibinfo{person}{Žaneta Semanišinová}, {and} \bibinfo{person}{Carsten Lutz}.} \bibinfo{year}{2023}\natexlab{}.
\newblock \showarticletitle{The Complexity of Resilience Problems via Valued Constraint Satisfaction Problems}.
\newblock  (\bibinfo{year}{2023}).
\newblock
\showeprint[arxiv]{2309.15654}~[math.LO]
\urldef\tempurl%
\url{https://arxiv.org/abs/2309.15654}
\showURL{%
\tempurl}


\bibitem[Bollob{\'a}s(1998)]%
        {bollobas1998modern}
\bibfield{author}{\bibinfo{person}{B{\'e}la Bollob{\'a}s}.} \bibinfo{year}{1998}\natexlab{}.
\newblock \bibinfo{booktitle}{\emph{Modern graph theory}}. Vol.~\bibinfo{volume}{184}.
\newblock \bibinfo{publisher}{Springer Science \& Business Media}.
\newblock
\urldef\tempurl%
\url{https://doi.org/10.1007/978-1-4612-0619-4}
\showDOI{\tempurl}


\bibitem[Brucato et~al\mbox{.}(2019)]%
        {brucato2019scalable}
\bibfield{author}{\bibinfo{person}{Matteo Brucato}, \bibinfo{person}{Azza Abouzied}, {and} \bibinfo{person}{Alexandra Meliou}.} \bibinfo{year}{2019}\natexlab{}.
\newblock \showarticletitle{Scalable computation of high-order optimization queries}.
\newblock \bibinfo{journal}{\emph{Commun. ACM}} \bibinfo{volume}{62}, \bibinfo{number}{2} (\bibinfo{year}{2019}), \bibinfo{pages}{108--116}.
\newblock
\urldef\tempurl%
\url{https://doi.org/10.1145/3299881}
\showDOI{\tempurl}


\bibitem[Buneman et~al\mbox{.}(2001)]%
        {DBLP:conf/icdt/BunemanKT01}
\bibfield{author}{\bibinfo{person}{Peter Buneman}, \bibinfo{person}{Sanjeev Khanna}, {and} \bibinfo{person}{Wang~Chiew Tan}.} \bibinfo{year}{2001}\natexlab{}.
\newblock \showarticletitle{Why and Where: A Characterization of Data Provenance}. In \bibinfo{booktitle}{\emph{{ICDT}}}. \bibinfo{pages}{316--330}.
\newblock
\urldef\tempurl%
\url{https://doi.org/10.1007/3-540-44503-x_20}
\showDOI{\tempurl}


\bibitem[Buneman et~al\mbox{.}(2002)]%
        {Buneman:2002}
\bibfield{author}{\bibinfo{person}{Peter Buneman}, \bibinfo{person}{Sanjeev Khanna}, {and} \bibinfo{person}{Wang-Chiew Tan}.} \bibinfo{year}{2002}\natexlab{}.
\newblock \showarticletitle{On Propagation of Deletions and Annotations Through Views}. In \bibinfo{booktitle}{\emph{{PODS}}}. \bibinfo{pages}{150--158}.
\newblock
\showISBNx{1-58113-507-6}
\urldef\tempurl%
\url{https://doi.org/10.1145/543613.543633}
\showDOI{\tempurl}


\bibitem[Buneman and Tan(2007)]%
        {Buneman07}
\bibfield{author}{\bibinfo{person}{Peter Buneman} {and} \bibinfo{person}{Wang-Chiew Tan}.} \bibinfo{year}{2007}\natexlab{}.
\newblock \showarticletitle{Provenance in Databases}. In \bibinfo{booktitle}{\emph{{SIGMOD}}}. \bibinfo{pages}{1171--1173}.
\newblock
\showISBNx{978-1-59593-686-8}
\urldef\tempurl%
\url{https://doi.org/10.1145/1247480.1247646}
\showDOI{\tempurl}


\bibitem[Capelli et~al\mbox{.}(2022)]%
        {capelli2022linear}
\bibfield{author}{\bibinfo{person}{Florent Capelli}, \bibinfo{person}{Nicolas Crosetti}, \bibinfo{person}{Joachim Niehren}, {and} \bibinfo{person}{Jan Ramon}.} \bibinfo{year}{2022}\natexlab{}.
\newblock \showarticletitle{Linear programs with conjunctive queries}.
\newblock  (\bibinfo{year}{2022}).
\newblock
\urldef\tempurl%
\url{https://doi.org/10.4230/LIPIcs.ICDT.2022.5}
\showDOI{\tempurl}


\bibitem[Chandra and Merlin(1977)]%
        {DBLP:conf/stoc/ChandraM77}
\bibfield{author}{\bibinfo{person}{Ashok~K. Chandra} {and} \bibinfo{person}{Philip~M. Merlin}.} \bibinfo{year}{1977}\natexlab{}.
\newblock \showarticletitle{Optimal Implementation of Conjunctive Queries in Relational Data Bases}. In \bibinfo{booktitle}{\emph{{STOC}}}. \bibinfo{pages}{77--90}.
\newblock
\urldef\tempurl%
\url{https://doi.org/10.1145/800105.803397}
\showDOI{\tempurl}


\bibitem[Chaudhuri and Vardi(1993)]%
        {chaudhuri1993optimization}
\bibfield{author}{\bibinfo{person}{Surajit Chaudhuri} {and} \bibinfo{person}{Moshe~Y Vardi}.} \bibinfo{year}{1993}\natexlab{}.
\newblock \showarticletitle{Optimization of real conjunctive queries}. In \bibinfo{booktitle}{\emph{{PODS}}}. \bibinfo{pages}{59--70}.
\newblock
\urldef\tempurl%
\url{https://doi.org/10.1145/153850.153856}
\showDOI{\tempurl}


\bibitem[Cheney et~al\mbox{.}(2009)]%
        {DBLP:journals/ftdb/CheneyCT09}
\bibfield{author}{\bibinfo{person}{James Cheney}, \bibinfo{person}{Laura Chiticariu}, {and} \bibinfo{person}{Wang~Chiew Tan}.} \bibinfo{year}{2009}\natexlab{}.
\newblock \showarticletitle{Provenance in Databases: Why, How, and Where}.
\newblock \bibinfo{journal}{\emph{Foundations and Trends in Databases}} \bibinfo{volume}{1}, \bibinfo{number}{4} (\bibinfo{year}{2009}), \bibinfo{pages}{379--474}.
\newblock
\urldef\tempurl%
\url{https://doi.org/10.1561/9781601982339}
\showDOI{\tempurl}


\bibitem[Chockler and Halpern(2004)]%
        {ChocklerH04}
\bibfield{author}{\bibinfo{person}{Hana Chockler} {and} \bibinfo{person}{Joseph~Y. Halpern}.} \bibinfo{year}{2004}\natexlab{}.
\newblock \showarticletitle{Responsibility and Blame: A Structural-Model Approach}.
\newblock \bibinfo{journal}{\emph{J. Artif. Intell. Res. (JAIR)}}  \bibinfo{volume}{22} (\bibinfo{year}{2004}), \bibinfo{pages}{93--115}.
\newblock
\urldef\tempurl%
\url{https://doi.org/10.1613/jair.1391}
\showDOI{\tempurl}


\bibitem[Cohen et~al\mbox{.}(2021)]%
        {cohen2021solving}
\bibfield{author}{\bibinfo{person}{Michael~B Cohen}, \bibinfo{person}{Yin~Tat Lee}, {and} \bibinfo{person}{Zhao Song}.} \bibinfo{year}{2021}\natexlab{}.
\newblock \showarticletitle{Solving linear programs in the current matrix multiplication time}.
\newblock \bibinfo{journal}{\emph{Journal of the ACM (JACM)}} \bibinfo{volume}{68}, \bibinfo{number}{1} (\bibinfo{year}{2021}), \bibinfo{pages}{1--39}.
\newblock
\urldef\tempurl%
\url{https://doi.org/10.1145/3424305}
\showDOI{\tempurl}


\bibitem[Conforti et~al\mbox{.}(2006)]%
        {conforti2006balanced}
\bibfield{author}{\bibinfo{person}{Michele Conforti}, \bibinfo{person}{G{\'e}rard Cornu{\'e}jols}, {and} \bibinfo{person}{Kristina Vu{\v{s}}kovi{\'c}}.} \bibinfo{year}{2006}\natexlab{}.
\newblock \showarticletitle{Balanced matrices}.
\newblock \bibinfo{journal}{\emph{Discrete Mathematics}} \bibinfo{volume}{306}, \bibinfo{number}{19-20} (\bibinfo{year}{2006}), \bibinfo{pages}{2411--2437}.
\newblock
\urldef\tempurl%
\url{https://doi.org/10.1016/j.disc.2005.12.033}
\showDOI{\tempurl}


\bibitem[Cornu{\'e}jols and Guenin(2002)]%
        {cornuejols2002ideal}
\bibfield{author}{\bibinfo{person}{G{\'e}rard Cornu{\'e}jols} {and} \bibinfo{person}{Bertrand Guenin}.} \bibinfo{year}{2002}\natexlab{}.
\newblock \showarticletitle{Ideal clutters}.
\newblock \bibinfo{journal}{\emph{Discrete Applied Mathematics}} \bibinfo{volume}{123}, \bibinfo{number}{1-3} (\bibinfo{year}{2002}), \bibinfo{pages}{303--338}.
\newblock
\urldef\tempurl%
\url{https://doi.org/10.1016/S0166-218X(01)00344-4}
\showDOI{\tempurl}


\bibitem[Crama and Hammer(2011)]%
        {CramaHammer2010:BooleanFunctions}
\bibfield{author}{\bibinfo{person}{Yves Crama} {and} \bibinfo{person}{Peter~L. Hammer}.} \bibinfo{year}{2011}\natexlab{}.
\newblock \bibinfo{booktitle}{\emph{Boolean Functions: Theory, Algorithms, and Applications}}.
\newblock \bibinfo{publisher}{Cambridge University Press}.
\newblock
\urldef\tempurl%
\url{https://doi.org/10.1017/cbo9780511852008.003}
\showDOI{\tempurl}


\bibitem[Dalvi and Suciu(2007)]%
        {DBLP:journals/vldb/DalviS07}
\bibfield{author}{\bibinfo{person}{Nilesh~N. Dalvi} {and} \bibinfo{person}{Dan Suciu}.} \bibinfo{year}{2007}\natexlab{}.
\newblock \showarticletitle{Efficient query evaluation on probabilistic databases}.
\newblock \bibinfo{journal}{\emph{{VLDB} J.}} \bibinfo{volume}{16}, \bibinfo{number}{4} (\bibinfo{year}{2007}), \bibinfo{pages}{523--544}.
\newblock
\urldef\tempurl%
\url{https://doi.org/10.1007/s00778-006-0004-3}
\showDOI{\tempurl}


\bibitem[Dalvi and Suciu(2012)]%
        {DBLP:journals/jacm/DalviS12}
\bibfield{author}{\bibinfo{person}{Nilesh~N. Dalvi} {and} \bibinfo{person}{Dan Suciu}.} \bibinfo{year}{2012}\natexlab{}.
\newblock \showarticletitle{The dichotomy of probabilistic inference for unions of conjunctive queries}.
\newblock \bibinfo{journal}{\emph{J. ACM}} \bibinfo{volume}{59}, \bibinfo{number}{6} (\bibinfo{year}{2012}), \bibinfo{pages}{30}.
\newblock
\urldef\tempurl%
\url{https://doi.org/10.1145/2395116.2395119}
\showDOI{\tempurl}


\bibitem[Dantsin et~al\mbox{.}(2001)]%
        {10.1145/502807.502810}
\bibfield{author}{\bibinfo{person}{Evgeny Dantsin}, \bibinfo{person}{Thomas Eiter}, \bibinfo{person}{Georg Gottlob}, {and} \bibinfo{person}{Andrei Voronkov}.} \bibinfo{year}{2001}\natexlab{}.
\newblock \showarticletitle{Complexity and Expressive Power of Logic Programming}.
\newblock \bibinfo{journal}{\emph{ACM Comput. Surv.}} \bibinfo{volume}{33}, \bibinfo{number}{3} (\bibinfo{year}{2001}), \bibinfo{pages}{374--425}.
\newblock
\showISSN{0360-0300}
\urldef\tempurl%
\url{https://doi.org/10.1145/502807.502810}
\showDOI{\tempurl}


\bibitem[Davis et~al\mbox{.}(1962)]%
        {DPLL}
\bibfield{author}{\bibinfo{person}{Martin Davis}, \bibinfo{person}{George Logemann}, {and} \bibinfo{person}{Donald Loveland}.} \bibinfo{year}{1962}\natexlab{}.
\newblock \showarticletitle{A Machine Program for Theorem-Proving}.
\newblock \bibinfo{journal}{\emph{Commun. ACM}} \bibinfo{volume}{5}, \bibinfo{number}{7} (\bibinfo{date}{jul} \bibinfo{year}{1962}), \bibinfo{pages}{394--397}.
\newblock
\showISSN{0001-0782}
\urldef\tempurl%
\url{https://doi.org/10.1145/368273.368557}
\showDOI{\tempurl}


\bibitem[Dayal and Bernstein(1982)]%
        {Dayal82}
\bibfield{author}{\bibinfo{person}{Umeshwar Dayal} {and} \bibinfo{person}{Philip~A. Bernstein}.} \bibinfo{year}{1982}\natexlab{}.
\newblock \showarticletitle{On the Correct Translation of Update Operations on Relational Views}.
\newblock \bibinfo{journal}{\emph{ACM TODS}} \bibinfo{volume}{7}, \bibinfo{number}{3} (\bibinfo{year}{1982}), \bibinfo{pages}{381--416}.
\newblock
\showISSN{0362-5915}
\urldef\tempurl%
\url{https://doi.org/10.1145/319732.319740}
\showDOI{\tempurl}


\bibitem[Eiter and Gottlob(1993)]%
        {EITER1993231}
\bibfield{author}{\bibinfo{person}{Thomas Eiter} {and} \bibinfo{person}{Georg Gottlob}.} \bibinfo{year}{1993}\natexlab{}.
\newblock \showarticletitle{Propositional circumscription and extended closed-world reasoning are $\Pi^P_2$-complete}.
\newblock \bibinfo{journal}{\emph{Theoretical Computer Science}} \bibinfo{volume}{114}, \bibinfo{number}{2} (\bibinfo{year}{1993}), \bibinfo{pages}{231--245}.
\newblock
\showISSN{0304-3975}
\urldef\tempurl%
\url{https://doi.org/10.1016/0304-3975(93)90073-3}
\showDOI{\tempurl}


\bibitem[Eiter and Gottlob(1995)]%
        {eiter1995computational}
\bibfield{author}{\bibinfo{person}{Thomas Eiter} {and} \bibinfo{person}{Georg Gottlob}.} \bibinfo{year}{1995}\natexlab{}.
\newblock \showarticletitle{On the computational cost of disjunctive logic programming: Propositional case}.
\newblock \bibinfo{journal}{\emph{Annals of Mathematics and Artificial Intelligence}}  \bibinfo{volume}{15} (\bibinfo{year}{1995}), \bibinfo{pages}{289--323}.
\newblock
\urldef\tempurl%
\url{https://doi.org/10.1007/bf01536399}
\showDOI{\tempurl}


\bibitem[Eiter et~al\mbox{.}(1997)]%
        {10.1145/261124.261126}
\bibfield{author}{\bibinfo{person}{Thomas Eiter}, \bibinfo{person}{Georg Gottlob}, {and} \bibinfo{person}{Heikki Mannila}.} \bibinfo{year}{1997}\natexlab{}.
\newblock \showarticletitle{Disjunctive Datalog}.
\newblock \bibinfo{journal}{\emph{ACM Trans. Database Syst.}} \bibinfo{volume}{22}, \bibinfo{number}{3} (\bibinfo{year}{1997}), \bibinfo{pages}{364--418}.
\newblock
\showISSN{0362-5915}
\urldef\tempurl%
\url{https://doi.org/10.1145/261124.261126}
\showDOI{\tempurl}


\bibitem[Eiter et~al\mbox{.}(2009)]%
        {eiter2009answer}
\bibfield{author}{\bibinfo{person}{Thomas Eiter}, \bibinfo{person}{Giovambattista Ianni}, {and} \bibinfo{person}{Thomas Krennwallner}.} \bibinfo{year}{2009}\natexlab{}.
\newblock \bibinfo{booktitle}{\emph{Answer set programming: A primer}}.
\newblock \bibinfo{publisher}{Springer}.
\newblock
\urldef\tempurl%
\url{https://doi.org/10.1007/978-3-642-03754-2_2}
\showDOI{\tempurl}


\bibitem[Eiter and Polleres(2006)]%
        {eiter2006towards}
\bibfield{author}{\bibinfo{person}{Thomas Eiter} {and} \bibinfo{person}{Axel Polleres}.} \bibinfo{year}{2006}\natexlab{}.
\newblock \showarticletitle{Towards automated integration of guess and check programs in answer set programming: a meta-interpreter and applications}.
\newblock \bibinfo{journal}{\emph{Theory and Practice of Logic Programming}} \bibinfo{volume}{6}, \bibinfo{number}{1-2} (\bibinfo{year}{2006}), \bibinfo{pages}{23--60}.
\newblock
\urldef\tempurl%
\url{https://doi.org/10.1017/s1471068405002577}
\showDOI{\tempurl}


\bibitem[Ford and Fulkerson(1956)]%
        {ford1956maximal}
\bibfield{author}{\bibinfo{person}{Lester~Randolph Ford} {and} \bibinfo{person}{Delbert~R Fulkerson}.} \bibinfo{year}{1956}\natexlab{}.
\newblock \showarticletitle{Maximal flow through a network}.
\newblock \bibinfo{journal}{\emph{Canadian journal of Mathematics}}  \bibinfo{volume}{8} (\bibinfo{year}{1956}), \bibinfo{pages}{399--404}.
\newblock
\urldef\tempurl%
\url{https://doi.org/10.4153/cjm-1956-045-5}
\showDOI{\tempurl}


\bibitem[Freire et~al\mbox{.}(2015)]%
        {DBLP:journals/pvldb/FreireGIM15}
\bibfield{author}{\bibinfo{person}{Cibele Freire}, \bibinfo{person}{Wolfgang Gatterbauer}, \bibinfo{person}{Neil Immerman}, {and} \bibinfo{person}{Alexandra Meliou}.} \bibinfo{year}{2015}\natexlab{}.
\newblock \showarticletitle{The Complexity of Resilience and Responsibility for Self-Join-Free Conjunctive Queries}.
\newblock \bibinfo{journal}{\emph{{PVLDB}}} \bibinfo{volume}{9}, \bibinfo{number}{3} (\bibinfo{year}{2015}), \bibinfo{pages}{180--191}.
\newblock
\urldef\tempurl%
\url{http://www.vldb.org/pvldb/vol9/p180-freire.pdf}
\showURL{%
\tempurl}


\bibitem[Freire et~al\mbox{.}(2020)]%
        {DBLP:conf/pods/FreireGIM20}
\bibfield{author}{\bibinfo{person}{Cibele Freire}, \bibinfo{person}{Wolfgang Gatterbauer}, \bibinfo{person}{Neil Immerman}, {and} \bibinfo{person}{Alexandra Meliou}.} \bibinfo{year}{2020}\natexlab{}.
\newblock \showarticletitle{New Results for the Complexity of Resilience for Binary Conjunctive Queries with Self-Joins}. In \bibinfo{booktitle}{\emph{{PODS}}}. \bibinfo{pages}{271--284}.
\newblock
\urldef\tempurl%
\url{https://doi.org/10.1145/3375395.3387647}
\showDOI{\tempurl}


\bibitem[Galhotra et~al\mbox{.}(2017)]%
        {galhotra2017fairness}
\bibfield{author}{\bibinfo{person}{Sainyam Galhotra}, \bibinfo{person}{Yuriy Brun}, {and} \bibinfo{person}{Alexandra Meliou}.} \bibinfo{year}{2017}\natexlab{}.
\newblock \showarticletitle{Fairness testing: testing software for discrimination}. In \bibinfo{booktitle}{\emph{Proceedings of the 2017 11th Joint meeting on foundations of software engineering}}. \bibinfo{pages}{498--510}.
\newblock
\urldef\tempurl%
\url{https://doi.org/10.1145/3106237.3106277}
\showDOI{\tempurl}


\bibitem[Gatterbauer and Suciu(2017)]%
        {DBLP:journals/vldb/GatterbauerS17}
\bibfield{author}{\bibinfo{person}{Wolfgang Gatterbauer} {and} \bibinfo{person}{Dan Suciu}.} \bibinfo{year}{2017}\natexlab{}.
\newblock \showarticletitle{Dissociation and propagation for approximate lifted inference with standard relational database management systems}.
\newblock \bibinfo{journal}{\emph{{VLDB} J.}} \bibinfo{volume}{26}, \bibinfo{number}{1} (\bibinfo{year}{2017}), \bibinfo{pages}{5--30}.
\newblock
\urldef\tempurl%
\url{https://doi.org/10.1007/s00778-016-0434-5}
\showDOI{\tempurl}


\bibitem[Gebser et~al\mbox{.}(2011)]%
        {gebser2011potassco}
\bibfield{author}{\bibinfo{person}{Martin Gebser}, \bibinfo{person}{Benjamin Kaufmann}, \bibinfo{person}{Roland Kaminski}, \bibinfo{person}{Max Ostrowski}, \bibinfo{person}{Torsten Schaub}, {and} \bibinfo{person}{Marius Schneider}.} \bibinfo{year}{2011}\natexlab{}.
\newblock \showarticletitle{Potassco: The Potsdam answer set solving collection}.
\newblock \bibinfo{journal}{\emph{Ai Communications}} \bibinfo{volume}{24}, \bibinfo{number}{2} (\bibinfo{year}{2011}), \bibinfo{pages}{107--124}.
\newblock
\urldef\tempurl%
\url{https://doi.org/10.3233/aic-2011-0491}
\showDOI{\tempurl}


\bibitem[Gelfond and Kahl(2014)]%
        {gelfond2014knowledge}
\bibfield{author}{\bibinfo{person}{Michael Gelfond} {and} \bibinfo{person}{Yulia Kahl}.} \bibinfo{year}{2014}\natexlab{}.
\newblock \bibinfo{booktitle}{\emph{Knowledge representation, reasoning, and the design of intelligent agents: The answer-set programming approach}}.
\newblock \bibinfo{publisher}{Cambridge University Press}.
\newblock
\urldef\tempurl%
\url{https://doi.org/10.1017/cbo9781139342124}
\showDOI{\tempurl}


\bibitem[Glavic et~al\mbox{.}(2021)]%
        {glavic2021trends}
\bibfield{author}{\bibinfo{person}{Boris Glavic}, \bibinfo{person}{Alexandra Meliou}, {and} \bibinfo{person}{Sudeepa Roy}.} \bibinfo{year}{2021}\natexlab{}.
\newblock \showarticletitle{Trends in explanations: Understanding and debugging data-driven systems}.
\newblock \bibinfo{journal}{\emph{Foundations and Trends in Databases}} \bibinfo{volume}{11}, \bibinfo{number}{3} (\bibinfo{year}{2021}).
\newblock
\urldef\tempurl%
\url{https://doi.org/10.1561/9781680838817}
\showDOI{\tempurl}


\bibitem[Golumbic and Gurvich(2011)]%
        {GolumbicGurvich2010:ReadOnceFunctions}
\bibfield{author}{\bibinfo{person}{Martin~Charles Golumbic} {and} \bibinfo{person}{Vladimir Gurvich}.} \bibinfo{year}{2011}\natexlab{}.
\newblock \bibinfo{booktitle}{\emph{Read-once functions}}.
\newblock \bibinfo{publisher}{Cambridge University Press}, Chapter~10.
\newblock
\urldef\tempurl%
\url{https://doi.org/10.1017/cbo9780511852008.011}
\showDOI{\tempurl}


\bibitem[Golumbic et~al\mbox{.}(2006)]%
        {DBLP:journals/dam/GolumbicMR06}
\bibfield{author}{\bibinfo{person}{Martin~Charles Golumbic}, \bibinfo{person}{Aviad Mintz}, {and} \bibinfo{person}{Udi Rotics}.} \bibinfo{year}{2006}\natexlab{}.
\newblock \showarticletitle{Factoring and recognition of read-once functions using cographs and normality and the readability of functions associated with partial k-trees}.
\newblock \bibinfo{journal}{\emph{Discrete Applied Mathematics}} \bibinfo{volume}{154}, \bibinfo{number}{10} (\bibinfo{year}{2006}), \bibinfo{pages}{1465--1477}.
\newblock


\bibitem[Gr{\"o}tschel et~al\mbox{.}(1993)]%
        {grotschel1993ellipsoid}
\bibfield{author}{\bibinfo{person}{Martin Gr{\"o}tschel}, \bibinfo{person}{L{\'a}szl{\'o} Lov{\'a}sz}, \bibinfo{person}{Alexander Schrijver}, \bibinfo{person}{Martin Gr{\"o}tschel}, \bibinfo{person}{L{\'a}szl{\'o} Lov{\'a}sz}, {and} \bibinfo{person}{Alexander Schrijver}.} \bibinfo{year}{1993}\natexlab{}.
\newblock \showarticletitle{The ellipsoid method}.
\newblock \bibinfo{journal}{\emph{Geometric Algorithms and Combinatorial Optimization}} (\bibinfo{year}{1993}), \bibinfo{pages}{64--101}.
\newblock
\urldef\tempurl%
\url{https://doi.org/10.1007/978-3-642-78240-4_4}
\showDOI{\tempurl}


\bibitem[Gurobi~Optimization(2021)]%
        {gurobi_working}
\bibfield{author}{\bibinfo{person}{LLC Gurobi~Optimization}.} \bibinfo{year}{2021}\natexlab{}.
\newblock \bibinfo{title}{Mixed-Integer Programming (MIP) -- A Primer on the Basics}.
\newblock
\newblock
\urldef\tempurl%
\url{https://www.gurobi.com/resource/mip-basics/}
\showURL{%
\tempurl}


\bibitem[Gurobi~Optimization(2022a)]%
        {gurobinum}
\bibfield{author}{\bibinfo{person}{LLC Gurobi~Optimization}.} \bibinfo{year}{2022}\natexlab{a}.
\newblock \bibinfo{title}{Gurobi Guidelines For Numerical Issues}.
\newblock
\newblock
\urldef\tempurl%
\url{https://www.gurobi.com/documentation/10.0/refman/guidelines_for_numerical_i.html}
\showURL{%
\tempurl}


\bibitem[Gurobi~Optimization(2022b)]%
        {gurobi}
\bibfield{author}{\bibinfo{person}{LLC Gurobi~Optimization}.} \bibinfo{year}{2022}\natexlab{b}.
\newblock \bibinfo{title}{Gurobi Optimizer Reference Manual}.
\newblock
\newblock
\urldef\tempurl%
\url{http://www.gurobi.com}
\showURL{%
\tempurl}


\bibitem[Halpern and Pearl(2005a)]%
        {HalpernPearl:Cause2005}
\bibfield{author}{\bibinfo{person}{Joseph~Y. Halpern} {and} \bibinfo{person}{Judea Pearl}.} \bibinfo{year}{2005}\natexlab{a}.
\newblock \showarticletitle{Causes and Explanations: A structural-model Approach. {P}art {I}: Causes}.
\newblock \bibinfo{journal}{\emph{Brit.\ J.\ Phil.\ Sci.}}  \bibinfo{volume}{56} (\bibinfo{year}{2005}), \bibinfo{pages}{843--887}.
\newblock
\urldef\tempurl%
\url{https://doi.org/10.1093/bjps/axi147}
\showDOI{\tempurl}


\bibitem[Halpern and Pearl(2005b)]%
        {HalpernPearl:Explanations2005}
\bibfield{author}{\bibinfo{person}{Joseph~Y. Halpern} {and} \bibinfo{person}{Judea Pearl}.} \bibinfo{year}{2005}\natexlab{b}.
\newblock \showarticletitle{Causes and Explanations: A structural-model Approach. {P}art {II}: Explanations}.
\newblock \bibinfo{journal}{\emph{Brit.\ J.\ Phil.\ Sci.}}  \bibinfo{volume}{56} (\bibinfo{year}{2005}), \bibinfo{pages}{889--911}.
\newblock


\bibitem[Herschel et~al\mbox{.}(2009)]%
        {DBLP:journals/pvldb/HerschelHT09}
\bibfield{author}{\bibinfo{person}{Melanie Herschel}, \bibinfo{person}{Mauricio~A. Hern{\'a}ndez}, {and} \bibinfo{person}{Wang~Chiew Tan}.} \bibinfo{year}{2009}\natexlab{}.
\newblock \showarticletitle{Artemis: A System for Analyzing Missing Answers}.
\newblock \bibinfo{journal}{\emph{PVLDB}} \bibinfo{volume}{2}, \bibinfo{number}{2} (\bibinfo{year}{2009}), \bibinfo{pages}{1550--1553}.
\newblock
\urldef\tempurl%
\url{https://doi.org/10.14778/1687553.1687588}
\showDOI{\tempurl}


\bibitem[Hu et~al\mbox{.}(2020)]%
        {ADP}
\bibfield{author}{\bibinfo{person}{Xiao Hu}, \bibinfo{person}{Shouzhuo Sun}, \bibinfo{person}{Shweta Patwa}, \bibinfo{person}{Debmalya Panigrahi}, {and} \bibinfo{person}{Sudeepa Roy}.} \bibinfo{year}{2020}\natexlab{}.
\newblock \showarticletitle{Aggregated Deletion Propagation for Counting Conjunctive Query Answers}.
\newblock \bibinfo{journal}{\emph{{PVLDB}}} \bibinfo{volume}{14}, \bibinfo{number}{2} (\bibinfo{year}{2020}), \bibinfo{pages}{228--240}.
\newblock
\showISSN{2150-8097}
\urldef\tempurl%
\url{https://doi.org/10.14778/3425879.3425892}
\showDOI{\tempurl}


\bibitem[Huang et~al\mbox{.}(2008)]%
        {DBLP:journals/pvldb/HuangCDN08}
\bibfield{author}{\bibinfo{person}{Jiansheng Huang}, \bibinfo{person}{Ting Chen}, \bibinfo{person}{AnHai Doan}, {and} \bibinfo{person}{Jeffrey~F. Naughton}.} \bibinfo{year}{2008}\natexlab{}.
\newblock \showarticletitle{On the provenance of non-answers to queries over extracted data}.
\newblock \bibinfo{journal}{\emph{PVLDB}} \bibinfo{volume}{1}, \bibinfo{number}{1} (\bibinfo{year}{2008}), \bibinfo{pages}{736--747}.
\newblock
\urldef\tempurl%
\url{https://doi.org/10.14778/1453856.1453936}
\showDOI{\tempurl}


\bibitem[Karp(1972)]%
        {karp1972reducibility}
\bibfield{author}{\bibinfo{person}{Richard~M Karp}.} \bibinfo{year}{1972}\natexlab{}.
\newblock \showarticletitle{Reducibility among combinatorial problems}.
\newblock In \bibinfo{booktitle}{\emph{Complexity of computer computations}}. \bibinfo{publisher}{Springer}, \bibinfo{pages}{85--103}.
\newblock
\urldef\tempurl%
\url{https://doi.org/10.1007/978-1-4684-2001-2_9}
\showDOI{\tempurl}


\bibitem[Khamis et~al\mbox{.}(2021)]%
        {10.1145/3472391}
\bibfield{author}{\bibinfo{person}{Mahmoud~Abo Khamis}, \bibinfo{person}{Phokion~G. Kolaitis}, \bibinfo{person}{Hung~Q. Ngo}, {and} \bibinfo{person}{Dan Suciu}.} \bibinfo{year}{2021}\natexlab{}.
\newblock \showarticletitle{Bag Query Containment and Information Theory}.
\newblock \bibinfo{journal}{\emph{{ACM TODS}}} \bibinfo{volume}{46}, \bibinfo{number}{3} (\bibinfo{year}{2021}).
\newblock
\showISSN{0362-5915}
\urldef\tempurl%
\url{https://doi.org/10.1145/3472391}
\showDOI{\tempurl}


\bibitem[Kolahi(2009)]%
        {Kolahi2009}
\bibfield{author}{\bibinfo{person}{Solmaz Kolahi}.} \bibinfo{year}{2009}\natexlab{}.
\newblock \bibinfo{booktitle}{\emph{Functional Dependency (Encyclopedia of Database Systems)}}.
\newblock \bibinfo{publisher}{Springer}, \bibinfo{pages}{1200--1201}.
\newblock
\showISBNx{978-0-387-39940-9}
\urldef\tempurl%
\url{https://doi.org/10.1007/978-0-387-39940-9_1247}
\showDOI{\tempurl}


\bibitem[Kolmogorov et~al\mbox{.}(2017)]%
        {kolmogorov2017complexity}
\bibfield{author}{\bibinfo{person}{Vladimir Kolmogorov}, \bibinfo{person}{Andrei Krokhin}, {and} \bibinfo{person}{Michal Rol{\'\i}nek}.} \bibinfo{year}{2017}\natexlab{}.
\newblock \showarticletitle{The complexity of general-valued CSPs}.
\newblock \bibinfo{journal}{\emph{SIAM J. Comput.}} \bibinfo{volume}{46}, \bibinfo{number}{3} (\bibinfo{year}{2017}), \bibinfo{pages}{1087--1110}.
\newblock
\urldef\tempurl%
\url{https://doi.org/10.1109/focs.2015.80}
\showDOI{\tempurl}


\bibitem[Konstantinidis and Mogavero(2019)]%
        {DBLP:conf/pods/KonstantinidisM19}
\bibfield{author}{\bibinfo{person}{George Konstantinidis} {and} \bibinfo{person}{Fabio Mogavero}.} \bibinfo{year}{2019}\natexlab{}.
\newblock \showarticletitle{Attacking Diophantus: Solving a Special Case of Bag Containment}. In \bibinfo{booktitle}{\emph{{PODS}}}. \bibinfo{pages}{399--413}.
\newblock
\urldef\tempurl%
\url{https://doi.org/10.1145/3294052.3319689}
\showDOI{\tempurl}


\bibitem[Lau et~al\mbox{.}(2011)]%
        {lau2011iterative}
\bibfield{author}{\bibinfo{person}{Lap~Chi Lau}, \bibinfo{person}{Ramamoorthi Ravi}, {and} \bibinfo{person}{Mohit Singh}.} \bibinfo{year}{2011}\natexlab{}.
\newblock \bibinfo{booktitle}{\emph{Iterative methods in combinatorial optimization}}. Vol.~\bibinfo{volume}{46}.
\newblock \bibinfo{publisher}{Cambridge University Press}.
\newblock
\urldef\tempurl%
\url{https://doi.org/10.1017/cbo9780511977152}
\showDOI{\tempurl}


\bibitem[Lim et~al\mbox{.}(2009)]%
        {DBLP:conf/chi/LimDA09}
\bibfield{author}{\bibinfo{person}{Brian~Y. Lim}, \bibinfo{person}{Anind~K. Dey}, {and} \bibinfo{person}{Daniel Avrahami}.} \bibinfo{year}{2009}\natexlab{}.
\newblock \showarticletitle{\emph{Why} and \emph{why not} explanations improve the intelligibility of context-aware intelligent systems}. In \bibinfo{booktitle}{\emph{CHI}}. \bibinfo{pages}{2119--2128}.
\newblock
\urldef\tempurl%
\url{http://doi.acm.org/10.1145/1518701.1519023}
\showURL{%
\tempurl}


\bibitem[Makhija and Gatterbauer(2023a)]%
        {resiliencecode}
\bibfield{author}{\bibinfo{person}{Neha Makhija} {and} \bibinfo{person}{Wolfgang Gatterbauer}.} \bibinfo{year}{2023}\natexlab{a}.
\newblock \bibinfo{title}{A Unified Approach for Resilience and Causal Responsibility: Code and Experiments}.
\newblock
\newblock
\urldef\tempurl%
\url{https://github.com/northeastern-datalab/resilience-responsibility-ilp/}
\showURL{%
\tempurl}


\bibitem[Makhija and Gatterbauer(2023b)]%
        {makhija2023unifiedarxiv}
\bibfield{author}{\bibinfo{person}{Neha Makhija} {and} \bibinfo{person}{Wolfgang Gatterbauer}.} \bibinfo{year}{2023}\natexlab{b}.
\newblock \showarticletitle{A Unified Approach for Resilience and Causal Responsibility with Integer Linear Programming (ILP) and LP Relaxations}.
\newblock  (\bibinfo{year}{2023}).
\newblock
\showeprint[arxiv]{2212.08898}~[cs.DB]
\urldef\tempurl%
\url{https://arxiv.org/abs/2212.08898}
\showURL{%
\tempurl}


\bibitem[Meliou et~al\mbox{.}(2010a)]%
        {MeliouGHKMS10}
\bibfield{author}{\bibinfo{person}{Alexandra Meliou}, \bibinfo{person}{Wolfgang Gatterbauer}, \bibinfo{person}{Joseph~Y. Halpern}, \bibinfo{person}{Christoph Koch}, \bibinfo{person}{Katherine~F. Moore}, {and} \bibinfo{person}{Dan Suciu}.} \bibinfo{year}{2010}\natexlab{a}.
\newblock \showarticletitle{Causality in Databases}.
\newblock \bibinfo{journal}{\emph{IEEE Data Eng. Bull.}} \bibinfo{volume}{33}, \bibinfo{number}{3} (\bibinfo{year}{2010}), \bibinfo{pages}{59--67}.
\newblock
\urldef\tempurl%
\url{http://sites.computer.org/debull/A10sept/suciu.pdf}
\showURL{%
\tempurl}


\bibitem[Meliou et~al\mbox{.}(2009)]%
        {DBLP:journals/corr/abs-0912-5340}
\bibfield{author}{\bibinfo{person}{Alexandra Meliou}, \bibinfo{person}{Wolfgang Gatterbauer}, \bibinfo{person}{Katherine~F. Moore}, {and} \bibinfo{person}{Dan Suciu}.} \bibinfo{year}{2009}\natexlab{}.
\newblock \showarticletitle{{W}hy so? or {W}hy no? {F}unctional Causality for Explaining Query Answers}, In \bibinfo{booktitle}{4th International Workshop on Management of Uncertain Data (MUD)}.
\newblock \bibinfo{journal}{\emph{CoRR}}, \bibinfo{pages}{3--17}.
\newblock
\urldef\tempurl%
\url{http://arxiv.org/abs/0912.5340}
\showURL{%
\tempurl}


\bibitem[Meliou et~al\mbox{.}(2010b)]%
        {DBLP:journals/pvldb/MeliouGMS11}
\bibfield{author}{\bibinfo{person}{Alexandra Meliou}, \bibinfo{person}{Wolfgang Gatterbauer}, \bibinfo{person}{Katherine~F. Moore}, {and} \bibinfo{person}{Dan Suciu}.} \bibinfo{year}{2010}\natexlab{b}.
\newblock \showarticletitle{The Complexity of Causality and Responsibility for Query Answers and non-Answers}.
\newblock \bibinfo{journal}{\emph{{PVLDB}}} \bibinfo{volume}{4}, \bibinfo{number}{1} (\bibinfo{year}{2010}), \bibinfo{pages}{34--45}.
\newblock
\urldef\tempurl%
\url{http://www.vldb.org/pvldb/vol4/p34-meliou.pdf}
\showURL{%
\tempurl}


\bibitem[Meliou et~al\mbox{.}(2011)]%
        {DBLP:journals/pvldb/MeliouGS11}
\bibfield{author}{\bibinfo{person}{Alexandra Meliou}, \bibinfo{person}{Wolfgang Gatterbauer}, {and} \bibinfo{person}{Dan Suciu}.} \bibinfo{year}{2011}\natexlab{}.
\newblock \showarticletitle{Reverse Data Management}.
\newblock \bibinfo{journal}{\emph{{PVLDB}}} \bibinfo{volume}{4}, \bibinfo{number}{12} (\bibinfo{year}{2011}), \bibinfo{pages}{1490--1493}.
\newblock
\urldef\tempurl%
\url{http://www.vldb.org/pvldb/vol4/p1490-meliou.pdf}
\showURL{%
\tempurl}


\bibitem[Meliou and Suciu(2012)]%
        {meliou2012tiresias}
\bibfield{author}{\bibinfo{person}{Alexandra Meliou} {and} \bibinfo{person}{Dan Suciu}.} \bibinfo{year}{2012}\natexlab{}.
\newblock \showarticletitle{Tiresias: the database oracle for how-to queries}. In \bibinfo{booktitle}{\emph{Proceedings of the 2012 ACM SIGMOD International Conference on Management of Data}}. \bibinfo{pages}{337--348}.
\newblock
\urldef\tempurl%
\url{https://doi.org/doi/10.1145/2213836.2213875}
\showDOI{\tempurl}


\bibitem[Mitchell et~al\mbox{.}(2011)]%
        {mitchell2011pulp}
\bibfield{author}{\bibinfo{person}{Stuart Mitchell}, \bibinfo{person}{Michael OSullivan}, {and} \bibinfo{person}{Iain Dunning}.} \bibinfo{year}{2011}\natexlab{}.
\newblock \showarticletitle{PuLP: a linear programming toolkit for python}.
\newblock \bibinfo{journal}{\emph{The University of Auckland, Auckland, New Zealand}}  \bibinfo{volume}{65} (\bibinfo{year}{2011}).
\newblock
\urldef\tempurl%
\url{https://optimization-online.org/?p=11731}
\showURL{%
\tempurl}


\bibitem[Potassco(2022)]%
        {clingo}
\bibfield{author}{\bibinfo{person}{the Potsdam Answer Set Solving~Collection Potassco}.} \bibinfo{year}{2022}\natexlab{}.
\newblock \bibinfo{title}{clingo}.
\newblock
\newblock
\urldef\tempurl%
\url{https://potassco.org/clingo/}
\showURL{%
\tempurl}


\bibitem[Pradhan et~al\mbox{.}(2022)]%
        {pradhan2021interpretable}
\bibfield{author}{\bibinfo{person}{Romila Pradhan}, \bibinfo{person}{Jiongli Zhu}, \bibinfo{person}{Boris Glavic}, {and} \bibinfo{person}{Babak Salimi}.} \bibinfo{year}{2022}\natexlab{}.
\newblock \showarticletitle{Interpretable data-based explanations for fairness debugging}. In \bibinfo{booktitle}{\emph{{SIGMOD}}}. \bibinfo{pages}{247--261}.
\newblock
\urldef\tempurl%
\url{https://doi.org/10.1145/3514221.3517886}
\showDOI{\tempurl}


\bibitem[Przymusinski(1991)]%
        {10.1007/BF03037171}
\bibfield{author}{\bibinfo{person}{Teodor~C. Przymusinski}.} \bibinfo{year}{1991}\natexlab{}.
\newblock \showarticletitle{Stable Semantics for Disjunctive Programs}.
\newblock \bibinfo{journal}{\emph{New Generation Computing}} \bibinfo{volume}{9}, \bibinfo{number}{3--4} (\bibinfo{year}{1991}), \bibinfo{pages}{401--424}.
\newblock
\showISSN{0288-3635}
\urldef\tempurl%
\url{https://doi.org/10.1007/BF03037171}
\showDOI{\tempurl}


\bibitem[Roy and Suciu(2014)]%
        {SudeepaSuciu14}
\bibfield{author}{\bibinfo{person}{Sudeepa Roy} {and} \bibinfo{person}{Dan Suciu}.} \bibinfo{year}{2014}\natexlab{}.
\newblock \showarticletitle{A Formal Approach to Finding Explanations for Database Queries}. In \bibinfo{booktitle}{\emph{SIGMOD}}. \bibinfo{pages}{1579--1590}.
\newblock
\urldef\tempurl%
\url{https://doi.org/10.1145/2588555.2588578}
\showDOI{\tempurl}


\bibitem[Salimi et~al\mbox{.}(2019)]%
        {salimi2019interventional}
\bibfield{author}{\bibinfo{person}{Babak Salimi}, \bibinfo{person}{Luke Rodriguez}, \bibinfo{person}{Bill Howe}, {and} \bibinfo{person}{Dan Suciu}.} \bibinfo{year}{2019}\natexlab{}.
\newblock \showarticletitle{Interventional fairness: Causal database repair for algorithmic fairness}. In \bibinfo{booktitle}{\emph{{SIGMOD}}}. \bibinfo{pages}{793--810}.
\newblock
\urldef\tempurl%
\url{https://doi.org/10.1145/3299869.3319901}
\showDOI{\tempurl}


\bibitem[Schrijver(1998)]%
        {schrijver1998theory}
\bibfield{author}{\bibinfo{person}{Alexander Schrijver}.} \bibinfo{year}{1998}\natexlab{}.
\newblock \bibinfo{booktitle}{\emph{Theory of linear and integer programming}}.
\newblock \bibinfo{publisher}{John Wiley \& Sons}.
\newblock
\urldef\tempurl%
\url{https://doi.org/10.1137/1030065}
\showDOI{\tempurl}


\bibitem[Schrijver(2003)]%
        {schrijver2003combinatorial}
\bibfield{author}{\bibinfo{person}{Alexander Schrijver}.} \bibinfo{year}{2003}\natexlab{}.
\newblock \bibinfo{booktitle}{\emph{Combinatorial optimization: polyhedra and efficiency}}. \bibinfo{series}{Algorithms and Combinatorics}, Vol.~\bibinfo{volume}{24}.
\newblock \bibinfo{publisher}{Springer}.
\newblock
\showISBNx{978-3-540-44389-6}
\urldef\tempurl%
\url{https://doi.org/book/9783540443896}
\showDOI{\tempurl}


\bibitem[Schult and Swart(2008)]%
        {networkx}
\bibfield{author}{\bibinfo{person}{Daniel~A Schult} {and} \bibinfo{person}{P Swart}.} \bibinfo{year}{2008}\natexlab{}.
\newblock \showarticletitle{Exploring network structure, dynamics, and function using NetworkX}. In \bibinfo{booktitle}{\emph{Proceedings of the 7th Python in science conferences (SciPy 2008)}}, Vol.~\bibinfo{volume}{2008}. Pasadena, CA, \bibinfo{pages}{11--16}.
\newblock
\urldef\tempurl%
\url{https://permalink.lanl.gov/object/tr?what=info:lanl-repo/lareport/LA-UR-08-05495}
\showURL{%
\tempurl}


\bibitem[TPC-H(2022)]%
        {tpch}
\bibfield{author}{\bibinfo{person}{TPC-H}.} \bibinfo{year}{2022}\natexlab{}.
\newblock \bibinfo{title}{TPC-H Homepage}.
\newblock
\newblock
\urldef\tempurl%
\url{https://www.tpc.org/tpch/}
\showURL{%
\tempurl}


\bibitem[Ullman(1990)]%
        {UllmanBookPDK}
\bibfield{author}{\bibinfo{person}{Jeffrey~D. Ullman}.} \bibinfo{year}{1990}\natexlab{}.
\newblock \bibinfo{booktitle}{\emph{Principles of Database and Knowledge-Base Systems: Volume II: The New Technologies}}.
\newblock \bibinfo{publisher}{W. H. Freeman \& Co.}, \bibinfo{address}{New York, NY, USA}.
\newblock
\showISBNx{071678162X}


\bibitem[Vardi(1982)]%
        {DBLP:conf/stoc/Vardi82}
\bibfield{author}{\bibinfo{person}{Moshe~Y. Vardi}.} \bibinfo{year}{1982}\natexlab{}.
\newblock \showarticletitle{The Complexity of Relational Query Languages (Extended Abstract)}. In \bibinfo{booktitle}{\emph{STOC}}. \bibinfo{pages}{137--146}.
\newblock
\urldef\tempurl%
\url{https://doi.org/10.1145/800070.802186}
\showDOI{\tempurl}


\bibitem[Vazirani(2001)]%
        {vazirani2001approximation}
\bibfield{author}{\bibinfo{person}{Vijay~V Vazirani}.} \bibinfo{year}{2001}\natexlab{}.
\newblock \bibinfo{booktitle}{\emph{Approximation algorithms}}. Vol.~\bibinfo{volume}{1}.
\newblock \bibinfo{publisher}{Springer}.
\newblock
\urldef\tempurl%
\url{https://doi.org/10.1007/978-3-662-04565-7}
\showDOI{\tempurl}


\bibitem[Wang et~al\mbox{.}(2015)]%
        {wang2015error}
\bibfield{author}{\bibinfo{person}{Xiaolan Wang}, \bibinfo{person}{Mary Feng}, \bibinfo{person}{Yue Wang}, \bibinfo{person}{Xin~Luna Dong}, {and} \bibinfo{person}{Alexandra Meliou}.} \bibinfo{year}{2015}\natexlab{}.
\newblock \showarticletitle{Error diagnosis and data profiling with data x-ray}.
\newblock \bibinfo{journal}{\emph{Proceedings of the VLDB Endowment}} \bibinfo{volume}{8}, \bibinfo{number}{12} (\bibinfo{year}{2015}), \bibinfo{pages}{1984--1987}.
\newblock
\urldef\tempurl%
\url{https://doi.org/10.14778/2824032.2824117}
\showDOI{\tempurl}


\bibitem[Wang et~al\mbox{.}(2017)]%
        {wang2017qfix}
\bibfield{author}{\bibinfo{person}{Xiaolan Wang}, \bibinfo{person}{Alexandra Meliou}, {and} \bibinfo{person}{Eugene Wu}.} \bibinfo{year}{2017}\natexlab{}.
\newblock \showarticletitle{QFix: Diagnosing errors through query histories}. In \bibinfo{booktitle}{\emph{Proceedings of the 2017 ACM International Conference on Management of Data}}. \bibinfo{pages}{1369--1384}.
\newblock
\urldef\tempurl%
\url{https://doi.org/10.1145/3035918.3035925}
\showDOI{\tempurl}


\bibitem[Wu and Madden(2013)]%
        {DBLP:journals/pvldb/0002M13}
\bibfield{author}{\bibinfo{person}{Eugene Wu} {and} \bibinfo{person}{Samuel Madden}.} \bibinfo{year}{2013}\natexlab{}.
\newblock \showarticletitle{Scorpion: Explaining Away Outliers in Aggregate Queries}.
\newblock \bibinfo{journal}{\emph{PVLDB}} \bibinfo{volume}{6}, \bibinfo{number}{8} (\bibinfo{year}{2013}), \bibinfo{pages}{553--564}.
\newblock
\urldef\tempurl%
\url{https://doi.org/10.14778/2536354.2536356}
\showDOI{\tempurl}


\bibitem[Yannakakis(2022)]%
        {yannakakis2022technical}
\bibfield{author}{\bibinfo{person}{Mihalis Yannakakis}.} \bibinfo{year}{2022}\natexlab{}.
\newblock \showarticletitle{Technical Perspective: Structure and Complexity of Bag Consistency}.
\newblock \bibinfo{journal}{\emph{ACM SIGMOD Record}} \bibinfo{volume}{51}, \bibinfo{number}{1} (\bibinfo{year}{2022}), \bibinfo{pages}{77--77}.
\newblock
\urldef\tempurl%
\url{https://doi.org/10.1145/3542700.3542718}
\showDOI{\tempurl}


\bibitem[Youngmann et~al\mbox{.}(2022)]%
        {youngmann2022explaining}
\bibfield{author}{\bibinfo{person}{Brit Youngmann}, \bibinfo{person}{Michael Cafarella}, \bibinfo{person}{Yuval Moskovitch}, {and} \bibinfo{person}{Babak Salimi}.} \bibinfo{year}{2022}\natexlab{}.
\newblock \showarticletitle{On Explaining Confounding Bias}.
\newblock \bibinfo{journal}{\emph{arXiv preprint arXiv:2210.02943}} (\bibinfo{year}{2022}).
\newblock
\urldef\tempurl%
\url{https://arxiv.org/abs/2210.02943}
\showURL{%
\tempurl}


\end{thebibliography}
\received{April 2023}
\received[revised]{July 2023}
\received[accepted]{August 2023}

\appendix
\clearpage
\appendix

\section{Nomenclature}
\label{sec:nomenclature}

The Notation Table (\cref{tbl:nomenclature}) contains common nomenclature, and Query Table (\cref{tbl:example-queries}) lists example queries used through the paper. 

\begin{table}[h]
\centering
\small
\begin{tabularx}{\linewidth}{@{\hspace{0pt}} >{$}l<{$} @{\hspace{2mm}}X@{}} %
\hline
\textrm{Symbol}		& Definition 	\\
\hline
    \hline
	Q			& Conjunctive query	\\
	D			& Database Instance, i.e.\ a set of tables	\\
    W			& Set of witnesses $W = \witnesses(Q,D)$\\
	\vec w  	& Witness \\
	x, y, z		& Query variables \\
    m           & Number of atoms in a CQ \\	
    \var(R)     & Variables in relation $R$ \\
    \res	    & The decision problem of resilience \\
	\rsp	    & The decision problem of responsibility\\
    \res^*      & The optimization problem of resilience \\
	\rsp^*      & The optimization problem of responsibility\\
    \Gamma      & A contingency set for \\
	k		    & Number of tables in Q \\	
    \exoset     & A set of exogenous tuples \\
    \resilp, \reslp & ILP and LP for $\res$\\
    \rspilp, \rsplp & ILP and LP for $\rsp$\\
	\rspmilp	& MILP for $\rsp$			\\
    \textrm{JP} 	& A Join Path \\
    \textrm{IJP}    & Independent Join Path \\
    \mathcal{S}, \mathcal{T} 	& Set of start and terminal endpoints of a JP \\
    \ijpdlp     & A DLP to find IJPs for queries \\
    X[v]        & A variable in an (Integer) Linear Program \\
\hline
\end{tabularx}
\caption{Nomenclature table}
\label{tbl:nomenclature}
\end{table}
\begin{table}[h]
    \centering
    \small
    \begin{tabularx}{\linewidth}{@{\hspace{0pt}} >{$}l<{$} @{\hspace{2mm}}X@{}} %
    \hline
    \textrm{Query}		& Definition 	\\
    \hline
        \hline
        \qtwochain 		& 2-chain query $R(x,y), S(y,z)$ \\
        \qthreechain 	& 3-chain query $R(x,y), S(y,z), T(z,u)$ \\
        \qfourchain 	& 4-chain query $P(u,x),R(x,y), S(y,z), T(z,v)$ \\
        \qfivechain 	& 5-chain query $L(a,u),P(u,x),R(x,y), S(y,z), T(z,v)$ \\
        \qtwostar 		& 2-star query $R(x)S(y),W(x,y)$ \\
        \qthreestar 	& 3-star query $R(x)S(y),T(z)W(x,y,z)$ \\
        \qtriangle 		& Triangle query $R(x,y)S(y,z),T(z,x)$ \\
        \qtriangleunary & Triangle Unary query $A(x)R(x,y)S(y,z),T(z,x)$ \\
        \qtrianglebinary & Triangle Binary query $A(x)R(x,y)S(y,z),T(z,x),B(z)$ \\
        \qsjtwochain & Self-Join 2-chain query $R(x,y)R(y,z)$ \\
        \qsjtwoconf & Self-Join 2-confluence query $A(x)R(x,y)S(z,y),B(z)$ \\
        \qsjzsix & Self-Join z6 query $A(x)R(x,y)R(y,y),R(y,z),C(z)$ \\

        \hline
    \end{tabularx}
    
\caption{Example Queries}
\label{tbl:example-queries}
\end{table}

\section{Real-World examples for Resilience and Causal Responsibility}

In this section, we give example of real world-applications of resilience and responsibility.
\cref{ex:resilience-oscar,ex:causal-responsibility-oscar}
are new while
\cref{ex:resilience-system-migration,ex:causal-responsibility-system-migration} are slightly adapted from
work by Freire et al.~\cite{DBLP:journals/pvldb/FreireGIM15}.

\begin{figure}[]
    \begin{minipage}[t]{\columnwidth}
    \small
    \centering			
    \mbox{
        \begin{tabular}[t]{ >{}c<{} | >{}c<{} >{}c<{} }
        \multicolumn{2}{c}{Oscar} \\
        \hline
        &  person	\\
        \hline
        $o_1$	& Frances McDormand \\	
        \end{tabular}			
    }
    \hspace{5mm}
    \mbox{
        \begin{tabular}[t]{ >{}c<{} | >{}c<{} >{}c<{} >{}c<{} >{}c<{} >{}c<{}}
        \multicolumn{3}{c}{ActsIn} \\
        \hline
        & actor	&  movie \\
        \hline
        $a_1$	& Frances McDormand	& Blood Simple	\\
        $a_2$	& Frances McDormand	& Fargo	\\
        $a_3$	& Frances McDormand	& Raising Arizona	\\
        $a_4$	& Frances McDormand	& Nomadland \\	
        $a_5$	& Helena Bonham Carter & Alice in Wonderland  \\
        $a_6$	& Helena Bonham Carter & The King's Speech	\\ 
        \end{tabular}
    }
    \vspace{5mm}\\
    \mbox{
        \begin{tabular}[t]{ >{}c<{} | >{}c<{} >{}c<{} >{}c<{} >{}c<{} >{}c<{}}
        \multicolumn{3}{c}{DirectedBy} \\
        \hline
        & director	&  movie \\
        \hline
        $d_1$	& Joel Coen & Blood Simple	\\
        $d_2$	& Joel Coen	& Fargo	\\
        $d_3$	& Joel Coen	& Raising Arizona	\\
        $d_4$	& Tim Burton & Alice in Wonderland \\	
        \end{tabular}
    }
    \hspace{5mm}
    \mbox{
            \begin{tabular}[t]{ >{}c<{} | >{}c<{} >{}c<{}}
            \multicolumn{3}{c}{Spouse} \\
            \hline 
            actor & director		\\
            \hline
            $s_1$	& Frances McDormand	& Joel Coen\\	
            $s_2$	& Helena Bonham Carter & Tim Burton \\	
            \end{tabular}			
    }
    \end{minipage}
    \caption{
	\Cref{ex:resilience-oscar,ex:causal-responsibility-oscar}:
	Data for Exploratory Data Analysis
	}
	\label{ex:oscar-data}	
\end{figure}    

\begin{example}[Resilience: Exploratory Data Analysis Example]\label{ex:resilience-oscar}     
    How surprising is it if an Oscar winning actor has acted in a movie directed by their spouse? 
    We can quantify this by calculating the resilience of the query $\qtriangleunary \datarule$ Oscar(actor), ActsIn(actor, movie), DirectedBy(movie, dir), Spouse(actor, dir).
    Finding the resilience does not equate to simply the number of satisfying output rows that must be deleted
    but rather asks for the minimum number of changes in the world needed to have no satisfying output.
    For example, if we do not include the spouse pair $s_1$ of Frances McDormand and Joel Coen (\cref{ex:oscar-data}), 
	the single deletion would take away 3 rows from the output.
    Intuitively, if the resilience is small, there have been a very small number of events that have led to an Oscar winning actor being in a movie directed by their spouse.

    Interestingly, the resilience for this query can be calculated in $\PTIME$ under set semantics, but not bag semantics (such as when accounting for multiple Oscar wins).
    If we now change the query to remove the constraint of the actor having won an Oscar, then finding the resilience of the resulting query  $\qtriangle \datarule$ ActsIn(actor, movie), DirectedBy(movie, dir), Spouse(actor, dir) is $\NPC$!
\end{example}

\begin{example}[Causal Responsibility: Exploratory Data Analysis Example]
    \label{ex:causal-responsibility-oscar}
    Assume we wished to ask: ``What is the responsibility of Frances McDormand's Oscar win towards the output of our query?''
    If this Oscar was solely responsible for the output, it would be a counterfactual cause -- 
    i.e.\ if she had not won, 
    there would be no satisfying output.
    However, this tuple still has ``partial'' responsibility.
    By measuring how far we are from a world where the tuple is counterfactual, we can get a notion of its responsibility to the output.(The responsibility is inversely proportional to the minimum number of tuples to be deleted $|\tau|$ and is given by $1 / (1+|\tau|) $.)
    Interestingly, due to our new fine-grained complexity results, 
    we can find the responsibility of a particular Oscar win in $\PTIME$, but finding the responsibility of a tuple from 
    the ActsIn, DirectedBy or Spouse Table is $\NPC$.
\end{example}

\begin{figure}[]
    \begin{minipage}[t]{\columnwidth}			
    \centering
    \small
    \setlength{\tabcolsep}{0.9mm}
    \mbox{
        \begin{tabular}[t]{ >{}c<{} | >{}c<{} >{}c<{} }
        \multicolumn{3}{c}{Users} \\
        \hline  
        uid & name	\\
        \hline
        $u_1$	& 1	& Alice	\\
        $u_2$	& 2	& Bob	\\
        $u_3$	& 3	& Charlie			
        \end{tabular}			
    }
    \hspace{5mm}
    \mbox{
        \begin{tabular}[t]{ >{}c<{} | >{}c<{} >{}c<{} >{}c<{} >{}c<{} >{}c<{}}
        \multicolumn{4}{c}{AccessLog} \\
        \hline  
        uid	&  type &server\\
        \hline
        $a_1$	& 1	& IMAP	& S	\\
        $a_2$	& 2	& DB	& S	\\
        $a_3$	& 1	& SMTP	& S	\\
        $a_4$	& 1	& DB	& S	\\	
        $a_5$	& 3 & IMAP  & X  \\
        $a_6$	& 3	& DB	& S	\\
        $a_7$	& 2	& SMTP	& X	\\	
        $a_8$	& 1 & DB & T     
        \end{tabular}
    }
    \hspace{5mm}
    \mbox{
            \begin{tabular}[t]{ >{}c<{} | >{}c<{} >{}c<{}}
            \multicolumn{3}{c}{Requests} \\
            \hline	
            & type & details		\\
            \hline
            $r_1$	& IMAP	& email (in)\\	
            $r_2$	& SMTP & email (out)	\\	
            $r_3$	& DB &data access							
            \end{tabular}			
    }
    \end{minipage}
    \caption{
	\Cref{ex:resilience-system-migration,ex:causal-responsibility-system-migration}:
	Data for System Migration Example}
	\label{ex:systemmigration}	
\end{figure}

\begin{example}[Resilience: System Migration Example]
	\label{ex:resilience-system-migration}
    A department would like to retire an old server $S$. 
    The IT department needs to understand if and how the server is currently used, to perform the migration to other servers more efficiently. 
    More formally, the administrator wants to understand why the following query $Q$ evaluates to true:

    $Q_s \datarule$ Users(x, n), AccessLog(x, y, ``S''), Requests(y, d) 

    Detailed analysis of the data (\cref{ex:systemmigration}) reveals that $Q_s$ is true due to
    (a) email-related requests by Alice, and (b) data access requests by several users. Thus, to perform the migration,
    the IT department should transfer user Alice to a different
    email server, and migrate the databases residing on $S$ to
    a different server.

    We can see that since this is a \emph{linear} query, we can find this minimal explanation in $\PTIME$.
\end{example}

\begin{example}[Responsibility: System Migration Example]\label{ex:causal-responsibility-system-migration}
    Consider the same scenario as in \cref{ex:resilience-system-migration}, but we would just like to reduce the load on server $S$ instead of retiring it. 

    We would like to understand the casual responsibility of each input tuple considered towards the output of $Q_s$.

    We see that both $u_1$ and $r_3$ tuples have a counterfactual contingency set of $1$, giving them the highest responsibilities.

\end{example}   

\section{An interesting connection to Valued CSPs}
\label{SEC:VCSPRESILIENCECONNECTION}

After our paper was accepted, a very related and interesting preprint 
by Bodirsky et al.\ 
appeared on arXiv~\cite{bodirsky2023complexity} that focuses on the resilience dichotomy conjecture, 
yet in the context of a more general problem of valued constraint satisfaction problems (VCSPs) of valued structures with an oligomorphic automorphism group. 
The paper uses universal algebra and prior results on VCSPs ~\cite{kolmogorov2017complexity} to give one formalism (Theorem 7.17) that if fulfilled makes a query easy, 
and another formalism (Corollary 5.13) that allows checking if a query is hard. 
The paper's conjecture (Conjecture 8.18) is that those two cases are tight (i.e. every query fulfills either one or the other case). 
Our paper and theirs~\cite{bodirsky2023complexity} are similar in that:
\begin{enumerate}
    \item They both present a unified framework to solve resilience problems for conjunctive queries including those with self-joins.
    \item They both conjecture that the complexity of resilience of any query can be completely decided by the (seemingly different, but likely related) hardness criteria proposed in the papers: 
	we conjecture in~\cref{sec:IJP} that IJPs are a universal hardness criterion (a query is hard if and only there is a database that forms an IJP for that query), 
	while they conjecture in Conjecture~8.18 that pp-reductions from a particular valued structure in Corollary 5.13 is a universal hardness criterion.
\end{enumerate}
Besides the methods, other conceptual differences are as follows:

\begin{enumerate}
    \item Interestingly, the theoretical results in their paper appears are only applicable to bag semantics, making the bag case seemingly easier to analyze than set semantics. However, our approach can be applied for both set and bag semantics.
    \item Our approach comes with an explicit construction of a disjunctive logic program that takes a query as input and constructs an easy-to-verify hardness certificate if the query is hard.
    \item Our work comes with code implementations for actually solving resilience computationally, 
	both with exact and approximate algorithms. 
\end{enumerate}
It will be interesting to see how the methods and the tractability criterion in the two papers relate to each other and whether they are possibly complementary.

\section{Proofs for \cref{sec:res-ilp}: ILP For Resilience}

\thmresilpbag*
\begin{proof}[Proof \cref{thm:res-bag}]
    Assume there exists an optimally minimal resilience set $R$ such that it contains a tuple $t$, but it does not contain an identical tuple $t'$.
    Since $t$ and $t'$ are identical, they join with the same tuples and must participate in same number of witnesses.
    Since $t'$ is not in the resilience set, for every witness $\vec w_i$ that contains $t'$, there must be at least one tuple $x_i$ that is in the resilience set.
    All the witnesses that $t$ participates in, must also contain a tuple from the set of $x_i$.
    If none of the $x_i$ tuples is $t$ itself, then we can safely remove $t$ from $R$.
    Thus, R is not minimal, and we have a contradiction.
    
    However, in the case that there exists an $x_i = t$, this implies that $\vec w_i$ contains $t$ and $t'$ (along with $0$ or more other tuples $T[\vec w_i]$). 
    Since $t$ and $t'$ are identical, it follows that there is an identical witness created due to joining $t'$ with itself.
    This witness too must be destroyed - hence one of $T[\vec w_i]$ is in the resilience set, and we safely remove $t$, leading to a contradiction.
\qedhere
\end{proof}

\thmresilpcorrect*

\begin{proof}[Proof \cref{thm:res-correctness}]
    The proof is divided into parts to separately show the validity and optimality of $\resilpparam{Q, D}$. 
    An invalid solution would not destroy all the witnesses in the output, while a suboptimal solution would have size bigger than the minimum resilience set.
    \begin{itemize}

    \item Proof of Validity: Assume a solution is invalid i.e.\ after deleting the tuples in the resilience set, the number of output witnesses is not $0$.
    Since $Q$ is monotone, this witness existed in the original database as well. 
    A witness can only survive if all the tuples in the witness are not a part of the resilience set.
    Such a solution would hence violate the constraint for the surviving witness and hence would not be generated by the ILP.

    \item Proof of Optimality: Assume a solution is not optimal i.e.\ there exists a strictly smaller, valid resilience set $R'$.
    We could translate this set into a variable assignment $\bar{X}$ to $X[t]$ where $\bar{X}[t]=1$ if $t \in R'$.
    Since $R'$ is a valid resilience set, it would satisfy all the constraints to destroy all witnesses in $D$
    and also be a valid solution for $\resilpparam{Q, D}$. 
    Thus, it cannot be smaller than the optimal solution for $\resilpparam{Q, D}$.
	\qedhere
    \end{itemize} 
\end{proof}

\section{Proofs for \cref{sec:resp-ilp}: ILP For Responsibility}

\thmrespilpcorrect*

\begin{proof}[Proof \cref{thm:resp-correctness}]
    Similar to \cref{thm:res-correctness}, we show validity and optimality.
    \begin{itemize}
        \item Proof of Validity: An invalid solution is not counterfactual, i.e., either it does not destroy all witnesses without $\resptuple$ or it destroys all witnesses. 
        The former violates the resilience constraints, while the latter violates the Counterfactual constraint. 
        \item Proof of Optimality: Any strictly smaller, valid responsibility set $R'$ can also be translated into variable assignment $\bar{X}$ to $X[t]$ such that it satisfies all constraints (where $\bar{X}[t] = 1$ if $t \in R'$).
        Since $R'$ is valid, at least one tuple for each witness that does contain $\resptuple$ is destroyed - thus resilience constraints are fulfilled.
        There must be at least one witness $\w_p$ containing $\resptuple$ that is preserved. 
        For this witness, we know that $X[w_p]=0$ is valid (since there is no $t' \in w_p s.t. X[t']=1$). 
        Thus, the counterfactual constraint is also fulfilled since $\sum X[w] < |W_p|$.
        Thus, $\rspilp{Q,D}$ calculates the optimal responsibility.
    \qedhere
    \end{itemize} 
\end{proof}

\section{Proofs for \cref{SEC:RELAXATIONS}: ILP Relaxations} 

\thmrspmilpptime*

\begin{proof}[Proof \cref{thm:rspmilpptime}]
    Assume that there are $c_t$ witnesses that contain $\resptuple$. 
    The counterfactual constraint enforces that at least one of these witnesses is preserved.
    Notice that the witness indicator variables have no effect on the objective, and can be set to any value so long as all constraints are fulfilled.
    Any assignment where $1$ witness is preserved, and the rest are destroyed fulfills all constraint (even witness tracking constraints, which only enforce that a witness is destroyed if one of its tuples is destroyed, but does not enforce that the witness cannot be destroyed otherwise).
    Thus, we can restrict ourselves to $c_t$ potential assignments of witness indicator variables instead of $2^{c_t}$.
    Trivially, we can now solve the problem by running $c_t$ Linear Programs (where the only variables are tuple indicator variables and the witness indicator variables are fixed to one out of $c_t$ assignments). 
    Since $c_t$ is polynomial in the database size, we see that $\rspmilp$ can be solved in $\PTIME$.
    In practice, ILP solvers solve the problem faster than the algorithm in the proof, since the leverage common insights across the $c_t$ Linear Programs.
\end{proof}

\section{Additional Details for \cref{SEC:IJP}: Finding Hardness Certificates} 

We show a full end to end $\ijpdlp$ in \cref{sec:appendix:dlp}.

\subsection{More Example IJPs}
We also give simpler automatically derived IJPs for $k=3$ for the following 3 previously known hard queries.
The original hardness proofs for those queries \cite{DBLP:journals/pvldb/FreireGIM15} are pretty involved and cover several pages.
Our new hardness proofs are just \cref{Fig_IJPs_for_chains} given \cref{th:ICPs}.
\begin{align*}
	q_\textup{chain} 		&\datarule R(x,y),R(y,z)  					\\
	q_\textup{chain}^b 		&\datarule R(x,y), B(y), R(y,z)  			\\	
	q_\textup{chain}^{abc}	&\datarule A(x), R(x,y), B(y), R(y,z), C(z)  	
\end{align*}

\begin{figure}[t]
\centering
\includegraphics[scale=0.3]{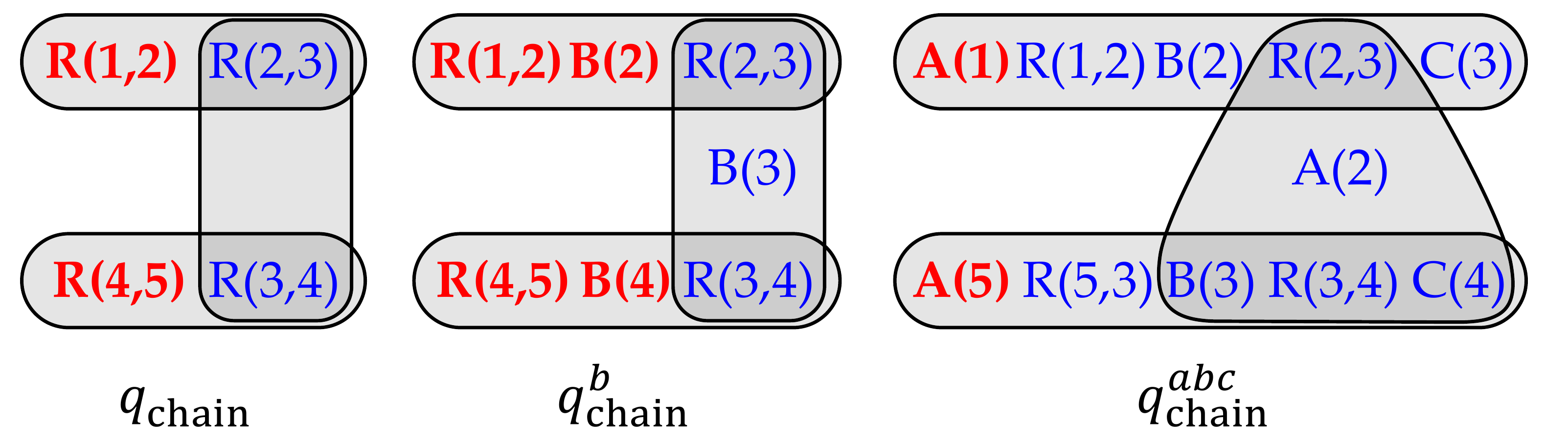}
\caption{Simple IJPs for prior known hard queries.
}
\label{Fig_IJPs_for_chains}
\end{figure}

\subsection{Explaining the domain bound of \cref{conj:hardness}}

We explain here further the intuition for bounding the size of an IJP to domain $d=7 \cdot |\var(Q)|$, and show examples of an IJP broken down into its components of ``core'', ``dominated'' or ``legs''.
\cref{Fig_IJP_sj_new_legs} shows the $5$ automatically generated IJPs for queries with self-joins, whose complexities where previously unknown, broken down into these components.

We see that \cref{Fig_example_IJP_3CC} and \cref{Fig_example_IJP_3perm_SC} consist only of $3$ core witnesses. This is the simplest possible IJP and is like the self-join-free case, where the $3$ witnesses correspond to the $3$ atoms of the triad.
However, notice that the $3$ core witnesses of \cref{Fig_example_IJP_3perm_AS} necessitate the presence of a ``dominated'' gray witness. 
This witness does not use any tuple that is not already part of the core, and does not increase the domain size of the IJP.
Additionally, this witness does not increase the \emph{transversal number} between the endpoint tuples i.e. the number of witnesses in the path from one endpoint tuple set to another. 
However, due to this witness, the endpoint tuple $A(5)$ is no longer independent. 
Thus, we require the introduction of a ``leg'' to obtain an independent endpoint tuple $A(2)$.

In \cref{Fig_example_IJP_3perm_SB}, we see a similar classification where the core creates a dominated witness, and $1$ leg of $2$ witnesses is needed to make the endpoint tuples independent.
\cref{Fig_example_IJP_z6} shows a slightly more complicated IJP, where the core results in $2$ dominated witnesses, but notice that neither added witness adds to the domain value or affects the transversal number.

\begin{figure}[t]
\centering
\begin{subfigure}[b]{.48\linewidth}
	\centering
	\includegraphics[scale=0.3]{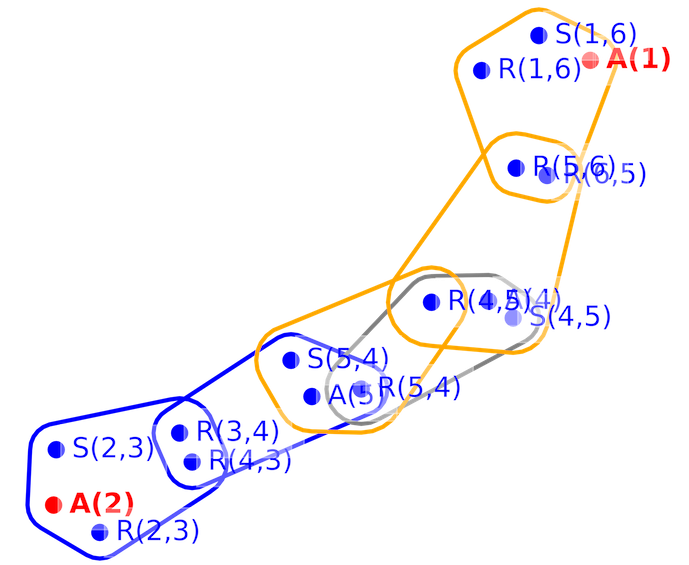}
	\caption{$q^{AS_{xy}}_{3perm-R} \datarule A(x), S(x,y), R(x,y), R(y,z), R(z,y)$}\label{Fig_example_IJP_3perm_AS}
\end{subfigure}
\begin{subfigure}[b]{.48\linewidth}
	\centering
	\includegraphics[scale=0.3]{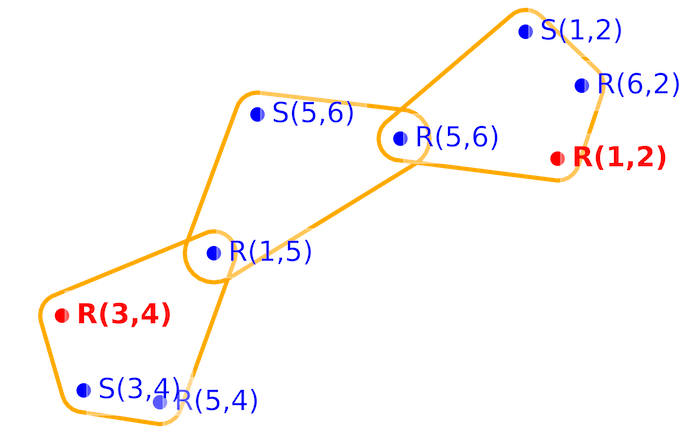}
	\caption{$q^{S}_{3CC} \datarule R(x,y), R(y,z), R(w,z), S(w,z)$}\label{Fig_example_IJP_3CC}
\end{subfigure}
\begin{subfigure}[b]{.48\linewidth}
	\centering
	\includegraphics[scale=0.3]{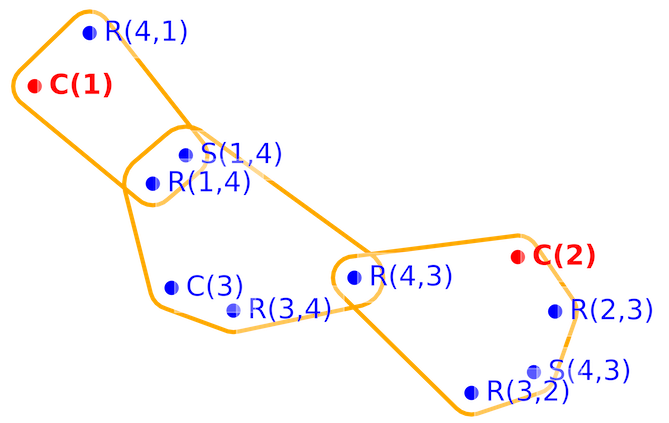}
	\caption{$q^{S_{xy}C}_{3perm-R} \datarule S(x,y), R(x,y), R(y,z), R(z,y), C(z)$}\label{Fig_example_IJP_3perm_SC}
\end{subfigure}
\begin{subfigure}[b]{.48\linewidth}
	\centering
	\includegraphics[scale=0.3]{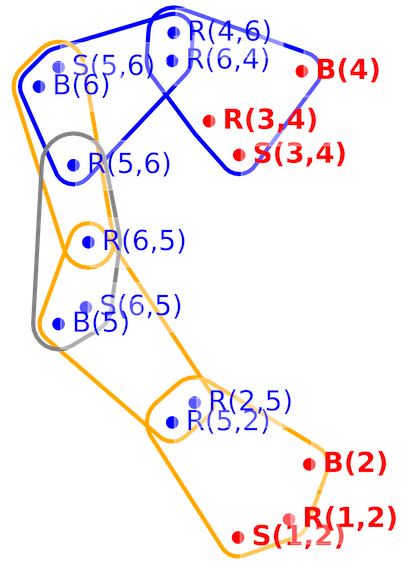}
	\caption{$q^{S_{xy}B}_{3perm-R} \datarule S(x,y), R(x,y), B(y), R(y,z), R(z,y)$}\label{Fig_example_IJP_3perm_SB}
\end{subfigure}
\begin{subfigure}[b]{.48\linewidth}
	\centering
	\includegraphics[scale=0.3]{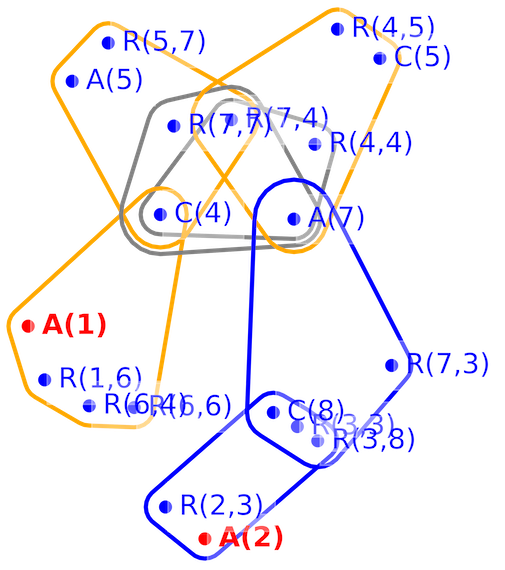}
	\caption{$z_6 \datarule A(x), R(x,y), R(y,y), R(y,z), C(z)$}\label{Fig_example_IJP_z6}
\end{subfigure}
\caption{
Automatically generated and visualized IJPs for $5$ previously open queries. The nodes corresponding to tuples in $\mathcal{S} \cup \mathcal{T}$ are in red.
In contrast to \cref{Fig_IJP_sj_new},
we color each witness (hyperedge) with a different color, depending on if it is part of the ``core'' IJP (in orange), or if it is a ``leg'' (in blue), or it is ``dominated'' witness
(in gray), automatically generated by the tuples present in the core witnesses (whose presence does not affect the transversal number between the endpoint tuples).
The assignment of an IJP into these components is not unique, i.e. it is possible, for example in \cref{Fig_example_IJP_3perm_AS} to treat the other endpoint tuples as the ``leg''.
}

\label{Fig_IJP_sj_new_legs}
\end{figure}

\section{Proofs for \cref{SEC:IJP}: Finding Hardness Certificates}

\proptrianglecomposition*
\begin{proof}[Proof \cref{prop:triangle:composition}]

Condition ($3ii$) of \cref{def:JP} implies that given two canonical join paths 
(i.e.\ they are isomorphic, and all constants are distinct), sharing constants in one end point of each join path, guarantees that the only endogenous tuples that the join paths share are the endpoint tuples.
What can happen is that this sharing of endpoints creates \emph{additional} witnesses which will affect the resilience of the resulting database instance.
What we like to prove is that if the composition from \cref{Fig_minimal_JP_composition} is not leaking, then \emph{any} composition is non-leaking (and thus creates no new witnesses).

From the condition that the endpoints of a join path have disjoint constants, 
and the fact that any two join paths can share maximally one endpoint, 
it follows that the tuples from two join paths that are not sharing any endpoint cannot create additional witnesses;
additional witnesses can only be created by two join paths sharing an endpoint. 

Since join paths can be asymmetric, there are three ways that two join paths can create additional witnesses: 
they are either sharing the start tuples, or the terminal tuples, or one start tuple is identical to the other end tuple.
All three cases are covered by \cref{Fig_minimal_JP_composition}.

It remains to be shown that sharing the same end tuples across multiple join paths can't add additional witnesses.
This follows now from induction with the three base cases covered above.
To illustrate, assume adding a third join path by their terminal to a start tuples shared by two join paths leads to additional witnesses. 
Then from the isomorphism between the two prior join paths it follows that a new witness would have to be created from having only one join path with the end tuples as start tuples. This is a contradiction. The same argument can be used for adding join paths to the other three bases cases.
\qedhere
\end{proof}	

\thmijpsnpc*

\begin{proof}[Proof \cref{th:ICPs}]
    The proof follows from a simple reduction from vertex cover.
    Assume $Q$ can form IJPs of resilience $c$.
    Take any directed simple graph $G(V, E)$ with $n$ nodes and $m$ edges.
    Encode each node $v \in V$ with a unique tuple 
    $\bm v = (\langle v_1 \rangle, \langle i_v \rangle, \ldots, \langle v_d \rangle) \in R$ 
    where $d$ is the arity of $R$.
    Encode each edge $(v,u) \in E$ as separate IJP from $R(\bm v)$ to $R(\bm u)$ 
    with fresh constants except their endpoints.
    Then $G$ has a Vertex Cover of size $k$ iff resilience $\res^*(Q,D)$ is $k+m (c-1)$.
    
    Notice that the semantic condition 5 is needed. 
    It guarantees that there is no tuple (other than the endpoints) that are shared between two different join paths (corresponding to the edges),
    and no additional joins are created.
    Without that condition, the join paths are not \emph{independent} and \emph{leakage} across join paths could otherwise change the resilience of the composition.
    \qedhere
    \end{proof}
    
\thmijpsjfcqs*

\begin{proof}[Proof \cref{th:IJPs_for_SJFCQs}]
First let us consider this theorem under set semantics.
We already know from \cref{th:ICPs} that IJPs $\Rightarrow$ NPC.
We also know from \cite{DBLP:journals/pvldb/FreireGIM15} that all hard queries under set semantics must have an active triad.
Recall that an \emph{active triad} is a set of three endogenous (and therefore, non-dominated) 
atoms, $\mathcal{T} = \set{R_1,R_2,R_3}$ 
such that for every pair $i \neq j$, 
there is a path from $R_i$ to $R_j$ that uses no variable occurring in the other 
atom of $\mathcal{T}$. 
It remains to be shown that queries with triads also have an IJP.

Let $q$ be a query with triad $\mathcal{T}=\set{R_1, R_2, R_3}$.
We will choose appropriate constants to build an IJP from $\{R_1(\vec a)\}$ to $\{R_1(\vec b)\}$  
consisting of three witnesses
$\{\vec w_1, \vec w_2, \vec w_3 \}$
s.t.\ $\vec w_1$ and $\vec w_2$ share $R_2$
and $\vec w_2$ and $\vec w_3$ share $R_3$.
In other words, we compose the 3 paths from the triad as 
$R_1(\vec a) \rightarrow R_2(\vec c) \rightarrow R_3(\vec d)  \rightarrow  R_1(\vec b)$.

We will assume that no variable is shared by all three elements of $\mathcal{T}$  
(we can ignore any
such variable by setting it to a constant). 
Our proof splits into two cases:

\emph{Case 1}:  $\var(R_1), \var(R_2), \var(R_2)$ 
are pairwise disjoint: 
We use unique constants $a, b, c, d, w_1, w_2, \\w_3$
and add tuples 
$R_1(a, a, \ldots, a)$,
$R_2(c, c, \ldots, c)$,
$R_3(d, d, \ldots, d)$, and
$R_1(b, b, \ldots, b)$ to $D$.

To define the relations corresponding to the other atoms in $\vec w_1$, we first 
partition the variables of $q$ into 3 disjoint sets: $\var(q)= \var(R_1) \cup \var(R_2) \cup W_1$.
Now for each atom $A_i \in q \setminus \{R_1, R_2\}$, arrange its variables in these three groups. 
Then define a tuple $(a; d; w_1)$ to relation $R_i$ of $D$ corresponding to atom $A_i$.
For example, all the variables $v \in \var(R_1)$ are assigned the value $a$
and all the variables $v\in W_1$ are assigned $w_1$.
Repeat the same process analogously for witnesses $\vec w_2$ and $\vec w_3$.

From our construction $\vec w_1$ and $\vec w_2$ share only one single endogenous tuple: $R_2(\vec c)$.
This follows from the fact that there is no other tuple that dominates (has a subset of variables) of endogenous tuples.
It follows that every endogenous tuple in $\vec w_1$ needs to contain at either at least constant $a$ or $w_1$ (and optionally $c$)
Similarly every endogenous tuple in $\vec w_2$ needs to contain at either at least constant $d$ or $w_2$ (and optionally $c$).

It follows that the resilience of the resulting database is identical to vertex cover of the graph
$R_1(\vec a) \rightarrow R_2(\vec c) \rightarrow R_3(\vec d)  \rightarrow  R_1(\vec b)$,
which fulfills condition (4) of \cref{def:IJP}.
Condition (5) follows from the same fact that every tuple in one join path needs to contain at least one constant not contained a tuple from another join path, other than the maximally one shared endpoint.

\emph{Case 2}:  $\var(R_i) \cap \var(R_j) \ne \emptyset $  for some $i\ne j$:
The previous construction can now be generalized from Case 1 by partitioning $\var(R_i)$ into those 
unshared, those shared with $R_{i-1}$, and those shared with $R_{i+1}$ (addition here is mod 3)
and verifying that the resulting database still fulfills the same conditions (4) and (5).

    The construction of IJPs for bag semantics follows the same argument, with the addition that some tuples have a fixed large number of copies such that the tuple would never be picked for the minimum resilience set i.e.\ it is made exogenous.
    We can use this to prove that that all triads (whether active or deactivated) are hard by making all tuples of dominating tables exogenous in this manner and following the rest of the proof of the set semantics case.
    We show in \cref{SEC:THEORECTICALRESULTS} by \cref{thm:reseasy} that resilience for all queries without triads (linear queries) is $\PTIME$. 
    Hence, for all SJ-free CQs under bag semantics, resilience is NPC iff it has an IJP.
\qedhere
\end{proof}

\section{Proofs for \cref{SEC:THEORECTICALRESULTS}}

\subsection{Proofs for \cref{sec:theory:resilience}: Theoretical Results for Resilience}

\thmreseasydomination*

\begin{proof}[Proof \cref{thm:reseasydomination}]
	Assume $Q$ contains a triad with tables $R$, $S$, $T$. 
	However, since the triad is deactivated, at least one of these tables must be dominated by another table $A$. 
	WLOG, assume $R$ is dominated by $A$.
	We show that $R$ can be made exogenous because there exists an optimal resilience set that does not contain any tuple from $R$.
	If $r_i$ is part of the resilience set, then it can be replaced with $a_i$ where $a_i \subset r_i$ while still destroying the same or more witnesses.	
	Since no tuple from $R$ is actually used in the resilience, the size of the resilience set will not change if we make $R$ exogenous i.e.\ add all the variables of the query to $R$. 
	Let $Q'$ be the query where for each deactivated triad, all dominated tables have been made exogenous.
	Then $\res(Q',D)= \res(Q,D)$.
	$Q'$ is linear, and we can then use \cref{thm:reseasy} to show that $\reslpparam{Q,D}$ is optimal.
\qedhere
\end{proof}

\thmreshardbag*

\begin{proof}[Proof \cref{thm:reshardbag}]
	A non-linear query by definition must contain triads.
	If the query contain active triads, then \cref{th:IJPs_for_SJFCQs} can be applied in the bag semantics setting as well to show that $\res$ is $\NPC$.
	However, the same IJP does not directly work for (fully) deactivated triads - since the endpoints are part of the triad tables, they can be dominated by another tuple in the IJP.
	Then the optimal resilience would be to choose the dominating tuple, thus no longer fulfilling the first criteria of independence. 
	Hence, we must have a slightly different IJP with the property that the dominating table is exogenous.
	To make the dominating table exogenous, it suffices that we have $c_w$ copies of each tuple from the table in the IJP (where $c_w$ is the number of witnesses in the IJP under set semantics), and $1$ copy of all other tuples.
	Using \cref{thm:res-bag} where we showed that it is never beneficial to remove some copies of a tuples, and the fact that the resilience of the IJP is at most $k$, we can see that it is never necessary to remove tuples from the dominating table.
\end{proof}

  \subsection{Proofs for \cref{sec:theory:responsibility}: Theoretical Results for Responsibility}

  \thmrspeasyfullydominated*

  \begin{proof}[Proof \cref{thm:rspeasyfullydominated}]
      Assume $Q$ contains a triad with tables $R$, $S$, $T$. 
      However, since the triad is fully deactivated, at least one of these tables must be dominated by set of tables $A_1, A_2 \hdots$. 
      WLOG, assume $R$ is fully dominated.
      We show that $R$ can be made exogenous because there exists an optimal responsibility set that does not contain any tuple from $R$.
      If $r_i$ is part of the responsibility set, then it can be replaced with $a_i$ where $a_i \subset r_i$ while still destroying the same or more witnesses.
      However, it is still possible that including $a_i$ in the responsibility set may destroy all witnesses. 
      This is possible only if $a_i$ dominates $\resptuple$ as well.
      If all $a_i$ such that $a_i \subset r_i$ dominate $\resptuple$, then it must be that $r_i$ dominates $\resptuple$ (since $r_i$ is fully dominated and uniquely determined by the tuples that dominate it).
      It is not possible for such an $r_i$ to be in the responsibility set as it would destroy all witnesses containing $\resptuple$.
      Thus, no tuple from $R$ can be used in the responsibility set, the size of the responsibility set will not change if we make $R$ exogenous i.e.\ add all the variables of the query to $R$. 
      Let $Q'$ be the query where for each deactivated triad, all fully dominated tables have been made exogenous.
      Then $\rsp(Q', D, \resptuple)= \rsp(Q,D, \resptuple)$.
      $Q'$ is linear, and we can then use \cref{thm:rspeasy} to show that $\rspmilpparam{Q,D}$ is optimal.
      \qedhere
  
  \end{proof}

  \thmrspeasydominatingtable*
  
  \begin{proof}[Proof \cref{thm:rspeasydominatingtable}]
      Let $R$ be the table in a deactivated triad that $A$ dominates. 
      We show that no tuple of $R$ is required in the responsibility set, and we can make it exogenous.
      If some $r_i$ is in the responsibility database, it can be replaced with some $a_i$ if the variables and valuation of $a_i$ are a strict subset of $r_i$ and then $a_i$ deletes all the witnesses as before, and potentially some more.
      This is permitted unless removal of $a_i$ deletes all witnesses containing $\resptuple$ as well.
      However, since $a_i$ and $\resptuple$ belong to the same table, this is not possible.
      Thus, at least one table from each triad can be made exogenous, and the query can be replaced with a linear query.
      \qedhere
  
  \end{proof}
  
 \thmrspharddominatedtable*

  \begin{proof}[Proof \cref{thm:rspharddominatedtable}]
      If T is part of an active triad, the same IJP as $\res$ is proof for this theorem.
      However, if $T$ is part of a deactivated triad, then we need to slightly modify the hardness proof.
      Let $A$ be the table that dominates one of the tables in the deactivated triad. 
      In our IJP we ensure that an atom from $A$ is an \emph{exogenous tuple}- one that cannot be deleted.
      This is possible by constructing an $a_i$ that dominates $\resptuple$.
      Since this is always possible, we can now construct the rest of the IJP.
      We connect a witness containing $a_0$ to two others by using two tables of the deactivated triad. 
      Then we finally add two more witnesses to the triad with the common tuple being the third table of the deactivated triad.
      We treat the $A$ table as the endpoints of the IJP.
      Since $a_i$ is exogenous, the gadget must choose between the first or the second table to destroy all witnesses in the IJP.
      Such a gadget does not form new witnesses when composed as well as any two isomorphs share only tuples from $A$.
      \qedhere
  
  \end{proof}

\begin{figure}[t]
	\centering
		\includegraphics[scale=0.3]{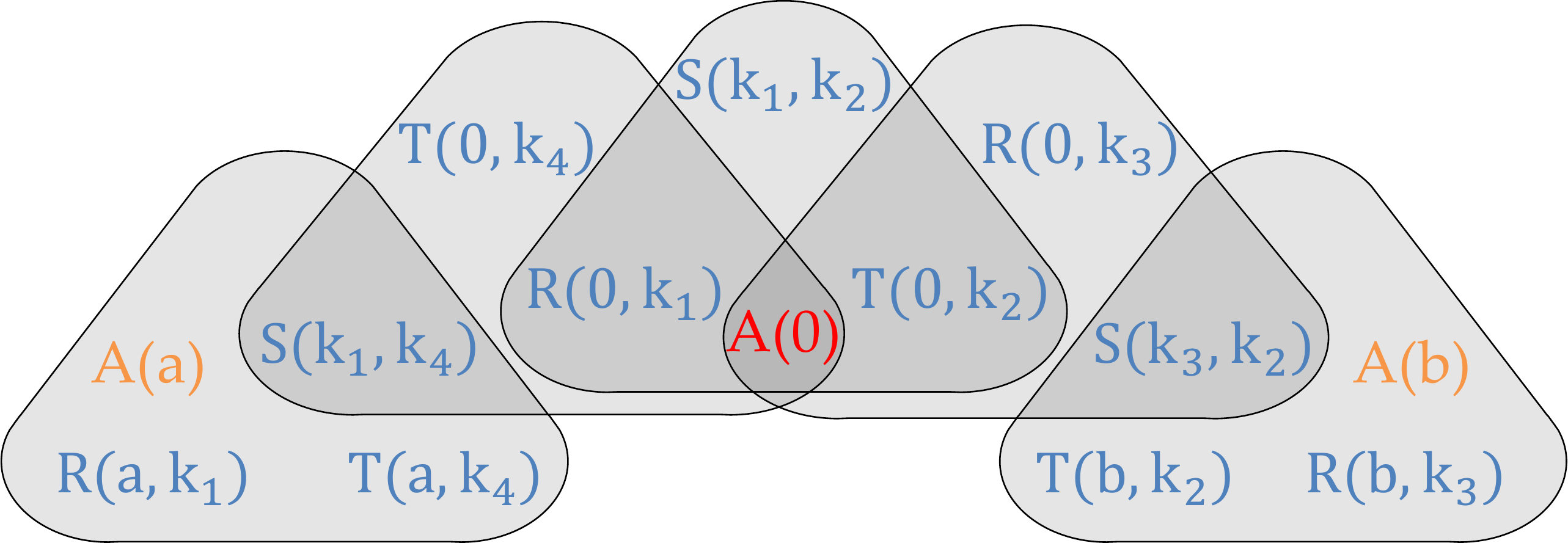}
	\caption{IJP for $\rsp(\qtriangleunary)$ for tables $R$, $S$ and $T$}
	\label{Fig_TU_RESP_IJP}
\end{figure}

  In figure \cref{Fig_TU_RESP_IJP}, we show an example for the IJP that greatly simplifies the previous hardness gadget (the earlier gadget was a reduction from $3SAT$ whose variable gadget had $80$ witnesses)

 \thmrspharderthanres*

  \begin{proof}[Proof \cref{thm:rspharderthanres} ]
      Consider an arbitrary database instance $D$ and add all tuples from a witness $w_r$ that is disjoint from all tuples in $D$.
      The responsibility of the resulting database instance is simply the resilience of $D$ (since all witnesses in $D$ must be destroyed, and the other singleton witness must be preserved).
      Thus, we can reduce $\rsp(Q,D,\resptuple)$ to $\res(Q,D)$ and $\rsp(Q)$ must be hard whenever $\res(Q)$ is.
      \qedhere
  
  \end{proof}

\section{Additional Instance-Based Results}
\label{SEC:THEORY:READONCEANDFDS}
\label{sec:theory:readonceandfds}

We give here two cases for when our unified algorithm is \emph{guaranteed to terminate in $\PTIME$} for generally hard queries.
The interesting aspect is
that our unified algorithm terminates in $\PTIME$ if the database instance fulfills those conditions, but the algorithm \emph{does not need to know about these conditions as input},
it just automatically leverages those during query time.
We believe that this really shows the power of our unconventional approach of proposing one unified approach for all problems and then proving termination in $\PTIME$ for increasing number of cases (instead of starting from a dedicated $\PTIME$ solution for special cases).

\introparagraph{Read-Once Instances}
We show that database instances which allow a read-once factorization of the provenance 
for a given query 
are always tractable. 
A Boolean function is called \emph{read-once} if it can be expressed as a Boolean expression 
in which every variable appears exactly once~\cite{CramaHammer2010:BooleanFunctions,
DBLP:journals/dam/GolumbicMR06,
GolumbicGurvich2010:ReadOnceFunctions}.
We call a database $D$ \emph{read-once instance} for query $Q$ if the provenance of the query over $D$ can be represented by a read-once expression.

\begin{restatable}{theorem}{thmresreadonce}
  \label{thm:resreadonce}
  $\reslp$ and $\rspmilp$ always have optimal, integral solutions under set or bag semantics for all database instances $D$ that are read-once for query $Q$.
\end{restatable}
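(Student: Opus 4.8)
The plan is to read both programs as covering linear programs and to exploit the fact, emphasized after \cref{thm:res-correctness}, that switching between set and bag semantics changes only the objective (the weights $c_t$) and never the constraint matrix. For resilience, $\reslpparam{Q,D}$ is exactly the weighted set-covering relaxation $\min \sum_t c_t\, X[t]$ subject to $M X \geq \vec{1}$, $0 \leq X \leq 1$, where $M$ is the $0$-$1$ witness--tuple incidence matrix ($M_{\vec w, t} = 1$ iff $t \in \vec w$). The whole argument therefore reduces to a single structural claim about $M$: I would show that whenever $D$ is read-once for $Q$, the matrix $M$ is \emph{balanced} in the sense of Berge (no square submatrix of odd order with exactly two $1$'s in each row and column, i.e.\ no ``odd hole''). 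Given that, I invoke the classical theorem that for a balanced $0$-$1$ matrix the set-covering polyhedron $\{X \geq 0 : M X \geq \vec{1}\}$ has only integral vertices~\cite{conforti2006balanced,schrijver1998theory}. Since the LP optimum is attained at a vertex, $\reslpparam{Q,D} = \resilpparam{Q,D} = \res^*(Q,D)$ for every objective, hence under both set and bag semantics.

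The crux, and the step I expect to be the main obstacle, is proving that read-once provenance forces $M$ to be balanced. My approach is to induct on the read-once parse tree. A read-once expression for the provenance $\bigvee_{\vec w} \bigwedge_{t \in \vec w} t$ is built from single tuples by AND- and OR-combinations over disjoint tuple sets. At an OR-node the prime-implicant sets (the witnesses) of the children live on disjoint columns, so $M$ is a block-diagonal stacking of the children's matrices, and balancedness is preserved trivially. At an AND-node the witnesses are the ``Cartesian combinations'' that pick one child-witness from each child; here I must verify that no odd hole can span more than one child (an odd hole would induce, via its alternating witness--tuple cycle, a forbidden cycle already present in a single child). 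I expect a cleaner route is to replace the induction by Gurvich's characterization of read-once monotone functions: a monotone DNF is read-once iff every prime implicant and every prime implicate intersect in exactly one variable. This single-crossing property directly contradicts the alternating two-per-row/two-per-column pattern of an odd hole, which I would spell out to conclude that $M$ has no odd hole and is thus balanced.

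For responsibility I would lean on \cref{thm:rspmilpptime}, which already shows that $\rspmilpparam{Q,D,\resptuple}$ equals the minimum over a \emph{polynomial} family of linear programs, one for each witness $\vec w_p$ containing $\resptuple$ that we choose to preserve. Fixing $X[\vec w_p]=0$ together with the witness-tracking constraints forces $X[t]=0$ for every $t \in \vec w_p$, i.e.\ the tuples of the preserved witness (and $\resptuple$ itself) behave as exogenous. What remains in that LP is precisely a set-covering relaxation over the submatrix of $M$ obtained by deleting the rows of witnesses containing $\resptuple$ and the columns of $\vec w_p \cup \{\resptuple\}$. Because balancedness is inherited by every row-and-column submatrix of a balanced matrix, each of these LPs has an integral optimum. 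Taking the minimum over the polynomially many choices of $\vec w_p$ yields an integral optimal solution to $\rspmilp$, so $\rspmilpparam{Q,D,\resptuple} = \rspilpparam{Q,D,\resptuple} = \rsp^*(Q,D,\resptuple)$. As before, this holds verbatim under set and bag semantics since only the objective weights differ and balancedness is a property of $M$ alone.

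Two bookkeeping points I would confirm along the way: first, that $M$ is genuinely $0$-$1$ even for self-joins (a tuple is either in a witness or not), so the read-once hypothesis and the balancedness machinery apply unchanged; and second, that when $D$ is read-once the \emph{reduced} instance is intended, so the witnesses are exactly the prime implicants of the read-once function and no redundant rows spoil the incidence structure. With these in place, the only real content is the read-once $\Rightarrow$ balanced implication, and everything else is a direct appeal to \cref{thm:rspmilpptime} and the integrality of balanced covering polyhedra.
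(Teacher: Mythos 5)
Your overall strategy is the same as the paper's (read-once $\Rightarrow$ balanced witness--tuple incidence matrix $M$ $\Rightarrow$ integral covering polyhedron), but the step you correctly single out as the crux --- that read-once provenance forces $M$ to be balanced --- is \emph{false}, and both of your proposed routes to it break at the same place. Concretely, consider the SJ-free connected query $Q \datarule R(x,u,y), S(y,v,z), T(z,w,x)$ with $R=\{r_1{:}R(1,1,1),\, r_2{:}R(1,2,1)\}$, $S=\{s_1{:}S(1,1,1),\, s_2{:}S(1,2,1)\}$, $T=\{u_1{:}T(1,1,1),\, u_2{:}T(1,2,1)\}$, all tuples endogenous, set semantics. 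All eight combinations $\{r_i,s_j,u_k\}$ are witnesses, so the provenance is $(r_1\vee r_2)(s_1\vee s_2)(u_1\vee u_2)$, which is read-once. Yet the rows $\{r_1,s_1,u_2\}$, $\{r_2,s_1,u_1\}$, $\{r_1,s_2,u_1\}$ restricted to the columns $r_1,s_1,u_1$ form a $3\times 3$ submatrix with exactly two $1$s in every row and column: an odd hole, so $M$ is not balanced. This defeats your AND-node induction precisely where you anticipated trouble: the three children are single ORs whose incidence matrices are $2\times2$ identity matrices (trivially balanced), and the odd hole is created by the AND node itself, spanning all three children. It equally defeats route (b): in this example every prime implicant meets every prime implicate in exactly one variable (Gurvich's condition holds, as it must for a read-once function), and the co-occurrence graph is the complete tripartite graph $K_{2,2,2}$, which is $P_4$-free --- there is simply no contradiction to extract from the odd hole.

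For context, the paper's own proof takes exactly this route and shares the flaw (it asserts that an odd hole yields a $P_4$ among the tuples, which the example refutes), so you have faithfully reconstructed the paper's argument, gap included; the conclusion of the theorem still appears to be true, but it needs a different mechanism. A sound repair argues by induction on the read-once parse tree about \emph{LP values} rather than about the matrix: at an OR node the program is block-diagonal and values add; at an AND node $f=g\wedge h$ one shows $\mathtt{LP}(f)=\min(\mathtt{LP}(g),\mathtt{LP}(h))$ by exhibiting the feasible dual $y_{p\cup q} := y^g_p\, y^h_q/\max(\mathtt{LP}(g),\mathtt{LP}(h))$ built from optimal duals of the children, which matches the integral optimum $\min(\mathtt{ILP}(g),\mathtt{ILP}(h))$; since this works for arbitrary nonnegative weights, it covers set and bag semantics at once. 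Your handling of responsibility --- decomposing $\rspmilp$ via \cref{thm:rspmilpptime} into polynomially many restricted covering LPs and using closure under row/column deletion --- is structurally sound and arguably cleaner than the paper's reduction, but as written it inherits the gap because it appeals to balancedness of submatrices of $M$; it goes through once the resilience step is repaired in a form stable under such restrictions (note that $f$ with $\resptuple$ set to \false{} and with the preserved witness's tuples excluded is again read-once).
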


\begin{proof}[Proof \cref{thm:resreadonce}]
    We use a structural property of the constraint matrix of the LP to show that $\reslpparam{Q,D} = \res(Q,D)$.
    A $\{0, 1\}$-matrix $M$ is balanced iff $M$ has no square submatrix of odd order, such that each row and each column of the submatrix has exactly two 1s.
    If a matrix $M$ is balanced, then the polytope $Mx \geq 1$ is Total Dual Integral (TDI), which means all vertices of the polytope are integral \cite{schrijver1998theory}.
    For such a system, the optimal Linear Program solution will always have an Integral solution.
    We first show that the constraint matrix of $\reslpparam{Q,D}$ is $0,1$-balanced when $D$ is read-once.
    A $0,1$ balanced matrix is one that does not contain any odd square submatrix having all row sums and all column sums equal to 2.
    
    Assume the constraint matrix is unbalanced. 
    Then there must be a set of witnesses ($w_1$, $w_2$, $w_3 \hdots$) such that $w_1$ and $w_2$ share tuple $t_1$ but not $t_2$, and $w_2$ and $w_3$ share $t_2$.
    This defines a $P4$, which is not permitted in a read-once instance. 
    Thus, the constraint matrix is balanced and $\reslpparam{Q,D} = \res(Q,D)$.
    \qedhere

    Now for $\rspilpparam{Q,D,\resptuple}$, if there is a tuple $x$ that exists in a witness with $\resptuple$ ($w_i$) as well as in a witness without $\resptuple$ ($w_j$), then $x$ must exist in all witnesses ($w_k \hdots$) containing $\resptuple$ to prevent the formation of a $P4$. (There would be a $P4$ as $w_k$ and $w_i$ share $\resptuple$, $w_i$ and $w_j$ share $x$ but $w_k$ and $w_j$ do not share $\resptuple$ or $x$.)
    If $x$ participates in all witnesses containing $\resptuple$ it cannot be part of the responsibility set as it would violate the counterfactual constraint by preserving no witnesses.
    Hence, the responsibility set consists wholly of tuples that do not interact with $\resptuple$ and the problem reduces to resilience, which we know is $\PTIME$ for read-once instances.
\end{proof}

\introparagraph{Functional Dependencies (FDs)}
A Functional Dependency (FD) is a constraint between two sets of attributes $X$ and $Y$ in a relation of a database instance $D$.
We say that $X$ functionally determines $Y$ ($X \rightarrow Y$) if whenever two tuples $r_1, r_2 \in R$ contain the same values for attributes in $X$, they also have the same values for attributes in $Y$~\cite{Kolahi2009}. 
Prior work introduced an \emph{induced rewrites} 
procedure~\cite{DBLP:journals/pvldb/FreireGIM15} 
which, given a set of FDs, rewrites a query to a simpler query without changing the resilience or responsibility. 
If the query after an induced rewrite is in $\PTIME$, then the original could be solved after performing a transformation.
We prove that any instance that is $\PTIME$ after an induced rewrite is \emph{automatically} easy for our ILPs.
Thus, if there are \emph{undetected FDs} in the data that would allow a $\PTIME$ rewrite, our framework guarantees $\PTIME$ performance, while prior approaches would classify it as hard.

\begin{restatable}{theorem}{thmresfd}
\label{thm:fd}
Let $Q'$ be the induced rewrite of $Q$ under a set of FDs.
If $\res(Q')$ or $\rsp(Q')$ are in $\PTIME$ under set or bag semantics
then $\reslp$ and $\rspmilp$ always have optimal integral solutions under the same semantics.
\end{restatable}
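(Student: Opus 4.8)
The plan is to use the induced rewrite $Q'$ as a bridge between the easy-query theorems already proved and the relaxation of the original query $Q$. Fix a database $D$ satisfying the FDs, and recall from prior work that the induced rewrite preserves the optimum, so that $\res^*(Q,D)=\res^*(Q',D)$ and likewise $\rsp^*(Q,D,\resptuple)=\rsp^*(Q',D,\resptuple)$. I will also use that the induced rewrite acts by repeatedly declaring a \emph{dominated} atom exogenous: the FDs guarantee that for every tuple $b$ of such an atom $B$ there is a dominating tuple $a$ (with $\var(a)\subseteq\var(b)$ after the FD chase) whose witness set contains that of $b$, i.e.\ every witness through $b$ also passes through $a$. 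Consequently $\resilpparam{Q',D}$ and $\rspmilpparam{Q',D,\resptuple}$ are exactly the programs for $Q$ with the decision variables of all exogenized tuples dropped (equivalently fixed to $0$), while the witness-covering constraints are unchanged.

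First I would dispatch the query side. Since $\res(Q')$ (resp.\ $\rsp(Q')$) is in $\PTIME$, the dichotomies in \cref{sec:theory:resilience,sec:theory:responsibility} force $Q'$ to be linear / without active triads (resp.\ to have only (fully) deactivated triads), so \cref{thm:reseasy,thm:reseasydomination} give $\reslpparam{Q',D}=\res^*(Q',D)$ and \cref{thm:rspeasy,thm:rspeasyfullydominated} give $\rspmilpparam{Q',D,\resptuple}=\rsp^*(Q',D,\resptuple)$. Hence the relaxations for $Q'$ are integral and optimal.

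The crux is to show that re-introducing the exogenized tuple variables does not lower the relaxation value, i.e.\ $\reslpparam{Q,D}=\reslpparam{Q',D}$ (and $\rspmilpparam{Q,D,\resptuple}=\rspmilpparam{Q',D,\resptuple}$). One inequality is immediate: $Q'$ covers every witness using a subset of the tuples available to $Q$, so $\reslpparam{Q',D}\ge\reslpparam{Q,D}$. For the reverse, I would take an optimal fractional solution $X$ of $\reslpparam{Q,D}$ and, for each exogenized tuple $b$ with dominator $a$, perform a \emph{weight shift}: set $X[a]\leftarrow\min(1,X[a]+X[b])$ and $X[b]\leftarrow 0$. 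Because every witness through $b$ also contains $a$, each covering constraint either keeps its left-hand side unchanged or is already satisfied by the single term $X[a]=1$; feasibility is preserved, and after processing all exogenized tuples the resulting solution is feasible for $\reslpparam{Q',D}$. Under set semantics all weights are $1$ and $\min(1,\cdot)$ never increases the objective, giving $\reslpparam{Q',D}\le\reslpparam{Q,D}$ and hence equality; chaining $\reslpparam{Q,D}\le\resilpparam{Q,D}=\res^*(Q,D)=\res^*(Q',D)=\reslpparam{Q',D}=\reslpparam{Q,D}$ then shows the relaxation meets the integral ILP optimum. The responsibility case is analogous, shifting only tuple variables (which the $\rsp$ program tracks only for tuples in witnesses without $\resptuple$); the witness-tracking constraints $X[\vec w]\ge X[t]$ only slacken as $X[b]$ drops, and validity of the rewrite for responsibility guarantees the dominator does not dominate $\resptuple$, so the counterfactual constraint is untouched.

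I expect the main obstacle to be precisely this weight-shift step, in two respects. First, under bag semantics the objective is weighted by multiplicities, so I must argue the shift does not increase cost — this needs the FD underlying the rewrite (e.g.\ a key dependency) to force the dominator's multiplicity to be no larger than the dominated tuple's, or else a more careful accounting; showing that an easy $Q'$ under bags really arises from such a multiplicity-respecting domination is the delicate part. Second, I must ensure the rewrite's decomposition into successive exogenizations is faithful — that the FD chase only ever creates genuine instance-level domination (containment of witness sets) rather than merely syntactic variable containment — so that the weight shift is always legitimate; this is where I would lean hardest on the exact guarantees of the induced-rewrite procedure from prior work.
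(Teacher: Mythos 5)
Your overall skeleton---chain $\reslpparam{Q,D} \le \resilpparam{Q,D} = \res^*(Q,D) = \res^*(Q',\cdot)$ and then invoke the easy-query theorems (\cref{thm:reseasy,thm:reseasydomination,thm:rspeasy,thm:rspeasyfullydominated}) on $Q'$---matches the paper's intent, but your bridging step rests on a false description of the induced rewrite, and that is a genuine gap. The induced-rewrite procedure of Freire et al.\ does \emph{not} act by ``repeatedly declaring a dominated atom exogenous'' or by dropping tuple variables: it adds the FD-determined variables to every atom containing the determining variables, which induces a \emph{bijection} between the tuples of $D$ and those of the rewritten instance, and a bijection between the witnesses of $(Q,D)$ and $(Q',D')$. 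Consequently the constraint matrix and objective of $\reslpparam{Q,D}$ (resp.\ $\rspmilpparam{Q,D,\resptuple}$) are \emph{identical} to those of the program for $(Q',D')$---nothing is projected out, and no weight-shifting between a ``dominated'' tuple and a ``dominator'' is needed or even meaningful at this step. The paper's proof uses exactly this identification: since $Q'$ is linear (or linearizable), the flow-graph/min-cut argument of \cref{thm:reseasy,thm:rspeasy} applies verbatim to the shared constraint system, so the relaxation of the \emph{original} program already has an integral optimum. Your ``crux'' weight-shift step therefore solves a problem that does not exist, and it is justified by a premise that is false; the exogenization-of-dominated-atoms mechanism you describe belongs inside the proofs of \cref{thm:reseasydomination,thm:rspeasyfullydominated} (applied to $Q'$ when $Q'$ has deactivated triads), not to the FD rewrite itself.

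The mischaracterization also produces your bag-semantics difficulty, which you correctly flag but cannot resolve: you worry that shifting fractional weight onto a dominator may increase the multiplicity-weighted objective, and you would need the FDs to control relative multiplicities. Under the correct view this concern evaporates---the tuple bijection given by the rewrite preserves multiplicities exactly, so the objectives of the two programs coincide coefficient-by-coefficient and integrality transfers for bags just as for sets. Since the theorem claims the result for both set and bag semantics, and your proposal leaves the bag case admittedly open (and the set case argued through an incorrect model of the rewrite), the proof as written does not go through; repairing it amounts to replacing the weight-shift bridge with the observation that the two linear programs are isomorphic and then citing the easy-query integrality theorems, which is precisely the paper's argument.
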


\begin{proof}[Proof \cref{thm:fd}]
    Prior work \cite{DBLP:journals/pvldb/FreireGIM15} showed that FDs can make things easy and be used to transform non-linear queries to linear queries.
    We can make the same argument as \cref{thm:rspeasy} to show that $\reslpparam{Q}$ or $\rspmilpparam{Q}$ cannot be smaller than the resilience or responsibility respectively found by the min-cut algorithm of the flow graph produced by the query after linearization.
    \qedhere
  
  \end{proof}

\section{Proofs for \cref{SEC:APPROXIMATIONS}: 
Approximation Algorithms}

\thmkfactorapproxres*

\begin{proof}[Proof \cref{thm:k-factor-approx}]
  The LP-Rounding algorithm is $\PTIME$ since it requires the solution of a linear program, which can be found in $\PTIME$, and a single iteration over the tuple variables.
  We also see that it is bounded by $m* \reslpparam{Q,D}$ since each variable is multiplied by at most $m$, and since $\reslpparam{Q,D} \leq \resilpparam{Q, D}$, the algorithm is at most $m$-factor the optimal value.
  Thus, it remains to prove that $X_I$ returned by the rounding, satisfies all constraints of $\resilpparam{Q, D}$.
  We know that for every constraint, we involve at most $m$ tuple variables\footnote{For SJ-free cases, exactly $m$ tuples are involved, but for queries with self-join a witness can have less than $m$ tuples}.
  Since the sum of these variables in $X_f$ must be at least $1$ (due to the constraints of $\reslp$), there must exist at least one tuple variable in each constraint with value $\geq 1/m$.
  Thus, in $X_I$, for each constraint, there is a tuple variable $t$ such that $X_I[t] =1$ and all constraints are satisfied.

  Now to prove the correctness of the approximation for $\rspmilp$ as well, we need to verify the extra constraints.
  We must ensure that the resultant variable assignment fulfills the Counterfactual Constraints to ensure that not all witnesses are deleted.
  However, since in the Mixed ILP, the witness variables already took on integral values, there was at least one witness $w_p$ containing $\resptuple$ such that $X[w] =0$.
  This implies that in the MILP, all tuples $t$ in $w_p$ have $X_f[t]=0$. 
  They will stay $0$ after rounding as well, and thus the Witness Tracking Constraints and Counterfactual Constraint are still satisfied.
\end{proof}  

\section{Two More Experimental Scenarios 
(\cref{SEC:EXPERIMENTS} Extended)}
\label{SEC:APPENDIX:EXPERIMENTS}

\begin{figure}
	\begin{subfigure}{\columnwidth}
        \centering
		\includegraphics[scale=0.1]{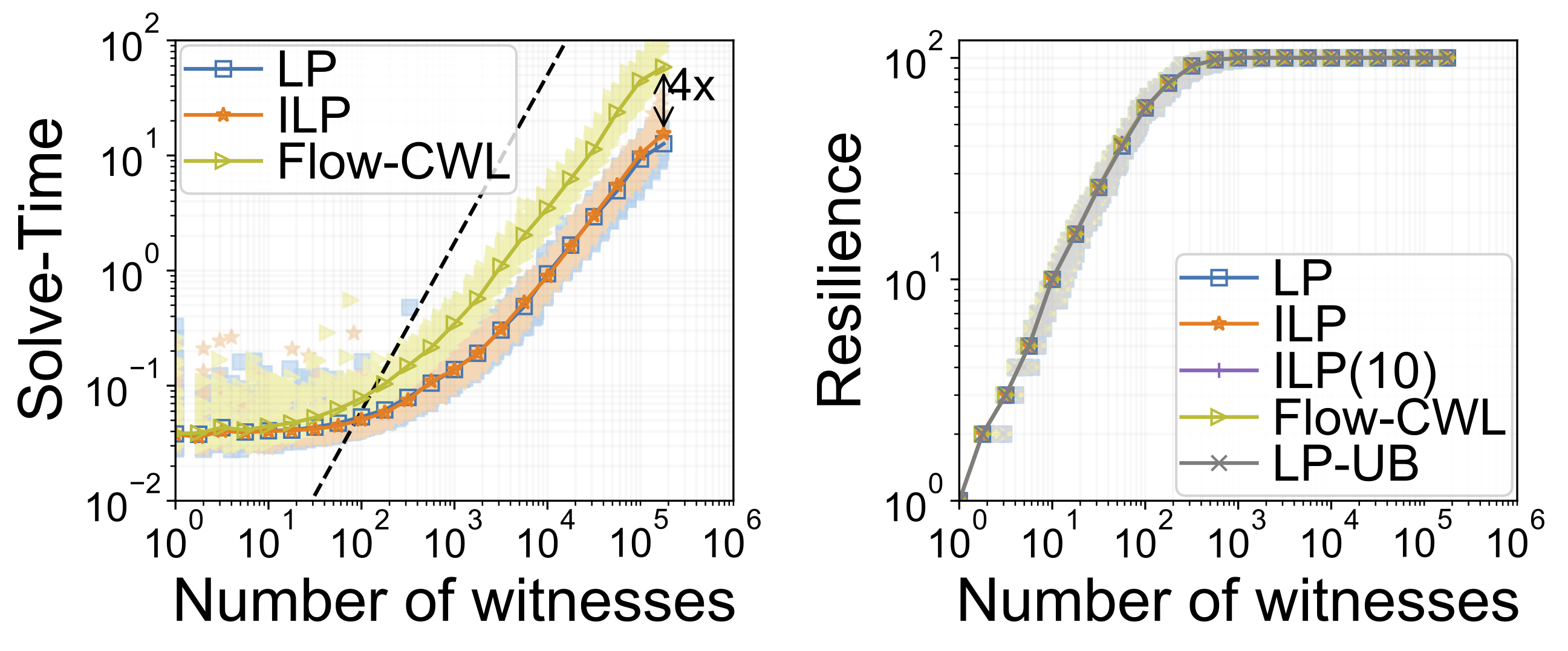}
		\caption{$\res(\qtriangleunary)$ under Set Semantics (an easy scenario)}
		\label{fig:Fig_expt_tu_set}
    \end{subfigure}
	\begin{subfigure}{\columnwidth}
		\centering
			\includegraphics[scale=0.1]{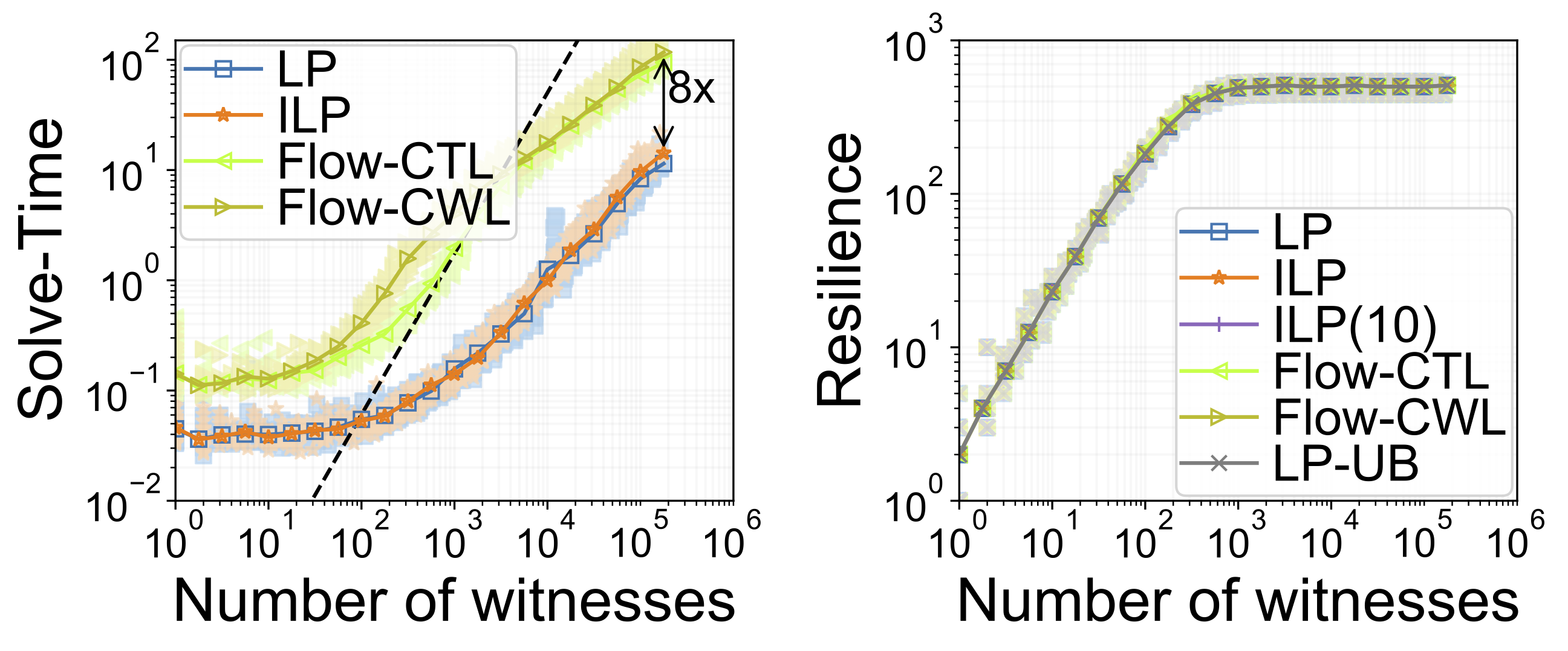}
		\caption{$\res(\qtriangleunary)$ under Bag Semantics (a hard scenario)}
		\label{fig:Fig_expt_tu_bag}
	\end{subfigure}
	\caption{Setting 4: $\res(\qtriangleunary)$ is easy for sets and hard for bags.}
    \label{fig:Fig_Expt_tu}
\end{figure}

\introparagraph{Setting 4: Resilience Under Set vs.\ Bag Semantics}
\Cref{fig:Fig_Expt_tu} shows $\qtriangleunary$, 
a query that contains a deactivated triad.
It is easy under set semantics and hard for bag semantics. 
However, surprisingly, even with a high max bag size of $1e4$, we always observed $\reslp = \resilp$, and the growth of ILP solve-time remained polynomial.
The approximation algorithms are slower, and almost always optimal, differing by less that $1.1 \times$ to the optimal in the worst case.

\begin{figure}[t]
	\centering
		\includegraphics[scale=0.1]{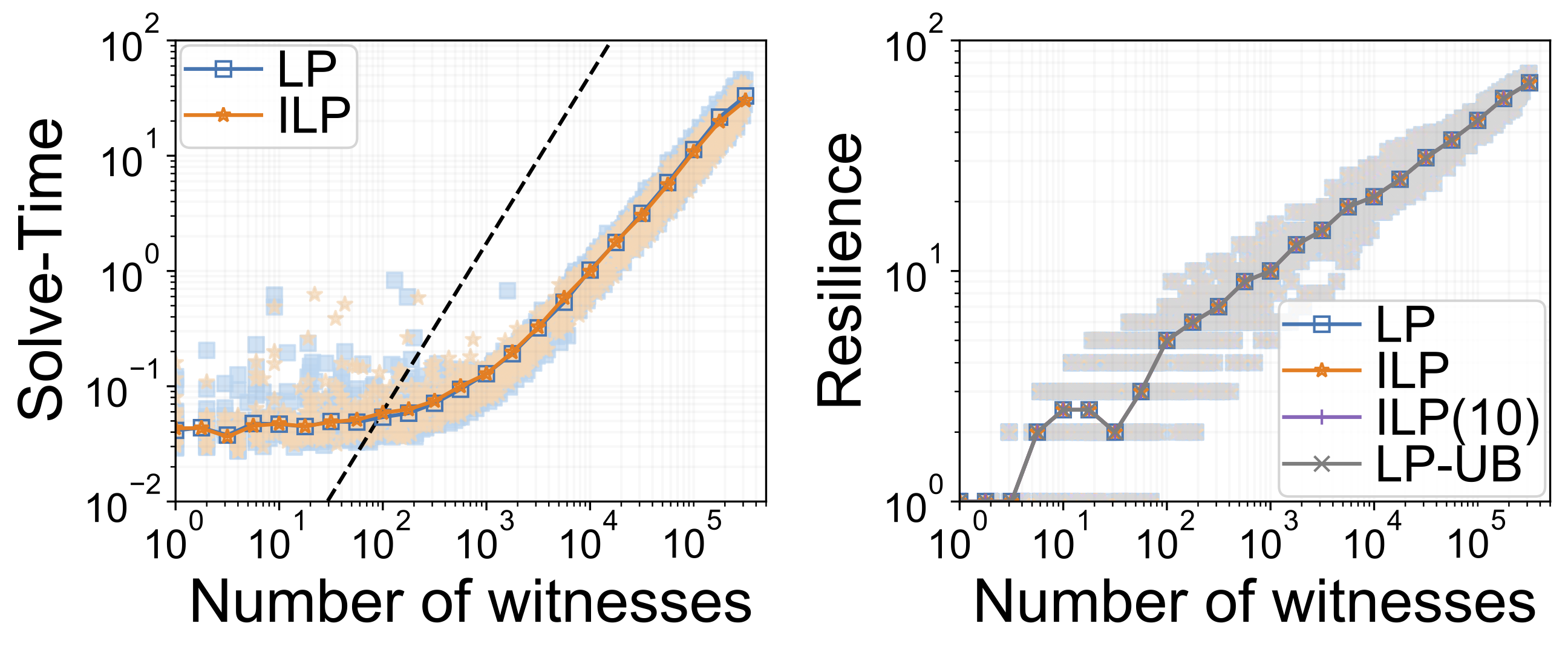}
	\caption{Setting 5: $\res$ for a newly proven hard SJ query.}
	\label{Fig_expt_sj_z6}
\end{figure}

\introparagraph{Setting 5: Self-Join Queries with newly founded hardness}
\cref{Fig_expt_sj_z6} investigates $z_6$ whose complexity we proved in \cref{sec:IJP} to be hard.
Although resilience for this query is hard, it is unlikely to create a random database instance where solving resilience is actually difficult.
Although the domain is pretty dense and the database instance large, for all experiments we run, the LP solution is integral and identical to the ILP solution.
However, by using our IJP, we could create an artificial synthetic database with 21 witnesses for which the LP solution is fractional.

These settings help us answer another interesting question:
(5) Do experimental scalabilities give hints about the hardness of queries? We see a rather surprising result.

\resultbox{\begin{resultW}(\textbf{Practical ILP scalability}) 
Hard queries may or may not show exponential time requirement in practice.
\end{resultW}}

\noindent
\cref{fig:Fig_expt_sj_chain} is a hard query that shows exponential growth. 
However, while exponential growth of solve-time is a hint for the hardness of a query, the converse is not necessarily true (\cref{Fig_expt_sj_z6,fig:Fig_expt_tu_bag}).
This (together with \cref{fig:Fig_expt_tpch_5cycle} over TPC-H) explains why our approach of using ILP to solve the problem is \emph{practically motivated}:
For realistic instances, or even dense instance but more complicated queries, scenarios where the hardness of the problem actually renders the problem infeasible may be rare.

\introparagraph{Additional Notes on Implementation}
We observed some surprising cases where the ILP was consistently faster than the LP. We learned from Gurobi Support
that this may be due to optimizations applied to the ILP that are not applied to the LP \cite{achterberg2020presolve}, and if such optimizations eliminate numerical issues in the LP \cite{gurobinum} such as issues due to floating-point arithmetic. 

\onecolumn

\section{IJP Disjunctive Logic Program}
\label{sec:appendix:dlp}

We show an example $\ijpdlp$ for the $2$-chain with self-join query $\qsjtwochain \datarule R(x,y),R(y, z)$.
Here we are able to show the code in its entirety for $d=5$\footnote{\cref{conj:hardness} implies that this hard query with $3$ variables has an IJP of domain size $\leq$ $3*7 = 21$. 
The conjecture is indeed true for $\qsjtwochain$, and in fact we have a far smaller IJP with domain size $5$.} and endpoints $\{R(1,2)\}$ and $\{R(3,4)\}$.

We solve this formulation with clingo ~\cite{clingo}, and find a hardness certificate in just $0.3$ seconds, running on a local Intel(R) Core(TM) i7-1065G7 CPU @ 1.30GHz with $8$ cores.

This example, along with many others, is available with our code online ~\cite{resiliencecode}.

\begin{scriptsize}
\begin{lstlisting}
r(1,1,1).
r(2,1,2).
r(3,1,3).
r(4,1,4).
r(5,1,5).
r(6,2,1).
r(7,2,2).
r(8,2,3).
r(9,2,4).
r(10,2,5).
r(11,3,1).
r(12,3,2).
r(13,3,3).
r(14,3,4).
r(15,3,5).
r(16,4,1).
r(17,4,2).
r(18,4,3).
r(19,4,4).
r(20,4,5).
r(21,5,1).
r(22,5,2).
r(23,5,3).
r(24,5,4).
r(25,5,5).

% 2. "Guess" an IJP
% For every tuple we define if it is in the IJP or not.
% We also calculate the witnesses and number of witnesses in the IJP
indb(r,Tid,1) | indb(r,Tid,0) :- r(Tid,_,_).
witness(X,Z,Y,T1,T2) :- r(T1,X,Y),r(T2,Y,Z),indb(r,T1,1),indb(r,T2,1).
number_of_witnesses(K) :- #count{X,Z,Y,T1,T2 : witness(X,Z,Y,T1,T2) } = K.

% 3. JP End Condition
range_triangle(1..3).ijp_domain(1..5).

% Define endpoint constants i.e. domain values that belong to endpoint 1 (Source) and 2 (Target)
end1const(1).
end1const(2).
end2const(3).
end2const(4).

end1witness(T1,T2):-witness(X,Z,Y,T1,T2),indb(r,T1,1),r(T1,X,Y),end1const(X),end1const(Y).
end1witness(T1,T2):-witness(X,Z,Y,T1,T2),indb(r,T2,1),r(T2,Y,Z),end1const(Y),end1const(Z).
:-not#count{T1,T2:end1witness(T1,T2)}=1.
end2witness(T1,T2):-witness(X,Z,Y,T1,T2),indb(r,T1,1),r(T1,X,Y),end2const(X),end2const(Y).
end2witness(T1,T2):-witness(X,Z,Y,T1,T2),indb(r,T2,1),r(T2,Y,Z),end2const(Y),end2const(Z).
:-not#count{T1,T2:end2witness(T1,T2)}=1.


:- witness(X,Z,Y,T1,T2),end1const(X),end1const(Z),end1const(Y).
:- witness(X,Z,Y,T1,T2),end2const(X),end2const(Z),end2const(Y).

% 4. Calculate Resilience

valid_res2(r,2,1).
invalid_res2(r,2,1).
valid_res3(r,14,1).
invalid_res3(r,14,1).
valid_res4(r,2,1).
invalid_res4(r,2,1).
valid_res4(r,14,1).
invalid_res4(r,14,1).

invalid_res1(r,Tid,1) | invalid_res1(r,Tid,0) :- r(Tid,_,_).
invalid_res2(r,Tid,1) | invalid_res2(r,Tid,0) :- r(Tid,_,_).
invalid_res3(r,Tid,1) | invalid_res3(r,Tid,0) :- r(Tid,_,_).
invalid_res4(r,Tid,1) | invalid_res4(r,Tid,0) :- r(Tid,_,_).
valid_res1(r,Tid,1) | valid_res1(r,Tid,0) :- r(Tid,_,_).
valid_res2(r,Tid,1) | valid_res2(r,Tid,0) :- r(Tid,_,_).
valid_res3(r,Tid,1) | valid_res3(r,Tid,0) :- r(Tid,_,_).
valid_res4(r,Tid,1) | valid_res4(r,Tid,0) :- r(Tid,_,_).


invalid_resilience1 :- witness(X,Z,Y,T1,T2),invalid_res1(r,T1,0),invalid_res1(r,T2,0).
invalid_resilience1 :- #count{Table,Tid: invalid_res1(Table,Tid,1)} >= K,res(K).
invalid_resilience2 :- witness(X,Z,Y,T1,T2),invalid_res2(r,T1,0),invalid_res2(r,T2,0).
invalid_resilience2 :- #count{Table,Tid: invalid_res2(Table,Tid,1)} >= K,res(K).
invalid_resilience3 :- witness(X,Z,Y,T1,T2),invalid_res3(r,T1,0),invalid_res3(r,T2,0).
invalid_resilience3 :- #count{Table,Tid: invalid_res3(Table,Tid,1)} >= K,res(K).
invalid_resilience4 :- witness(X,Z,Y,T1,T2),invalid_res4(r,T1,0),invalid_res4(r,T2,0).
invalid_resilience4 :- #count{Table,Tid: invalid_res4(Table,Tid,1)} >= K+1,res(K).

% Here we are ``saturating'' the solution
invalid_res1(r,Tid,0) :- invalid_resilience1,r(Tid,_,_).
invalid_res1(r,Tid,1) :- invalid_resilience1,r(Tid,_,_).

invalid_res2(r,Tid,0) :- invalid_resilience2,r(Tid,_,_).
invalid_res2(r,Tid,1) :- invalid_resilience2,r(Tid,_,_).

invalid_res3(r,Tid,0) :- invalid_resilience3,r(Tid,_,_).
invalid_res3(r,Tid,1) :- invalid_resilience3,r(Tid,_,_).

invalid_res4(r,Tid,0) :- invalid_resilience4,r(Tid,_,_).
invalid_res4(r,Tid,1) :- invalid_resilience4,r(Tid,_,_).

:- not invalid_resilience1.
:- not invalid_resilience2.
:- not invalid_resilience3.
:- not invalid_resilience4.

% 5. Check for the OR Property

:- witness(X,Z,Y,T1,T2),valid_res1(r,T1,0),valid_res1(r,T2,0).
res(K) :- #count{Table,Tid: valid_res1(Table,Tid,1)} = K.
:- witness(X,Z,Y,T1,T2),valid_res2(r,T1,0),valid_res2(r,T2,0).
:- not #count{Table,Tid: valid_res2(Table,Tid,1)} = K,res(K).
:- witness(X,Z,Y,T1,T2),valid_res3(r,T1,0),valid_res3(r,T2,0).
:- not #count{Table,Tid: valid_res3(Table,Tid,1)} = K,res(K).
:- witness(X,Z,Y,T1,T2),valid_res4(r,T1,0),valid_res4(r,T2,0).
:- not #count{Table,Tid: valid_res4(Table,Tid,1)} = K+1,res(K).

% 6. Check for non-leaking composition
% Build an isomorph map mapping IJP to different isomorphs

% end1 gets mapped to itself for edge 1
iso_map(C,1,C) :-  end1const(C),range_triangle(I). 

%end1 gets mapped to 2 for edge 2 - add end arity
iso_map(C,2,X) :-  end1const(C),range_triangle(I),X = C + 2. 

%end1 gets mapped to itself for edge 3
iso_map(C,3,C) :-  end1const(C),range_triangle(I). 

%end2 gets mapped to itself for edge 1
iso_map(C,1,C) :-  end2const(C),range_triangle(I). 

%end2 gets mapped to 3 for edge 2
iso_map(C,2,X) :-  end2const(C),range_triangle(I),X = C + 2. 

%end2 gets mapped to 3 for edge 3
iso_map(C,3,X) :-  end2const(C),range_triangle(I),X = C + 2. 
iso_map(C,I,X) :- range_triangle(I),ijp_domain(C),X = C+(5+1)*I,not end1const(C),not end2const(C).

ijp_iso_1_r(TID,VI0,VI1):-indb(r,TID,1),r(TID,V0,V1),iso_map(V0,1,VI0),iso_map(V1,1,VI1).
ijp_iso_2_r(TID,VI0,VI1):-indb(r,TID,1),r(TID,V0,V1),iso_map(V0,2,VI0),iso_map(V1,2,VI1).
ijp_iso_3_r(TID,VI0,VI1):-indb(r,TID,1),r(TID,V0,V1),iso_map(V0,3,VI0),iso_map(V1,3,VI1).

ijp_iso_triangle_r(TID,V0,V1) :- ijp_iso_1_r(TID,V0,V1).
ijp_iso_triangle_r(TID,V0,V1) :- ijp_iso_2_r(TID,V0,V1).
ijp_iso_triangle_r(TID,V0,V1) :- ijp_iso_3_r(TID,V0,V1).
ijp_triangle_witness(X,Z,Y) :- ijp_iso_triangle_r(T1,X,Y),ijp_iso_triangle_r(T2,Y,Z).
:- number_of_witnesses(K),not  #count{X,Z,Y : ijp_triangle_witness(X,Z,Y) }= 3*K.

% 7. (Optional) Minimize the size of the IJP
:~ witness(Z,Y,X,T1,T2). [1@1,Z,Y,X]


#show.
#show number_of_witnesses(K) : number_of_witnesses(K).
#show witness(X,Z,Y) : witness(X,Z,Y,T1,T2).
#show res(K) : res(K).

\end{lstlisting}
\end{scriptsize}

The code gives the following output, finding an IJP with 3 witnesses:

\begin{footnotesize}
\begin{lstlisting}
clingo version 5.6.2
Reading from ...gen_asp_scripts\ijp_expt_cases-1003.dl
Solving...
Progression : [1;inf]
Progression : [2;inf]
Answer: 1
res(3) witness(5,2,1) witness(4,3,5) witness(3,5,2) witness(3,5,5) witness(5,5,2) witness(5,5,5) number_of_witnesses(6)
Optimization: 6
Answer: 2
res(2) witness(5,2,1) witness(4,3,5) witness(3,5,2) number_of_witnesses(3)
Optimization: 3
OPTIMUM FOUND

Models       : 2
  Optimum    : yes
Optimization : 3
Calls        : 1
Time         : 0.392s (Solving: 0.18s 1st Model: 0.11s Unsat: 0.05s)
CPU Time     : 1.578s
Threads      : 8        (Winner: 4)

\end{lstlisting}
\end{footnotesize}

The IJP can then be automatically visualized as in \cref{fig:sjchainautoijp}.

\begin{figure}[h]
	\begin{subfigure}{\columnwidth}
	\centering
	\includegraphics[scale=0.4]{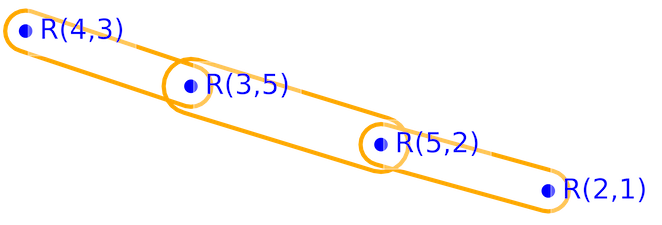}	%
	\end{subfigure}
    \caption{Automatically generated IJP for $\qsjtwochain$}
    \label{fig:sjchainautoijp}
\end{figure}

\end{document}